\setlist[enumerate]{leftmargin=*, align=left}
\setlist[itemize]{topsep=1ex}
	\tikzstyle{vertex}=[fill=black,circle,scale=0.6]
	\tikzset{thick/.style={line width=.6mm}}
  \tikzstyle{tensor}=[circle,thick,draw=black,fill=blue!60!green!40!white,minimum size=4mm]
  \tikzstyle{witharrow} = [thick,decoration={
  \tikzstyle{littletensor}=[circle,thick,draw=black,fill=red!30,minimum size=.5mm] 
\newcommand{\ar}[1]{\xrightarrow{\ensuremath{#1}}}
\newcommand{\V}{\mathcal{V}}
\renewcommand{\C}{\mathsf{C}}
\newcommand{\D}{\mathsf{D}}
\newcommand{\E}{\mathsf{E}}
\renewcommand{\o}{{\color{YellowOrange}o}}
\newcommand{\g}{{\color{Green}g}}
\newcommand{\p}{{\color{Purple}p}}
\newcommand{\<}{\langle}
\renewcommand{\>}{\rangle}
\newenvironment{bsmallmatrix}
  {\left[\begin{smallmatrix}}
  {\end{smallmatrix}\right]}
\let\emph\relax % there's no \RedeclareTextFontCommand
\DeclareTextFontCommand{\emph}{\bfseries}
\newcommand{\End}{\operatorname{End}}
\newcommand{\op}{\operatorname{op}}
\newcommand{\tr}{\operatorname{tr}}
\DeclareMathOperator{\id}{id}
\DeclareMathOperator{\eval}{eval}
\newcommand{\adj}[4]{
\begin{tikzcd}[ampersand replacement=\&, column sep=4ex, cramped]
   #1 \colon #2	\ar[yshift=+.6ex]{r}
\& #3 \colon #4	\ar[yshift=-.4ex]{l}
\end{tikzcd}
}
\theoremstyle{plain}
\newtheorem{lemma}{Lemma}[chapter]
\newtheorem{proposition}{Proposition}[chapter]
\newtheorem{corollary}{Corollary}[chapter]
\newtheorem{takeaway}{Takeaway}%[section]
\theoremstyle{definition}
\newtheorem{definition}{Definition}[chapter]
\newtheorem{example}{Example}[chapter]
\newtheorem*{UP_vect}{Universal Property for Vector Spaces}{\bfseries}{\itshape}
\newtheorem*{UP_meet}{Universal Property for Free Meet Semilattices}{\bfseries}{\itshape}
\newtheorem*{UP_join}{Universal Property for Free Join Semilattices}{\bfseries}{\itshape}
\newtheorem*{UP_cocomplete}{Universal Property for the Free Cocompletion of a Category}{\bfseries}{\itshape}
\newtheorem*{UP_complete}{Universal Property for the Free Completion of a Category}{\bfseries}{\itshape}
\newtheorem*{mainprocedure}{Main Procedure}{\bfseries}{\itshape}
\title{At the Interface \\ \noindent of  Algebra and Statistics}
\author{Tai-Danae Bradley}
\begin{document}
%\raggedbottom
\maketitlepage

% \let\cleardoublepage\clearpage %this gets rid of blank pages. comment out for actual printing

% ----------------------------COPYRIGHT PAGE--------------------------------- %
\newpage
\begin{fullwidth}
~\vfill
\thispagestyle{empty}
\setlength{\parindent}{0pt}
\setlength{\parskip}{\baselineskip}
Copyright \copyright\ \the\year %\ \thanklessauthor

\par \thanklessauthor

\par\smallcaps{All Rights Reserved}

\par A dissertation submitted to the Graduate Faculty in Mathematics in partial fulfillment of the requirements for the degree of Doctor of Philosophy, The City University of New York

\end{fullwidth}
\clearpage

% ----------------------------------TOC------------------------------------- %
\setcounter{page}{3}
\setcounter{tocdepth}{2}
\tableofcontents

\frontmatter
% ------------------------------FRONT MATTER--------------------------------- %
\setcounter{page}{5}

\chapter*{A Note to the Reader}
\addcontentsline{toc}{chapter}{A Note to the Reader}

%\newthought{Here are a few} things to know before reading this dissertation.
\newthought{Some things in life} are too good not to share, and mathematics is one of those things. For this reason, \textit{this dissertation was written with a wide audience in mind}. There is a great deal of exposition woven into the mathematics that provides intuition and motivation for the ideas. I've also sprinkled several ``behind the scenes'' snippets throughout, and alongside the propositions, lemmas, and corollaries there are  \textit{Takeaways} that summarize key ideas.\sidenote{And almost every page is laced with side notes with additional information.} Several of these key ideas are introduced through simple examples that are placed before---not after---the theory they're meant to illustrate. And in a happy turn of events, there is a low entrance fee for following the mathematics. The main tools are linear algebra and basic probability theory, and I suspect you already have these in your toolbox. This brings us to the next point.

This thesis concerns the equation: \textbf{Algebra $+$ Statistics $=$ ?} Said more carefully, this thesis stems from a desire to understand mathematical structure that has both algebraic and statistical properties. Here, ``algebra'' refers to the basic sense in which things come together to create something new. In an algebra, vectors can be multiplied together to create a new vector. Another word for this idea is \textit{compositionality}, where small things assemble together to build a larger construction, and where knowledge of this larger construction comes through understanding the individual parts together with the rules for combining them. \textit{But what if those rules are statistical?} What if the rule for multiplying vectors in an algebra is mediated by probability? Now we are at the interface of algebra and statistics, and one wonders what kind of mathematical structure is found there. To investigate it, we look for an example. Amazingly, we needn't look far. \textit{It's right in front of us, carefully knitting together each word on this page}. That is, natural language---English, Greek, Tagalog,\ldots---exhibits mathematical structure that is both algebraic and statistical. It's algebraic in that words come together to form larger expressions. The words \textit{orange} and \textit{fruit} can be concatenated to form \textit{orange fruit}.\marginnote[-1cm]{\begin{center}\includegraphics[width=\linewidth]{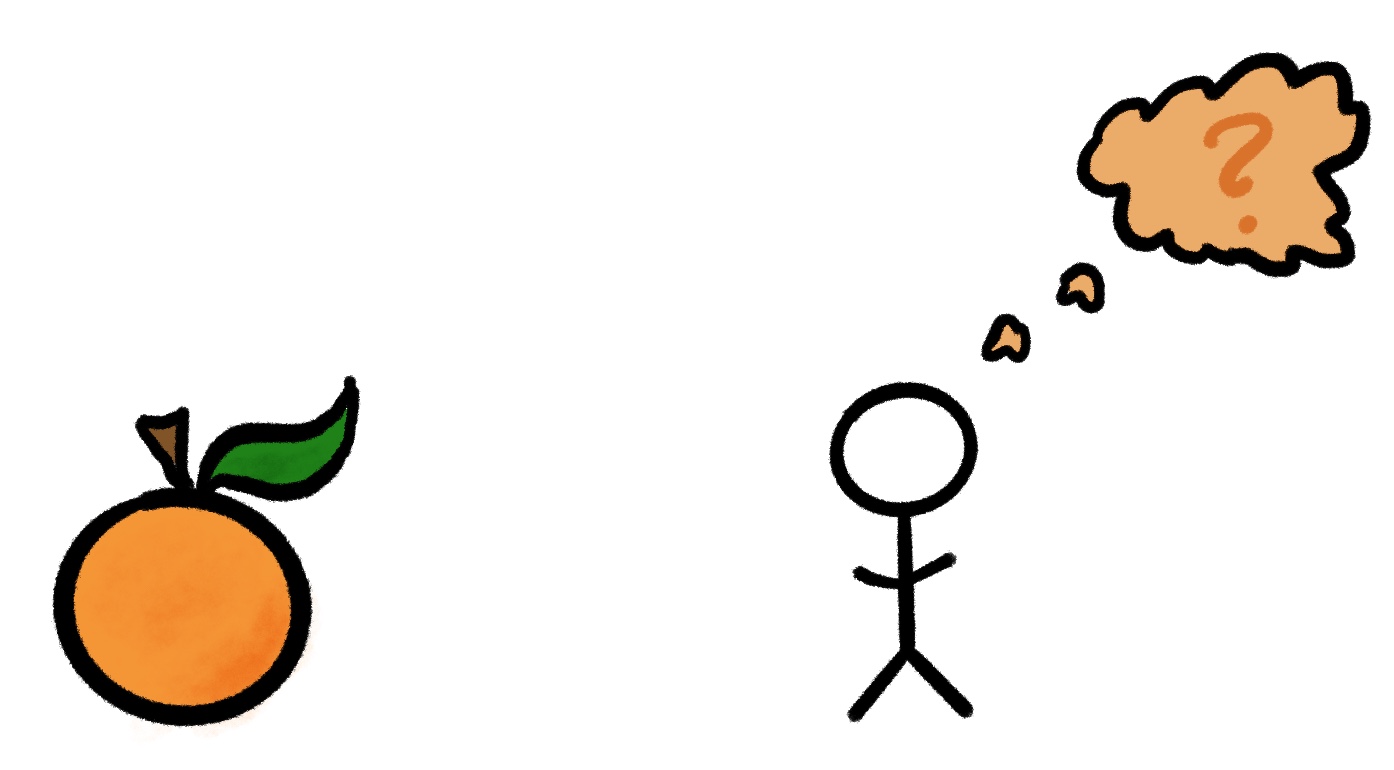}\end{center}} But language is also statistical, as some expressions occur more often than others: \textit{orange fruit} occurs more frequently than \textit{orange idea}, and this contributes to the meanings of these expressions. The probability of reading or saying ``orange fruit'' is higher than the probability of ``orange idea.'' But what \textit{is} this mathematical structure? We look for a preliminary set of mathematical tools to start exploring it. To identify this toolbox, we search for a second example of where compositionality and statistics meet. Again we needn't look far. The world of quantum many body physics involves precisely these ideas. Small systems compose to form larger composite systems, and various properties of these quantum systems are driven by statistics. So that is where we begin. This thesis uses basic tools of quantum physics to understand compositional and statistical mathematical structure.

% These two ideas---compositionality and statistics---and the mathematics behind them, are showcased right here before us, in the words on this page.  That is,  \textit{natural language}---as in, English, Spanish, Tagalog, Greek,\ldots---is one example where compositional and statistical structure emerges.  \tai{FINISH---Two examples of compositional/distributional structure are quantum physics and language. so let's use the tools of the former to explore the latter.} \tai{put this in ch 1?}
%[meaning is by Yoneda---environment plus statistics. Note we get this with entailment, that's why we map into suffix subsystem.] mention how words go to vectors, and phrases go to the tp of the words. .... say language is compositional and statistical.. just like my talks! (so maybe put this early on? or early and again here.) See Dissertation Fellowship.

\newthought{Here's a brief word} about the content itself. This work is ready-made to be digested in bits and pieces. Chapter \ref{ch:introduction} is the introduction and sets the layout of the land. You can read it, stop there, and have a good idea of what this work is about. But I hope you'll keep reading! Chapter \ref{ch:preliminaries} contains a series of tools and results from linear algebra called \textit{quantum probability theory}, but I've included a number of expositional treats that may not be found in standard introductions to the subject. Importantly, I do not assume the reader has a background in physics; the examples and exposition are all anchored in the world of mathematics. Chapter \ref{ch:probability} uses these tools to construct a mathematical machine, and with a clear understanding of the theory, we'll be able to ``look under the hood'' to see why it works. Chapter \ref{ch:application} describes an application with a concrete experiment that showcases the theory in action. The application falls under the genre of \textit{unsupervised machine learning,} though the reader is not assumed to have familiarity with these words. The emphasis is placed on the underlying mathematics, and specialized language is introduced only when it helps to put the math in context. 

Chapter \ref{ch:categorytheory} is largely independent from the rest of the text. The first four chapters illustrate the utility of a certain construction, which, as it turns out, arises in different areas of mathematics in different guises. The best way to see this is by lifting our feet off of the ground to get a clear, holistic view of the landscape. To do so, we'll use the language of category theory---a branch of mathematics whose \textit{modus operandi} is identifying these kinds of sweeping connections. Chapter \ref{ch:categorytheory} assumes familiarity with the basics of category theory; I do not provide an introduction. Fortunately, no reader will be left behind---I've placed the tantalizing connection in simple terms in Chapter \ref{ch:introduction}.
		\marginnote[-1.5cm]{
		\newthought{Colophon}. This document was typeset using \LaTeX with minor modifications to \texttt{tufte-latex}, based on Edward Tufte's \textit{Beautiful Evidence}. Images were hand drawn in Procreate using an Apple Pencil (2nd generation) and iPad Pro.
		}
\medskip

\noindent Let's begin.
 
\chapter*{Abstract}
\addcontentsline{toc}{chapter}{Abstract}

\newthought{This thesis takes} inspiration from quantum physics to investigate mathematical structure that lies at the interface of algebra and statistics.  The starting point is a passage from classical probability theory to quantum probability theory. The quantum version of a probability distribution is a density operator,  the quantum version of marginalizing is an operation called the partial trace, and the quantum version of a marginal probability distribution is a reduced density operator. Every joint probability distribution on a finite set can be modeled as a rank one density operator. By applying the partial trace, we obtain reduced density operators whose diagonals recover classical marginal probabilities. In general, these reduced densities will have rank higher than one, and their eigenvalues and eigenvectors will contain extra information that encodes subsystem interactions governed by statistics. We decode this information---and show it is akin to conditional probability---and then investigate the extent to which the eigenvectors capture ``concepts'' inherent in the original joint distribution. The theory is then illustrated with an experiment. In particular, we show how to reconstruct a joint probability distribution on a set of data by gluing together the spectral information of reduced densities operating on small subsystems. The algorithm naturally leads to a tensor network model, which we test on the even-parity dataset. Turning to a more theoretical application, we also discuss a preliminary framework for modeling entailment and concept hierarchy in natural language---namely, by representing expressions in the language as densities. Finally, initial inspiration for this thesis comes from formal concept analysis, which finds many striking parallels with the linear algebra. The parallels are not coincidental, and a common blueprint is found in category theory. We close with an exposition on free (co)completions and how the free-forgetful adjunctions in which they arise strongly suggest that in certain categorical contexts, the ``fixed points'' of a morphism with its adjoint encode interesting information.

\chapter*{Acknowledgments}
\addcontentsline{toc}{chapter}{Acknowledgments}

\newthought{My time in graduate} school can be characterized as a series of uniquely wonderful experiences, and this is largely due to the kindness, vision, and insight of my thesis advisor, John Terilla. I've deeply enjoyed working on and exploring the mathematics with John, and through his supervision, each year has been filled with more joy than the previous. He created \textit{the} most perfect, vibrant environment in which I could thrive and has been a constant source of wholehearted support. I am immensely grateful.

I have also benefited from many helpful conversations with a number of wonderful people: John Baez, Tyler Bryson, Brendan Fong, Joey Hirsh, Steve Lack, Martha Lewis, Fosco Loregian, Jade Master, Jacob Miller, Arthur Parzygnat, Raymond Puzio, Nissim Ranade, Emily Riehl, Jackie Shadlen, David Spivak, Jim Stasheff, James Stokes, Miles Stoudenmire, Jay Sulezburg, Dennis Sullivan, Brad Theilman, Thomas Tradler, Yiannis Vlassopoulos, Simon Willerton, Scott Wilson, Liang Ze Wong, Mahmoud Zeinalian, and many others. \marginnote{Many thanks to the readers of Math3ma and to everyone who has reached out to me through the blog.  I've loved sharing mathematics with you, and your enthusiasm to learn with me continues to fuel this endeavor.} I especially thank Scott and Thomas for serving on my defense committee and for their helpful feedback. My thanks go to Brooke Feigon for her supervision throughout my undergraduate research, and for her remarkable foresight in suggesting I reach out to John Terilla as my thesis advisor. I also thank Clark Barwick, who first shared with me the elegance of \texttt{tufte-latex}. Warmest thanks extend to Christine Chang, Yin Choi Cheng, Rynae Rasley, and Jacob Russell for their gift of friendship.
	\marginnote[2cm]{I gratefully acknowledge Tunnel for a vibrant working environment, the Tunnel Technologies-CUNY GC Dissertation Fellowship for support during my fifth year, and the Mina Rees Dissertation Fellowship in the Sciences for support during my sixth year.}

Finally, I am deeply grateful for Cheikh Mboup, without whom I would not have known the riches of mathematics. His patience, humility, and vision first ignited my enthusiasm for the subject ten years ago, and continual exposure to his masterful instruction over the years laid a solid foundation for the way I think and write about math today. And my utmost appreciation extends to my mom and dad for their love, hugs, and never-ending support, and for teaching me the  value of perseverance and endurance. I treasure their guidance as they lead by example with the highest standard of excellence.

%Appreciation also extends to Jack Hidary and X for their interest in my work, their support for Math3ma, and their invitation to join the team.

% ---------------------------------CHAPTERS---------------------------------- %
\mainmatter
\setcounter{secnumdepth}{2}

%chapter 1
\chapter{Introduction}\label{ch:introduction}
 
\epigraph{The beauty of mathematics only shows itself to more patient followers.}{Maryam Mirzakhani \cite{mirzakhani2008} }

\newthought{This thesis features} a passage from classical probability theory to quantum probability theory by modeling probability distributions with particular linear operators. We'll investigate the extent to which the eigenvectors embody concepts inherent in the probability distribution and use some experimental results as our guide. The linear algebra is presented in Chapters \ref{ch:preliminaries} and \ref{ch:probability}, and the experimental considerations are given in Chapter \ref{ch:application}. Initial inspiration is drawn from formal concept analysis, a mathematical framework that uses lattice theory to identify concepts and concept hierarchies in data \cite{davey2002introduction}. We give a brief introduction below. Think of this chapter as the prelude to the main piece, which begins in Chapter \ref{ch:preliminaries}.

\section{Formal Concepts}\label{sec:ch1_FCA}
A \emph{formal concept}\index{formal concept} is the name given to a complete bipartite subgraph of a particular bipartite graph. Intuitively, one views the two sets of vertices as \textit{objects} and \textit{attributes}, and an edge indicates whether an object possesses a certain attribute. A formal concept is thus a subset $A$ of objects and a subset $B$ of attributes, with the property that every object in $A$ possesses all the attributes in $B,$ and that each attribute in $B$ is possessed by all the objects in $A$. Typically, however, formal concepts are defined in terms of finite sets and relations rather than bipartite graphs. This set-theoretic definition mirrors a construction that will reappear in the work to come, so let's describe it more carefully.
	\begin{marginfigure}
	\vspace{-4.5cm}
	\includegraphics[width=\linewidth]{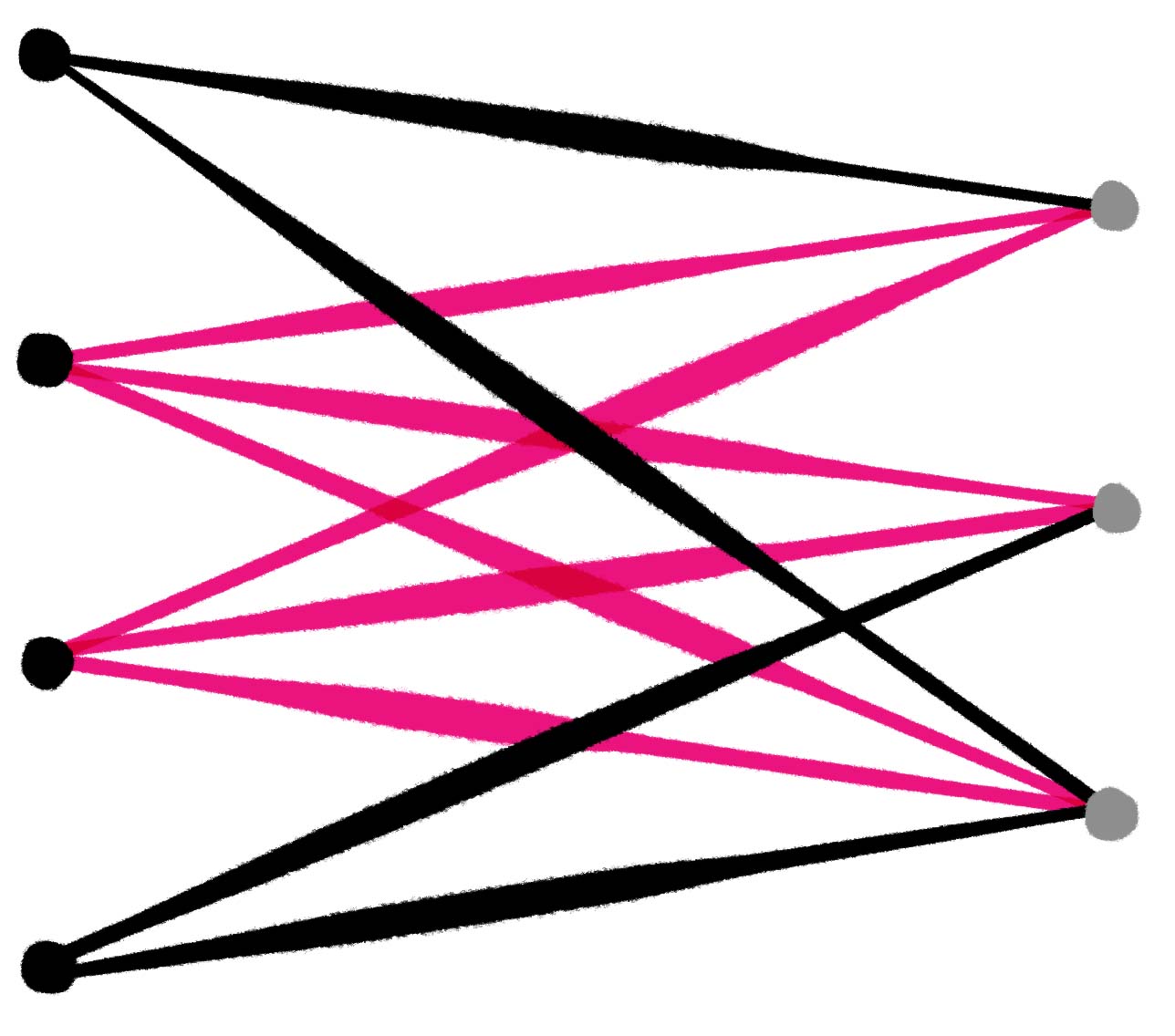}
	\caption{The subgraph in \textcolor{RubineRed}{red} is complete bipartite. This is one way to visualize a formal concept.}
	\label{fig:FCA}
	\end{marginfigure}

Let $X$ and $Y$ be finite sets and consider a function $R\colon X\times Y\to \{0,1\}$, which can be identified with a relation on, or equivalently a subset of, $X\times Y$. Intuitively, if $X$ is a set of objects and $Y$ is a set of attributes, then $R(x,y)=1$ if and only if object $x$ possesses attribute $y$. This relation determines a function $a$ from $X$ to the power set $2^Y$ of $Y$ that associates to each element $x$ the set of all elements in $Y$ that relate to $x$. Likewise, there is a function $b\colon Y\to 2^X$ that associates to each element $y$ the set of all elements in $X$ that relate to it.
\begin{equation}\label{eq:ab}
a(x):=\{y\in Y\mid R(x,y)=1\}\qquad
b(y):=\{x\in X\mid R(x,y)=1\}
\end{equation}
A similar assignment can be given to a subset $A\in 2^X$ by forming the set---call it $f(A)$---of all elements in $Y$ that relate to \textit{every} element in $A$. Think of this assignment as an \textit{extension} of $a$. In other words, there is a natural inclusion $X\to 2^X$ that maps each $x$ to the singleton set $\{x\}$, and the function $a$ extends uniquely along this inclusion to give a function $f\colon 2^X\to 2^Y$ defined on each $A\in 2^X$ by 
\begin{equation}\label{eq:f}
	f(A):=\bigcap_{x\in A}a(x).
\end{equation}
Now is a good time to point out that there is additional structure lying around: Power sets are more than mere sets. They are \emph{posets} (partially ordered sets) with subset containment providing the partial order. What's more, the function $f$ preserves this structure in a way that is \textit{order-reversing}: if $A'\subseteq A$ then $f(A)\subseteq f(A')$. To see this, it helps to think intuitively again: if $y\in f(A)$, then all objects in $A$---and in particular, those in $A'$---possess attribute $y,$ and so $y\in f(A')$. It's possible, however, for the objects in $A'$ to have a common attribute that is \textit{not} shared by another object $x\in A\smallsetminus A'$. That is, there's no reason to expect an equality $f(A)=f(A')$ in general. So in the same way, the function $b$ extends uniquely along the inclusion $Y\to 2^Y$ to give an order-reversing function $g\colon 2^Y\to 2^X$ defined on each $B\in 2^Y$ by
\begin{equation}\label{eq:g}
	g(B):= \bigcap_{y\in B}b(y)
\end{equation}
which is the set containing all $x\in X$ that relate to all $y\in B$. Now observe that for any pair of subsets $A\in 2^X$ and $B\in 2^Y$,
\begin{equation}\label{galois}
	A\subseteq g(B) \quad \text{if and only if} \quad B \subseteq f(A).
\end{equation}
The left-hand side states that the objects in $A$ have all the attributes in $B$ in common, though perhaps more. The right-hand side states that the attributes common to all objects in $A$ are at least those in $B$. These are, of course, two ways of saying the same thing. Pairs $(A,B)$ for which \textit{equality} holds in both sides of (\ref{galois}) are given a special name.

\begin{definition}
Given finite sets $X$ and $Y$ and a function $R\colon X\times Y\to \{0,1\}$, a pair $(A,B)\in 2^X\times 2^Y$ is a \textbf{formal concept of $R$}\index{formal concept} if $f(A)=B$ and $g(B)=A$, where $f$ and $g$ are the functions defined above.
\end{definition}

\noindent This definition admits a notion of concept hierarchy, as the set of all formal concepts of $R$ is itself partially ordered. Define $(A',B')\leq (A,B)$ whenever $A'\subseteq A$ or equivalently whenever $B'\supseteq B.$ Here is an example.

\begin{example}\label{ex:FCA} Consider the sets
	\marginnote{\smallcaps{Behind the Scenes.} In looking forward to language-inspired applications in Section \ref{sec:entailment} and Chapter \ref{ch:application}, I've chosen the elements of $X$ and $Y$ be English words rather than generic $x_i$ and $y_j$.}
\[X = \{\text{orange, green, purple}\} \qquad Y = \{\text{fruit, vegetable}\}\]
and let $R\colon X\times Y\to \{0,1\}$ be the function defined by
\begin{align*}
R(\text{orange, fruit}) &= 1\\
R(\text{green, fruit}) &= 1\\
R(\text{purple, vegetable}) &= 1\\
\end{align*}
with $R(x,y)=0$ for all other $(x,y)\in X\times Y$. This function can be represented by a $2\times 3$ table with a $1$ in the $xy$th entry if $R(x,y)=1$ and with $0$s elsewhere.
\begin{table}[h!]
	\centering
	\begin{tabular}{r|ccc}
		& \text{orange} & \text{green} & \text{purple}\\[5pt]
	\hline\\[-8pt]
	\text{fruit} & $1$ & $1$ & $0$\\[5pt]
	\text{vegetable} & $0$ & $0$& $1$
	\end{tabular}\\[10pt]
	\label{table1}
	\caption{The relation $R\colon X\times Y\to \{0,1\}$}
\end{table}

\noindent In this example, the values of the function $a\colon X\to 2^Y$ defined in Equation (\ref{eq:ab}) are as follows.
\begin{align*}
	a(\text{orange})&=\{\text{fruit}\}\\
	a(\text{green})&=\{\text{fruit}\}\\
	a(\text{purple})&=\{\text{vegetable}\}
\end{align*}
Notice that $a(\text{orange})$ can be read off from the nonzero entry in the first column of Table \ref{table1}, while $a(\text{green})$ and $a(\text{purple})$ may be read off from the second and third columns. The values of the function $b\colon Y\to 2^X$ are similarly determined by the nonzero entries of the rows of the table.
\begin{align*}
	b(\text{fruit})&=\{\text{orange, green}\}\\
	b(\text{vegetable})&= \{\text{purple}\}
\end{align*}
Given $a$ and $b$, the functions $\adj{f}{2^X}{2^Y}{g}$ are below.
\begin{fullwidth}
\begin{multicols}{2}
	\begin{align*}
	f\colon 2^X&\longrightarrow 2^Y\\
	\{\text{orange}\}&\mapsto\{\text{fruit}\}\\
	\{\text{green}\}&\mapsto\{\text{fruit}\}\\
	\{\text{purple}\}&\mapsto\{\text{vegetable}\}\\
	\{\text{orange, green}\}&\mapsto\{\text{fruit}\}\\
	\{\text{orange, purple}\}&\mapsto\varnothing\\
	\{\text{green, purple}\}&\mapsto\varnothing\\
	\{\text{orange, green, purple}\}&\mapsto\varnothing
	\end{align*}
\columnbreak \\
	\begin{align*}
	g\colon 2^Y&\longrightarrow 2^X\\
	\{\text{fruit}\}&\mapsto\{\text{orange, green}\}\\
	\{\text{vegetable}\}&\mapsto\{\text{purple}\}\\
	\{\text{fruit, vegetable}\}&\mapsto\varnothing
	\end{align*}
\end{multicols}
\end{fullwidth}
A formal concept is a pair of subsets $(A,B)\in 2^X\times 2^Y$ so that $fA=B$ and $gB=A$. Looking at the assignments above, we find that there are two formal concepts: the pair \{orange, green\} and \{fruit\}, and the pair \{purple\} and \{vegetable\}.
\begin{center}
\begin{tikzpicture}
\node at (3,0) {\includegraphics[scale=0.15]{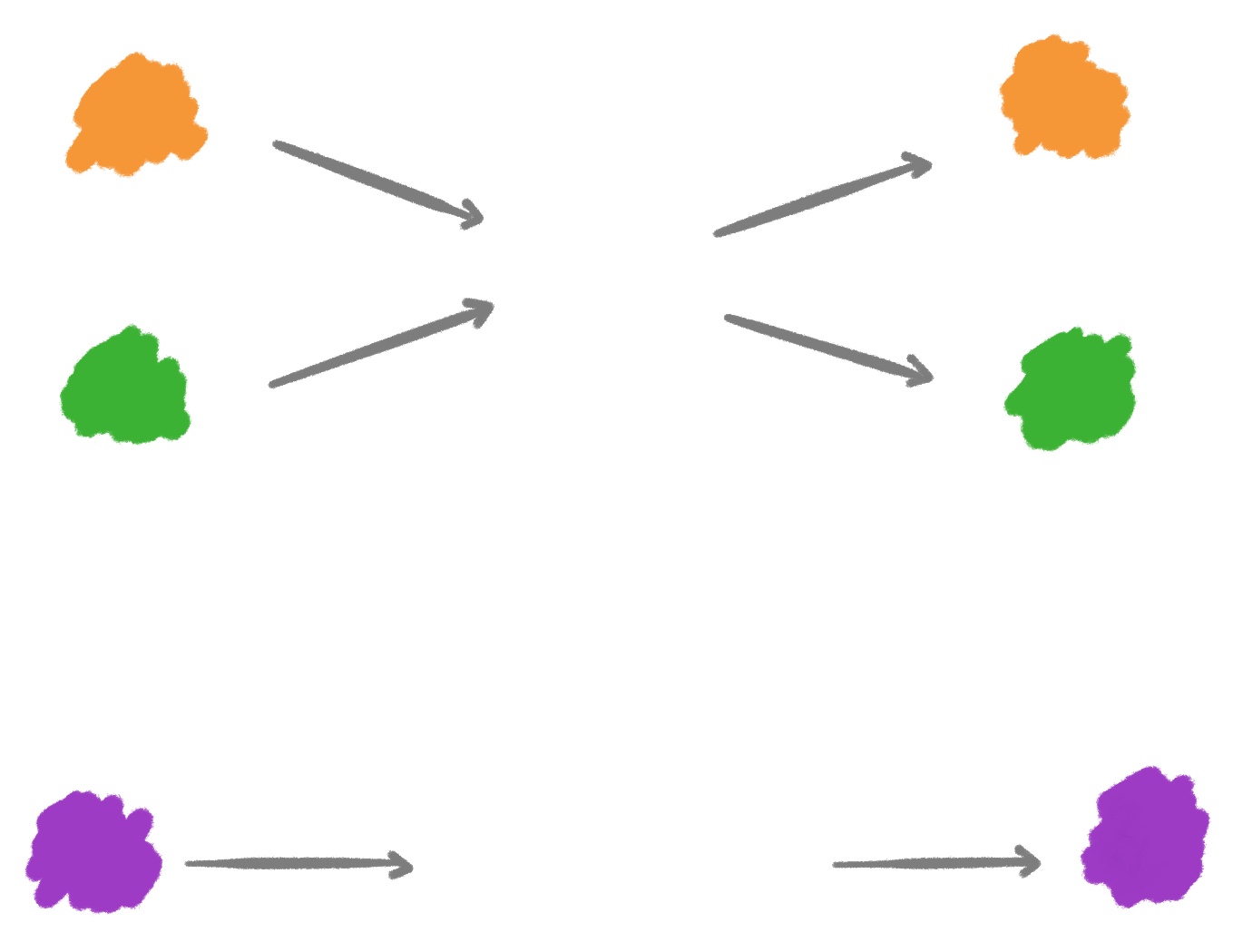}};
\node at (2,2) {$f$};
\node at (4,2) {$g$};
\node at (3,1.25) {fruit};
\node at (10,1.25) {(\{{\color{YellowOrange}orange}, {\color{Green}green}\},\{\text{fruit}\})};

\node at (1.2,-2) {$f$};
\node at (4.8,-2) {$g$};
\node at (10,-2.25) {(\{{\color{Purple}purple}\},\{\text{vegetable}\})};
\node at (3,-2.25) {vegetable};
\end{tikzpicture}
\end{center}
The first formal concept is, perhaps, the concept of citrus fruits. The second formal concept is that of a purple vegetable. \textit{There are plenty of these:} red cabbage, purple cauliflower, purple carrots, purple asparagus, endive, and eggplant---though that's technically a fruit!---to name a few.

\newthought{As noted in the} introduction, there is also a simple, graph theoretic way to understand formal concepts. In place of Table \ref{table1}, the relation $R$ may be represented as a bipartite graph. The sets $X$ and $Y$ provide the two sets of vertices, and there is an edge joining $x\in X$ and $y\in Y$ if and only if $R(x,y)=1$.
\begin{center}
\begin{tikzpicture}[y=1.2cm, x=1.2cm]
\node at (0,0) {\includegraphics[scale=0.1]{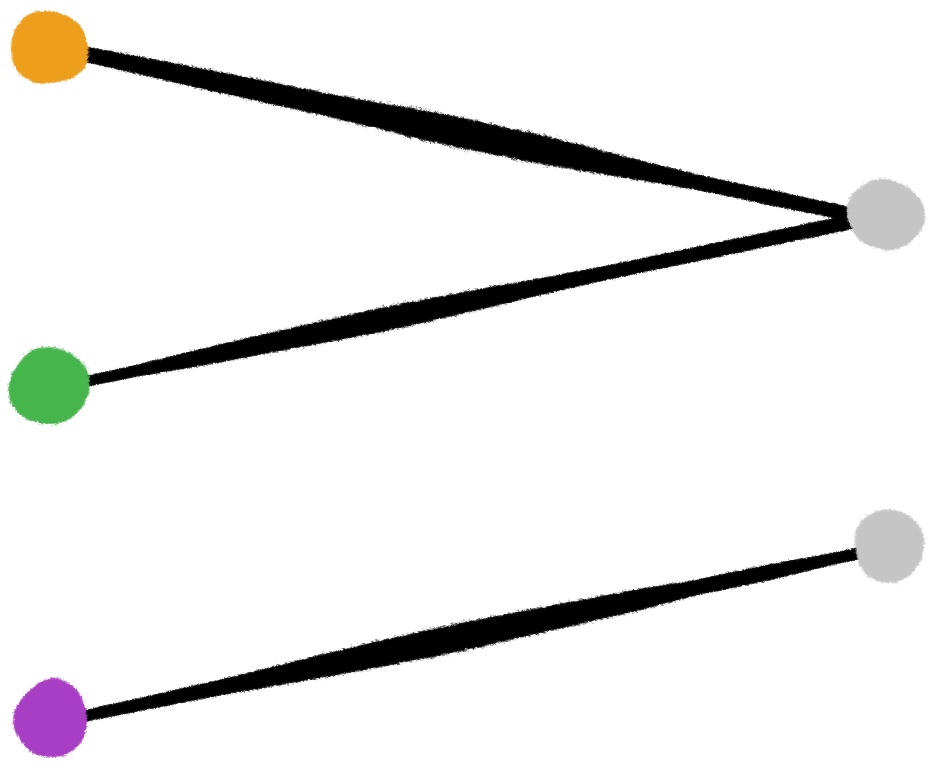}};
\node at (-2,1) {{\color{YellowOrange}orange}};
\node at (-2,0) {{\color{Green}green}};
\node at (-2,-1) {{\color{Purple}purple}};

\node at (2,.5) {fruit};
\node at (2,-.5) {vegetable};
\end{tikzpicture}
\end{center}
Formal concepts coincide with complete bipartite subgraphs. Indeed, this graph has two complete bipartite subgraphs---which happen to be disjoint in this example---and they correspond the two formal concepts listed above.
\end{example}

In general, formal concepts may also be identified with invariant subsets or ``fixed points'' of the composite functions $fg$ and $gf$. Let Fix$fg$ denote the set of all $B\in 2^Y$ satisfying $fg(B)=B$ and similarly for \text{Fix}$gf$. There are bijections
\[
	\text{formal concepts of $R$} 
	\quad \cong \quad \text{Fix}fg 
	\quad \cong \quad 
	\text{Fix}gf.
\]
If $(A,B)$ is a formal concept, then $B$ is a fixed point of $fg$, and $A$ is a fixed point of $gf$:
\[
	f(gB)=fA=B \qquad g(fA)=gB=A.
\]
Conversely, if $B\in \text{Fix}fg$, then $(gB,B)$ is a formal concept, and if $A\in\text{Fix}gf$, then $(A,fA)$ is also a formal concept. Let's summarize the discussion so far.

\begin{takeaway}\label{takeaway:FCA}
Given finite sets $X$ and $Y$, any function $R\colon X\times Y\to \{0,1\}$ induces two functions $a\colon X\to 2^Y$ and $b\colon Y\to 2^X$ that lift to order-reversing functions $f$ and $g$ so that the following diagrams commute. 
\[
	\begin{tikzcd}
	2^X \arrow[r, "f", dashed]         & 2^Y && 2^Y \arrow[r, "g", dashed]         & 2^X \\
	X \arrow[ru, "a"'] \arrow[u, hook] &     && Y \arrow[u, hook] \arrow[ru, "b"'] &    
	\end{tikzcd}
\]
Moreover, $f$ and $g$ satisfy
\[
	A\subseteq g(B) \quad \text{if and only if} \quad B \subseteq f(A),
\]
for all $A\in 2^X$ and $B\in 2^Y$, and the invariant subsets of the compositions $fg$ and $gf$ are formal concepts. Moreover, there is a one-to-one correspondence between them.
\end{takeaway}

The ideas here are an example of a construction that appears often throughout mathematics. The relationship between $f$ and $g$ as seen in (\ref{galois}) is closely related to the correspondence between groups and fields in Galois theory and the correspondence between fundamental groups and covering spaces in algebraic topology. They are all instances of a general construction called a \textit{Galois connection}, which is the name given to a pair of maps between posets that satisfy a property like that in (\ref{galois}). Galois connections themselves are examples of a more general construction known as an \emph{adjunction} in category theory.\sidenote{Given two categories $\C$ and $\D$, a pair of functors $F\colon \C\to\D$ and $G\colon \D\to\C$ form an \emph{adjunction} if \begin{equation}\label{eq:adjoint} \hom_\C(FA,B)\cong \hom_\D(A,GB)\end{equation} for all objects $A$ in $\C$ and $B$ in $\D$. In other words, $F$ and $G$ are \emph{adjoint functors} if the set of morphisms $FA\to B$ are in (natural) bijection with the set of morphisms $A\to GB$. Compare Equation (\ref{eq:adjoint}) with Equation (\ref{adjoint}) to see where ``adjoint'' functors get their name.} Indeed, the power set $2^X$ is, like any poset, a category whose objects are subsets of $X$ and whose morphisms are provided by the partial order. The order-reversing functions $f$ and $g$ are contravariant functors, and the statement in (\ref{galois}) witnesses an adjunction $\adj{f}{2^X}{2^Y}{g}$. As it turns out, pairs of adjoint functors on posets arise frequently in mathematics; the one we've discussed in this section is just a special case. Not coincidentally, the terms ``adjunction'' and ``adjoint'' arise from similarity with \textit{adjoint linear maps}. (See Equation (\ref{eq:adjoint}) in the margin.) In fact, when we exchange $\mathbb{C}$ for $\{0,1\}$ in our discussion above, the summary in Takeaway \ref{takeaway:FCA} has a familiar linear algebraic version that leads to several key ideas for this thesis.

\section{A Linear Algebraic Version}\label{sec:ch1_linalg}
Let $X$ and $Y$ be finite sets as before, and consider a complex-valued function $M\colon X\times Y\to \mathbb{C}$. Let $\mathbb{C}^X$ denote the free complex vector space on $X$. For the moment, it will be helpful to think of vectors $v\in\mathbb{C}^X$ as functions $v\colon X\to\mathbb{C}$, just as a subset $A\in 2^X$ coincides with a function $A\colon X\to 2$. The function $M$ gives rise to functions $\alpha\colon X\to\mathbb{C}^Y$ and $\beta\colon Y\to\mathbb{C}^X$. For each $x$ the function  $\alpha (x)\colon Y\to\mathbb{C}$ assigns $M(x,y)$ to each $y$. Likewise for each $y$ define the function $\beta (y)\colon X\to \mathbb{C}$ to assign $\overline{M(x,y)}$ to each $x$. Now observe that there is a natural inclusion $X\to \mathbb{C}^X$ that associates to each $x$ the function $v_x\colon X\to \mathbb{C}$ whose value $v_x(x')$ is $1$ if $x'=x$ and is $0$ otherwise. By the universal property of vector spaces, the function $\alpha$  uniquely extends along this inclusion to a linear map\sidenote{Said more informally, defining a linear map $\mathbb{C}^X\to\mathbb{C}^Y$ is as simple as specifying where the basis vectors go. That's what $\alpha$ is doing.} $M\colon \mathbb{C}^X\to\mathbb{C}^Y$. The same letter $M$ is used to denote this linear map since its matrix representation is precisely the function $M\colon X\times Y\to\mathbb{C}$ viewed as a $|Y|\times |X|$ matrix. In fact, $\alpha(x)$ is simply the $x$th column of $M$. In the same way, the function $\beta$ extends to a linear map $M^\dagger\colon\mathbb{C}^Y\to\mathbb{C}^X$, and $\beta(y)$ is the (complex conjugate of the) $y$th row of $M$. Dagger notation is used since the matrix representing $M^\dagger$ is the conjugate transpose of $M$. 
\[
	\kbordermatrix{
		& &   &  \\
	    &  & \mid &  \\[5pt]
	    &\phantom{\cdots} & \alpha(x) &\phantom{\cdots}   \\[5pt]
	    & & \mid & 
	    }
	   =M=
	\kbordermatrix{
		& &  &  \\
	    & &   &  \\[5pt]
	     & \text{---} & \beta(y) & \text{---}  \\[5pt]
	    & &  & 
	    }
\]
In other words, $M^\dagger\colon \mathbb{C}^Y\to\mathbb{C}^X$ is the linear adjoint of $M\colon \mathbb{C}^X\to\mathbb{C}^Y$ since for all $v\in\mathbb{C}^X$ and $w\in \mathbb{C}^Y$,
	\marginnote{Compare Equation (\ref{adjoint}) with Equation (\ref{eq:adjoint}) to see where ``adjoint functors'' get their name.}
\begin{equation}\label{adjoint}
\langle Mv,w\rangle = \langle v,M^\dagger w\rangle.
\end{equation}
``Fixed points'' of the composite maps $M^\dagger M$ and $MM^\dagger$ are  their eigenvectors, and we will show in Proposition \ref{prop:evects} that there is a one-to-one correspondence between them\sidenote{...as long as $M$ is ``properly normalized'' in a sense that the proposition makes precise.}. Here's the takeaway of this discussion. Notice the similarity with Takeaway \ref{takeaway:FCA}.

\begin{takeaway}\label{takeaway:LA}
Given finite sets $X$ and $Y$, any function $M\colon X\times Y\to \mathbb{C}$ induces two functions $\alpha\colon X\to \mathbb{C}^Y$ and $\beta\colon Y\to\mathbb{C}^X$ that lift to linear maps $M$ and $M^\dagger$ so that the following diagrams commute.
\[
	\begin{tikzcd}
	\mathbb{C}^X \arrow[r, "M", dashed]         & \mathbb{C}^Y && \mathbb{C}^Y \arrow[r, "M^\dagger", dashed]         & \mathbb{C}^X \\
	X \arrow[ru, "\alpha"'] \arrow[u, hook] &     && Y \arrow[u, hook] \arrow[ru, "\beta"'] &    
	\end{tikzcd}
\]
Moreover, $M$ and $M^\dagger$ satisfy 
\[
\langle Mv,w\rangle = \langle v,M^\dagger w\rangle
\]
for all $v\in\mathbb{C}^X$ and $w\in\mathbb{C}^Y$, and the one-dimensional invariant subspaces of the compositions $M^\dagger M$ and $MM^\dagger$ correspond to their eigenvectors. Moreover, there is a one-to-one correspondence between them.
\end{takeaway}

\newthought{Compare the linear} algebra in Takeaway \ref{takeaway:LA} and Equation (\ref{adjoint}) with the set theory in Takeaway \ref{takeaway:FCA} and Statement (\ref{galois}). The similarities are unmistakable: a relation $R$ is like a matrix $M$, the poset maps $f,g$ are like the linear maps $M,M^\dagger$, and formal concepts are like eigenvectors. So we ask the natural question, \textit{Do eigenvectors of $M^\dagger M$ and $MM^\dagger$ capture concepts inherent in the matrix $M$?} A main goal of this thesis is to show that the answer is ``Yes.'' We further wish to understand the general construction for which the diagrams in Takeaways \ref{takeaway:FCA} and \ref{takeaway:LA} are a special case. A first step in this direction is to notice that the diagrams are rather imprecise. Some of the arrows are functions, while others are functions that preserve structure.\sidenote[][-1cm]{For instance, $\alpha$ is a function while $M$ is a linear map. In fact, take a closer look at $\alpha\colon X\to\mathbb{C}^Y$. We've referred to $\alpha$ as a ``function,'' but the domain and codomain of a function are \textit{sets}. The domain of $\alpha$ is certainly a set, but its codomain is a set \textit{equipped with} a vector space structure. So $\alpha$ is not well-typed. It would be better for us to say that $\alpha$ is ``a function from $X$ to the \textit{underlying set} of the vector space $\mathbb{C}^Y$.'' Getting this right is more than pedantry. It plays an essential role in discovering why Takeaways \ref{takeaway:FCA} and \ref{takeaway:LA} are so similar. As we'll see in Chapter \ref{ch:categorytheory}, the answer is found in category theory.} It's not clear which \textit{category} the mathematics is taking place in. A more careful treatment will clarify the situation. So another goal of this thesis is to shed light on the category theory underlying the two constructions discussed here.

\section{Outline of Contents}
Here's a preview of the ideas to come. First, for the applications we have in mind, a main focus will be on matrices $M\colon X\times Y\to\mathbb{C}$ with the property that $\sum_{x,y}|M(x,y)|^2=1$; that is, the absolute square of the entries of $M$ define a probability distribution on $X\times Y$. This matrix represents a linear map $\mathbb{C}^X\to\mathbb{C}^Y$, and so it can be identified with a unit vector in the tensor product $\mathbb{C}^X\otimes\mathbb{C}^Y$. Orthogonal projection onto this unit vector defines a certain type of linear operator known as a \textit{density operator} or synonymously a \textit{quantum state}. This operator induces two \textit{reduced density operators}---one on $\mathbb{C}^X$ and one on $\mathbb{C}^Y$---whose matrix representations are precisely $M^\dagger M$ and $MM^\dagger$. If the ranks of these operators are greater than $1$, then their eigenvectors will be shown to harness valuable information within the original quantum state, and this information will turn out to be useful in an applied setting. We will introduce all of these ideas from the ground up. Here's the path we'll take to do so.

\newthought{Chapter Two}. We begin with preliminaries in Chapter \ref{ch:preliminaries}. Section \ref{sec:example} motivates the linear algebra in the previous paragraph by presenting an elementary-yet-illuminating example that does not use the language of quantum mechanics. Section \ref{sec:prelim} then introduces \textit{bra-ket} notation and \textit{tensor network diagram} notation, which will be used throughout. We will hit the ground running in Sections \ref{sec:density}, \ref{sec:partial_trace}, and \ref{sec:red_density}, which cover the basics of \textit{density operators}, the \textit{partial trace}, and \textit{reduced density operators}. The main theme in this chapter is that density operators are the quantum version of probability distributions, reduced density operators are the quantum version of marginal probability, and the partial trace is the quantum version of marginalizing.

\newthought{Chapter Three}. Armed with the prerequisites, we present the main contribution in Chapter \ref{ch:probability}. It is a procedure for modeling any joint probability distribution by a rank $1$ density operator together with a decryption of the information carried in the eigenvalues and eigenvectors of its reduced density operators. The procedure is outlined in Section \ref{sec:procedure}, and an extended illustration is given in Example \ref{ex:main}. The example identifies a particularly simple scenario, namely  that when the probability distribution being modeled is an \textit{empirical} one, the reduced densities and their spectral information have simple combinatorial interpretations that can be read off visually from a graph representing the distribution. Section \ref{sec:graphs} explains this visual, combinatorial idea. The mathematics then branches off into a couple of directions. Section \ref{sec:evectsFCA} revisits the connection with formal concept analysis with a more informed perspective, and Section \ref{sec:entailment} outlines a preliminary framework for modeling entailment and concept hierarchy using density operators. 

\newpage
\newthought{Chapter Four}. Chapter \ref{ch:application} is Chapter \ref{ch:probability} in live action. We will discover that the spectral information of reduced densities of a \textit{rank $1$ density} can be pieced together to reconstruct the original state. We describe an algorithm for doing this, and along the way address a common puzzle in modeling probability distributions:  Given a dataset of samples drawn from a possibly unknown probability distribution $\pi$, how might we use the samples to model a \textit{new} probability distribution that estimates $\pi$, with the purpose of generating new data from it? Such models are called \textit{generative models}. In Section \ref{sec:experiment} we'll show how the information stored in the eigenvectors and eigenvalues of reduced densities can be pieced together to build one of these models. The process is given by a deterministic algorithm that naturally leads to a \textit{tensor network} presentation of the model. Because the algorithm and tools come from simple linear algebra, it's possible to predict how well the model will work, just given the number of samples used to build it. We will describe a concrete experiment on a dataset of even-parity bitstrings in Section \ref{sec:theexperiment}. The application, algorithm, and experiment are work originally shared in \cite{bradley2019modeling}.

\newthought{Chapter Five}. The last chapter reviews the mathematics of Chapters \ref{ch:introduction}--\ref{ch:application} from a high-level perspective. After distilling the essential ideas, it is apparent that the \textit{fixed points of the composition of a map with its adjoint are interesting}. In the context of linear algebra, this refers to eigenvectors of reduced densities; in the context of sets and order theory, it refers to formal concepts. Both constructions are remarkably similar. In Chapter \ref{ch:categorytheory} we use the language of category theory to search for the reason for this similarity. As we'll see, both constructions arise from a free-forgetful adjunction. In Section \ref{sec:5_LA} we'll recall the free-forgetful adjunction between the category of sets and the category of vector spaces. Special emphasis will be on the unit of the adjunction, which fits into a universal property that recovers Takeaway \ref{takeaway:LA}. In Section \ref{sec:5_CT} we will remark on \textit{free (co)completions} and their universal properties. Along the way, we'll comment on a number of striking analogies between category theory and linear algebra and a clear dictionary between the two. Section \ref{sec:5_FCA} then specializes the category theory to an \textit{enrichment over truth values}, where a certain universal property will recover Takeaway \ref{takeaway:FCA}. Know that Chapter \ref{ch:categorytheory} simply collects a number of known results into a single exposition with the purpose of calling attention to the similarity with our contributions in Chapters \ref{ch:introduction}--\ref{ch:application}. Although we do not recover our linear algebraic constructions from the categorical blueprint, the pattern is crystal clear. We close in Section \ref{sec:5behind} with a remark on this.

\clearpage

%chapter 2
\chapter{Preliminaries}\label{ch:preliminaries}

\epigraph{Networks transport classical things like power, water, oil, and cars. Tensor networks transport linear algebraic things like rank and entanglement and should be thought of as the quantum analogy.}{Shawn X. Cui et. al. \cite{Cui2015}}

\newthought{In this chapter} we'll describe a passage from classical probability to quantum probability by modeling any probability distribution $\pi$ as a pure quantum state; that is, a rank $1$ density operator $\rho$. When the state is entangled, the reduced density operators of $\rho$ recover the classical marginal probability distributions of $\pi$ along their diagonals. Because the original state is entangled, the eigenvectors of the reduced densities  contain additional information from $\pi,$ which we see as conditional probability. The goal of this chapter is to present these ideas from the ground up.

To begin, we think of density operators as the quantum version of probability distributions, and of reduced density operators as the quantum versions of marginal probability distributions. The operation that plays the role of marginalizing is called the partial trace. So we give an overview of density operators in Section \ref{sec:density}, of the partial trace in Section \ref{sec:partial_trace}, and of reduced density operators in Section \ref{sec:red_density}. Know that these are basic constructions that may be found in many excellent texts on quantum information theory and quantum computing, such as \cite{kitaev2002classical, nielsen2000quantum, watrous2018theory, witten2018miniintroduction,preskill229notes}, though I've included a number of expositional treats that may not be found in standard introductions. Although the language of quantum mechanics is used, the mathematics may appeal to a wider mathematical audience with different applications in mind. So to help facilitate the transfer knowledge, Section \ref{sec:example} motivates this chapter with an elementary example that does not use the language of quantum mechanics. In the remaining sections, terminology is only introduced as needed. The opening example is followed by Section \ref{sec:prelim}, which reviews bra-ket notation along with tensor network diagrams.

\section{A Motivating Example}\label{sec:example} 
The best way to introduce the ideas in this chapter is through an example. The example in this section is remarkably elementary---understanding it requires no more than a few basic definitions from probability theory. Even so, it highlights interesting mathematics that is the keynote of this thesis.

\subsection{Some Elementary Probability}\label{ssec:intro_probability}
Before the example, here's some basic terminology. A \emph{probability distribution}\index{probability distribution} (or simply, \emph{distribution}) on a finite set $S$ is a function $\pi\colon S\to \mathbb{R}$ satisfying
\[
	\sum_{s\in S}\pi(s) = 1, \qquad \pi(s)\geq 0 \text{ for all $s\in S.$}
\]
If the elements of $S$ are ordered so that $S=\{s_1,\ldots,s_n\}$, then a probability distribution may be written as a tuple $(\pi(s_1),\ldots, \pi(s_n))$ of nonnegative real numbers whose sum is $1$. A \emph{joint probability distribution}\index{probability distribution!joint} will refer to a probability distribution on a Cartesian product of finite sets; that is, a function $\pi\colon X\times Y\to \mathbb{R}$ satisfying 
\[
	\sum_{(x,y)\in X\times Y}\pi(x,y)=1.
\]
Every joint distribution gives rise to \emph{marginal probability distributions}\index{probability distribution!marginal} by summing probabilities over one of the factors, sometimes referred to as ``integrating out.'' So the marginal distribution $\pi_X\colon X\to \mathbb{R}$ on $X$ and the marginal distribution $\pi_Y\colon Y\to \mathbb{R}$ on $Y$ are defined by
\[
	\pi_X(x)=\sum_{y\in Y}\pi(x,y)
	\qquad
	\pi_Y(y)=\sum_{x\in X}\pi(x,y).
\]
The process of marginalizing loses information. Informally, \textit{marginal probability doesn't have memory.} Let's illustrate with an example. Suppose $X$ and $Y$ contain words from the English language.
\[
X=\{\text{orange, green, purple}\}
	\qquad
Y=\{\text{fruit, vegetable}\}
\]
Consider the joint distribution on $X\times Y$ indicated
	\marginnote{The $2\times3$ table on this page may remind you of the nearly identical $2\times 3$ table used to illustrate formal concepts in Example \ref{ex:FCA}. This is not a coincidence. We will make the connection explicit in Section \ref{sec:evectsFCA}.}
in the $2\times3$ table below. For simplicity, suppose this distribution is an \textit{empirical} one, so that for a fixed pair $(x,y)\in X\times Y$ the probability $\pi(x,y)$ is just the number of $(x,y)$ divided by the total number of pairs.
	\marginnote[1cm]{
	\[
	\begin{array}{c}
	\text{\Large {\color{YellowOrange}fruit}} \\[5pt]
	\text{\Large {\color{Green}fruit}} \\[5pt]
	\text{\Large {\color{Purple}vegetable}} \\
	\end{array}
	\]
	}
\[
\begin{array}{r|ccc}
	& \text{orange} & \text{green} & \text{purple}\\[5pt]
\hline\\[-8pt]
\text{fruit} & \frac{1}{3} & \frac{1}{3} &0\\[5pt]
\text{vegetable} & 0 & 0 & \frac{1}{3} \\
\end{array}
\]
So there are two kinds of food in three colors. You might imagine a very, very small corpus of text---a (rather dull) children's book, for instance---consisting of the three phrases
\begin{quote}
\textit{orange fruit \qquad green fruit \qquad purple vegetable}.
\end{quote}
Marginalizing provides a way to isolate and summarize the statistics of one feature at a time. For example, we can sum over $Y$ to ignore food and only focus on colors alone. The resulting marginal probability distribution $\left(\frac{1}{3},\frac{1}{3},\frac{1}{3}\right)$ on $X$ is computed by summing along the three columns of the table. In the same way, we can sum over $X$ to ignore colors and obtain information about food alone. The resulting marginal probability distribution $\left(\frac{2}{3},\frac{1}{3}\right)$ on $Y$ is computed by summing across the two rows of the table.
\[
\begin{array}{lllll}
	\pi_X=\left(\frac{1}{3},\frac{1}{3},\frac{1}{3}\right) 
		& &
	\leftrightsquigarrow
		& &
	\begin{array}{l}
		\text{{\color{YellowOrange}orange}} \\
		\text{{\color{Green}green}} \\
		\text{{\color{Purple}purple}} \\
		\end{array}
		\\[20pt]
		\hline
		\\
	\pi_Y=\left(\frac{2}{3},\frac{1}{3}\right)
		& &
	\leftrightsquigarrow
		& &
	\begin{array}{l}
		\text{fruit} \\
		\text{fruit} \\
		\text{vegetable}
	\end{array}
\end{array}
\]
Now it's apparent that marginal probability doesn't have memory.
\begin{itemize}
	\item The marginal probability of \textit{fruit} is $\frac{2}{3}$, though that number alone does not indicate that half of the fruits are {\color{YellowOrange}orange} and half are {\color{Green}green}.
	\item The marginal probability of \textit{vegetable} is $\frac{1}{3}$, though that number alone does not indicate that all of the vegetables are {\color{Purple}purple}.
	\end{itemize}
So marginal probability is forgetful. It summarizes information and therefore leaves behind certain details from the joint distribution.

\newthought{But there is \textit{another}} way to compute marginal probability so that the information lost is in fact \textit{not} lost but is readily available. Think of it as marginal probability with memory. We introduce it now.

\subsection{Marginal Probability With Memory}\label{ssec:memory}
Here is a different way to compute marginal probabilities. Start with the same joint distribution on $X\times Y$. For reference, here it is again:
\[
	M=
	\begin{bmatrix}
	\sqrt{\frac{1}{3}} & \sqrt{\frac{1}{3}} & 0\\[5pt]
	0 & 0 &\sqrt{\frac{1}{3}}
	\end{bmatrix}
\]
Some cosmetic changes have been made. First, the $2\times 3$ table is now a $2\times 3$ matrix, $M$. We may still think of the columns as being labeled by the colors \textit{orange, green, purple}, and the rows as being labeled by \textit{fruit} and \textit{vegetable}. Also, square roots have been judiciously added. The reason will become clear soon. Now, without yet providing the reason \textit{why}, let's consider the product of $M$ with its (conjugate) transpose $M^\dagger$.
\[
	M^\dagger M = 
	\kbordermatrix{
		& & &\\
	    & \frac{1}{3} & \frac{1}{3} & 0  \\[5pt]
	    & \frac{1}{3} & \frac{1}{3} & 0  \\[5pt]
	    & 0 & 0 & \frac{1}{3} 
	    }
\]
This matrix has several interesting features. First, it has three rows and three columns, which correspond to the three elements of the ordered set $X=\{\text{orange, green, purple}\}$. The $i$th diagonal entry can therefore be associated with the $i$th element of $X$. In particular, the diagonal entries of $M^\dagger M$ are precisely the marginal probabilities on $X$. So we turned the joint distribution into a matrix $M$ and computed marginal probability as the diagonal of $M^\dagger M$.
\[
\begin{array}{llllll}
	\kbordermatrix{
    & \text{{\color{YellowOrange}orange}} & \text{{\color{Green}green}} & \text{{\color{Purple}purple}}\\
    \text{{\color{YellowOrange}orange}} & {\color{YellowOrange}\frac{1}{3}} & \frac{1}{3} & 0  \\[10pt]
    \text{{\color{Green}green}} & \frac{1}{3} & {\color{Green}\frac{1}{3}} & 0  \\[10pt]
    \text{{\color{Purple}purple}} & 0 & 0 & {\color{Purple}\frac{1}{3}} \\
  	}

	&& \leftrightsquigarrow &&

	\pi_X=({\color{YellowOrange}\frac{1}{3}},{\color{Green}\frac{1}{3}},{\color{Purple}\frac{1}{3}})
\end{array}
\]
But there's something else to notice. In addition to the diagonal entries, $M^\dagger M$ also contains non-zero off-diagonal entries. This guarantees that its \textit{eigenvectors} are interesting.\sidenote{If $M^\dagger M$ had \textit{no} nonzero off diagonals, then its eigenvectors would correspond to the elements in $X$, which is not too interesting since we wouldn't have recovered anything new.} Since $M^\dagger M$ is a rank $2$ matrix, it has two eigenvectors.  Not coincidentally, the set $Y=\{\text{fruit, vegetable}\}$ has two elements. Below are the two eigenvectors of $M^\dagger M$ along with a suggestive interpretation of their relation to the two elements of $Y$. In short, the squares of the entries of the eigenvectors define conditional probability distributions on $X$.
\[
\begin{array}{rccrcc}
	\kbordermatrix{
		& \\
	    \pi(\text{{\color{YellowOrange}orange}}|\text{fruit}) & \sqrt{\frac{1}{2}} \\[10pt]
	    \pi(\text{{\color{Green}green}}|\text{fruit}) & \sqrt{\frac{1}{2}} \\[10pt]
	    \pi(\text{{\color{Purple}purple}}|\text{fruit}) & 0  
	}
	&& \overset{\text{fruit}}{\leftrightsquigarrow} &&
	\begin{array}{l}
	\text{{\color{YellowOrange}orange} fruit} \\ 
	\text{{\color{Green}green} fruit}
	\end{array}
	\\[20pt]
	\kbordermatrix{
		& \\
	    \pi(\text{{\color{YellowOrange}orange}}|\text{vegetable}) & 0 \\[10pt]
	    \pi(\text{{\color{Green}green}}|\text{vegetable}) & 0 \\[10pt]
	    \pi(\text{{\color{Purple}purple}}|\text{vegetable}) & 1  
	}
	&& \overset{\text{vegetable}}{\leftrightsquigarrow} &&
	\begin{array}{ll}
	\text{{\color{Purple}purple} vegetable}
	\end{array}
\end{array}
\]
\medskip

\noindent Concretely, the squares of the entries of the first eigenvector define a probability distribution on $X$, conditioned on the first element of $Y$. For instance $\pi(\text{{\color{YellowOrange}orange}}|\text{fruit})=\left(\sqrt{1/2}\right)^2=1/2$, which coincides with the fact that \textit{half} the fruit are orange. Likewise, the squares of the entries of the second eigenvector define a probability distribution on $X$ conditioned on the second element of $Y.$ For example, $\pi(\text{{\color{Purple}purple}}|\text{vegetable})=1$, which coincides with the fact that \textit{all} of the vegetables are purple. So the eigenvectors of $M^\dagger M$ evidently capture \emph{conditional probability}, which is precisely the information lost when marginalizing in the usual way as in Section \ref{ssec:intro_probability}. A similar phenomena happens for the matrix $MM^\dagger = \begin{bsmallmatrix} 2/3& 0 \\ 0 & 1/3\end{bsmallmatrix}$. Its diagonal recovers the marginal distribution $\pi_Y=\left(\frac{2}{3},\frac{1}{3}\right)$ on $Y$, and the entries of its two eigenvectors have the interpretation of conditional probabilities.

\newthought{So we've presented} a method to compute marginal probabilities in such a way that we have ready access to the information lost when marginalizing in the usual way. The general procedure, along with the theory behind it and the experimental results supporting it, are at the heart of this thesis. Here's the summary.

\begin{mainprocedure} Start with finite sets $X=\{x_1,\ldots,x_n\}$ and $Y=\{y_1,\ldots,y_m\}$ and a probability distribution $\pi\colon X\times Y\to \mathbb{R}$.
	\marginnote{\smallcaps{Behind the Scenes.} It might seem backwards to associate the $ij$th entry to the pair $(x_j,y_i)$, but doing so means $M$ corresponds to a linear map $\mathbb{C}^X\to\mathbb{C}^Y$, which is what we want. That is, for the application to come in Chapter \ref{ch:application}, having $\mathbb{C}^Y$ as the target space will be important.}
Define an $m\times n$ matrix $M$ whose $ij$th entry is the square root of the probability of $(x_j,y_i)$.
\[M_{ij}:=\sqrt{\pi(x_j,y_i)}\]
The diagonal of $M^\dagger M$ is the marginal probability distribution on $X$. The diagonal of $MM^\dagger$ is the marginal probability distribution on $Y$. Their eigenvectors capture information akin to conditional probability.
\end{mainprocedure}
\noindent Think of the Main Procedure as being bottom-heavy. It takes no work to define $M$, and yet this simple passage from probabilities on a \textit{set} to a linear map of \textit{vector spaces} opens a floodgate of theory and applications.
	\marginnote[-1cm]{
	\begin{center}
	\begin{tikzpicture}
	\node at (2,0) {\includegraphics[scale=0.13]{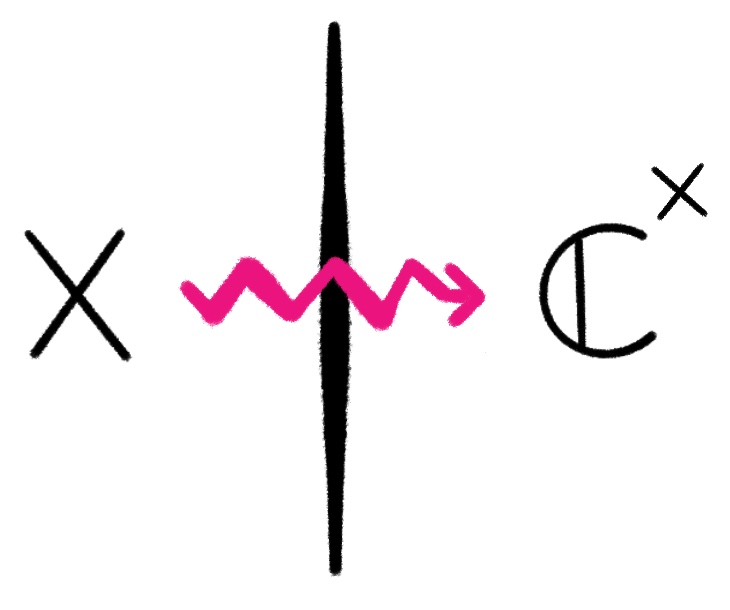}};
	\node at (.7,1.1) {\smallcaps{sets}};
	\node at (3.2,1.1) {\smallcaps{vector spaces}};
	\node at (3.4,-1) {\textit{more structure here!}};
	\end{tikzpicture}
	\end{center}
	}
Understanding and exploiting the last sentence of the Main Procedure is the bulk of this thesis. The phrase ``akin to'' is particularly important. In special cases, the entries of the eigenvectors are conditional probabilities \textit{on the nose}, as in our example. For more complicated $M$ this won't be the case, but there is a very precise sense in which the information encoded by the eigenvectors is always conditional in nature. The precise statement will be given in Takeaway \ref{takeaway:densities} and will be illustrated in Figure \ref{fig:reconstruct}. 

\newthought{Let's now pull back} the curtain and speak plainly. The mathematics here is \emph{quantum probability theory}. Borrowing terminology that will be defined in Section \ref{sec:density}, here's what's really going on.
	\begin{itemize}
		\item Every joint probability distribution on a finite set $\pi\colon X\times Y\to \mathbb{R}$ defines a rank $1$ linear operator---called a \emph{density operator} or synonymously a \emph{pure quantum state}---on the tensor product of vector spaces $\mathbb{C}^X\otimes \mathbb{C}^Y$. \textit{Density operators are the quantum version of probability distributions.}
			\marginnote[-2.5cm]{As we'll see in Section \ref{sec:density}, there is more than one way to construct a rank $1$ density operator from a joint probability distribution $\pi.$ This thesis centers on a \textit{very specific} rank $1$ density, which will explain the phenomena observed in this section's motivating example. See Section \ref{ssec:revisit} for the punchline.}

		\item The rank $1$ density operator gives rise to two \emph{reduced density operators}, one on $\mathbb{C}^X$ and another on $\mathbb{C}^Y$. The matrices $M^\dagger M$ and $MM^\dagger$ in our example are representations of reduced density operators. \textit{Reduced density operators are the quantum version of marginal probability distributions}. 

		\item  The eigenvalues and eigenvectors of these reduced density operators encode information about interactions between the $\mathbb{C}^X$ and $\mathbb{C}^Y$ subsystems, and we'll show that it is tantamount to conditional probability. More generally, the spectral information of reduced densities can be used to reconstruct a pure quantum state. When the state is that defined by a classical probability distribution, this ability to reconstruct suggests a new algorithm for reconstructing a classical joint probability distribution, which lends an application to data science. We will return to these points in Chapters \ref{ch:probability} and \ref{ch:application}.
	\end{itemize}
So the Main Procedure describes a particular passage from classical probability to quantum probability that defines special operators $M^\dagger M$ and $M M^\dagger$. The rest of this chapter is a careful walk-through of this passage. It is primarily a passage from  \textit{sets} to \textit{functions on sets} (that is, to \textit{vector spaces}), and the next section begins with some notational preliminaries. We'll introduce bra-ket notation as well as tensor network diagram notation, which is a clean way to visualize common objects and operations in linear algebra.

\section{Notation}\label{sec:prelim}
The passage from classical probability to quantum probability begins with the passage from sets to \textit{functions on sets,} namely vector spaces, which have much richer structure than the sets themselves. Unless said otherwise, we will only consider finite sets and finite-dimensional vector spaces.

\subsection{Bra-Ket Notation}\label{ssec:notation}
Given a finite set $S$, the \emph{free vector space $V=\mathbb{C}^S$ on $S$} consists of complex-valued functions on $S$, which is a \emph{Hilbert space} with inner product 
	\marginnote{A \emph{Hilbert space}\index{Hilbert space} is a vector space equipped with an inner product such that it is a \textit{complete metric space} with respect to the metric induced by the inner product. If a vector space is finite dimensional, then the induced metric always has this property. So any finite-dimensional vector space with an inner product is a Hilbert space.}
\begin{equation*}
\<v|w\> = \sum_{s \in S}\overline{v(s)} w(s).
\end{equation*}
We will sometimes denote $v(s)$ by $v_s$. The free vector space comes with a natural map
	\marginnote[1cm]{\smallcaps{Behind the Scenes.} The notation $S\to \mathbb{C}^S$ is a bit sloppy: $S$ is a set while $\mathbb{C}^S$ is a vector space. A mapping between different objects makes little mathematical sense. So by ``a natural map $S\to \mathbb{C}^S$'' I really mean a \textit{function} from $S$ to the \textit{underlying set} of $\mathbb{C}^S$. This is much more than pedantry. Category theory gives a nice way to indicate this in the notation, and it will play a major role in Chapter \ref{ch:categorytheory}.}
from $S\to \mathbb{C}^S$, defined shortly.  But first, to avoid confusion, it is helpful to use notation that distinguishes between an element $s\in S$ and its image in $\mathbb{C}^S$, which is a vector.  The vector image of $s$ is sometimes denoted with a boldface font $\mathbf{s}$ or an overset arrow $\overrightarrow{s}$.  We'll use \emph{bra and ket notation}\index{bra-ket notation}, which is especially nice when dealing with inner products.  For any $s\in S$, let $|s\>$ denote the function $S \to \mathbb{C}$ that sends $s\mapsto 1$ and $s'\mapsto 0$ for $s' \neq s$.   The set $\{|s\>\}$ is a linearly independent orthonormal spanning set for $V$, and is sometimes called the \emph{computational basis}\index{computational basis}.  If an ordering is chosen on the set $S$, say $S=\{s_1. \ldots, s_n\}$, then $|s_i\>$ is identified with the $i$th standard basis vector in $\mathbb{C}^n$.
\[
|s_i\> = 
\begin{bmatrix} 0\\ \vdots\\ 1\\ \vdots \\ 0\end{bmatrix}
\begin{matrix} \phantom{0}\\ \phantom{\vdots}\\ \leftarrow \text{$i$th entry}\\ \phantom{\vdots} \\ \phantom{0}\end{matrix}
\]
This defines an isomorphism $\mathbb{C}^S\cong\mathbb{C}^n$.  More generally we'll denote elements in $V$ by \emph{ket}\index{bra-ket notation!ket} notation $|v\> \in V$, which is just a linear combination of basis vectors $|s\>$. As an array, it is a column vector of complex numbers.
\[
	|v\> = \sum_{s\in S}v(s)|s\> =
	\begin{bmatrix}
	v(s_1) \\ \vdots \\ v(s_n)
	\end{bmatrix}
	\qquad v(s)\in\mathbb{C}
\]  
For any $|v\> \in V$, there is a linear functional in $V^*:=\hom(V,\mathbb{C})$\marginnote{The dual space $V^*=\hom(V,\mathbb{C})$ consists of all linear maps $V\to\mathbb{C}$.} whose value on $|v'\>\in V$ is the inner product $\<v|v'\>$.  We'll denote this linear functional by the \emph{bra}\index{bra-ket notation!bra} notation $\<v| \in V^*$. 
\[
\begin{tikzcd}
V \ar[r,"\<v|"] & \mathbb{C}\\[-20pt]
{|v'\>} \ar[r,mapsto] & {\<v|v'\>}
\end{tikzcd}
\]
As an array with respect to the dual basis $\{\<s|\}$, this is represented by the conjugate transpose of $|v\>$; that is $|v\>^\dagger=\<v|$ and likewise $\<v|^\dagger=|v\>$.
	\marginnote[-1cm]{I realize that bra-ket notation may not be familiar to all, but the concept behind it is not foreign to mathematics. In category theory, for instance, certain functors are commonly denoted by what they \textit{do}, which is the idea behind bra-kets. Here's an example. Fix any set $X$ and consider the functor $F$ from the category $\mathsf{Set}$ of sets to itself whose assignment on a set $Y$ is the Cartesian product $X\times Y$.
	\[F(X)=X\times Y\]
	In lieu of a generic ``$F$,'' one usually denotes this assignment by the descriptive notation $X\times-$. The empty space tells us precisely what the functor does. Just stick any set $Y$ in the blank spot to find out!
	\[
	\begin{tikzcd}[ampersand replacement = \&]
	\mathsf{Set} \ar[r,"X\times -"] \& \mathsf{Set}\\[-10pt]
	Y \ar[r,mapsto] \& X\times Y
	\end{tikzcd}
	\]
	This is the same idea behind bra-ket notation. A bra $\<v|$, which could also be written $\<v|-\>$, is analogous to the symbol $X\times -$.  The empty space tells us precisely what the linear map does. Just stick any vector $v'$ in the blank spot to find out!
	\[
	\begin{tikzcd}[ampersand replacement = \&]
	V\ar[r,"{\<v|-\>}"] \& \mathbb{C}\\[-10pt]
	v' \ar[r,mapsto] \& {\<v|v'\>}
	\end{tikzcd}
	\]
	Instead of carrying around $\<v|-\>$, which feels a little lopsided, it's better to \textit{declare} that vectors $v'$ are denoted $|v'\>$ so that ``sticking in a vector into the blank spot'' simply corresponds to  sandwiching the bra $\<v|$ and the ket $|v'\>$ together.} 
\[
	\<v| = \sum_{s\in S}\overline{v(s)}\<s| =
	\begin{bmatrix}
	\overline{v(s_1)} & \cdots & \overline{v(s_n)}
	\end{bmatrix}
	\qquad v(s)\in\mathbb{C}
\]  
Every linear functional in $V^*$ is of the form $\<v|$ for some $|v\>\in V$.  So we have vectors $|v\> \in V$ and \emph{covectors} $\<v| \in V^*$ and the map 
\[ |v\> \longleftrightarrow \<v|
\]
defines a natural isomorphism between $V$ and $V^*$. We have chosen to distinguish between vectors and covectors with bra and ket notation, but we will not give any special meaning to upper and lower indices. Given another vector space $W$, elements in the tensor product $V\otimes W$ will be denoted $|v\>\otimes |w\>$ or sometimes  $|vw\>$, when $|v\>\in V$ and $|w\>\in W.$ Further, the expression $|w\>\<v|$ will be used for the tensor product $|w\>\otimes \<v|$, which is an element of $W\otimes V^*$. It is naturally identified with the following map $V \to W$. 
\[
\begin{tikzcd}
V \ar[r,"|w\>\<v|"] & W\\[-20pt]
{|v'\>} \ar[r,mapsto] & {|w\>\<v|v'\>}
\end{tikzcd}
\]
In particular, the expression $|v\>\<v|$ is an element in $\End(V)$.  Here, $\End(V)$ denotes the space of all linear operators on $V$, and in the presence of a basis is identified with $\dim(V)\times \dim(V)$ matrices. If $|\psi\>$ is a unit vector, then the operator $|\psi\>\<\psi|$ is the orthogonal projection onto $|\psi\>$. It maps $|\psi\>$ to $|\psi\>$, and it maps every vector perpendicular to $|\psi\>$ to zero. To encourage fluency, Table \ref{tab:dictionary} contains a dictionary summarizing the translation between bra-ket notation, the usual star notation, and array representations (column versus row vectors, and so on). Here are a few more helpful facts to know.
\begin{table}[h!]
  \begin{center}
    \begin{tabular}{@{}lccc@{}}
    \toprule
	& \text{star}& \text{bra-ket} & \text{array}\\
	\midrule
	\text{vector} & $v$ 		 & ${|v\>}$		  & $\begin{bmatrix}\vdots \end{bmatrix}$ \\[20pt]
	\text{covector} & $v^*$  & ${\<v|}$		  & $\left[\:\cdots \right]$  \\[20pt]
	\text{element in $V\otimes W$} & $v\otimes w$ 		 & $|v\>\otimes |w\>$	  & $\begin{bmatrix}\vdots \end{bmatrix}\otimes \begin{bmatrix}\vdots \end{bmatrix}$ 
	\\[20pt]
	\text{inner product} & $\<v,w\>$ & $\<v|w\>$ & $\left[\:\cdots \right] \begin{bmatrix}\vdots \end{bmatrix}$ 
	\\[20pt]
	\text{linear map $V\to W$} & $w\otimes v^*$ & $|w\>\<v|$ & $\begin{bmatrix}\vdots \end{bmatrix}\left[\:\cdots \right]\begin{bmatrix}\phantom{\vdots} \end{bmatrix}$
	\\[20pt]
	\text{orthogonal projection} & $\psi\otimes \psi^*$ & $|\psi\>\<\psi|$ & $\begin{bmatrix}\vdots \end{bmatrix} \left[\:\cdots \right]$
	\\
    \bottomrule
   \end{tabular}
  \end{center}
  \caption{A dictionary for translation between bra-ket notation and star notation.}
  \label{tab:dictionary}
\end{table}

\begin{enumerate}
	\item \emph{The tensor product of vectors is the outer product of their column vector representations.} In other words,
	\begin{equation}\label{eq:outer_prod}
	|w\>\otimes|v\>=|w\>\<v|.
	\end{equation}
	To elaborate, recall that an element $|w\>\<v|$ of the tensor product $W\otimes V^*$ corresponds to a linear map $V\to W$ as mentioned above. Under the identification $V\cong V^*$ it also corresponds to the element $|w\>\otimes|v\>$ in $W\otimes V$. If bases are chosen for $W$ and $V$ then $|w\>\otimes|v\>$ is the $\dim(W)\times \dim(V)$ matrix obtained by multiplying the $\dim(W)\times 1$ column vector $|w\>$ with the $1\times \dim(V)$ row vector $\<v|$, which is Equation (\ref{eq:outer_prod}). This matrix product is called the \emph{outer product}\index{outer product} of $|w\>$ and $|v\>$. 

	\begin{example}\label{ex:outer_prod}
	As a simple example, suppose
		\marginnote{Writing the vector $|w\>\otimes|v\>$ may be puzzling since we often think of vectors as \textit{columns}. But there's no confusion here. The $3\times 2$ matrix on the left may be effortlessly reshaped into a $6\times 1$ vector, which we can also identify with the tensor product $|w\>\otimes|v\>$.
		\[
		|w\>\otimes|v\> = 
		\begin{bmatrix}-4i \\5\\-8i\\10\\12\\15i
		\end{bmatrix}
		\]
		So whether we think of tensor products of vectors as linear maps $V\to W$ or as elements of $W\otimes V$ will determine whether we use a matrix representation or a column vector.
		}
	\[|w\>=
	\begin{bmatrix} 1 \\ 2 \\ 3i
	\end{bmatrix}\qquad
	|v\> = 
	\begin{bmatrix} 4i \\ 5
	\end{bmatrix}
	\]
	Then the tensor product is a matrix
	\[
	|w\>\otimes|v\> = |w\>\<v|=
	\begin{bmatrix} 1 \\ 2 \\ 3i
	\end{bmatrix}
	\begin{bmatrix} -4i & 5
	\end{bmatrix} =
	\begin{bmatrix}
	-4i & 5\\
	-8i & 10\\
	12 & 15i
	\end{bmatrix}
	\]
	called the \textit{outer product} of $|w\>$ and $|v\>.$
	\end{example}
	\vspace{1cm}

	\item \emph{The trace of the linear operator $|v'\>\<v|$ is the inner product $\<v|v'\>$.} In other words, 
	\begin{equation*}
		\tr |v'\>\<v| = \<v|v'\>
	\end{equation*}
	for any vectors $|v\>,|v'\>\in V$. Indeed, if $\{|s\>\}$ is an orthonormal basis for $V$ then the matrix representation of $|v'\>\<v|$ is the outer product
	\[
	|v'\>\<v| =
	\begin{bmatrix}
	\vdots \\ v'(s) \\ \vdots
	\end{bmatrix}
	\begin{bmatrix}
	\cdots & \overline{v(s)} & \cdots
	\end{bmatrix}
	% =
	% \begin{bmatrix}
	% v'(s_1)\overline{v(s_1)} & v'(s_1)\overline{v(s_2)} & \cdots  \\
	% v'(s_2)\overline{v(s_1)} & v'(s_2)\overline{v(s_2)} & \cdots\\
	% \vdots & \vdots & \ddots \\
 % 	\end{bmatrix}
	\]
	which has trace equal to $\sum_s\overline{v(s)}v'(s)=\<v|v'\>.$\\
	\vspace{1cm}

	\item \emph{An outer product of tensor products is the tensor product of outer products.} In other words, for any vectors $|v\>,|v'\>\in V$ and $|w\>,|w'\>\in W$, one has $|v\>\otimes|w\>\<v'|\otimes\<w'|=|v\>\<v'|\otimes|w\>\<w'|$. But this looks a bit confusing, so let's suppress the tensor product symbol and write $|vw\>:=|v\>\otimes |w\>$. The claim is then
		\begin{equation}\label{eq:outer_tensor_product}
		|vw\>\<v'w'|=|v\>\<v'|\otimes|w\>\<w'|.
		\end{equation}
	Observe that both the left- and the right-hand side are linear operators $V\otimes W\to V\otimes W$. To verify their equality, let's check that their assignments on basis vectors are the same. First, recall that if $f\colon A\to B$ and $f'\colon A'\to B'$ are linear maps, then their tensor product is a linear map  $f\otimes f'\colon A\otimes A'\to B\otimes B'$ defined on any basis vector $|a\>\otimes |a'\>$ of $A\otimes A'$ by
	\begin{equation}\label{eq:f_tensor}
	(f\otimes f')|a\>\otimes |a'\>:= f|a\>\otimes f|a'\>.
	\end{equation}
	Now suppose $\{|s\>\}$ is a basis for $V$ and $\{|t\>\}$ is a basis for $W$.
			\begin{description}
			\item \textit{(left-hand side)} The expression $|vw\>\<v'w'|$ on the left-hand side of Equation (\ref{eq:outer_tensor_product}) is the linear map sending the vector $|st\>$ to $|vw\>\<v'w'|st\>$, where the inner product is equal to
			\begin{align*}
			\<v'w'|st\> &=(\<v'|\otimes\<w'|)  |s\>\otimes|t\>\\
			&=\<v'|s\>\<w'|t\>,
			\end{align*}
			which follows from Equation (\ref{eq:f_tensor}) with $f=\<v'|$ and $f'=\<w'|$. So the image of $|st\>$ under the operator on the left-hand side of Equation (\ref{eq:outer_tensor_product}) is 
			\[|vw\>\overbrace{\<v'|s\>\<w'|t\>}^{\text{a number}}.\] 
			\smallskip

			\item  \textit{(right-hand side)} The image of any basis vector $|st\>$ under the right-hand side of (\ref{eq:outer_tensor_product})  follows from a direct application of Equation (\ref{eq:f_tensor}).
			\begin{align*}
			(|v\>\<v'|\otimes|w\>\<w'|)|s\>\otimes|t\>
			&=|v\>\<v'|s\> \otimes |w\>\<w'|t\>\\
			&=|v\> \otimes |w\> \<v'|s\>\<w'|t\>\\
			&=|vw\>\overbrace{\<v'|s\>\<w'|t\>}^{\text{a number}}.
			\end{align*}
		\end{description}
	So we have equality as claimed.
\end{enumerate}

Let's move on to another way to denote familiar tools in linear algebra, namely as \textit{tensor network diagrams}, based on Roger Penrose's graphical notation \cite{penrose71}.

\subsection{Tensor Network Diagrams}\label{ssec:TNs}
We'll start with some terminology. A \emph{tensor of order $n$}\index{tensor} is an $n$-dimensional array of (real or complex) numbers.\marginnote{Sometimes a ``tensor of order $n$'' is also called a ``rank $n$ tensor,'' though we will not use this expression. Rank already has a meaning in linear algebra!} Such an array represents a vector in, or a mapping between, tensor products of finite-dimensional vector spaces. A factorization of a tensor into smaller ones is called a  \emph{tensor network}\index{tensor network}. Every matrix factorization (singular value decomposition, QR decomposition, Cholesky decomposition, \ldots) is an example.\sidenote{The expression ``tensor product'' might not come to mind when working with everyday matrices, but it should. Every linear map---and hence every matrix---corresponds to a vector in a tensor product of vector spaces. We'll say more about this in Section \ref{sec:red_density}.} More complicated factorizations of larger-dimensional tensors can be a notational chore to write down. Fortunately, there is a clean visual way to represent them---a \emph{tensor network diagram}\index{tensor network diagram}. In a tensor network diagram, an $n$-tensor is represented by a node, and each of the $n$ vector spaces, or \textit{dimensions}, is drawn as an edge incident to the node. 

A scalar, for example, can be thought of as a zero-dimensional array---a point. It is a $0$-tensor and corresponds to a node with no edges. A vector is a $1$-tensor, a one-dimensional array, and hence corresponds to a node with one edge. A matrix is a $2$-tensor and hence a node with two edges. A $3$-tensor is a node with three edges and so on. The edges may be further decorated with indices to distinguish each dimension. For instance, to specify an $m\times n$ matrix $M$, one must specify all $mn$ entries $M_{ij}$, where $i$ indexes the number of rows and $j$ indexes the number of columns of $M$.
\begin{center}
\begin{tikzpicture}[y=1.2cm]
\node at (0,0) {\includegraphics[scale=0.06]{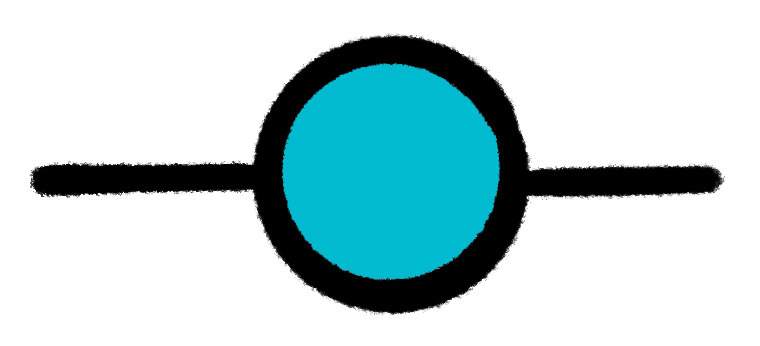}};
\node at (-1,0) {$i$};
\node at (1,0) {$j$};
\node at (0,1) {a matrix $M_{ij}$ is the diagram};
\end{tikzpicture}
\end{center}
One often thinks of ``$M_{ij}$'' not just as the $ij$th entry but also as a stand-in for all $i,j$ entries at once, resulting in the full matrix $M$. In this same spirit we may drop the indices from the diagram and simply write
\begin{tikzpicture}[x=0.5cm,baseline=-2pt]
\draw[thick] (-1,0)--(1,0);
\node[tensor,fill={rgb, 255:red,0;green,194;blue,214}] at (0,0) {};
\end{tikzpicture}
as representing the linear map itself. More generally, we have the following diagrams.
	\marginnote{Though these pictures might be new to some, I suspect the idea is familiar to all. When teaching students about functions, for instance, one often says, ``A function is like a machine. You feed it an input, the machine does its job, and then it spits out the output.'' The accompanying picture is something like this: 
	\begin{center}
	\begin{tikzpicture}
	\node at (0,0) {\includegraphics[scale=0.07]{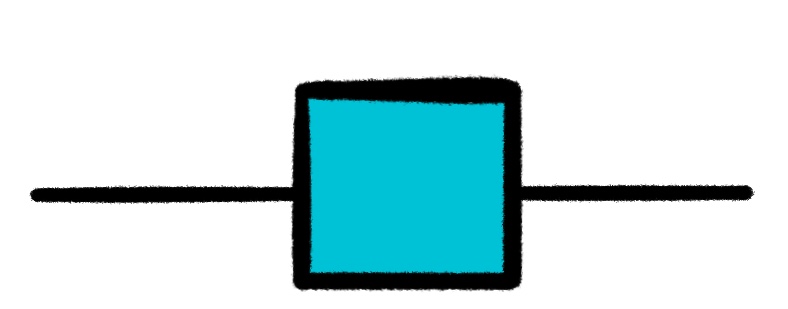}};
	\node at (-.6,.15) {\color{gray}in};
	\node at (.7,.15) {\color{gray}out};
	\node at (0,-1) {a function is a machine};
	\end{tikzpicture}
	\end{center}
	This is an example of a tensor diagram, though we're now interested in \textit{linear} functions. Think of the node as the linear mapping and think of the edges as the input (domain) and output (codomain) vector spaces. I prefer to draw the nodes as circles, rather than squares, though it doesn't matter.
	\begin{center}
	\begin{tikzpicture}
	\node at (0,0) {\includegraphics[scale=0.07]{figures/ch2/2tensor}};
	\node at (0,-1) {a matrix is a node};
	\end{tikzpicture}
	\end{center}
	 }
  \begin{center}
  \begin{tikzpicture}

    \node at (-2,3) {c};
    \node at (-2,2) {$v_i$};
    \node at (-2,1) {$M_{ij}$};
    \node at (-2,0) {$T_{ijk}$};

    \node at (-4,3) {scalar};
    \node at (-4,2) {vector};
    \node at (-4,1) {matrix};
    \node at (-4,0) {3-tensor};

    \node at (0,3) {\includegraphics[scale=0.06]{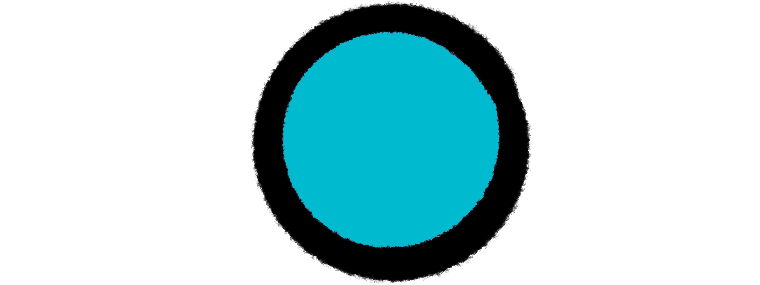}};

    \node at (0,2) {{\includegraphics[scale=0.06]{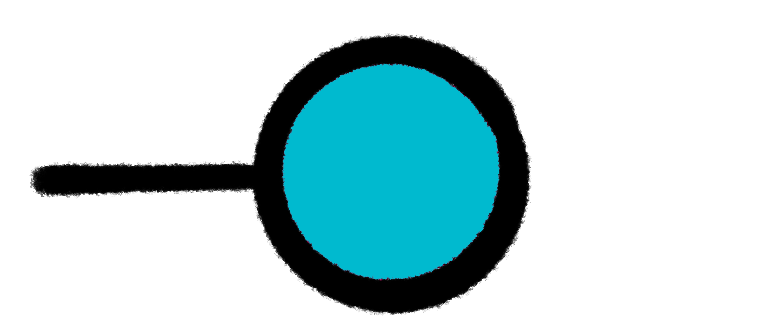}}};
    \node at (-1,2) {$i$};

    \node at (0,1) {{\includegraphics[scale=0.06]{figures/ch2/2tensor}}};
    \node at (-1,1) {$i$};
    \node at (1,1) {$j$};

    \node at (0,-.15) {{\includegraphics[scale=0.06]{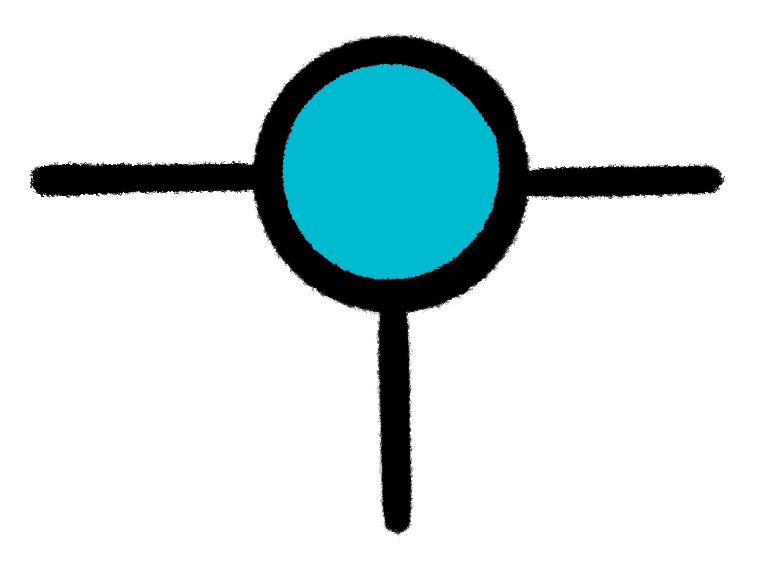}}};
    \node at (-1,0) {$i$};
    \node at (1,0) {$j$};
    \node at (.05,-.9) {$k$};

  \end{tikzpicture}
  \end{center}

\noindent In this graphical notation, familiar notions have elegant pictures. Here is a brief showcase. 

  \begin{enumerate}
      \item \emph{Composition is tensor contraction}.
       Tensors can be composed along dimensions of matching indices, and tensor contraction corresponds to summing along this common index. Graphically, this corresponds to joining the corresponding edges between diagrams. For example, the product of two matrices
      \begin{center}
      \begin{tikzpicture}
        \node (M) at (-2,0) {\includegraphics[scale=0.08]{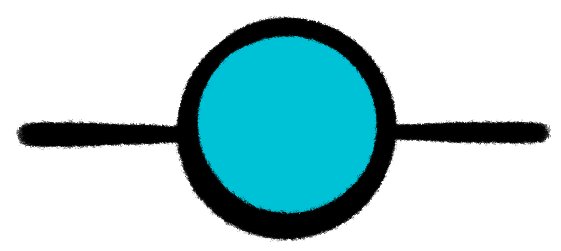}};
        \node at (-2.9,0) {$i$};
        \node at (-1.1,0) {$j$};

        \node (N) at (2,0) {\includegraphics[scale=0.08]{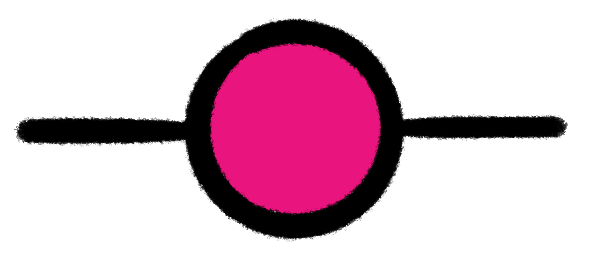}};
        \node at (1.1,0) {$j$};
        \node at (2.9,0) {$k$};

        \node[above = .4cm] at (M) {$M_{ij}$};
        \node[above = .4cm] at (N) {$N_{jk}$};

      \end{tikzpicture}
      \end{center}
      is illustrated by ``gluing'' the two edges labeled $j$ and then fusing the two nodes into a single node.
      \begin{center}
      \begin{tikzpicture}
          \node (N) at (-.7,0) {\includegraphics[scale=0.08]{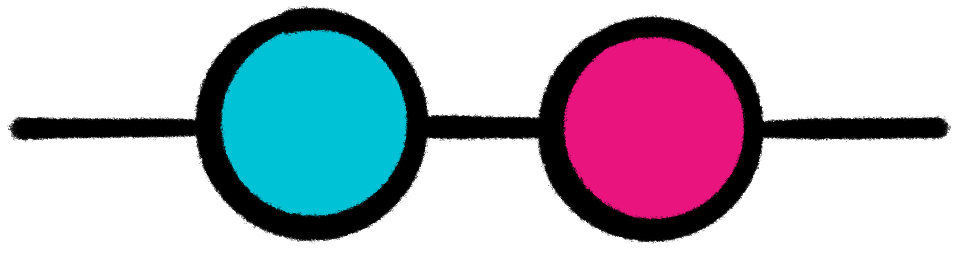}};
          \node at (-2.16,0) {$i$};
          \node at (.8,0) {$k$};
          
          \draw[thick] (2.3,0) node[left] {$i$} -- (3.7,0) node[right] {$k$} {};

          \node (MN) at (3,0) {\includegraphics[scale=0.08]{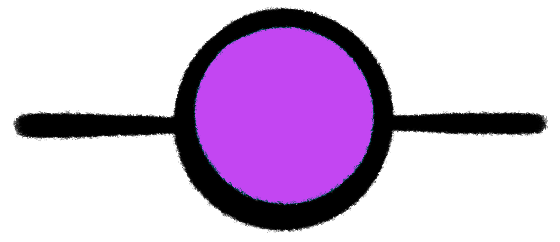}};
          
          \node[above = .4cm, xshift = .17cm] at (MN) {$(MN)_{ik}$};
          \node[above = .4cm] at (N) {$\sum_j M_{ij}N_{jk}$};
          \node  at (1.45,0) {=};
        \end{tikzpicture}
        \end{center}

      The resulting diagram has two free indices, $i$ and $k$, which indeed specify a new matrix. As another example, the product of a matrix $M$ with a vector $|v\>$ results in another vector $M|v\>$, which is a node with one free edge.
      \begin{center}
      \begin{tikzpicture}
      \node at (0,0) {\includegraphics[scale=0.16]{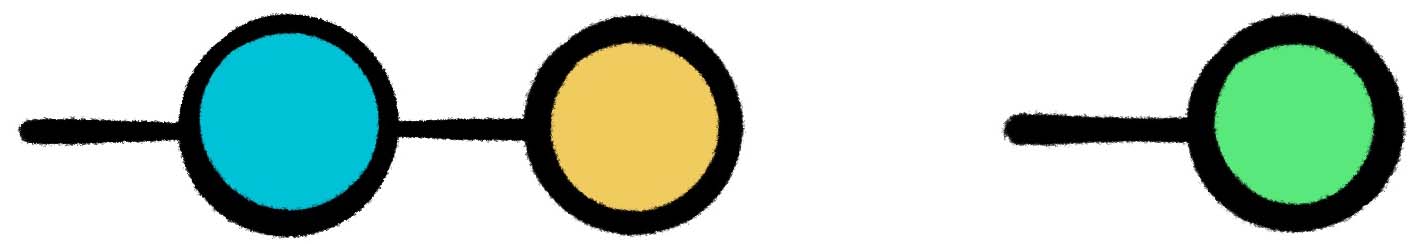}};
      \node at (-1.3,1) {$M$};
      \node at (-.2,1) {$|v\>$};
      \node at (.5,0) {=};
      \node at (1.8,1) {$M|v\>$};
      \end{tikzpicture}
      \end{center}

      To keep the picture clean, we've now dropped the indices. More generally, the composition of two or more tensors is represented by a cluster of nodes and edges where the contractions occur along edges with matching indices.
      \begin{center}
      \includegraphics[scale=0.08]{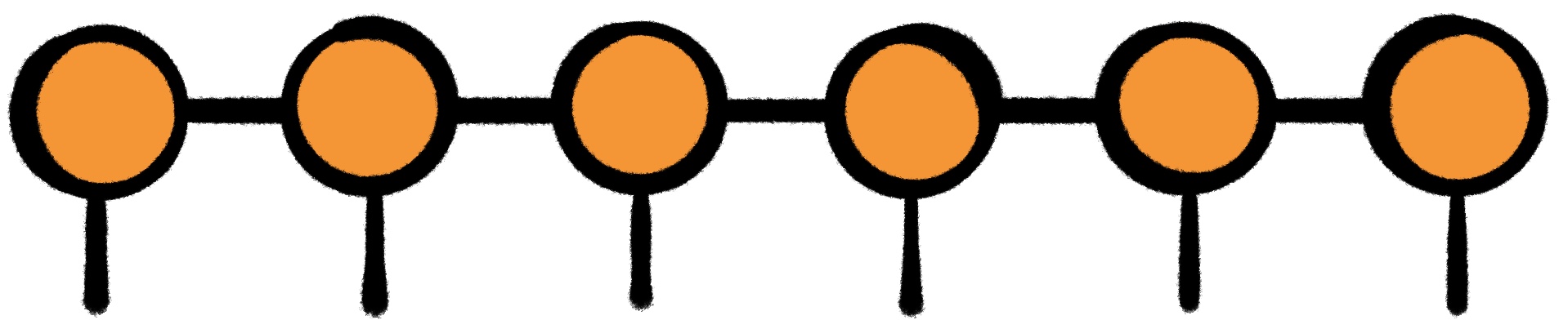}
      \end{center}

      The diagram here illustrates another important point. There is great flexibility in how one chooses to orient the diagrams spatially. A vector, for instance, is characterized by the fact that it is \textit{one node with one edge}. We will not imbue additional meaning to whether the edge is horizontal or vertical or otherwise. For example we take both \begin{tikzpicture}[baseline=-2pt] \draw[thick] (0,0)--(-.5,0) {}; \node[tensor,fill={rgb, 255:red,244;green,149;blue,54}] at (0,0) {}; \end{tikzpicture} and \begin{tikzpicture}[baseline=-2pt]  \draw[thick] (0,0)--(0,-.5) {}; \node[tensor,fill={rgb, 255:red,244;green,149;blue,54}] at (0,0) {};\end{tikzpicture} to represent the same vector.

      \item \emph{The shape of a node may convey additional meaning.} There is flexibility in the shapes used for nodes, as convention varies across the literature. This allows for creativity in how information can be conveyed through a diagram. When working with 2-tensors, for instance, we may wish to use a symmetric shape for symmetric matrices only. Then the dual mapping can be represented by reflecting its diagram,

      \begin{center}
      \begin{tikzpicture}
          \node (A) at (1,0) {\includegraphics[scale=0.08]{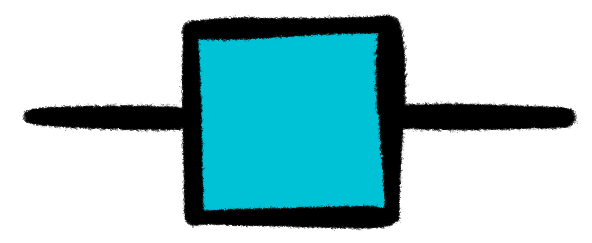}};

          \node (B) at (4,0) {\includegraphics[scale=0.09]{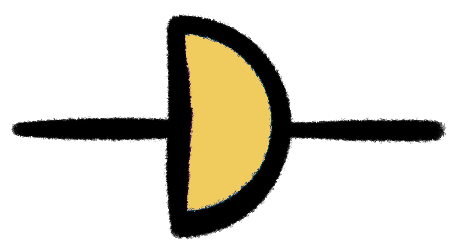}};  

          \node at (2.4,-2) (D) {\includegraphics[scale=0.09]{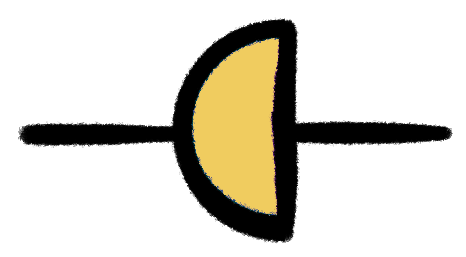}};

          \node[above = .4cm] at (A) {symmetric};
          \node[above = .4cm]  at (B) {not symmetric};
          \node[above = .4cm] at (D) {dual};

        \end{tikzpicture}
        \end{center}

      so that the symmetry is preserved in the notation.

      \begin{center}
      \begin{tikzpicture}
          \node (A) at (1,0) {\includegraphics[scale=0.08]{figures/ch2/TNs/tn5}};
          
          \node (B) at (4,0) {\includegraphics[scale=0.08]{figures/ch2/TNs/tn5}};  

          \node[above = .4cm] at (A) {$M$};
          \node[right = 1.2cm] at (A) {$=$};
          \node[above = .4cm]  at (B) {$M^\dagger$};

        \end{tikzpicture}
        \end{center}

      Another useful choice is to represent isometric embeddings as triangles: 
      \begin{center}
      \begin{tikzpicture}

          \node  (A) at (0.1,0) {\includegraphics[scale=0.08]{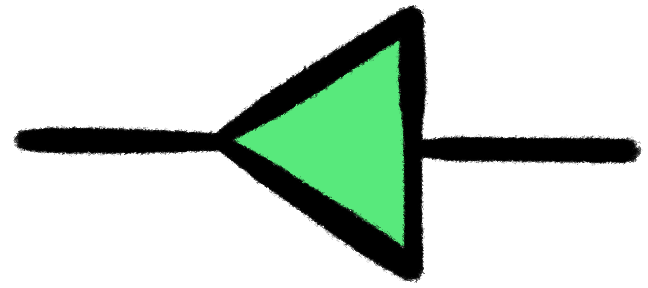}};

      \end{tikzpicture}
      \end{center}

      An \emph{isometric embedding} $U$ is a linear map from a space $V$ to a space $W$ of larger dimension that preserves the lengths of vectors. Such a map satisfies $U^\dagger U=\text{id}_V$ but $UU^\dagger \neq \text{id}_W$. In words, projection of the large space $W$ onto the embedded image $UV\subseteq W$ won't distort the vectors in $V$. This operation is the identity on $V$. On the other hand, compressing $W$ onto $V$ necessarily loses information, so to speak. The asymmetry of the triangle serves as a visual reminder of this: the base ($W$) is larger than its tip $(V)$.

      \begin{center}
      \begin{tikzpicture}
      	\node at (-.8,.6) {{\color{gray}\scriptsize{small}}};
      	\node at (.5,.6) {{\color{gray}large}};
      	\node at (1.8,.6) {{\color{gray}\scriptsize{small}}};
        \node (C) at (.5,0) {\includegraphics[scale=0.08]{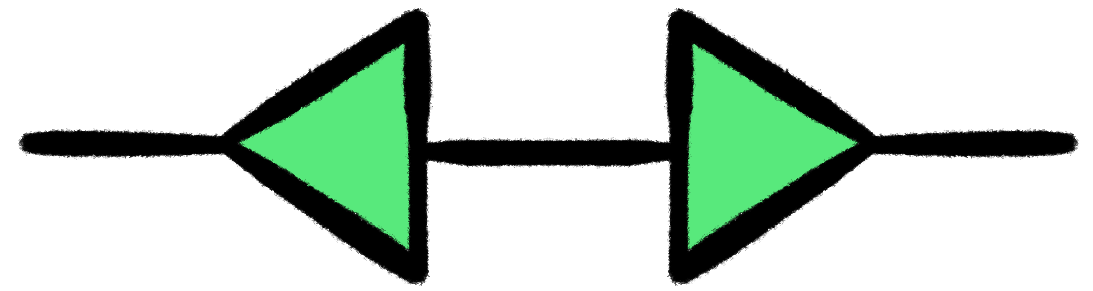}};

        \node at (-.8,-1.5) {{\color{gray}large}};
      	\node at (.5,-1.5) {{\color{gray}\scriptsize{small}}};
      	\node at (1.8,-1.5) {{\color{gray}large}};
        \node (B) at (.5,-2) {\includegraphics[scale=0.08]{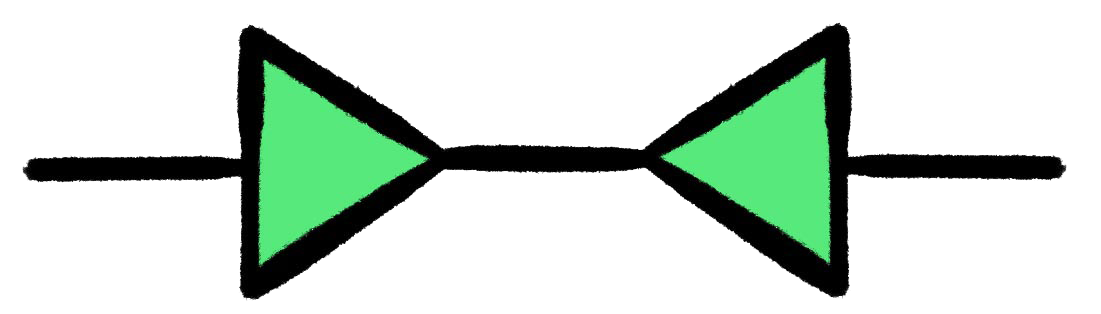}};

          \node at (2.6,0) {$=$};
          \node at (2.6,-2) {$\neq$};

          \node at (4.5,0) {\includegraphics[scale=0.07]{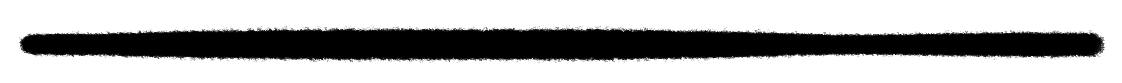}};
          \node at (4.5,-2) {\includegraphics[scale=0.07]{figures/ch2/TNs/tn10}};

      \end{tikzpicture}
      \end{center}

      When $W=V$ and when $U$ satisfies both equalities $UU^\dagger=U^\dagger U=\id_V$, then it is called a \emph{unitary operator}. This illustrates another useful convention: the identity mapping is often represented as an edge with no node. Indeed, contraction with an identity leaves a tensor and its corresponding diagram unchanged.
      \begin{center}
      \begin{tikzpicture}
      \node at (0,0) {\includegraphics[scale=0.13]{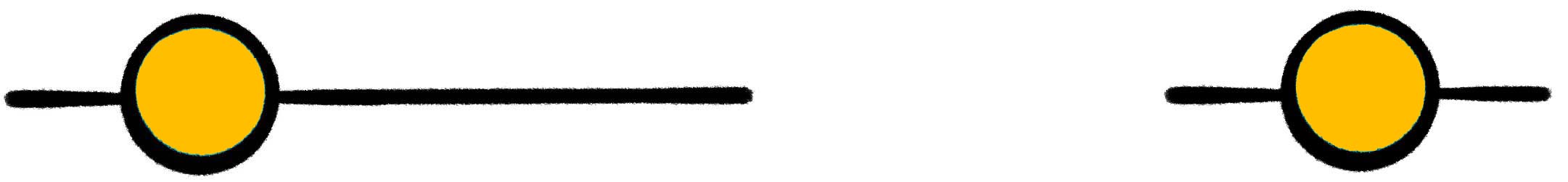}}; 
      \node at (.6,0) {=};
      \end{tikzpicture}
      \end{center}

      \item \emph{Tensor decomposition is node decomposition.} The flexibility in choosing different node shapes provides useful pictures for tensor decomposition. For example, the singular value decomposition of a matrix $M=VDU^\dagger$ (see Section \ref{sec:red_density}) can be illustrated as:

      \begin{center}
      \begin{tikzpicture}
        \node (M) at (-3,0) {\includegraphics[scale=0.07]{figures/ch2/TNs/tn1}};
        
 		\node (M) at (1,0) {\includegraphics[scale=0.13]{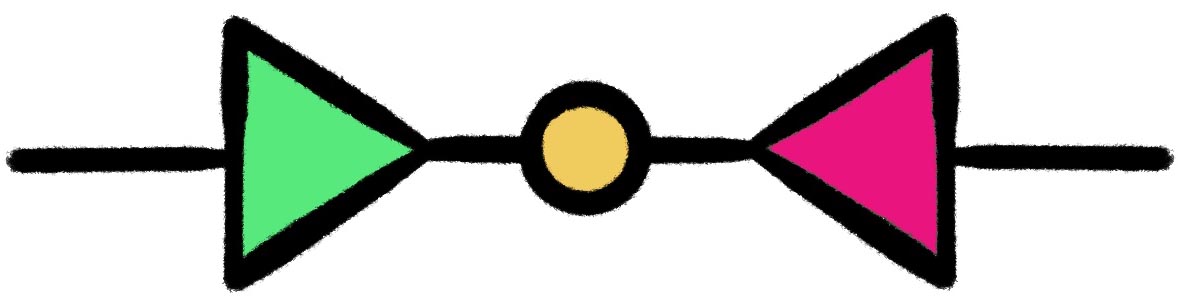}};

        \node at (-1.4,0) {$=$};
      \end{tikzpicture}
      \end{center}

      Here, $U$ and $V$ are unitary operators, hence isometries and hence triangles, while $D$ is a diagonal operator drawn as a circle. More generally, tensor decomposition is the decomposition of one node  into multiple nodes, while tensor composition is the fusion of multiple nodes into a single node.
      \begin{center}
      \begin{tikzpicture}[y=1.7cm, x=0.5cm]
      \node at (0,0) {\includegraphics[scale=0.13]{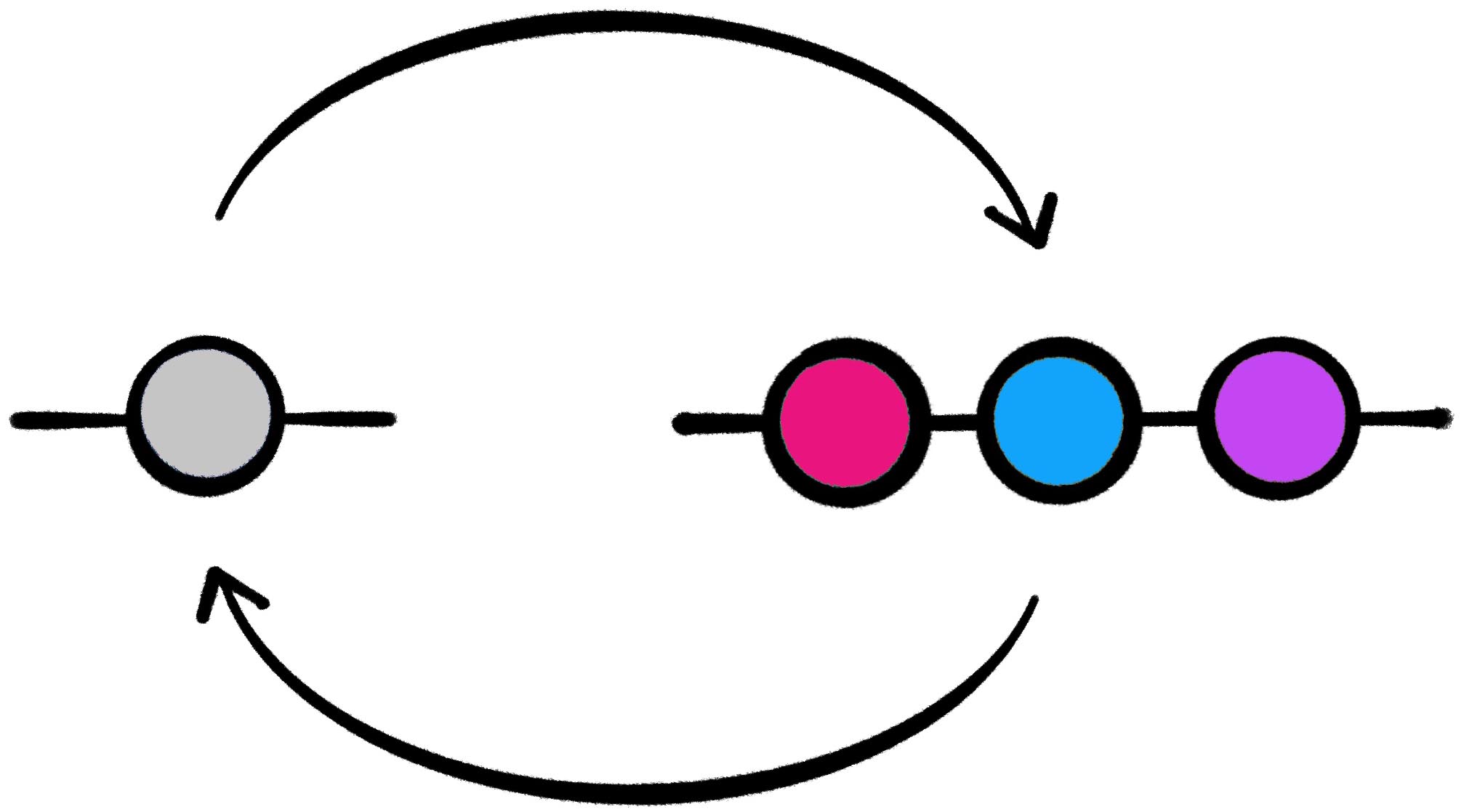}}; 
      \node at (-.8,1) {\textit{de}compose};
      \node at (-.8,-1.1) {compose};
      \end{tikzpicture}
      \end{center}

      \item \emph{Proofs have simple pictures}. Another feature of tensor diagram notation is the visual simplicity of proofs. Consider the trace of a matrix, for instance. It is the sum along a common index, and so the corresponding diagram is a loop:

      \begin{center}
      \begin{tikzpicture}
        \node (L) at (0,0) {\includegraphics[scale=0.13]{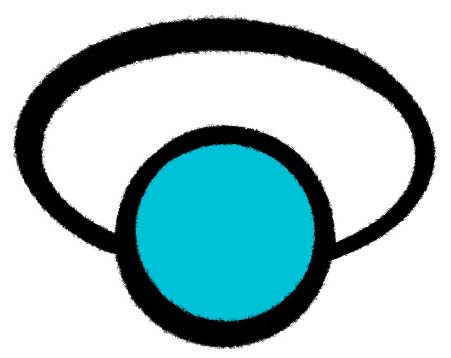}};

        \node[above = .6cm] at (L) {$\sum_i M_{ii}$};
      \end{tikzpicture}
      \end{center}

      Note that a loop has no free indices, which is consistent with the idea that a scalar is a $0$-tensor. Proving that the trace is invariant under cyclic permutations becomes as simple as sliding beads along a necklace.

      \begin{center}
      \begin{tikzpicture}
        \node at (.5,.2) (A) {\includegraphics[scale=0.07]{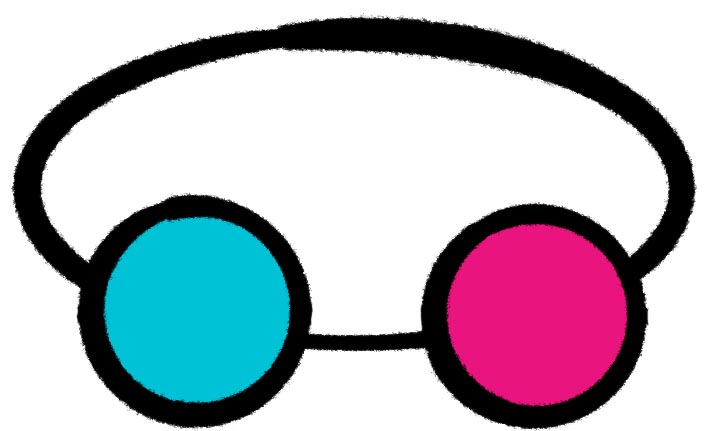}};

        \node at (2.5,0) {=};

         \node at (4.5,.2) (B) {\includegraphics[scale=0.07]{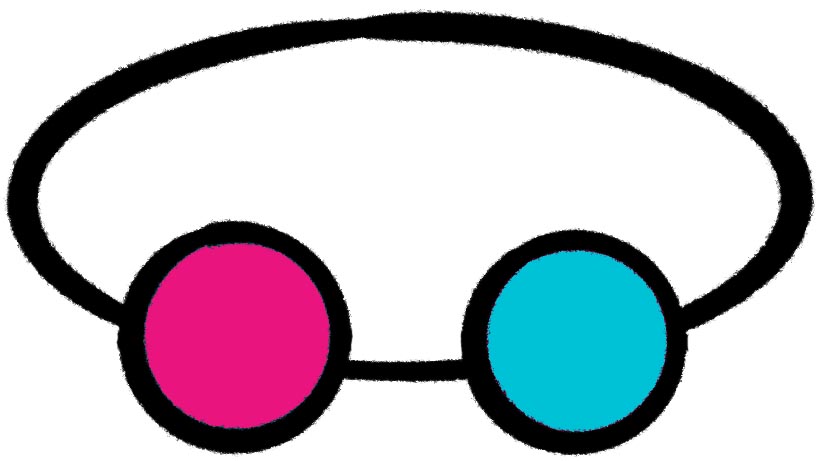}};

        \node[above = .5cm] at (A) {$tr(NM)$};
        \node[above = .5cm]  at (B) {$tr(MN)$};

      \end{tikzpicture}
      \end{center}

     \item \emph{Tensor products are illustrated by stacking in parallel}. The tensor product of vector spaces $V\otimes W$ is represented by two parallel edges, one associated to $V$ and the other associated to $W$. An arbitrary vector $|\phi\>$ in this space is then drawn as a node with two parallel edges. If $|\phi\>=|v\>\otimes |w\>$ decomposes as the tensor product of two vectors $|v\>\in V$ and $|w\>\in W$, then we illustrate $|\phi\>$ by placing the diagrams for the vectors $|v\>$ and $|w\>$ side-by-side. Similarly, an arbitrary linear map between tensor products $h\colon V\otimes V'\to W\otimes W'$ is represented by a node with four edges, one for each space. If $h$ decomposes as a tensor product $h = f\otimes f'$ for linear maps $f\colon V\to W$ and $f'\colon V'\to W'$, then the diagram for $h$ is drawn as the diagrams for $f$ and $f'$ stacked side-by-side.\sidenote{Recall that $f\otimes f' \colon V\otimes V'\to W\otimes W'$ is the linear map defined in Equation (\ref{eq:f_tensor}).}
    \begin{center}
    \begin{tikzpicture}[y=1.2cm]
    \node at (0,0) {\includegraphics[scale=0.07]{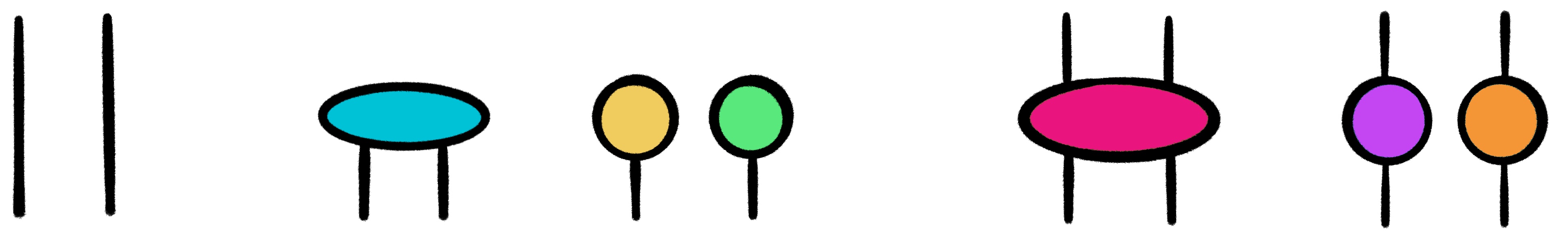}};
    \node at (-4.3,-1) {$V\otimes W$};
    \node at (-2.3,-1) {$|\phi\>$};
    \node at (-.5,-1) {$|v\>\otimes |w\>$};
    \node at (2.1,-1) {$h$};
    \node at (4,-1) {$f\otimes f'$};
    \end{tikzpicture}
    \end{center}
    The diagram shown previously, for example, is a vector in the six-fold tensor product of vector spaces $V_1,V_2,\ldots,V_6.$

    \begin{center}
    \begin{tikzpicture}
    \node (M) at (0,0) {\includegraphics[scale=0.08]{figures/ch2/TNs/mps}};
    \node[right = 2.7cm] at (M) {$\in V_1\otimes V_2\otimes V_3\otimes V_4\otimes V_5 \otimes V_6$};
    \end{tikzpicture}
    \end{center}
    \end{enumerate}

\noindent We've mentioned this particular orange diagram twice now. As it turns out, diagrams of this shape---a one-dimensional string of nodes with parallel edges sticking down---will resurface in a major way in Chapter \ref{ch:application}. Understanding it further illustrates a main advantage to working with tensor networks, so let's say a few words about that now. Consider the $N$-fold tensor product of a $d$-dimensional vector space $V$.  If $V=\mathbb{C}^X$ for some set $X=\{x_1,\ldots,x_d\}$, then the Cartesian product $X^N=\{(x_{i_1},\ldots,x_{i_N})\mid x_{i_k}\in X\}$ is a basis for the tensor product $V^{\otimes N}\cong\mathbb{C}^{X^N}$. So an arbitrary vector in $V^{\otimes N}$ is of the form 
\[|\psi\>=\sum_{(x_{i_1},\ldots,x_{i_N})\in X^N}\psi_{i_1i_2\cdots i_N}|x_{i_1}\>\otimes|x_{i_2}\>\otimes\cdots\otimes|x_{i_N}\>.\]
To specify any vector $|\psi\>\in V^{\otimes N}$ we must therefore specify $d^N$ numbers $\psi_{i_1i_2\cdots i_N}$. For large $N$, storing these numbers on a computer quickly becomes impractical. To make the situation manageable, one often looks for another vector in $V^{\otimes N}$ that is a good \textit{approximation} to $|\psi\>$ and that can be specified using far fewer parameters. This is where tensor networks come in. For example, certain $|\psi\>$ are well-approximated by a \emph{matrix product state}\index{matrix product state} (MPS), also called a \emph{tensor train},\index{tensor train} which is a vector in $V^{\otimes N}$ whose coefficients $\psi_{i_1i_2\cdots i_N}$ are obtained as a product of just a few (relatively speaking) numbers. As a diagram, MPS are depicted as a string of 3-tensors, capped off with two matrices on the left and right. This is the diagram we saw above, with $N=6.$ 
\begin{center}
  \begin{tikzpicture}
    \node (M) at (0,0) {\includegraphics[scale=0.08]{figures/ch2/TNs/mps}};
    \node[below = 1cm] at (M) {a matrix product state};
  \end{tikzpicture}
\end{center}
For now, let's not worry about the precise definition of an MPS---we'll build one from scratch in Section \ref{ssec:bird}. In the mean time, know that tensor networks come in many ``flavors,'' and an MPS is one example. More generally, tensor networks are useful in modeling the states of quantum many body systems. Different systems have different properties, and some tensor networks are better than others at capturing those properties. For a brief sketch of widely-used tensor networks and main algorithms used to work with them, see the recent survey \cite{orus2019}. For a more thorough introduction to tensor networks and their diagrammatic notation, see \cite{tensornetwork,Bridgeman2017,orus2014,Biamonte2017,Biamonte2019,penrose71,roberts2019,schollwock} and other treatments.
  
\newthought{With these notational} preliminaries in hand, think back to the motivating example in Section \ref{ssec:memory}. There we introduced an advantage of working with \textit{density operators} in lieu of probability distributions, namely access to ``marginal probability with memory.'' This advantage ultimately---and quite simply---arises from access to results and tools of linear algebra. Let's introduce this idea more formally now, starting with the definition of a density operator. 

\section{Density Operators}\label{sec:density}

\begin{definition} A \emph{density operator},\index{density operator} or simply \emph{density}, is a linear operator that is Hermitian, positive semidefinite, and has unit trace. A density operator is also called \emph{quantum state}.\index{quantum state}
\end{definition}

\noindent We think of density operators as the quantum version of probability distributions. The analogy is clear after unwinding the definition. Let's take a closer look at the three properties.
	\begin{description}
		\item \textit{Hermitian operators are like real numbers.}  An operator $f\colon V\to V$ is \emph{Hermitian}\index{Hermitian operator} if $\<fv|w\>=\<v|fw\>$ for all $v,w\in V$. The matrix representation of a Hermitian operator is equal to its conjugate transpose $f=f^\dagger,$ and the eigenvalues of Hermitian operators are always real. Analogously, probabilities are real numbers.

		\item  \textit{Positive semidefinite operators are like nonnegative numbers.} An operator $f\colon V\to V$ is \emph{positive semidefinite}\index{positive semidefinite operator} if $\<v|f|v\>\geq 0$ for all $v\in V$. Loosely speaking, if $f$ is positive semidefinite, then the image of the vector $|v\>$ under $f$'s transformation doesn't point in the ``opposite'' direction of $|v\>$. The eigenvalues of positive semidefinite operators are always nonnegative. Analogously, probabilities are nonnegative numbers.

		\item  \textit{The trace is a sum of numbers.} The trace of an operator is the sum of the diagonal entries of a matrix representing it. The trace is also equal to the sum of the eigenvalues of the operator. A density operator has \emph{trace equal to $1$}. Analogously, probabilities sum to $1$.
	\end{description}

\noindent So density operators generalize probability distributions.

\begin{center}
	\begin{tabular}{@{}l|l@{}}
	      %\toprule
	      \textbf{density operator} & \textbf{probability distribution} \\
	      \midrule
	      real eigenvalues & real numbers \\[20pt]
	      \parbox[c]{1cm}{nonnegative\\ eigenvalues} & \parbox[c]{1cm}{nonnegative\\ numbers} \\[20pt]
	       sum to $1$ & sum to $1$\\
	     %\bottomrule
	\end{tabular}
\end{center}

\noindent But more is true. Every density operator defines a classical probability distribution, and every classical probability defines a density operator.  
	\begin{center}
	\begin{tikzpicture}
	\node at (0,0) {\includegraphics[scale=0.1]{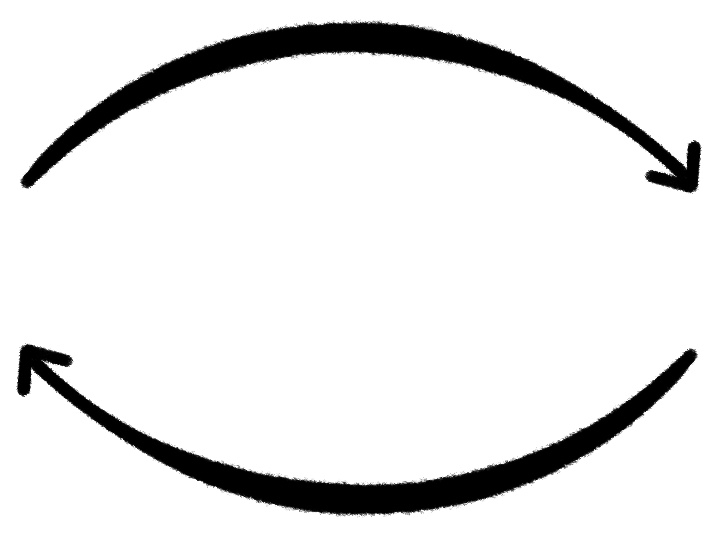}};
	\node at (-2,0) {density operator};
	\node at (2.2,0) {probability distribution};
	\end{tikzpicture}
	\end{center}
	% \marginnote{\[
	% \begin{tikzcd}[ampersand replacement = \&]
	% \text{ a density on } \mathbb{C}^S \arrow[rr, bend left,thick] \&  \& \text{probabilities on } S \arrow[ll, bend left,gray]
	% \end{tikzcd}
	% \]}

\noindent Let's understand the first claim first.

\bigskip
\noindent\textit{\Large{Every density on $\mathbb{C}^S$ defines a probability distribution on $S$.}}
% in latex book class can replace \noindent\textbf with \subsubsection
\bigskip

\noindent Let $S$ be a finite set. Any density operator $\rho\colon \mathbb{C}^S \to \mathbb{C}^S$ defines a probability distribution $\pi_\rho\colon S\to \mathbb{R}$  by the \emph{Born rule},\index{Born rule}
\begin{equation}
\pi_\rho(s)=\<s|\rho|s\> \qquad s\in S.
\end{equation}
The $\pi_\rho(s)$ are the diagonal entries of the matrix representation of $\rho$ with respect to the basis $\{|s\>\}$. These values are real and nonnegative since $\rho$ is Hermitian and positive semidefinite, and $\sum_s\pi_\rho(s)=1$ since the trace of $\rho$ is $1$. So every density $\rho$ on $\mathbb{C}^S$ does indeed define a probability distribution $\pi_\rho$ on $S$. We can also go in the other direction, and in more than one way.

	% \marginnote{\[
	% \begin{tikzcd}[ampersand replacement = \&]
	% \text{ a density on } \mathbb{C}^S \arrow[rr, bend left,gray] \&  \& \text{probabilities on } S \arrow[ll, bend left,thick]
	% \end{tikzcd}
	% \]}

\bigskip
\noindent\textit{\Large{Every probability distribution on $S$ defines densities on $\mathbb{C}^S$.}}
% in latex book class can replace \noindent\textbf with \subsubsection
\bigskip

\noindent Suppose a probability distribution $\pi$ on $S$ is already given. Let's turn our attention to density operators $\rho$ on $\mathbb{C}^S$ with the property that the Born distribution induced by $\rho$ coincides with the given probability distribution $\pi.$
\begin{equation}\label{eq:pi_rho}
\pi=\pi_\rho 
\end{equation}
As it turns out, there are \textit{multiple} ways to do this to build densities with this property. Below we share \textit{two} such ways to define a density operator on $\mathbb{C}^S$ from a probability distribution on $S$ so that Equation (\ref{eq:pi_rho}) is satisfied. These two ways can be thought of as two extremes in a range of options.
	\begin{enumerate}
		\item \textbf{Diagonal.} Given $\pi\colon S\to \mathbb{R}$ define $\rho_{\text{diag}}:=\sum_{s\in S}\pi(s)|s\>\<s|$. As a matrix, this is simply the operator with the probabilities $\pi(s)$ along its diagonal and zeros elsewhere.
		\[\rho_{\text{diag}}
		=\begin{bmatrix}
		\pi (s_1) & & & \\
		& \pi (s_2) & & \\
		& & \ddots & \\
		& & & \pi (s_n)
		\end{bmatrix}\]
			\marginnote[-2.5cm]{
			\[
			\begin{array}{ccc}
			\pi=\left(\frac{3}{5},\frac{1}{5},\frac{1}{5}\right)	&
			\leftrightsquigarrow						&
			\begin{bmatrix}
			\frac{3}{5} 	& 	0 	& 	0\\[5pt]
			0	&	\frac{1}{5} & 0 \\[5pt]
			0 & 0 & \frac{1}{5}
			\end{bmatrix} = \rho_\text{diag}
			\end{array}
			\]
			}
\noindent This is manifestly a density. It's symmetric with trace equal to $1$, and for any vector $|v\>=\sum_{s\in S}v(s)|s\>$ the inner product $\<v|\rho_{\text{diag}}|v\>$ is given by the sum  $\sum_{s\in S}|v(s)|^2\pi(s) \geq 0$, and so $\rho_{\textit{diag}}$ is positive semidefinite. It's also straightforward to see that $\rho_{\textit{diag}}$ satisfies Equation (\ref{eq:pi_rho}). The diagonal operator has the probabilities on its diagonal! In short, any finite list of nonnegative real numbers that add up to $1$ can be turned into a diagonal matrix, which is a density. Now let's give a second way to construct a density from $\pi$.

		\item \textbf{Orthogonal Projection.} Given $\pi\colon S\to\mathbb{R}$ define a unit vector by taking the sum of all the elements in $S$ and weighting each by the square root of its probability,
			\marginnote[.5cm]{Remember the vector in Equation (\ref{eq:psi}). It's the main character for the next three chapters!}
		\begin{equation}\label{eq:psi}
		|\psi\>=\sum_{s\in S}\sqrt{\pi(s)}|s\>
		\end{equation}
		and consider the orthogonal projection onto the line spanned by it,
		\begin{equation}\label{eq:rho}
		\rho_\pi:=|\psi\>\<\psi|.
		\end{equation}
		The operator $\rho_\pi$ is Hermitian since for any vectors $|v\>,|w\>\in \mathbb{C}^S$, the inner product $\<\rho_\pi v|w\>$ is $\<v|\psi\>\<\psi|w\>$, which is equal to $\<v|\rho_\pi w\>$. The operator is also positive semidefinite since the inner product $\<v|\psi\>\<\psi|v\>=|\<v|\psi\>|^2$ is nonnegative. Moreover the trace of $\rho_\pi$ is $\<\psi|\psi\>=1$. Further, Equation (\ref{eq:pi_rho}) is satisfied as claimed: 
		\[\pi_{\rho_\pi}(s)=\<s|\psi\>\<\psi|s\>=\left(\sqrt{\pi(s)}\right)^2=\pi(s).\]
		As a matrix, $\rho_\pi$ is computed as the outer product of $|\psi\>$ with itself.
			\marginnote{
			\begin{align*}
			\pi=\left(\frac{3}{5},\frac{1}{5},\frac{1}{5}\right)
			\quad &\leftrightsquigarrow \quad
			\begin{bmatrix}
			\sqrt{\frac{3}{5}} \\ \sqrt{\frac{1}{5}} \\ \sqrt{\frac{1}{5}}
			\end{bmatrix}
			\begin{bmatrix}
			\sqrt{\frac{3}{5}} & \sqrt{\frac{1}{5}} & \sqrt{\frac{1}{5}}
			\end{bmatrix}
			\\[10pt]
			&= 
			\begin{bmatrix}
			\frac{3}{5} 	& 	\frac{\sqrt 3}{5} 	& 	\frac{\sqrt 3}{5}\\[5pt]
			\frac{\sqrt 3}{5}	&	\frac{1}{5} & \frac{1}{5} \\[5pt]
			\frac{\sqrt 3}{5} & \frac{1}{5} & \frac{1}{5}
			\end{bmatrix}=\rho_\pi
			\end{align*}
			}
		\[
		\rho_\pi=
		\begin{bmatrix}
		\vdots \\ \sqrt{\pi(s)} \\ \vdots
		\end{bmatrix}
		\begin{bmatrix}
		\cdots & \sqrt{\pi(s)} & \cdots 
		\end{bmatrix}
		\]

	As a tensor network diagram, the covector $\<\psi|$ is denoted by reflecting the diagram for $|\psi\>$, and we'll denote the picture for $\rho_\pi=|\psi\>\<\psi|$ by stacking one on top of the other. 
	\begin{center}
	\begin{tikzpicture}
	\node at (0,0) {\includegraphics[scale=0.1]{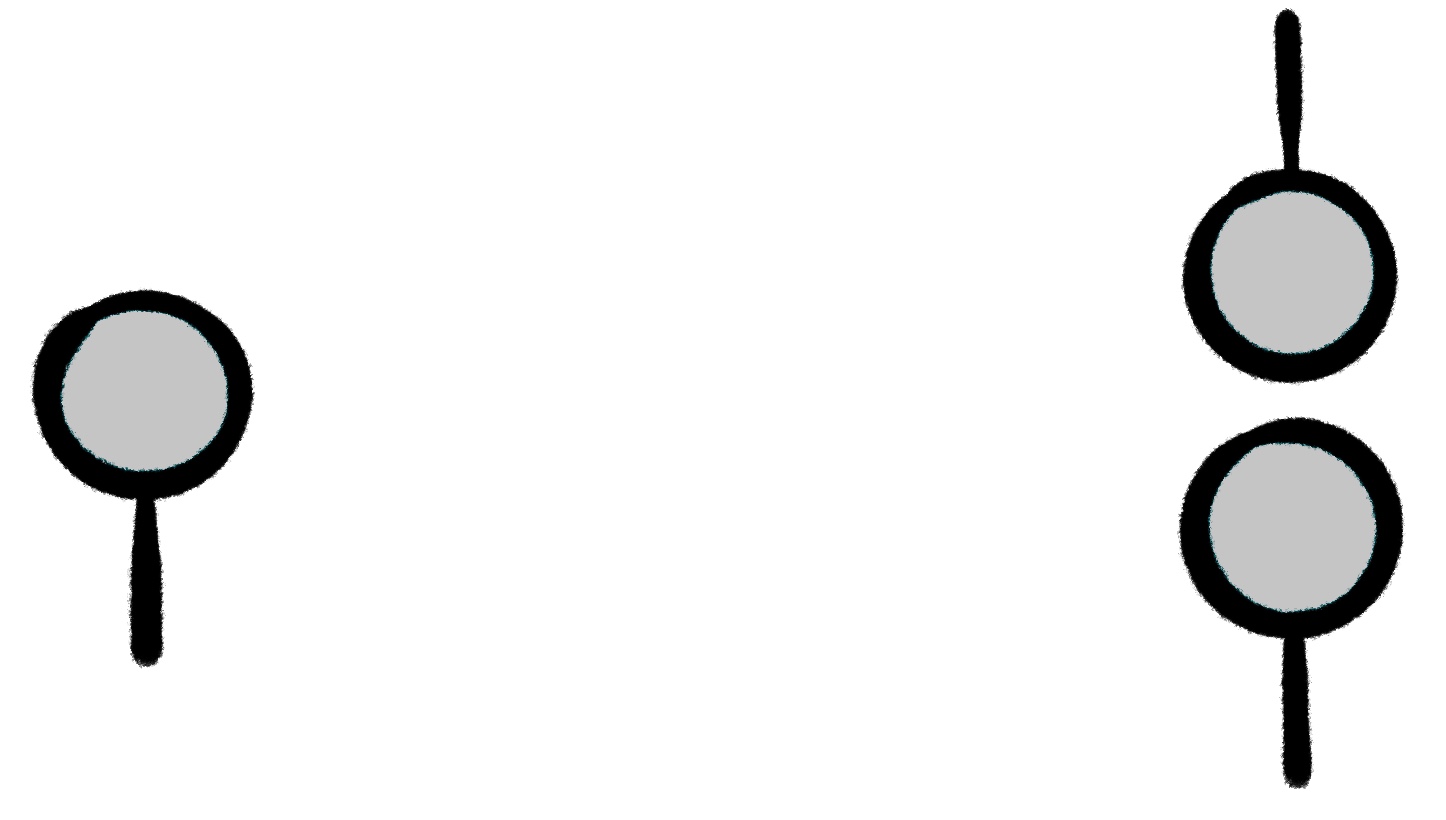}};
	\node at (-3,0) {$|\psi\>=$};
	\node at (-.8,0) {$\rightsquigarrow$};
	\node at (.8,0) {$|\psi\>\<\psi|=$};
	\end{tikzpicture}
	\end{center}
	\end{enumerate}

\newthought{For much of} this text, our main focus will be on the density operator $\rho_\pi=|\psi\>\<\psi|$, which is orthogonal projection onto the vector defined in Equation (\ref{eq:psi}). But think of diagonal operators and orthogonal projections as two extremes: $\rho_{\text{diag}}$ has maximal rank while $\rho_\pi$ has rank $1$. The language of quantum mechanics makes this more precise.

\begin{definition}\label{def:pure_mixed}
A density operator $\rho$ is called a \emph{pure quantum state}\index{quantum state!pure} if its rank is equal to $1$ and is called a \emph{mixed quantum state}\index{quantum state!mixed} otherwise. The degree of ``mixedness'' of a density is measured by its \emph{von Neumann entropy}\index{entropy!vonNeumann} $-\tr(\rho\ln\rho).$
\end{definition}

Let's spend some time on the terms just defined. First, as noted, rank $1$ densities are called \textit{pure quantum states}. Unit vectors are also referred to by this same name. The flexibility is understandable since a unit vector may be identified with orthogonal projection onto the line spanned by it. Knowing the projection operator is the same as knowing the unit vector. Here's another way to see why a ``state'' is an appropriate synonym for a vector. By definition, any unit vector $|\psi\>=\sum_s \psi(s)|s\>$ in $\mathbb{C}^S$ has the property that $\sum_s|\psi(s)|^2=1$. So its coordinates define a probability distribution on the set $S$: the probability of $s\in S$ is $|\psi(s)|^2=|\<s|\psi\>|^2$. For this reason, the set $S$ can be thought of as representing some system, the internal states of which are given by the probabilities $|\psi(s)|^2$. So we may occasionally refer to $|\psi\>$ or $|\psi\>\<\psi|$ as \emph{the state of the system $S$.}

But what about densities not arising as projections-onto-lines? If the rank of a density is larger than $1$, then it is called a \textit{mixed quantum state.} This, too, is descriptive. As with every Hermitian operator, a density $\rho$ on $\mathbb{C}^S$ has a \emph{spectral decomposition}.\index{spectral decomposition} That is, there exists an orthonormal set of eigenvectors $\{|e_1\>,\ldots,|e_r\>\}$ of $\rho$, where $r$ is the rank of $\rho$, so that
\begin{equation}\label{eq:spectral}
\rho=\sum_{i=1}^r\lambda_i|e_i\>\<e_i|.
\end{equation}
Each operator $|e_i\>\<e_i|$ is orthogonal projection onto the line spanned by the eigenvector $|e_i\>$, and so $\rho$ is a \textit{mixture} of pure states. As an aside, notice that the eigenvalues $\lambda_i$ of $\rho$ are real (since $\rho$ is Hermitian), nonnegative (since $\rho$ is positive semidefinite), and their sum is $1$ (since $\rho$ has unit trace).\sidenote{The trace is invariant under a change of basis. So choosing the eigenvectors $\{|e_i\>\}$ of $\rho$ as a basis for its image, the matrix representation of $\rho$ has its eigenvalues $\lambda_i$ along the diagonal and zeros elsewhere. The trace of that matrix is $\lambda_1+\cdots + \lambda_r=1$.} Therefore the $\lambda_i$ are probabilities. Concretely, they define a probability distribution on the set of eigenvectors $\{|e_i\>\}$. What's more, each eigenvector \textit{itself} defines a probability distribution on the original set $S$. The probability of an element $s\in S$ is given by the Born rule $\<s|e_i\>\<e_i|s\>=|\<s|e_i\>|^2$. Intuitively, just square the modulus of the $s$th entry of the column vector $|e_i\>$. Since $|e_i\>$ is a unit vector, the $|\<s|e_i\>|^2$ are probabilities---they add up to $1$.

	\begin{example}\label{ex:interpretation}
	Suppose $S=\{s_1,s_2,s_3\}$ is a three-element set and consider the following $3\times 3$ density operator on $\mathbb{C}^S\cong\mathbb{C}^3.$
	\[
	\rho = 
	\begin{bmatrix}
	\frac{1}{3} & \frac{1}{3} & 0\\[5pt]
	\frac{1}{3} & \frac{1}{3} & 0\\[5pt]
	0 & 0 & \frac{1}{3}
	\end{bmatrix}
	\]
	Its nonzero eigenvalues are $\lambda_1=\frac{2}{3}$ and $\lambda_2=\frac{1}{3}$ with corresponding eigenvectors
		\marginnote{We've analyzed the spectral information of this matrix before. Recall the discussion of colors and fruits and vegetables in Section \ref{ssec:memory}.}
	\[
	|e_1\> =
	\begin{bmatrix}
	\frac{1}{\sqrt{2}} \\[10pt] \frac{1}{\sqrt{2}} \\[10pt] 0
	\end{bmatrix}
	\qquad 
	|e_2\> = 
	\begin{bmatrix} 0 \\[5pt] 0 \\[5pt] 1
	\end{bmatrix}
	\]
	This has the following interpretation. With probability $\lambda_1=\frac{2}{3}$ the state of the system $S$ is $|e_1\>$, and with probability $\lambda_2=\frac{1}{3}$ the state is $|e_2\>$. Moreover,
		\begin{itemize}
			\item When $S$ is in the the state $|e_1\>$ then
				\begin{description}
					\item $s_1$ has probability $|\<s_1|e_1\>|^2=\left(\frac{1}{\sqrt 2}\right)=\frac{1}{2}$,

					\item $s_2$ has probability $|\<s_2|e_1\>|^2=\left(\frac{1}{\sqrt 2}\right)=\frac{1}{2}$,

					\item $s_3$ has probability $|\<s_3|e_1\>|^2=0$.\\
				\end{description}
			\item When $S$ is in the state $|e_2\>$ then
				\begin{description}
					\item $s_1$ has probability $|\<s_1|e_2\>|^2=0$,

					\item $s_2$ has probability $|\<s_2|e_2\>|^2=0$,

					\item $s_3$ has probability $|\<s_3|e_2\>|^2=1$.\\
				\end{description}
		\end{itemize}
		In short, the probability associated to any element $s\in S$ is completely determined by one of the unit vectors $|e_i\>.$ Think of the collection of these probabilities as a description of what's going on with the system $S$.
	\end{example} 

Before returning to the main text, let's say a few quick words about the third term in Definition \ref{def:pure_mixed}. The \textit{von Neumann entropy}\index{entropy!von Neumann} $-\tr(\rho \ln\rho)$ of a density $\rho$ coincides with the \textit{Shannon entropy}\index{entropy!Shannon}\sidenote[][-1cm]{To see this, consider the spectral decomposition $\rho=UDU^\dagger$, where $U$ is a unitary operator and $D$ is a diagonal operator. Equivalently, write $\rho=\sum_i\lambda_i|e_i\>\<e_i|$ where $|e_i\>$ are the columns of $U$ and $\lambda_i$ are the diagonals of $D$. Then since $\rho$ is diagonalizable, we have that $\ln\rho= U\ln(D)U^\dagger$, where $\ln(D)$ is a diagonal matrix with $\ln\lambda_i$ on the diagonal,  \[\ln(D)=\begin{bmatrix}\ln\lambda_1 &&&\\ &\ln\lambda_2&&&\\&&\ddots &\\&&&\ln\lambda_r\end{bmatrix}\] where $r$ is the rank of $\rho.$ Therefore $\rho\ln\rho=UD\ln(D)U^\dagger$ and so $-\tr\rho\ln\rho=-\sum_i\tr(|e_i\>\<e_i|)\lambda_i\ln\lambda_i$.} of the probability distribution defined by its eigenvalues $-\sum_{i=1}^r{\lambda_i\ln \lambda_i}$, where $r$ is the rank of $\rho$. Think of entropy as a measure of uncertainty. For instance, when $\rho$ is a mixed state, one thinks of the system (represented by the set) $S$ as being in any one of the $r$ states $|e_i\>$ with probability $\lambda_i$. This probabilistic uncertainty is captured by the entropy. In the extremal case when $r=1$ so that $\rho$ is a pure state, then it has only \textit{one} eigenvector $|e\>$ and thus $S$ can be in only one possible state. The probability that $S$ is in that state is the eigenvalue $\lambda = 1$. The von Neumann entropy is then $1\ln 1=0$, indicating the total lack of uncertainty.

\newthought{Let's summarize} the discussion so far. Given a probability distribution $\pi\colon S\to \mathbb{R}$ we've described two ways to define a density on $\mathbb{C}^S$ so that the probabilities $\pi(s)$ are on the diagonal of $\rho$ in the computational basis $\{|s\>\}$. One way to do this is by constructing the diagonal operator $\rho_{\text{diag}}=\sum_{s}\pi(s)|s\>\<s|$, which is a maximal rank mixed state with von Neumann entropy equal to $-\sum_{s}\pi(s)\ln \pi(s)$. At the other extreme, the orthogonal projection operator $\rho_\pi=|\psi\>\<\psi|$ is a pure state with zero von Neumann entropy, where $|\psi\>=\sum_s\sqrt{\pi(s)}|s\>$ is as defined in Equation (\ref{eq:psi}).\marginnote{There are other ways to write $\pi$ as a density, in addition to the two listed here. For instance, we could introduce a complex \emph{phase} $e^{i\theta}$ for some $\theta\in\mathbb{R}$ into $|\psi\>$.} In the context of physics, classical probability distributions are almost always modeled by \textit{diagonal} operators. In this work, however, we will always use the orthogonal projection operator $\rho_\pi$. The reason for this departure becomes clear when $\pi$ is a \textit{joint} probability distribution. As we'll soon see, the projection operator $\rho_\pi$ reduces to smaller operators that harness valuable statistical information about $\pi$. The reduction process is the quantum version of marginalization, called the partial trace.

\section{The Partial Trace}\label{sec:partial_trace}\index{partial trace}
The partial trace is an operation in linear algebra that returns information about subsystems. Imagine, for instance, a  system of many interacting ``particles'' or components. Given the state of this system, we may wish to know the state of a smaller subsystem---not in isolation, but rather---given that it interacts with its surrounding environment. The partial trace provides a way of zooming in
	\marginnote[-1cm]{
	\begin{tikzpicture}
	\node at (0,0) {\includegraphics[scale=0.06]{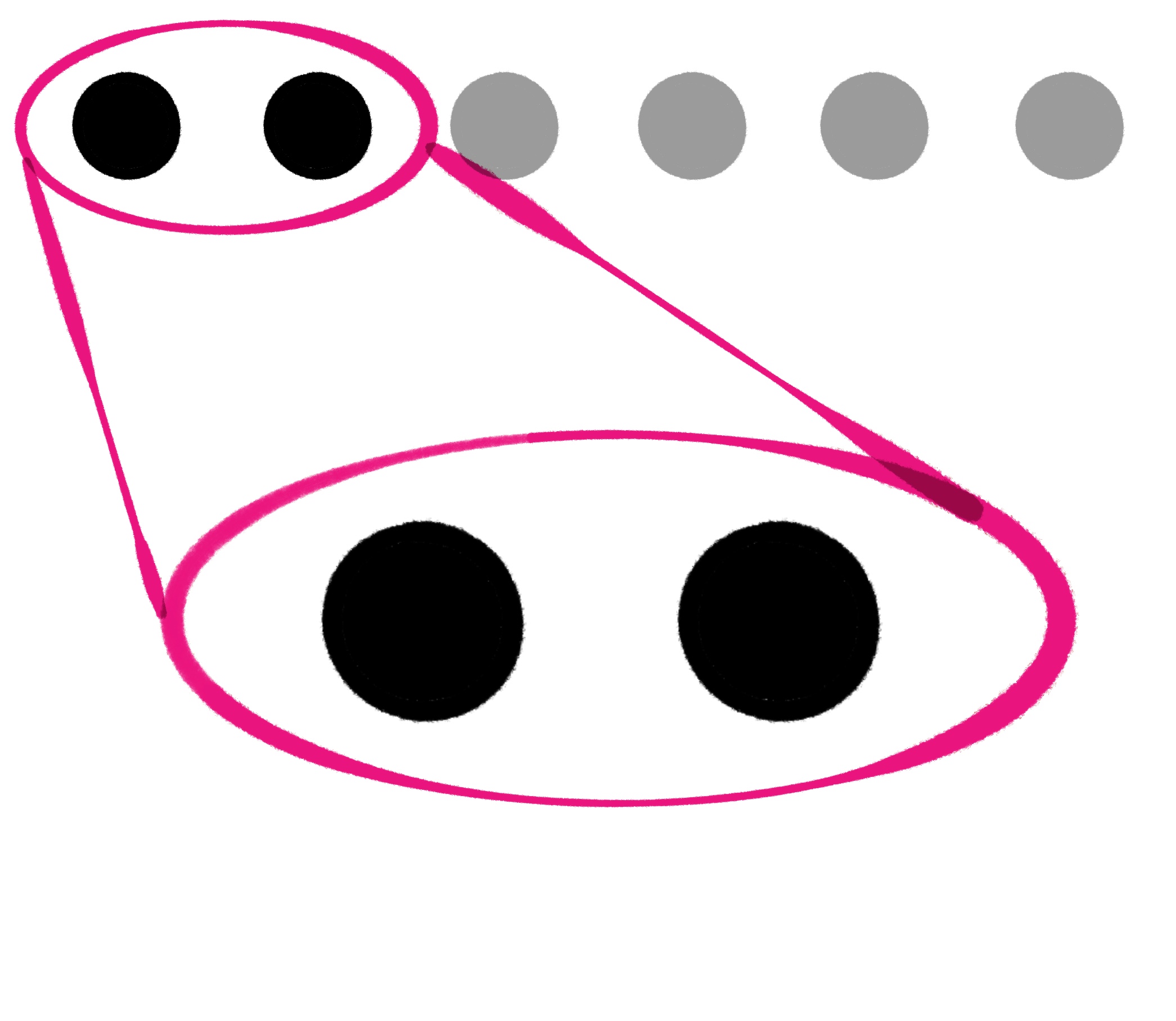}};
	\node at (0.2,-1.5) {\large\bfseries{\color{RubineRed}ZOOM}};
	\end{tikzpicture}}
and collecting such information. In quantum mechanics, \textit{a system of interacting components} is modeled with the tensor product, and the state of the system is an operator on the space. So let's consider the tensor product of two finite-dimensional vector spaces $V\otimes W$. ``Zooming in'' to one component, say $V$, can be thought of as constructing a mapping from $V\otimes W$ to $V$. But it turns out that in general there \textit{no} natural linear maps from a tensor product to each factor!
\[
	\begin{tikzcd}
	  & V\otimes W \arrow[ld, "\text{no!}"', dotted] \arrow[rd, "\text{no!}", dotted] &   \\
	V &                                                                               & W
	\end{tikzcd}
\]
We might have in mind the mapping $|v\>\otimes|w\>\mapsto|v\>$, but that assignment is not linear. More concretely, the projection\sidenote{The \emph{product} $V\times W$ is the vector space consisting of pairs $(|v\>,|w\>)$ where $|v\>\in V$ and $|w\>\in W$.} $p\colon V\times W\to V$ defined by $(|v\>,|w\>)\mapsto |v\>$ is not bilinear,
\[p(|v\>,|w\>+|z\>)=|v\>\neq 2|v\>=p(|v\>,|w\>) + p(|v\>,|z\>)\]
and so it does not extend to a linear map on the tensor product.
\[
	\begin{tikzcd}
	V\otimes W \arrow[r, "\text{no!}", dotted] & V \\
	V\times W \arrow[u, hook] \arrow[ru, "p"'] &  
	\end{tikzcd}
\]
That's not
		\marginnote[-2.8cm]{The diagram on the left refers to a defining property of the tensor product of vector spaces. The tensor product $V\otimes W$ is defined to be the vector space with the property that for any vector space $Z$ and any bilinear map $f\colon V\times W\to Z$, there is a unique linear map $\hat f\colon V\otimes W\to Z$ so that $\hat f(|v\>\otimes|w\>)=f(v,w)$; that is, so that the following diagram commutes: 
		\[
		\begin{tikzcd}[ampersand replacement = \&]
		V\otimes W \arrow[r, "\hat f", dashed] \& Z \\
		V\times W \arrow[u, hook] \arrow[ru, "f"'] \&  
		\end{tikzcd}
		\]
	The inclusion $V\times W\hookrightarrow V\otimes W$ is the assignment ${|(x,y)\>}\mapsto {|x\>\otimes|y\>}$ whenever $\{|x\>\}$ is a basis for $V$ and $\{|y\>\}$ is a basis for $W.$ This is called the \emph{universal property of the tensor product $V\otimes W.$}}
to say there are \textit{no} linear maps from a tensor product. There are plenty. If $\{|x\>\}$ is an orthonormal basis for $V$ and if $\{|y\>\}$ is an orthonormal basis for $W$, then for each $|y\>$ there is a linear map $V\otimes W\to V$ given by $\id_V\otimes\<y|$, which maps any pair $|x\>\otimes|y'\>$ to $|x\>$ if $y'=y$ and to $0$ otherwise. This assignment extends to a linear map, and yet it is not \textit{natural} because it depends on the choice of $|y\>.$ The point is there are generally no \textit{naturally-defined linear maps} from a tensor product to each factor. However, there \textit{are} natural linear maps after passing to endomorphisms of the spaces.
	\marginnote[1cm]{Recall that $\End(V)$ denotes the vector space of linear operators $V\to V$.}
	\[
	\begin{tikzcd}
	        & \End(V\otimes W) \arrow[ld] \arrow[rd] &         \\
	\End(V) &                                                           & \End(W)
	\end{tikzcd}
	\]
These natural maps are called \emph{partial traces}. The formal definition first requires a lemma.

\begin{lemma}Given finite-dimensional vector spaces $V$ and $W$ there is an isomorphism
\[\End(V\otimes W)\cong \End V\otimes \End W.\]
\end{lemma}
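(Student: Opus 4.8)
The plan is to use the standard finite-dimensional identification $\End(U)\cong U\otimes U^*$ and then merely rearrange tensor factors. First I would recall that for any finite-dimensional $U$ the assignment $|u\>\otimes\<u'|\mapsto |u\>\<u'|$ defines an isomorphism $U\otimes U^*\xrightarrow{\ \sim\ }\End(U)$: it is linear, it is surjective because every operator is a finite sum of such rank-one pieces, and the dimension count $\dim(U\otimes U^*)=(\dim U)^2=\dim\End(U)$ closes the gap. Applying this with $U=V\otimes W$ gives $\End(V\otimes W)\cong (V\otimes W)\otimes(V\otimes W)^*$. Next I would invoke the (again finite-dimensional) natural isomorphism $(V\otimes W)^*\cong V^*\otimes W^*$, under which $\<vw|$ corresponds to $\<v|\otimes\<w|$. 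Combining these and swapping the middle two factors using the symmetry of the tensor product, $\End(V\otimes W)\cong V\otimes V^*\otimes W\otimes W^*\cong \End V\otimes\End W$.

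Unwinding the composite, the isomorphism being built is exactly $|vw\>\<v'w'|\longmapsto |v\>\<v'|\otimes|w\>\<w'|$, which is precisely the content of Equation (\ref{eq:outer_tensor_product}). So a second, more self-contained route --- and the one I would actually write out --- is to take that formula as the definition. Concretely, the tensor product of linear maps from Equation (\ref{eq:f_tensor}) gives a bilinear assignment $\End V\times\End W\to\End(V\otimes W)$, $(f,g)\mapsto f\otimes g$, which by the universal property of the tensor product descends to a linear map $\Phi\colon\End V\otimes\End W\to\End(V\otimes W)$. Then I would check $\Phi$ is an isomorphism by testing it on bases: fixing orthonormal bases $\{|x\>\}$ of $V$ and $\{|y\>\}$ of $W$, the operators $|x\>\<x'|\otimes|y\>\<y'|$ form a basis of $\End V\otimes\End W$, and $\Phi$ carries them bijectively to the operators $|xy\>\<x'y'|$, which by Equation (\ref{eq:outer_tensor_product}) genuinely equal $|x\>\<x'|\otimes|y\>\<y'|$ and form a basis of $\End(V\otimes W)$. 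A linear map sending one basis bijectively onto another is an isomorphism, so we are done.

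The only point requiring care --- the ``main obstacle,'' such as it is --- is that every step secretly uses finite-dimensionality: both $\End U\cong U\otimes U^*$ and $(V\otimes W)^*\cong V^*\otimes W^*$ fail in general for infinite-dimensional spaces, so I would state the hypothesis explicitly rather than let it pass silently. Beyond that the work is pure bookkeeping: verifying that $(f,g)\mapsto f\otimes g$ is bilinear (immediate from Equation (\ref{eq:f_tensor})) so that $\Phi$ is well-defined, and the dimension count $\dim(\End V\otimes\End W)=(\dim V)^2(\dim W)^2=(\dim V\dim W)^2=\dim\End(V\otimes W)$, which together with the basis-to-basis check above secures the isomorphism.
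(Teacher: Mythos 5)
Your first route is exactly the paper's proof: the chain $\End(V\otimes W)\cong (V\otimes W)\otimes(V\otimes W)^*\cong V\otimes W\otimes V^*\otimes W^*\cong V\otimes V^*\otimes W\otimes W^*\cong \End V\otimes\End W$, resting on $\hom(A,B)\cong B\otimes A^*$ in finite dimensions. The second, basis-and-dimension-count route is a correct and slightly more explicit elaboration of the same isomorphism, and your remark about finite-dimensionality being essential is well taken.
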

\begin{proof}
The proof quickly follows from the general fact that $\hom(A,B)\cong B\otimes A^*$ for finite-dimensional spaces $A$ and $B$. In particular, we have cannonical isomorphisms
\begin{align*}
\End(V\otimes W)&\cong (V\otimes W)\otimes (V\otimes W)^*\\
&\cong V\otimes W\otimes V^*\otimes W^*\\
&\cong V\otimes V^* \otimes W\otimes W^*\\
&\cong \End(V)\otimes \End(W).
\end{align*}
\end{proof}

\noindent With this isomorphism in mind, here is the definition.

\begin{definition}
Given finite dimensional vector spaces $V$ and $W$, there are two linear maps
	\marginnote{The trace associates to a linear operator a number. The \textit{partial} trace associates to a linear operator \textit{another} linear operator.}
	\[
	\begin{tikzcd}
	        & \End(V\otimes W) \arrow[ld, "\tr_W"'] \arrow[rd, "\tr_V"] &         \\
	\End(V) &                                                           & \End(W)
	\end{tikzcd}
	\]
defined for any $f\in \End(V)$ and $g\in \End(W)$ by
\[
	\tr_W(f\otimes g):=f\tr(g) \qquad \tr_V(f\otimes g):=g\tr(f).
\]
These maps are called \emph{partial traces}.\index{partial trace}
\end{definition}
	\marginnote[-.75cm]{To be explicit, the partial trace maps are obtained as the tensor product with the trace map and an identity.
	\begin{align*}
	\tr_W&:=\id_{\End(V)}\otimes \tr\\
	\tr_V&:=\tr\otimes \id_{\End(W)}
	\end{align*}
	So, for instance, we have
	\[
	\begin{tikzcd}[ampersand replacement = \&]
	\End(V)\otimes\End(W) \arrow[dd, "\id_{\End(V)}\otimes\tr"]\\ \\
	\End(V)\otimes\mathbb{C} \arrow[d, "\cong"]\\
	\End(V)
	\end{tikzcd}
	\]
	}
Notice why the partial trace repairs the lack of a natural linear map $\begin{tikzcd} V\otimes W\ar[r,dotted] &V\end{tikzcd}$. The assignment $\End(V)\otimes \End(W)\to \End(V)$ doesn't completely forget about $W$. Crucially, there is a trace of (that is, a bit of something left over from) $W$ in the expression $\tr_W(f~\otimes~g)=f\tr(g)$, namely the value $\tr(g)$! The presence of the trace of $g$ guarantees that the projection $p\colon \End V\times W\to \End V$ defined by $p(f,g)=f\tr g$ is bilinear and therefore lifts to give a linear map from the tensor product.
\[
	\begin{tikzcd}
	\End V\otimes \End W \arrow[r, "\tr_W", dashed]      & \End V \\
	\End V\times \End W \arrow[ru, "p"'] \arrow[u, hook] &       
	\end{tikzcd}
\]
We also have explicit, 
	\marginnote{Remember that $|x_iy_\alpha\>$ is shorthand for the tensor product $|x_i\>\otimes |y_\alpha\>$. Also notice that the sums in Equation (\ref{eq:partial_trace}) are analogous to the formula for the trace of a square matrix $M$,
	\[\tr M=\sum_i M_{ii}.\]
	In both cases, we sum over entries where the indices are the same. Since an operator $f$ on a tensor product is specified by \textit{four}, rather than two, indices $f_{i\alpha,j\beta}$, we can either sum over the common index $i=j$ or we can sum over the common index $\alpha=\beta$. Hence there are \textit{two} partial trace maps.}
basis-dependent expressions for the partial trace maps. Given orthonormal bases $\{|x_i\>\}$ for $V$ and $\{|y_\alpha\>\}$ for $W$, any operator $f\in\End(V\otimes W)$ is of the form
\[
	f=\sum_{\substack{i,\alpha \\ j,\beta}} f_{i\alpha,j\beta}|x_iy_\alpha\>\<x_jy_\beta|
\]
and so
\begin{equation}\label{eq:partial_trace}
	\tr_Wf = \sum_{\substack{i,j \\ \alpha}}f_{i\alpha,j\alpha}|x_i\>\<x_j|
	\qquad
	\tr_Vf = \sum_{\substack{\alpha,\beta \\ i}}f_{i\alpha,i\beta}|y_\alpha\>\<y_\beta|.
\end{equation}
\noindent Visually, the partial trace $\tr_Wf$ is obtained by computing the trace of submatrices of the matrix representing $f$. Up to a reordering of the sets $\{|x_i\>\}$ and $\{|y_\alpha\>\}$, the partial trace $\tr_Vf$ can be computed in a similar visual way.
\begin{equation*}
\begin{tikzpicture}
	\node at (0,0) 
		{$f = \left[
			\begin{array}{llllll}
			{\color{VioletRed}f_{1\alpha,1\alpha}} & f_{1\alpha,1\beta} & f_{1\alpha,1\gamma} & {\color{LimeGreen}f_{1\alpha,2\alpha}} & f_{1\alpha,2\beta} & f_{1\alpha,2\gamma} \\[15pt]
			f_{1\beta,1\alpha} & {\color{VioletRed}f_{1\beta,1\beta}} & f_{1\beta,1\gamma} & f_{1\beta,2\alpha} & {\color{LimeGreen}f_{1\beta,2\beta}} & f_{1\beta,2\gamma} \\[15pt]
			f_{1\gamma,1\alpha} & f_{1\gamma,1\beta} & {\color{VioletRed}f_{1\gamma,1\gamma}} & f_{1\gamma,2\alpha} & f_{1\gamma,2\beta} & {\color{LimeGreen}f_{1\gamma,2\gamma}} \\[15pt]
			{\color{cyan}f_{2\alpha,1\alpha}} & f_{2\alpha,1\beta} & f_{2\alpha,1\gamma} & {\color{Orange}f_{2\alpha,2\alpha}} & f_{2\alpha,2\beta} & f_{2\alpha,2\gamma} \\[15pt]
			f_{2\beta,1\alpha} & {\color{cyan}f_{2\beta,1\beta}} & f_{2\beta,1\gamma} & f_{2\beta,2\alpha} & {\color{Orange}f_{2\beta,2\beta}} & f_{2\beta,2\gamma} \\[15pt]
			f_{2\gamma,1\alpha} & f_{2\gamma,1\beta} & {\color{cyan}f_{2\gamma,1\gamma}} & f_{2\gamma,2\alpha} & f_{2\gamma,2\beta} & {\color{Orange}f_{2\gamma,2\gamma}}
			\end{array}
			\right]
		$};
	\node at (5.5,0) {$\longmapsto$};
	\node at (8.9,0)
		{$\tr_Wf = 
			\begin{bmatrix}
			\phantom{\cdots} & \phantom{\cdots} \\[18pt]
			\phantom{\cdots} &\phantom{\cdots}
			\end{bmatrix}
		$};
	\node at (7,2) 
		{$
		(\tr_Wf)_{{\color{VioletRed}11}} = \displaystyle\sum_\alpha {{\color{VioletRed}f_{1\alpha,1\alpha}}}
		$};
	\node at (7,-2) 
		{$
		(\tr_Wf)_{{\color{cyan}21}} = \displaystyle\sum_\alpha {{\color{cyan}f_{2\alpha,1\alpha}}}
		$};
	\node at (12,2) 
		{$
		(\tr_Wf)_{{\color{LimeGreen}12}} = \displaystyle\sum_\alpha {{\color{LimeGreen}f_{1\alpha,2\alpha}}}
		$};
	\node at (12,-2) 
		{$
		(\tr_Wf)_{{\color{Orange}22}} = \displaystyle\sum_\alpha {{\color{Orange}f_{2\alpha,2\alpha}}}
		$};
	\node[rectangle,fill={rgb, 255:red,222;green,62;blue,138},minimum size=5mm] at (9.2,.35){}; %pink square
	\node[rectangle,fill={rgb, 255:red,59;green,203;blue,62},minimum size=5mm] at (9.85,.35){}; %green square
	\node[rectangle,fill={rgb, 255:red,78;green,173;blue,234},minimum size=5mm] at (9.2,-.4){}; %blue square
	\node[rectangle,fill={rgb, 255:red,255;green,150;blue,56},minimum size=5mm] at (9.85,-.4){}; %orange square

	\draw[line join=round,
		decorate, decoration={
		zigzag,
		segment length=6,
		amplitude=1.5,post=lineto,
		post length=2pt
		}, ->, >=stealth, color={rgb, 255:red,78;green,173;blue,234}]
		(8,-1.5) -- (8.5,-.75); %blue arrow
	\draw[line join=round,
		decorate, decoration={
		zigzag,
		segment length=6,
		amplitude=1.5,post=lineto,
		post length=2pt
		}, ->, >=stealth, color={rgb, 255:red,222;green,62;blue,138}] (8,1.5) -- (8.5,.75); %pink arrow
	\draw[line join=round,
		decorate, decoration={
		zigzag,
		segment length=6,
		amplitude=1.5,post=lineto,
		post length=2pt
		}, ->, >=stealth, color={rgb, 255:red,59;green,203;blue,62}] (11,1.5) -- (10.5,.75); %green arrow
	\draw[line join=round,
		decorate, decoration={
		zigzag,
		segment length=6,
		amplitude=1.5,post=lineto,
		post length=2pt
		}, ->, >=stealth, color={rgb, 255:red,255;green,150;blue,56}] (11,-1.5) -- (10.5,-.75); %orange arrow

	\draw[color=gray] (.35,-2.8) -- (.35,2.8); %vertical line
	\draw[color=gray] (-3,0) -- (3.6,0); %horizontal line
\end{tikzpicture}
\end{equation*}
Tensor network diagrams provide yet another way to visualize the partial trace. It corresponds to joining wires along a common index.
\begin{center}
\begin{tikzpicture}[x=1.5cm]
\node at (0,0) {\includegraphics[scale=0.17]{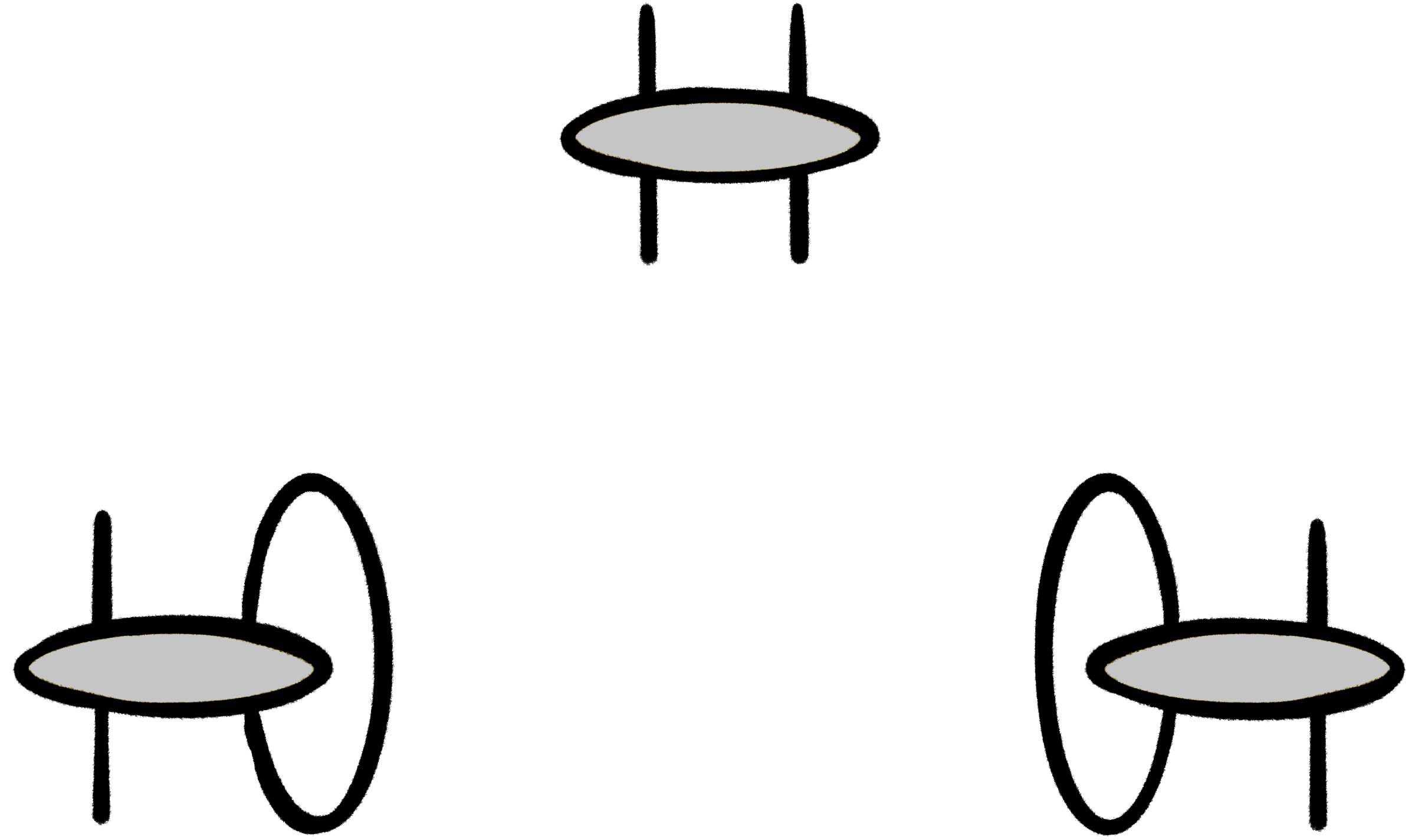}};
\node at (0,0) {$f$};
\node at (-1.8,-3) {$\tr_Wf$};
\node at (1.8,-3) {$\tr_Vf$};

\draw[thick, ->, >=stealth] (-1,1)--(-1.8,0);
\draw[thick, ->, >=stealth] (1,1)--(1.8,0);
\end{tikzpicture}
\end{center}

\noindent Importantly, the partial trace preserves several key properties of linear operators.
	\marginnote[-6cm]{Equations (\ref{eq:partial_trace}) follow directly from the definition, but the computation is a good exercise in working with bra-kets. For instance, the fact that $|x_iy_\alpha\>\<x_jy_\beta|=|x_i\>\<x_j|\otimes |y_\alpha\>\<y_\beta|$ as shown in Section \ref{ssec:notation} comes in handy. Here is the line-by-line derivation of $\tr_Wf$.
		\begin{align*}
			\tr_Wf &= \sum_{\substack{i,\alpha \\ j,\beta}} f_{i\alpha,j\beta} \tr_W|x_iy_\alpha\>\<x_jy_\beta|\\
			&= \sum_{\substack{i,\alpha \\ j,\beta}} f_{i\alpha,j\beta} \tr_W(|x_i\>\<x_j|\otimes |y_\alpha\>\<y_\beta|)\\
			&= \sum_{\substack{i,\alpha \\ j,\beta}} f_{i\alpha,j\beta} |x_i\>\<x_j|\otimes \tr|y_\alpha\>\<y_\beta|\\
			&= \sum_{\substack{i,\alpha \\ j,\beta}} f_{i\alpha,j\beta} |x_i\>\<x_j|\otimes \<y_\alpha|y_\beta\>\\
			&= \sum_{\substack{i,j \\ \alpha}} f_{i\alpha,j\alpha}|x_i\>\<x_j|.
		\end{align*}
	The second-to-last line uses orthonormality: $\<y_\alpha|y_\beta\>$ is $1$ if $\beta=\alpha$ and is $0$ otherwise. }
\begin{lemma} Let $f\in \End(V\otimes W)$. 
	\begin{itemize}
	\item If $f$ is Hermitian, then $\tr_Wf$ and $\tr_Vf$ are Hermitian.
	\item If $f$ is positive semidefinite, then $\tr_Wf$ and $\tr_Vf$ are positive semidefinite.
	\item The traces of $\tr_Wf$ and $\tr_Vf$ are equal to the trace of $f$.
	\end{itemize}
\end{lemma}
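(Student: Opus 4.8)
My plan is to work entirely from the explicit basis-dependent formulas for the partial trace recorded in Equation~(\ref{eq:partial_trace}); since $\tr_W$ and $\tr_V$ play symmetric roles, it suffices to treat $\tr_W$ in each of the three bullets and note that the argument for $\tr_V$ is verbatim the same after interchanging the two index families. So I would fix orthonormal bases $\{|x_i\>\}$ for $V$ and $\{|y_\alpha\>\}$ for $W$, write $f=\sum_{i\alpha,j\beta}f_{i\alpha,j\beta}\,|x_iy_\alpha\>\<x_jy_\beta|$, and record that $\tr_Wf=\sum_{i,j,\alpha}f_{i\alpha,j\alpha}\,|x_i\>\<x_j|$ has matrix entry $(\tr_Wf)_{ij}=\sum_\alpha f_{i\alpha,j\alpha}$.

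\textbf{Hermiticity.} Here I would simply use that $f$ is Hermitian precisely when $f_{i\alpha,j\beta}=\overline{f_{j\beta,i\alpha}}$ for all indices, and compute
\[
(\tr_Wf)_{ij}=\sum_\alpha f_{i\alpha,j\alpha}=\sum_\alpha\overline{f_{j\alpha,i\alpha}}=\overline{(\tr_Wf)_{ji}},
\]
which says the matrix of $\tr_Wf$ equals its own conjugate transpose; hence $\tr_Wf$ (and likewise $\tr_Vf$) is Hermitian. This bullet is pure bookkeeping.

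\textbf{Positive semidefiniteness.} This is the one step that needs an idea rather than a computation, so I expect it to be the main obstacle. The plan is to realize $\<v|\tr_Wf|v\>$, for an arbitrary $|v\>=\sum_i v_i|x_i\>\in V$, as a sum of ``sandwiches'' of $f$ against honest vectors of $V\otimes W$. Expanding both sides in coordinates, I would verify the identity
\[
\<v|\tr_Wf|v\>=\sum_{i,j,\alpha}\overline{v_i}\,v_j\,f_{i\alpha,j\alpha}=\sum_\alpha\<v\otimes y_\alpha|\,f\,|v\otimes y_\alpha\>.
\]
If $f$ is positive semidefinite, every term $\<v\otimes y_\alpha|f|v\otimes y_\alpha\>$ on the right is nonnegative, so $\<v|\tr_Wf|v\>\geq 0$ for all $|v\>$; that is, $\tr_Wf$ is positive semidefinite. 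Summing instead over the $V$-index gives the statement for $\tr_Vf$.

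\textbf{Trace.} Finally, setting $i=j$ in the formula for $\tr_Wf$ and summing,
\[
\tr(\tr_Wf)=\sum_i(\tr_Wf)_{ii}=\sum_{i,\alpha}f_{i\alpha,i\alpha}=\tr f,
\]
because $\sum_{i,\alpha}f_{i\alpha,i\alpha}$ is exactly the sum of the diagonal entries of $f$ in the product basis $\{|x_iy_\alpha\>\}$; the same computation gives $\tr(\tr_Vf)=\tr f$. As a cross-check, all three bullets also follow from the basis-free description $\tr_W=\id_{\End V}\otimes\tr$ on $\End V\otimes\End W$ together with bilinearity, but the Hermitian and positive-semidefinite claims come out most transparently from the entrywise formulas above.
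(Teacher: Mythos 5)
Your proof is correct and follows essentially the same route as the paper's: both work entrywise from the basis formula $(\tr_Wf)_{ij}=\sum_\alpha f_{i\alpha,j\alpha}$ for all three bullets. Your positive-semidefiniteness step is in fact slightly more transparent than the paper's, since you explicitly exhibit $\<v|\tr_Wf|v\>$ as the sum $\sum_\alpha\<v\otimes y_\alpha|f|v\otimes y_\alpha\>$ of nonnegative terms, whereas the paper only asserts that the quadratic form is nonnegative "by the above assumption" without naming the test vectors $|v\>\otimes|y_\alpha\>$.
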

\begin{proof}
The proof is an exercise in the definition. Let $\{|x_i\>\}$ be an orthonormal basis for $V$ and let $\{|y_\alpha\>\}$ be an orthonormal basis for $W$. Let
\[f = \sum_{\substack{i,\alpha \\ j,\beta}}f_{i\alpha,j\beta}|x_iy_\alpha\>\<x_jy_\beta|\]
be any operator on $V\otimes W$, and let $f_V:=\tr_Wf$ be as in the left-hand side of Equation (\ref{eq:partial_trace}). If $f$ is Hermitian, then
\[
(f_V)_{ij} = \sum_{\alpha} f_{i\alpha,j\alpha} = \sum_{\alpha} \overline{f_{j\alpha,i\alpha}} = \overline{(f_V)_{ji}}
\]
and so $f_V$ is also Hermitian. If $f$ is positive semidefinite then
	\sidenote{This is an expansion of the inner product $\<\phi|f_V|\phi\>$, where $|\phi\>=\sum_{i,\alpha}\phi_{i\alpha}|x_iy_\alpha\>$ is any vector in $V\otimes W$.}
\[
\sum_{\substack{i,\alpha \\ j,\beta}} \overline{\phi_{i\alpha}}\phi_{j\beta}f_{i\alpha,j\beta} \geq 0 \qquad \text{ for any $\phi_{i\alpha},\phi_{j\beta}\in \mathbb{C}$.}
\]
So for any vector $|v\>=\sum_{k}v_k|x_k\>$ in $V$, the inner product $\<v|f_V|v\>$ is the sum $\sum_{i,j,\alpha}\overline{v_i}v_jf_{i\alpha,j\alpha}$ which is nonnegative by the above assumption. Finally, the trace of $f$ is $\sum_{i,\alpha}f_{i\alpha,i\alpha}$, which is also the trace of $f_V$.
\end{proof}

\noindent By the lemma, we conclude that the partial trace maps density operators to density operators. This leads to the following definition.

\begin{definition} If $\rho$ is a density operator on $V\otimes W$, then 
\[\rho_V:=\tr_W\rho \qquad\text{and}\qquad\rho_W:=\tr_V\rho\]
are called the \emph{reduced density operators}\index{density operator!reduced} associated to $\rho$.
\end{definition}
\marginnote{
  	\begin{center}
    \begin{tabular}{@{}l|l@{}}
      %\toprule
      \textbf{classical} & \textbf{quantum} \\
      \midrule
      \parbox[c]{1cm}{probability\\ distribution} & density operator \\[30pt]
      \parbox[c]{1cm}{marginal\\ probability\\distribution} & \parbox[c]{1cm}{reduced density\\ operator} \\[30pt]
       marginalizing & partial trace\\
     %\bottomrule
    \end{tabular}
  	\end{center}
	}
\noindent Just as density operators are the quantum version of probability distributions, think of reduced density operators as the quantum version of marginal probability distributions, and think of the partial trace as the quantum version of marginalizing. This is more than an analogy. Reduced densities contain classical marginal probability distributions along their diagonals. This follows immediately from the Born rule, a fact summarized in the next proposition.

\begin{proposition}\label{prop:marginals}
Let $X$ and $Y$ be finite sets, and let $\rho$ be any density operator on $\mathbb{C}^X\otimes \mathbb{C}^Y$. Then
\[\pi_{\rho_X}=(\pi_\rho)_X \qquad\text{and}\qquad \pi_{\rho_Y}=(\pi_\rho)_Y.\]
In other words, the probability distribution induced by the Born rule on $\rho_X$ is the marginalization of the joint probability distribution induced by the Born rule on $\rho$, and similarly for $\rho_Y.$
\end{proposition}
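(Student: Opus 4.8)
The plan is to unwind both sides of the claimed equality into explicit matrix entries and observe that they are literally the same sum. Fix $x\in X$. Applying the Born rule to the operator $\rho_X$ on $\mathbb{C}^X$, the left-hand side is $\pi_{\rho_X}(x)=\<x|\rho_X|x\>$, where $\rho_X$ is the partial trace of $\rho$ over the $\mathbb{C}^Y$ factor. On the other side, the definition of marginalization together with the Born rule applied to $\rho$ itself give $(\pi_\rho)_X(x)=\sum_{y\in Y}\pi_\rho(x,y)=\sum_{y\in Y}\<xy|\rho|xy\>$. So it suffices to prove the single identity $\<x|\rho_X|x\>=\sum_{y\in Y}\<xy|\rho|xy\>$, and then to repeat the argument with the roles of $\mathbb{C}^X$ and $\mathbb{C}^Y$ interchanged to handle $\rho_Y$.

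First I would invoke the explicit basis-dependent formula for the partial trace recorded in Equation (\ref{eq:partial_trace}). Let $\{|x_i\>\}$ be the computational basis of $\mathbb{C}^X$ and $\{|y_\alpha\>\}$ that of $\mathbb{C}^Y$, so that $\{|x_iy_\alpha\>\}$ is the computational basis of $\mathbb{C}^X\otimes\mathbb{C}^Y$, and write $\rho=\sum_{i,\alpha,j,\beta}\rho_{i\alpha,j\beta}|x_iy_\alpha\>\<x_jy_\beta|$. Then Equation (\ref{eq:partial_trace}) gives $\rho_X=\sum_{i,j,\alpha}\rho_{i\alpha,j\alpha}|x_i\>\<x_j|$, and reading off the $x_i$-th diagonal entry yields $\<x_i|\rho_X|x_i\>=\sum_\alpha\rho_{i\alpha,i\alpha}$. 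Meanwhile the Born probability of the pair $(x_i,y_\alpha)$ under $\rho$ is $\pi_\rho(x_i,y_\alpha)=\<x_iy_\alpha|\rho|x_iy_\alpha\>=\rho_{i\alpha,i\alpha}$, so summing over $\alpha$ gives $(\pi_\rho)_X(x_i)=\sum_\alpha\rho_{i\alpha,i\alpha}$. The two expressions agree, which establishes $\pi_{\rho_X}=(\pi_\rho)_X$; applying the mirror-image computation with the right-hand formula for $\tr_V$ in Equation (\ref{eq:partial_trace}) gives $\pi_{\rho_Y}=(\pi_\rho)_Y$.

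I do not expect any genuine obstacle: the proposition is essentially a restatement of how the partial trace acts on diagonal matrix entries, which Equation (\ref{eq:partial_trace}) already makes visually evident (the diagonal of $\rho_X$ is obtained by tracing the diagonal blocks of $\rho$, and that is exactly what summing $\pi_\rho(x,\cdot)$ over $Y$ does). If one wanted a coordinate-free presentation instead, one could expand $\rho=\sum_k f_k\otimes g_k$ with $f_k\in\End(\mathbb{C}^X)$ and $g_k\in\End(\mathbb{C}^Y)$, use $\tr_{\mathbb{C}^Y}(f_k\otimes g_k)=f_k\tr(g_k)$ from the definition of the partial trace, and then compute $\<x|\rho_X|x\>=\sum_k\<x|f_k|x\>\tr(g_k)=\sum_k\<x|f_k|x\>\sum_{y}\<y|g_k|y\>=\sum_y\<xy|\rho|xy\>$; the only care required there is keeping straight which tensor factor each bra and ket lives in, exactly the kind of bookkeeping illustrated in Section~\ref{ssec:notation}. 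The index computation above is the cleaner route, so that is the one I would write up.
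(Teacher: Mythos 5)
Your proposal is correct and follows essentially the same route as the paper's proof: expand $\rho$ in the computational basis, use the explicit partial trace formula of Equation (\ref{eq:partial_trace}) to identify the diagonal entry $\<x|\rho_X|x\>=\sum_\alpha\rho_{x\alpha,x\alpha}$, and observe that this is exactly $\sum_\alpha\pi_\rho(x,y_\alpha)=(\pi_\rho)_X(x)$. The coordinate-free variant you sketch at the end is a fine alternative, but the index computation you chose to write up is precisely what the paper does.
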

\begin{proof}
As noted in Section \ref{sec:density}, any density operator \[\rho=\sum_{\substack{xy\\x'y'}}\rho_{xy,x'y'}|xy\>\<x'y'|\] on $\mathbb{C}^X\otimes\mathbb{C}^Y$ defines a joint probability distribution $\pi_\rho\colon X\times Y\to \mathbb{R}$ by
\[\pi_\rho(x,y):=\<xy|\rho|xy\>=\rho_{xy,xy},\]
where $|xy\>$ is shorthand for $|x\>\otimes|y\>.$ The reduced density $\rho_X$ is obtained by a sum over $Y$. In particular, its $xx'$th entry is
\[\<x|\rho_X|x'\>=\sum_y\rho_{xy,x'y}.\]
Setting $x'=x$, the Born rule defines a probability distribution $\pi_{\rho_X}\colon X\to \mathbb{R}$ by
\[\pi_{\rho_X}(x):=\<x|\rho_X|x\>=\sum_y\rho_{xy,xy}=\sum_y\pi_\rho(x,y)=(\pi_\rho)_X(x)\]
where $(\pi_\rho)_X\colon X\to \mathbb{R}$ is the marginal probability distribution obtained from $\pi_\rho$.
\end{proof}

This fact will be emphasized again in the next section. There, we'll consider the special case when $\rho$ is a rank $1$ density operator, namely projection onto the unit vector defined from a classical probability distribution $\pi$ as in Equation (\ref{eq:psi}). In the case when $\pi$ is a \textit{joint} distribution, the reduced densities of $\rho$ will recover the marginal probabilities from $\pi$ along their diagonals, and moreover their eigenvectors and eigenvalues will contain conditional probabilistic information. Before arriving at these results, here is one more general result about reduced densities \cite[Problem 11.2]{kitaev2002classical}. It features what's sometimes called a \emph{Kraus representation} \cite{watrous2018theory} or \emph{operator-sum decomposition} or representation \cite{preskill229notes,kitaev2002classical}.  We won't need this Proposition until Section \ref{sec:entailment}, but now is a good time to record it.

\begin{proposition}\label{prop:povm}
Let $V$ and $W$ be finite-dimensional Hilbert spaces and let $\rho$ be any density operator on $V\otimes W.$ For each $\alpha\in\{1,\ldots,\dim(W)\}$ and $i\in\{1,\ldots,\dim(V)\}$ there are linear maps
\[
\begin{tikzcd}
             & V\otimes W \arrow[ld, "B_\alpha"'] \arrow[rd, "A_i"] &              \\
V &                                     & W
\end{tikzcd}
\]
such that the reduced densities can be written as
\[
\rho_V=\sum_\alpha B_\alpha \rho B_\alpha ^\dagger
\qquad\qquad
\rho_W=\sum_iA_i\rho A_i^\dagger
\]
and which satisfy $\sum_\alpha B_\alpha^\dagger B_\alpha= \sum_i A_i^\dagger A_i=\text{id}_{V\otimes W}$.
\end{proposition}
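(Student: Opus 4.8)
The plan is to write down the maps $A_i$ and $B_\alpha$ explicitly in terms of a fixed pair of orthonormal bases and then check the two claimed identities by the same bra-ket bookkeeping used to derive Equation (\ref{eq:partial_trace}). Concretely, I would fix orthonormal bases $\{|x_i\>\}$ for $V$ and $\{|y_\alpha\>\}$ for $W$ and define
\[
B_\alpha := \id_V \otimes \<y_\alpha| \colon V\otimes W \to V, \qquad A_i := \<x_i| \otimes \id_W \colon V\otimes W \to W,
\]
so that $B_\alpha$ sends $|x\>\otimes|y\>$ to $\<y_\alpha|y\>\,|x\>$ and $A_i$ sends $|x\>\otimes|y\>$ to $\<x_i|x\>\,|y\>$. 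Using $\<v|^\dagger=|v\>$ and $(\id_V)^\dagger=\id_V$ from Section \ref{ssec:notation}, the adjoints are the ``insertion'' maps $B_\alpha^\dagger = \id_V \otimes |y_\alpha\>$, i.e.\ $|x\>\mapsto |x\>\otimes|y_\alpha\>$, and likewise $A_i^\dagger = |x_i\>\otimes\id_W$. (A one-line inner-product check confirms these really are the adjoints, but this is routine.)

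Next I would verify the normalization condition. Since $(\id_V\otimes|y_\alpha\>)(\id_V\otimes\<y_\alpha|) = \id_V\otimes|y_\alpha\>\<y_\alpha|$, we get
\[
\sum_\alpha B_\alpha^\dagger B_\alpha = \id_V \otimes \Bigl(\sum_\alpha |y_\alpha\>\<y_\alpha|\Bigr) = \id_V\otimes\id_W = \id_{V\otimes W},
\]
where the middle equality is just completeness of the orthonormal basis $\{|y_\alpha\>\}$. Swapping the roles of $V$ and $W$ gives $\sum_i A_i^\dagger A_i = \id_{V\otimes W}$ by the identical argument.

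Finally, for the operator-sum formulas I would write any operator as $\rho = \sum_{i\alpha,j\beta}\rho_{i\alpha,j\beta}\,|x_iy_\alpha\>\<x_jy_\beta|$, use the factorization $|x_iy_\alpha\>\<x_jy_\beta| = |x_i\>\<x_j|\otimes|y_\alpha\>\<y_\beta|$ from Section \ref{ssec:notation}, and compute, for each fixed $\gamma$,
\[
B_\gamma \rho B_\gamma^\dagger = \sum_{i\alpha,j\beta}\rho_{i\alpha,j\beta}\,\<y_\gamma|y_\alpha\>\,\<y_\beta|y_\gamma\>\,|x_i\>\<x_j| = \sum_{i,j}\rho_{i\gamma,j\gamma}\,|x_i\>\<x_j|.
\]
Summing over $\gamma$ yields $\sum_{i,j,\gamma}\rho_{i\gamma,j\gamma}|x_i\>\<x_j|$, which is exactly the formula for $\tr_W\rho = \rho_V$ in Equation (\ref{eq:partial_trace}); the computation of $\sum_i A_i\rho A_i^\dagger = \tr_V\rho = \rho_W$ is symmetric. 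There is no serious obstacle in this argument — the whole proof is an unwinding of definitions — so the only things to be careful about are getting the adjoints $B_\alpha^\dagger$, $A_i^\dagger$ right and keeping the two sets of indices straight. It is worth remarking that, unlike the partial trace itself, this Kraus/operator-sum representation depends on the chosen bases and is therefore far from unique; the Proposition asserts only existence, so this is harmless.
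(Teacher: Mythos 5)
Your proposal is correct and follows essentially the same route as the paper: you define $B_\alpha=\id_V\otimes\<y_\alpha|$ and $A_i=\<x_i|\otimes\id_W$, identify the adjoints as the insertion maps, and verify both identities by expanding $\rho$ in the product basis and invoking orthonormality, exactly as the paper does (the paper checks matrix entries $\<x_i|B_\gamma\rho B_\gamma^\dagger|x_j\>$ while you compute the operator directly, but this is only a cosmetic difference). Your closing remark about non-uniqueness of the Kraus representation is a correct and harmless observation.
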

\begin{proof}
The proof is an exercise in the definition and in bra-ket notation. We'll prove the claim for $\rho_V$. Let $\{|x_i\>\}$ be an orthonormal basis for $V$ and let $\{|y_\alpha\>\}$ be an orthonormal basis for $W$. Define $B_\alpha =\id_V\otimes\<y_\alpha|$ to be projection onto a fixed vector. In other words,
\[B_\alpha(|x_j\>\otimes|y_\beta\>)=
\begin{cases}
|x_j\> &\text{if $\beta=\alpha$}\\
0 &\text{otherwise}
\end{cases} 
\]
The adjoint of $B_\alpha$ is tensoring with a fixed vector. 
\[B_\alpha^\dagger|x\>=|x\>\otimes|y_\alpha\>.
\] 
Let's suggestively denote this mapping by $B_\alpha^\dagger=\id_V\otimes|y_\alpha\>$. The goal is to show that the $ij$th entry of the operator $\sum_\gamma B_\gamma\rho B_\gamma^\dagger$ is equal to
\begin{equation}\label{eq:littlegoal}
\langle x_i|\sum_\gamma B_\gamma\rho B_\gamma^\dagger|x_j\> = \sum_\gamma \rho_{i\gamma,j\gamma}.
\end{equation}
Let us work on the inner product
\begin{equation}\label{eq:littleip}
\<x_i|B_\gamma\rho B_\gamma^\dagger|x_j\>
\end{equation}
from right to left. First, $B_\gamma^\dagger|x_j\>=|x_jy_\gamma\>$  and so  we're interested in the expression $\rho |x_jy_\gamma\>$ which, after expanding $\rho$ in coordinates, is equal to
\begin{equation*}
\rho|x_jy_\gamma\>=\sum_{\substack{i',\alpha \\ j',\beta}}\rho_{i'\alpha,j'\beta}|x_iy_\alpha\>\<x_{j'}y_\beta|x_jy_\gamma\>=\sum_{i',\alpha}\rho_{i'\alpha,j\gamma}|x_i'y_\alpha\>.
\end{equation*}
The inner product in (\ref{eq:littleip}) is therefore 
\begin{equation*}
\sum_{i',\alpha}\rho_{i'\alpha,j\gamma}\<x_iB_\gamma|x_{i'}y_\alpha\>
=\sum_{i'}\rho_{i'\gamma,j\gamma}\<x_i|x_{i'}\>=\rho_{i\gamma,i\gamma},
\end{equation*}
and summing over $\gamma$ recovers Equation (\ref{eq:littlegoal}). Finally, a quick check shows that for any $|x_i\>$ and $|y_\beta\>$
\[\sum_{\alpha }B_\alpha^\dagger B_\alpha(|x_j\>\otimes|y_\beta\>) = B_\beta^\dagger|x_j\> = |x_j\>\otimes|y_\beta\>\]
and so $\sum_\alpha B_\alpha^\dagger B_\alpha=\text{id}_{V\otimes W}$ as claimed.
\end{proof}

\begin{fullwidth}
\section{Reduced Densities of Pure Quantum States}\label{sec:red_density}
\end{fullwidth}
Let's continue the discussion of reduced density operators, now focusing on reduced densities of rank $1$ densities---pure quantum states. Reduced densities of these operators are especially interesting: they share the same spectrum, and they have particularly simple matrix representations. The goal of this section is to prove these two claims. The key is a simple fact about finite-dimensional vector spaces, namely that \textit{every vector in $V\otimes W$ corresponds to a linear map $V\to W$.} There are a few ways to see this, some of which we've already covered.
\medskip

\begin{description}
	\item \textbf{An explicit isomorphism.} We can describe an explicit pair of isomorphisms
	\begin{equation}
	V\otimes W \cong V^*\otimes W \cong \hom(V,W).
	\end{equation}
	Suppose $\{|x_1\>,\ldots,|x_n\>\}$ is a basis for $V$ and $\{|y_1\>,\ldots,|y_m\>\}$ is basis for $W$. The first isomorphism is the identification of a vector with its dual $|x\>\leftrightarrow \<x|$. For the second isomorphism, the assignment $\<x|\otimes |y\>\mapsto |y\>\<x|$ gives a map $V^*\otimes W\to\hom(V,W)$. Going in the other direction, any linear map $f\colon V\to W$ defines a vector $\sum_x\<x|\otimes f|x\>$ in $V^*\otimes W$, and a quick check shows these assignments are linear inverses.

	\item \textbf{Bend the wires around.} The isomorphism can be illustrated with tensor network diagrams by simply ``sliding the wire around.'' To elaborate, recall that a vector in a tensor product of spaces $V\otimes W$ is drawn as a node with two parallel edges, one edge representing $V$ and the other representing $W$. To distinguish between a vector space and its dual, we may further decorate the edges with arrows. Let's say \;\begin{tikzpicture}[baseline=-1.5pt] \draw[thick, witharrow](0,.25)--(0,-.25); \end{tikzpicture}\; represents the space $V$ and \;\begin{tikzpicture}[baseline=-1.5pt] \draw[thick, witharrow](0,-.25)--(0,.25); \end{tikzpicture}\; represents its dual $V^*$. Then to see that vectors in $V^*\otimes W$ correspond to vectors in $V\otimes W$ and hence linear maps $V\to W$, simply swap the first arrow and then slide the edge to the top.
	\begin{center}
	\begin{tikzpicture}[y=.5cm]
	\node at (0,0) {\includegraphics[scale=0.13]{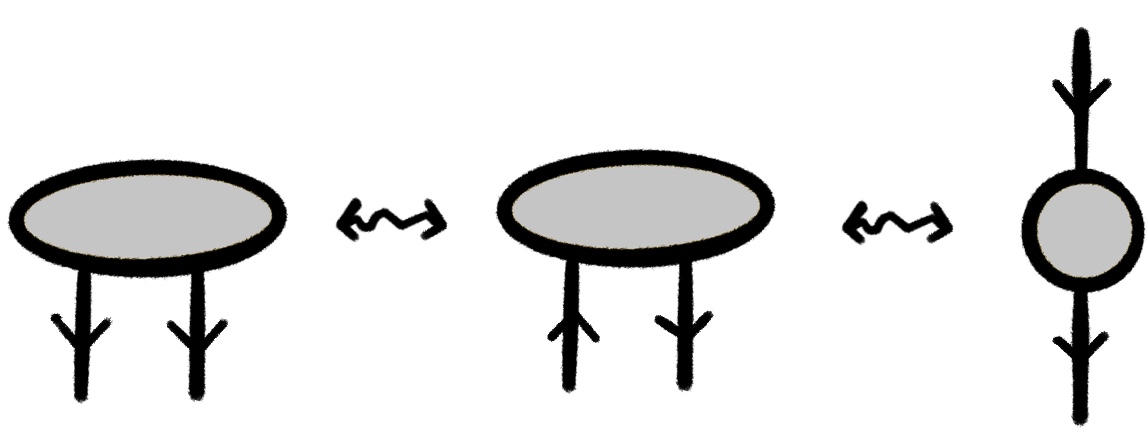}};
	\node at (-.8,.9) {swap};
	\node at (1.4,1) {slide};
	\end{tikzpicture}
	\end{center}

	\item \textbf{Assemble coefficients into a matrix.} Another way to see the correspondence between vectors in $V\otimes W$ and linear maps $V\to W$ is to rearrange the coefficients of any vector in $V\otimes W$ into a matrix. This is the option to keep in mind. It's as simple as reshaping rectangles. Let $|\psi\>\in V\otimes W$ be any vector so that
	\[
	|\psi\>=\sum_{x,y}\psi_{xy}|x\>\otimes |y\>,
	\]
	and assemble the coefficients $\psi_{xy}$ as the entries of an $m\times n$ matrix $M$. Informally, simply reshape the $nm\times 1$ column vector into a $m\times n$ rectangle.
	\[
	\begin{bmatrix} 1\\2\\3\\4\\5\\6
	\end{bmatrix}
	\quad \rightsquigarrow \quad
	\begin{bmatrix}
		1 & 4\\ 
		2 & 5\\
		3 & 6
	\end{bmatrix}
	\]
\end{description}

So suppose $|\psi\>\in V\otimes W$ is any vector and let $M$ be the $\dim(W)\times \dim(V)$ matrix whose entries are the coefficients of $|\psi\>$ as just described. As with any matrix, $M$ has a \emph{singular value decomposition} (SVD)\index{singular value decomposition} that is, a factorization of the form\sidenote[][-3cm]{\textit{Proof}. Given any $m\times n$ matrix $M,$ the matrix $M^\dagger M$ is Hermitian and therefore has a set of orthonormal eigenvectors $\{|u_1\>,\ldots,|u_n\>\}$ that form a basis for $\mathbb{C}^n$. The set of nonzero vectors $M|u_i\>$ is a basis for the image of $M$ but may not form an orthonormal set. So for each $i=1,\ldots,\text{rank}(M)$ define $|v_i\>:=M|u_i\>/\<u_i|M^\dagger M|u_i\>=M|u_i\>/\sqrt{\lambda_i}$ where $\lambda_i$ is the eigenvalue associated to $|u_i\>$. If $\text{rank}(M)<m$ then extend this set to an orthonormal basis $\{|v_1\>,\ldots,|v_m\>\}$ for $\mathbb{C}^m$. Lastly define $\sigma_i:=\sqrt{\lambda_i}$ and let $\Sigma$ be the $m\times n$ matrix with the $\sigma_i$ along its diagonal, let $U$ be the $n\times n$ matrix whose columns are the $|u_i\>$, and let $V$ be the $m\times m$ matrix whose columns are the $|v_i\>$.}
\begin{equation}\label{eq:svd}
M=V\Sigma U^\dagger.
\end{equation}
Here $U$ is an $n\times n$ unitary matrix with columns $|u\>$ called the $\emph{right singular vectors}$ of $M$; and $V$ is an $m\times m$ unitary matrix\sidenote[][1cm]{Let's not confuse the unitary operator $V$ with the vector space $V$ appearing on other pages. It's very common to use the letters $V$ and $\Sigma$ (or $D$) and $U$ for the SVD of a matrix. I'm sticking to this convention, and the context should clarify which ``$V$'' we mean.} with columns $|v\>$ called the \emph{left singular vectors} of $M$; and $\Sigma$ is an $n\times m$ matrix that, supposing $m\leq n$, is of the form $\Sigma=\left[\begin{array}{l|l} D & 0\end{array}\right]$ where $D$ is a diagonal $m\times m$ matrix whose nonzero entries $\sigma$ are the \emph{singular values} of $M$, and here $0$ denotes the $m\times(n-m)$ matrix of all zeros. Notice the last row of $U^\dagger$ makes no contribution to the matrix product $\Sigma U^\dagger$, since it is ``zeroed out'' by the submatrix of zeros in $\Sigma$. So the singular value decomposition of $M$ may be rewritten as
\begin{equation}\label{eq:svd_modified}
M=VD U_0^\dagger
\end{equation}
where $U_0^\dagger$ is the submatrix of $U$ consisting of only the first $m$ rows. 
\begin{center}
\begin{tikzpicture}
\node at (0,0) {\includegraphics[scale=0.17]{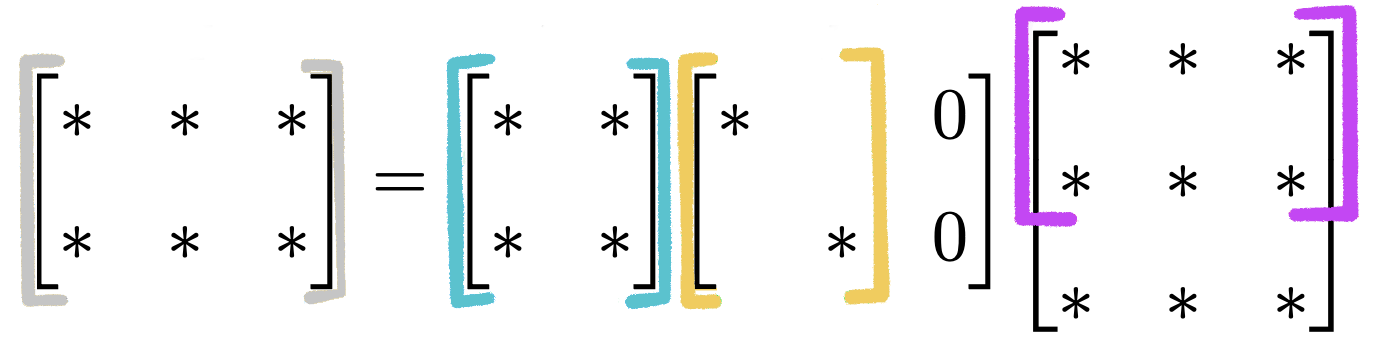}};
\node at (-3,-1.5) {$M$};
\node at (-1.75,-1.5) {=};
\node at (-.8,-1.5) {$V$};
\node at (.5,-1.5) {$D$};
\node at (3,-1.5) {$U_0^\dagger$};
\end{tikzpicture}
\end{center}
The matrix $U_0$ is no longer unitary, but it does satisfy $U_0U_0^\dagger=\id$, since its columns are orthonormal. So $U_0$ is a linear isometry, and we may\footnote{And we will. See Figure \ref{fig:bestproof}.} make use of this equality as well as $VV^\dagger=\id$.
	\begin{center}
	\includegraphics[scale=0.1]{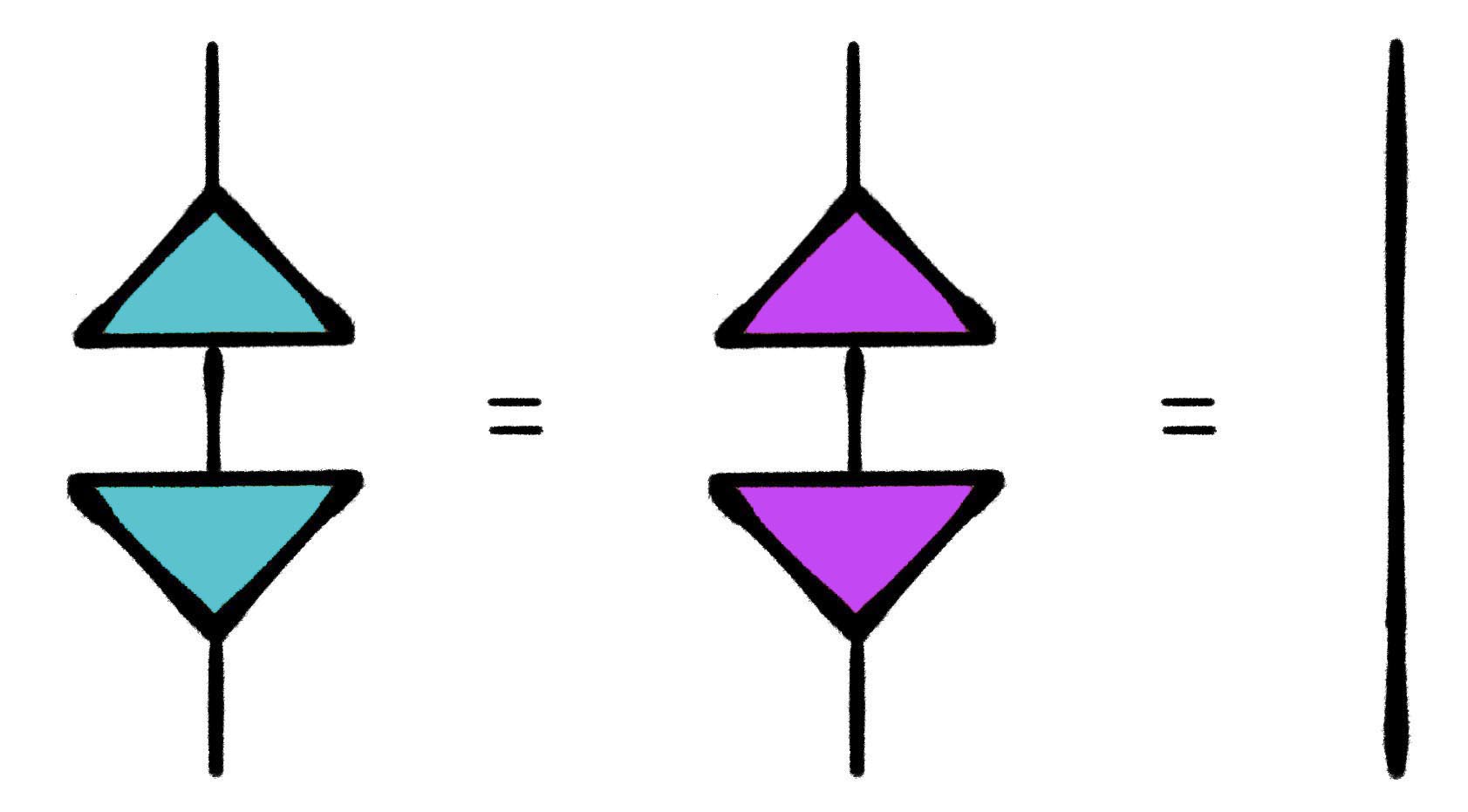}
	\end{center}
For simplicity let's always take the singular value decomposition of $M$ to refer to the cleaned-up version in Equation (\ref{eq:svd_modified}), writing $U$ in place of $U_0$. As tensor diagrams, we have the following picture.
	\marginnote{As explained in Section \ref{ssec:TNs}, triangles are reserved for linear isometries.}
\begin{center}
\begin{tikzpicture}[x=1.2cm]
\node at (0,0) {\includegraphics[scale=0.1]{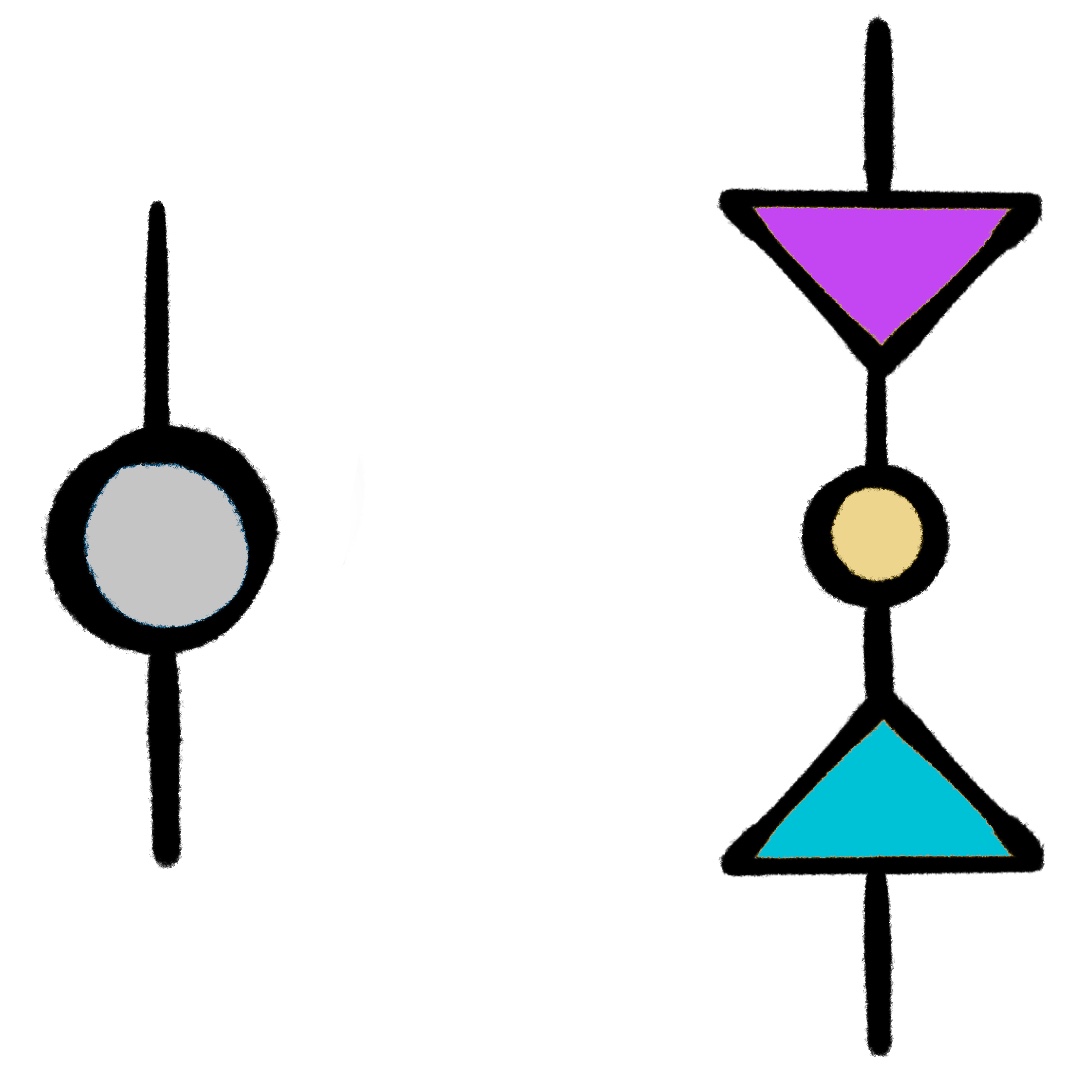}};
\node at (2,1) {$U^\dagger$};
\node at (2,0) {$D$};
\node at (2,-1) {$V$};
\node at (-2,0) {$M$};
\end{tikzpicture}
\end{center}

\noindent The next result follows immediately.

\begin{lemma} Let $V$ and $W$ be inner product spaces with $\dim V=n$ and $\dim W=m$ and suppose $m\leq n.$ For any vector $|\psi\>\in V\otimes W$ there exist orthonormal vectors $|e_1\>,\ldots |e_m\>$ in $V$ and $|f_1\>,\ldots,|f_m\>$ in $W$ so that
\begin{equation}\label{eq:schmidt}
	|\psi\> = \sum_{i=1}^m \sigma_i|f_i\>\otimes |e_i\>
\end{equation}
for nonnegative, real $\sigma_i$.
\end{lemma}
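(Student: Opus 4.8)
The plan is to piggyback directly on the singular value decomposition set up in Equation (\ref{eq:svd_modified}). First I would invoke the vector--matrix correspondence recalled at the start of this section: write $|\psi\> = \sum_{x,y}\psi_{xy}|x\>\otimes|y\>$ and collect the coefficients $\psi_{xy}$ into the $m\times n$ matrix $M$ with $M_{yx}=\psi_{xy}$, so that $|\psi\>$ and $M$ are two descriptions of the same data. Because $m\le n$, the SVD of $M$ takes the cleaned-up form $M = VDU^\dagger$ of Equation (\ref{eq:svd_modified}): the matrix $V$ is $m\times m$ unitary (its columns, the left singular vectors, live in $W$), $D$ is the $m\times m$ diagonal matrix of singular values $\sigma_1,\ldots,\sigma_m\ge 0$, and $U$ (the $U_0$ of that equation) is $n\times m$ with orthonormal columns, the right singular vectors, living in $V$. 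The hypothesis $m\le n$ is exactly what is needed for this ``square $D$ plus isometry'' form to be available.

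Next I would unpack the factorization column by column. Writing $|f_i\>$ for the $i$th left singular vector and $|e_i\>$ for the $i$th right singular vector, the identity $M=VDU^\dagger$ is precisely $M = \sum_{i=1}^m \sigma_i\,|f_i\>\<e_i|$. Translating this back across the vector--matrix correspondence --- that is, reshaping each rank-one matrix $|f_i\>\<e_i|$ into the product vector $|f_i\>\otimes|e_i\>$ --- yields $|\psi\> = \sum_{i=1}^m \sigma_i\,|f_i\>\otimes|e_i\>$, which is Equation (\ref{eq:schmidt}). The one point that wants care in the bookkeeping is the complex conjugation: a row of $U^\dagger$ is the conjugate of a column of $U$, and this conjugation is exactly what the bra-to-ket identification $\<e_i|\leftrightarrow|e_i\>$ undoes, so no stray conjugates survive into the final formula.

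Finally I would record the two orthonormality claims. The $|f_i\>$ are the columns of the unitary matrix $V$, hence orthonormal in $W$; the $|e_i\>$ are the columns of $U$, which are orthonormal in $V$ by construction of the SVD (and passing to complex conjugates preserves orthonormality). The numbers $\sigma_i$ are the singular values of $M$, so they are real and nonnegative, completing the verification.

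I do not expect a genuine obstacle: the content of the lemma is already contained in the SVD of Equation (\ref{eq:svd_modified}), and the only thing to watch is the index/conjugation bookkeeping between ``$M$ as a matrix,'' ``$M$ as an element of $V\otimes W$,'' and the placement of the two tensor factors. An alternative, SVD-free route would be to diagonalize the reduced density $\tr_W|\psi\>\<\psi|$ and push its orthonormal eigenbasis through $|\psi\>$ to produce the companion family in $W$; but since the SVD has just been developed, the route above is the most economical.
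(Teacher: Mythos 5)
Your proposal is correct and follows essentially the same route as the paper's own proof: form the $m\times n$ coefficient matrix $M$, take its singular value decomposition $M=VDU^\dagger=\sum_{i=1}^m\sigma_i|f_i\>\<e_i|$, and reshape back into the tensor $\sum_{i=1}^m\sigma_i|f_i\>\otimes|e_i\>$. The extra attention you pay to the conjugation bookkeeping between $\<e_i|$ and $|e_i\>$ is a sensible addition but does not change the argument.
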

\begin{proof}
Let $\{|x_1\>,\ldots,|x_n\>\}$ be any orthonormal basis for $V$ and let $\{|y_1\>,\ldots,|y_m\>\}$ be any orthonormal basis for $W$ so that
\[|\psi\>=\sum_{i,\alpha}\psi_{i\alpha}|x_i\>\otimes |y_\alpha\>.\]
Let $M$ be the $m\times n$ matrix with $\alpha i$th entry $\psi_{i\alpha}$, and consider its singular value decomposition $M=VD U^\dagger$ where $V$ is an $m\times m$ matrix with orthonormal columns $|f\>$, and $U$ is an $n\times m$ matrix with orthonormal columns $|e\>$, and $D$ is an $m\times m$ diagonal matrix with entries $\sigma$. Then $M=\sum_{i=1}^m\sigma_i|f_i\>\<e_i|$, which defines the vector\footnote{Recall the discussion in Section \ref{ssec:notation} about tensor products versus outer products.} in Equation (\ref{eq:schmidt}).
\end{proof}

To summarize, any vector $|\psi\>$ in $V\otimes W$ corresponds to a linear map $M\colon V\to W$, whose SVD produces orthonormal sets $\{|e\>\}$ in $V$ and $\{|f\>\}$ in $W$ so that the coefficients of $|\psi\>$ with respect to the basis $\{|e\>\otimes|f\>\}$ are the singular values of $M$. This alternate representation of $|\psi\>$ is called its \emph{Schmidt decomposition}\index{Schmidt decomposition}. We use this in the next proposition, which has important consequences in the pages to come.

\begin{proposition}\label{prop:evects}
Let $V$ and $W$ be finite-dimensional inner product spaces and let $|\psi\>\in V\otimes W$ be a unit vector. The reduced densities $\rho_V$ and $\rho_W$ of the orthogonal projection operator $\rho=|\psi\>\<\psi|$ have the same eigenvalues, and there is a one-to-one correspondence between their eigenvectors.
\end{proposition}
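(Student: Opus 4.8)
The plan is to reduce everything to the \emph{Schmidt decomposition} of $|\psi\rangle$ supplied by the preceding lemma and then to compute the two reduced densities directly. First I would write $|\psi\rangle$ in Schmidt form: passing to the matrix $M\colon V\to W$ associated to $|\psi\rangle$ as in Section~\ref{sec:red_density}, taking its SVD $M=VDU^\dagger$ as in Equation~(\ref{eq:svd_modified}), and discarding the zero singular values, Equation~(\ref{eq:schmidt}) gives orthonormal sets $\{|e_1\rangle,\ldots,|e_r\rangle\}$ in $V$ and $\{|f_1\rangle,\ldots,|f_r\rangle\}$ in $W$ and positive reals $\sigma_1,\ldots,\sigma_r$ (with $r=\operatorname{rank}M$) such that
\[
|\psi\rangle \;=\; \sum_{i=1}^{r}\sigma_i\,|e_i\rangle\otimes|f_i\rangle .
\]
Because $|\psi\rangle$ is a unit vector and the $|e_i\rangle\otimes|f_i\rangle$ are orthonormal in $V\otimes W$, we have $\sum_i\sigma_i^2=1$.

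Next I would expand the projection and push the partial traces through it. Using the identity ``an outer product of tensor products is the tensor product of outer products'' (Equation~(\ref{eq:outer_tensor_product})),
\[
\rho \;=\; |\psi\rangle\langle\psi| \;=\; \sum_{i,j=1}^{r}\sigma_i\sigma_j\,\bigl(|e_i\rangle\langle e_j|\bigr)\otimes\bigl(|f_i\rangle\langle f_j|\bigr).
\]
Applying $\tr_W$, which is linear and acts by $f\otimes g\mapsto f\,\tr(g)$, together with $\tr|f_i\rangle\langle f_j|=\langle f_j|f_i\rangle=\delta_{ij}$ (the identity $\tr|v'\rangle\langle v|=\langle v|v'\rangle$ from Section~\ref{ssec:notation} plus orthonormality), collapses the double sum to a single one; symmetrically for $\tr_V$:
\[
\rho_V \;=\; \tr_W\rho \;=\; \sum_{i=1}^{r}\sigma_i^2\,|e_i\rangle\langle e_i| ,
\qquad\text{and}\qquad
\rho_W \;=\; \tr_V\rho \;=\; \sum_{i=1}^{r}\sigma_i^2\,|f_i\rangle\langle f_i| .
\]

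Finally I would read off the conclusion. Since $\{|e_i\rangle\}$ and $\{|f_i\rangle\}$ are orthonormal, these are spectral decompositions: $\rho_V$ has eigenvalue $\sigma_i^2$ on $|e_i\rangle$ and eigenvalue $0$ on $\operatorname{span}\{|e_i\rangle\}^\perp\subseteq V$, while $\rho_W$ has eigenvalue $\sigma_i^2$ on $|f_i\rangle$ and $0$ on $\operatorname{span}\{|f_i\rangle\}^\perp\subseteq W$. Hence the nonzero eigenvalues of $\rho_V$ and $\rho_W$ coincide with multiplicity, and the assignment $|e_i\rangle\leftrightarrow|f_i\rangle$ is the claimed one-to-one correspondence between their eigenvectors of nonzero eigenvalue. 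In matrix language this is exactly the SVD story: $\rho_V$ has matrix $M^\dagger M=UD^2U^\dagger$ and $\rho_W$ has matrix $MM^\dagger=VD^2V^\dagger$, so the right singular vectors of $M$ are the eigenvectors of $\rho_V$, the left singular vectors are those of $\rho_W$, and $M$ itself sends $|e_i\rangle\mapsto\sigma_i|f_i\rangle$, which makes the bijection canonical on the nonzero part.

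I do not expect a real obstacle here; the computation is short once the Schmidt decomposition is available. The only point requiring care is the phrase ``the same eigenvalues'': the multiplicity of the eigenvalue $0$ is $\dim V-r$ for $\rho_V$ and $\dim W-r$ for $\rho_W$, so the statement should be understood as equality of the \emph{nonzero} spectra (these agree on the nose when $\dim V=\dim W$), and correspondingly the eigenvector bijection is canonical only there — inside a degenerate eigenspace there is a unitary's worth of freedom in choosing the $|e_i\rangle$, and one must make the same choice on the $W$ side.
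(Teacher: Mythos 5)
Your proof is correct and follows essentially the same route as the paper: Schmidt-decompose $|\psi\rangle$ via the SVD of its associated matrix, expand $|\psi\rangle\langle\psi|$, and apply the partial traces term by term using orthonormality to obtain $\rho_V=\sum_i\sigma_i^2|e_i\rangle\langle e_i|$ and $\rho_W=\sum_i\sigma_i^2|f_i\rangle\langle f_i|$. Your closing remark about the multiplicity of the zero eigenvalue is a point of care the paper's proof glosses over (and your $M|e_i\rangle=\sigma_i|f_i\rangle$ is in fact more accurate than the paper's $M|e_i\rangle=\lambda_i|f_i\rangle$), but it does not change the substance of the argument.
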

\begin{proof}
Let $|\psi\>=\sum_{i=1}^m\sigma_i|e_i\>\otimes|f_i\>$ be the Schmidt decomposition of $|\psi\>$, where $m$ is the minimum of $\dim(W)$ and $\dim(V)$. Orthogonal projection onto $|\psi\>$ then has the following expression.
\begin{align*}
	\rho&=|\psi\>\<\psi|\\
	&=\left(\sum_{i=1}^m\sigma_i|e_i\>\otimes|f_i\>\right)\left(\sum_{j=1}^m\sigma_j\<f_j|\otimes\<e_j|\right)\\
	&=\sum_{i,j=1}^m\sigma_i\sigma_j|e_i\>\<e_j|\otimes |f_i\>\<f_j|.
\end{align*} 
Tracing out $W$ gives the reduced density operator $\rho_V=\tr_W\rho$, which is equal to
\begin{align}
	\tr_W\left(\sum_{i,j=1}^m\sigma_i\sigma_j|e_i\>\<e_j|\otimes |f_i\>\<f_j|\right)
	&= \sum_{i,j=1}^m \sigma_i\sigma_j|e_i\>\<e_j|\cdot \tr(|f_i\>\<f_j|) \nonumber \\
	&=\sum_{i,j=1}^m \sigma_i\sigma_j|e_i\>\<e_j|\<f_i|f_j\>\nonumber \\
	&=\sum_{i=1}^m\sigma_i^2|e_i\>\<e_i|. \label{eq:red_den1}
\end{align} 
Similarly, tracing out $V$ from $\rho$ gives the reduced density operator $\rho_W=\tr_V\rho$, which is equal to
\begin{align}
	\tr_V\left(\sum_{i,j=1}^m\sigma_i\sigma_j|e_i\>\<e_j|\otimes |f_i\>\<f_j|\right)
	&= \sum_{i,j=1}^m \sigma_i\sigma_j\tr(|e_i\>\<e_j|)\cdot |f_i\>\<f_j| \nonumber \\
	&=\sum_{i,j=1}^m \sigma_i\sigma_j\<e_i|e_j\>|f_i\>\<f_j|\nonumber \\
	&=\sum_{i=1}^m\sigma_i^2|f_i\>\<f_i|.\label{eq:red_den2}
\end{align} 
As a result, for all $i$ we have
\[
	\rho_V|e_i\>=\lambda_i|e_i\> 
	\qquad\text{and}\qquad
	\rho_W|f_i\>=\lambda_i|f_i\> 
\]
where $\lambda_i=\sigma_i^2$. Therefore the reduced densities have the same spectrum, and there is a bijection between their sets of eigenvectors
\[
	\begin{tikzcd}
	{|e_i\>} \ar[r,shift right] & {|f_i\>} \ar[l, shift right, "\lambda_i",swap]
	\end{tikzcd}
\]
which is provided by $M$ and its linear adjoint,
\begin{equation*}
M|e_i\>=\lambda_i|f_i\>
\qquad
M^\dagger|f_i\>=\lambda_i|e_i\>.
\end{equation*}
The two equalities above follow since $M=VDU^\dagger$ implies $MU=VD$ and $M^\dagger V=UD$. 
\end{proof}

\noindent So reduced densities of a pure quantum state $|\psi\>\<\psi|$ share the same set of eigenvalues, and there is a bijection between their eigenvectors. Equations (\ref{eq:red_den1}) and (\ref{eq:red_den2}) of the  proof give another important result, namely that the spectral decompositions of the reduced densities are
\begin{equation}\label{eq:UDU_VDV}
	\rho_V=UD^2 U^\dagger \qquad \rho_W=VD^2 V^\dagger.
\end{equation}
In other words, whenever $M$ is the $\dim(W)\times \dim(V)$ matrix associated to a unit vector $|\psi\>\in V\otimes W$, the eigenvectors of $\tr_W|\psi\>\<\psi|$ and $\tr_V|\psi\>\<\psi|$ are the singular vectors of $M,$ and their eigenvalues are the singular values of $M$. The following corollary is immediate.

\begin{corollary}\label{cor:red_densities_as_MM}
Let $V$ and $W$ be finite-dimensional inner product spaces, let $|\psi\>~\in~V\otimes W$ be a unit vector, and let $\rho=|\psi\>\<\psi|$ be orthogonal projection. There exists a matrix $M$ so that the reduced densities of $\rho$ are of the form
\begin{equation}
\rho_V = M^\dagger M \qquad \rho_W = MM^\dagger
\end{equation}
\end{corollary}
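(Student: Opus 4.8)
The plan is to reuse almost everything already proved in Proposition~\ref{prop:evects}; the corollary is essentially that proposition's Equation~(\ref{eq:UDU_VDV}) repackaged. First I would take $M$ to be exactly the matrix constructed in the lemma preceding Proposition~\ref{prop:evects}: pick orthonormal bases $\{|x_i\>\}$ for $V$ and $\{|y_\alpha\>\}$ for $W$, write $|\psi\> = \sum_{i,\alpha}\psi_{i\alpha}\,|x_i\>\otimes|y_\alpha\>$, and let $M$ be the $\dim(W)\times\dim(V)$ matrix whose $\alpha i$th entry is $\psi_{i\alpha}$. This is the same $M$ whose singular value decomposition $M = VDU^\dagger$ (in the cleaned-up form of Equation~(\ref{eq:svd_modified})) was used there to produce the Schmidt decomposition of $|\psi\>$ and the spectral decompositions of the reduced densities.

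With that $M$ in hand the proof collapses to one line. Equation~(\ref{eq:UDU_VDV}) already gives $\rho_V = UD^2U^\dagger$ and $\rho_W = VD^2V^\dagger$, where $U$ and $V$ are the matrices whose columns are the right and left singular vectors of $M$ and $D$ is the real diagonal matrix of singular values. Since $D$ is real and diagonal, $M^\dagger = UDV^\dagger$, so $M^\dagger M = UD(V^\dagger V)DU^\dagger$ and $MM^\dagger = VD(U^\dagger U)DV^\dagger$; because $U$ and $V$ have orthonormal columns, $V^\dagger V$ and $U^\dagger U$ are identity matrices, leaving $M^\dagger M = UD^2U^\dagger = \rho_V$ and $MM^\dagger = VD^2V^\dagger = \rho_W$, as claimed.

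The only thing that needs a moment's thought — and it is hardly an obstacle — is that Proposition~\ref{prop:evects} and its supporting lemma were stated under the hypothesis $\dim(W)\le\dim(V)$, while the corollary allows arbitrary finite-dimensional $V$ and $W$. This is harmless: a singular value decomposition $M = VDU^\dagger$ exists for any matrix, its factors $U$ and $V$ always have orthonormal columns, and the products above only ever involve $V^\dagger V$ and $U^\dagger U$ — which are consequently always identities — and never $VV^\dagger$ or $UU^\dagger$. So the computation goes through verbatim regardless of which of $\dim(V)$, $\dim(W)$ is larger, and the corollary really is nothing more than Equation~(\ref{eq:UDU_VDV}) rewritten to stress that a single matrix $M$ realizes both reduced densities, as $M^\dagger M$ and $MM^\dagger$.
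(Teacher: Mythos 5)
Your proof is correct and follows essentially the same route as the paper's: take $M$ to be the matrix of coefficients of $|\psi\>$, invoke $\rho_V = UD^2U^\dagger$ and $\rho_W = VD^2V^\dagger$ from Proposition \ref{prop:evects}, and expand $M^\dagger M$ and $MM^\dagger$ via the SVD using $V^\dagger V = U^\dagger U = \id$. Your added remark about the harmlessness of the $\dim(W)\le\dim(V)$ convention is a sensible, if minor, bonus.
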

\begin{proof}
Let $M$ be the $\dim(W)\times\dim(V)$ matrix associated to $|\psi\>$. If $M$ has the singular value decomposition $M=VDU^\dagger$ then  Equations (\ref{eq:red_den1}) and (\ref{eq:red_den2}) imply that $\rho_V=UD^2 U^\dagger$ and $\rho_W=VD^2 V^\dagger$, and so
\begin{align*}
M^\dagger M &= (VDU^\dagger)^\dagger (VDU^\dagger)\\
&= U D^\dagger V^\dagger VD U^\dagger\\
&= U D^2 U^\dagger \\
&=\rho_V
\end{align*}
since $D^\dagger=D$ for a diagonal operator. Similarly,
\begin{align*}
M M^\dagger &=  (VDU^\dagger)(VDU^\dagger)^\dagger\\
&= V D^\dagger U^\dagger UD V^\dagger\\
&= V D^2 V^\dagger \\
&=\rho_W.
\end{align*}
\end{proof}
\noindent Figure \ref{fig:bestproof} captures this corollary in tensor diagram notation, where we have temporarily decorated the diagrams with arrows to show the flow of information.
\begin{figure}[h!]
\begin{tikzpicture}[y=3cm]
	\node at (0,0) {\includegraphics[scale=0.12]{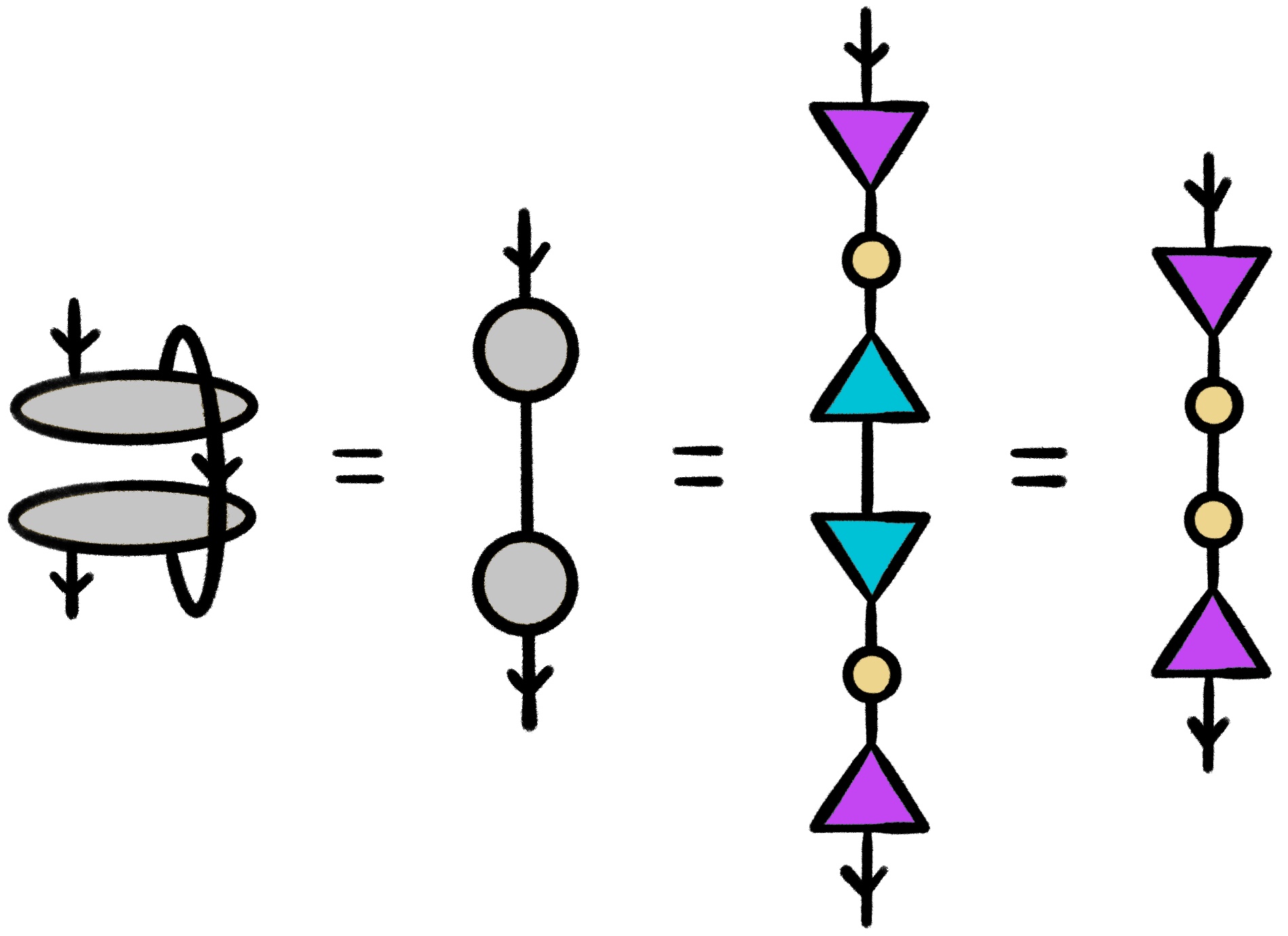}};
	\node at (-3.2,-1.1) {$\rho_V$};
	\node at (-.9,-1.1) {$M^\dagger M$};
	\node at (1.4,-1.1) {$U D^\dagger V^\dagger VD U^\dagger$};
	\node at (3.5,-1.1) {$U D^2 U^\dagger$};

	\node at (-3.2,-1.5) {$\rho_W$};
	\node at (-.9,-1.5) {$MM^\dagger$};
	\node at (1.4,-1.5) {$V D U^\dagger U D^\dagger V^\dagger$};
	\node at (3.5,-1.5) {$V D^2 V^\dagger$};
	\node at (0,-2.6) {\includegraphics[scale=0.12]{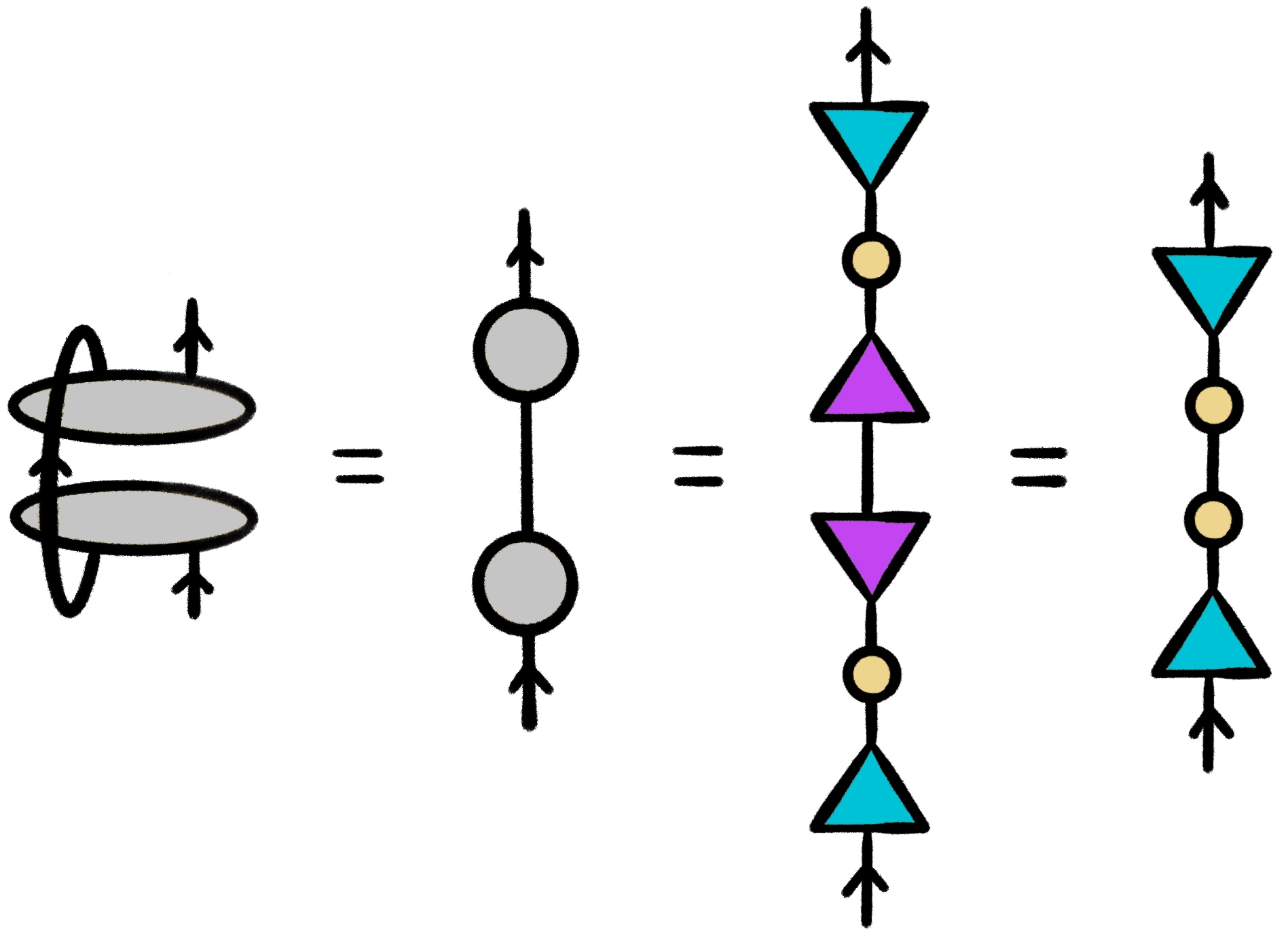}};
\end{tikzpicture}
\caption{Understanding the decompositions $\rho_V=M^\dagger M = UD^2 U^\dagger$ and $\rho_W=MM^\dagger = VD^2V^\dagger$ as tensor diagrams.}
\label{fig:bestproof}
\end{figure}
\newthought{So we get a lot} of mileage out of the singular value decomposition. Reduced densities of pure states have the same eigenvalues, they have the same number of eigenvectors, which are the singular vectors of a matrix $M$, and moreover their matrix representations are obtained as the composition of $M$ with its linear adjoint. Here's yet another result staring right at us: The original density operator $\rho=|\psi\>\<\psi|$ can be reconstructed from its reduced densities $\rho_V$ and $\rho_W$. In other words, if we have $\rho_V=UD^2U^\dagger$ and $\rho_W=VD^2V^\dagger$, then we can ``pick out'' $V$ and $D$ and $U$ and compose them to obtain $M=VDU^\dagger$ whose entries can be reassembled into the vector $|\psi\>$, as illustrated in Figure \ref{fig:reconstruct}. This may seem obvious, but imagine a scenario in which one does not have full access to the quantum state $\rho=|\psi\>\<\psi|$. Perhaps the system described by $\rho$ is too complicated to know precisely, or too large to store on a computer. It`s often easier to estimate the eigenvectors of $\rho_V$ and $\rho_W$---which are operators on a smaller, and hence more manageable, subsystem---perhaps discard those eigenvectors corresponding to the smallest eigenvalues, and then use this approximate spectral information to reconstruct the $\rho.$ This is precisely the approach we take in the machine learning problem of Chapter \ref{ch:application}. 
This section has collected a number of important tools. Here's the takeaway.
\begin{takeaway}\label{takeaway:densities}
Let $|\psi\>\in V\otimes W$ be a unit vector and consider the orthogonal projection operator $\rho=|\psi\>\<\psi|$. Then
	\begin{enumerate}
		\item the reduced density operators $\rho_V$ and $\rho_W$ have the same spectrum, and there is a one-to-one correspondence between their eigenvectors. (Proposition \ref{prop:evects})
	\end{enumerate}
Moreover, if $M=VDU^\dagger$ is the singular value decomposition of the $\dim(W)\times \dim(V)$ matrix corresponding to $|\psi\>$ then
	\begin{enumerate}\setcounter{enumi}{1}
	\item the columns of $U$ are the eigenvectors of $\rho_V$, the columns of $V$ are the eigenvectors of $\rho_W$, and the squares of the diagonal entries of $D$ are their common eigenvalues. (Equation \ref{eq:UDU_VDV})

	\item The reduced densities are obtained as the product of $M$ with its linear adjoint; that is, $\rho_V=M^\dagger M$ and $\rho_W= MM^\dagger.$ (Corollary \ref{cor:red_densities_as_MM})

	\item The original density $\rho$ may be reconstructed from the spectral decompositions of its reduced densities $\rho_V$ and $\rho_W$. (Figure \ref{fig:reconstruct})
	\end{enumerate}
\end{takeaway}

\begin{fullwidth}
\begin{figure}[h!]
\includegraphics[scale=0.2]{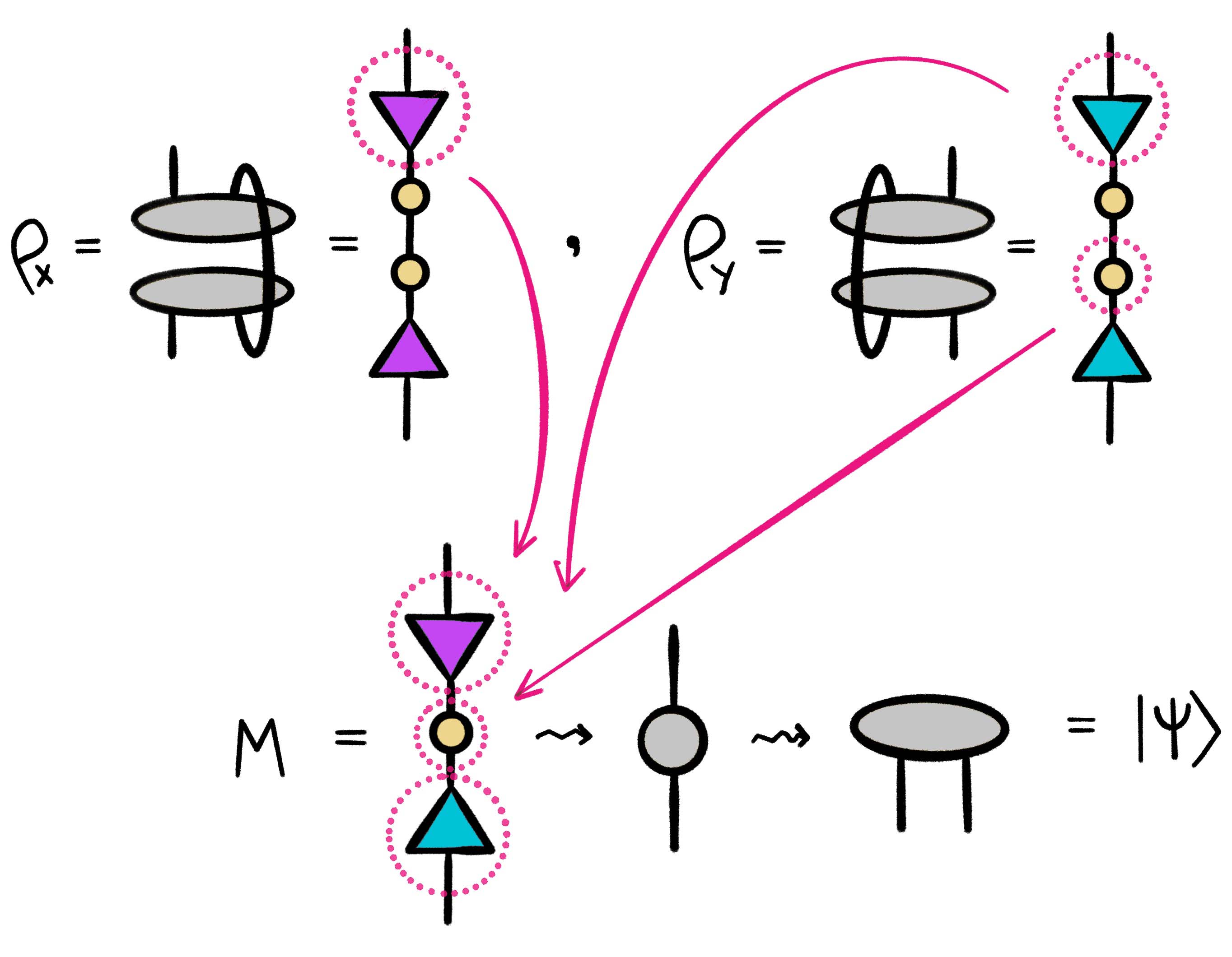}
\caption{Reconstructing $|\psi\>$ from the eigenvectors of $\rho_V$ and $\rho_W$ and their common eigenvalues.}
\label{fig:reconstruct}
\end{figure}
\end{fullwidth}
While on the topic of reduced densities, let's say a few brief words about the mathematics of entanglement.

\subsection{Entanglement}\label{ssec:entanglement}\index{entanglement}
In Section \ref{sec:density} it was noted that density operators come in different types---mixed and pure---the difference being measured by von Neumann entropy, all notions being mediated by rank. Like ``pure'' versus ``mixed,'' the notion of \textit{entanglement} is also related to rank, and the partial trace plays a key role.

\begin{definition}
A rank $1$ density operator is called an \emph{entangled quantum state}\index{quantum state!entangled} if the\marginnote{The use of ``the'' as in ``the rank of its reduced densities...'' is justified by Proposition \ref{prop:evects}. The reduced densities of a pure state always have the same rank!} rank of its reduced densities is greater than $1$, and it is not entangled otherwise.
\end{definition}

\noindent A related notion is that of entropy. The \emph{entanglement entropy}\index{entropy!entanglement} of a rank $1$ density is defined to be the Shannon entropy of the probability distribution defined by the (common) eigenvalues of its reduced densities, $-\sum_i\lambda_i\ln\lambda_i$. The entanglement entropy is zero precisely when the rank of the reduced densities is $1$; that is, when there's only one eigenvalue $\lambda=1.$ There's an equivalent  characterization of entanglement to know about. Let $|\psi\>\in V\otimes W$ be any unit vector and suppose $\rho=|\psi\>\<\psi|$ is the pure state given by orthogonal projection onto $|\psi\>$. Then $\rho$ is \textit{not} an entangled state---that is, its reduced densities have rank $1$---if and only if\sidenote[][-1cm]{\textit{Proof.} If $|\psi\>=|v\>\otimes|w\>$ for unit vectors $|v\>\in V$ and $|w\>\in W$ then the reduced densities of $|\psi\>\<\psi|$ are orthogonal projection onto $|v\>$ and $|w\>$, respectively, which both have rank $1$. Conversely, suppose $|\psi\>$ does not decompose as a simple tensor. Say, for example, $|\psi\>=|v\>\otimes|w\> + |v'\>\otimes|w'\>$ where $|v'\>$ is not a scalar multiple of $|v\>$ and $|w'\>$ is not a scalar multiple of $|w\>$. Then $\rho_V=|v\>\<v| + |v'\>\<v'|$, which has rank $2$, and similarly for $\rho_W.$}  $|\psi\>$ is of the form $|\psi\>=|v\>\otimes|w\>$ for some $|v\>\in V$ and $|w\>\in W$. Sometimes vectors of this form are called \emph{simple tensors}.\index{tensor!simple} The following proposition draws a simple connection between classical independence and the absence of entanglement.

\newpage
\begin{proposition}
Let $X$ and $Y$ be finite sets. If a joint probability distribution  $\pi\colon X\times Y\to \mathbb{R}$ satisfies
\[\pi(x,y)=\pi_X(x)\pi_Y(y)\]
for all $x\in X$ and $y\in Y$, then the orthogonal projection $\rho_\pi=|\psi\>\<\psi|$ is not entangled, where $|\psi\>=\sum_{x,y}\sqrt{\pi(x,y)}|xy\>$. Conversely, suppose $|a\>\in\mathbb{C}^X$ and $|b\>\in \mathbb{C}^Y$ are unit vectors and consider the orthogonal projection $\rho=|ab\>\<ab|$. Then the Born distribution induced by $\rho$ satisfies\sidenote{Equivalently, by Propostion \ref{prop:marginals}, we may express this in terms of the marginal distributions obtained from $\pi_\rho$; that is, \[\pi_\rho(x,y)=(\pi_\rho)_X(x)(\pi_{\rho})_Y(y).\]}
\[\pi_\rho(x,y)=\pi_{\rho_X}(x)\pi_{\rho_Y}(y)\]
where $\rho_X=\tr_Y\rho$ and $\rho_Y=\tr_X\rho$.
\end{proposition}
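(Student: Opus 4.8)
The plan is to treat the two directions separately; in both cases the work reduces to manipulating a tensor-product factorization together with the bra-ket identities of Section~\ref{ssec:notation} and the partial trace formula of Section~\ref{sec:partial_trace}.

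For the forward direction, I would start from $\pi(x,y)=\pi_X(x)\pi_Y(y)$, take square roots (legitimate since all values are nonnegative), and observe that the double sum defining $|\psi\>$ factors:
\[
|\psi\> \;=\; \sum_{x,y}\sqrt{\pi_X(x)}\sqrt{\pi_Y(y)}\,|x\>\otimes|y\> \;=\; |v\>\otimes|w\>,
\]
where $|v\>:=\sum_{x}\sqrt{\pi_X(x)}\,|x\>$ in $\mathbb{C}^X$ and $|w\>:=\sum_{y}\sqrt{\pi_Y(y)}\,|y\>$ in $\mathbb{C}^Y$. Next I would check that each factor is a unit vector: $\<v|v\>=\sum_x \pi_X(x)=1$ and similarly $\<w|w\>=1$, using that $\pi_X$ and $\pi_Y$ are probability distributions. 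So $|\psi\>$ is a simple tensor, and by the characterization of entanglement recalled just above the proposition (a pure state is not entangled precisely when its defining unit vector is a simple tensor), $\rho_\pi=|\psi\>\<\psi|$ is not entangled.

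For the converse, I would first rewrite $\rho=|ab\>\<ab|=|a\>\<a|\otimes|b\>\<b|$ via Equation~(\ref{eq:outer_tensor_product}), then apply the partial trace. Since $\tr(|b\>\<b|)=\<b|b\>=1$ and $\tr(|a\>\<a|)=\<a|a\>=1$, the definition of $\tr_Y$ and $\tr_X$ gives $\rho_X=|a\>\<a|$ and $\rho_Y=|b\>\<b|$. It then remains to compute three Born probabilities directly. By the Born rule, $\pi_\rho(x,y)=\<xy|\rho|xy\>=|\<x|a\>|^2\,|\<y|b\>|^2$, while $\pi_{\rho_X}(x)=\<x|\rho_X|x\>=|\<x|a\>|^2$ and $\pi_{\rho_Y}(y)=|\<y|b\>|^2$. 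Multiplying the last two recovers $\pi_\rho(x,y)=\pi_{\rho_X}(x)\pi_{\rho_Y}(y)$, as claimed; and by Proposition~\ref{prop:marginals} this equals $\pi_\rho(x,y)=(\pi_\rho)_X(x)(\pi_\rho)_Y(y)$, matching the remark in the margin.

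There is no genuinely difficult step here — the statement is essentially a bookkeeping exercise — so the only point to be careful about is a small type-checking matter in the forward direction: one must verify that $|v\>$ and $|w\>$ are unit vectors before invoking the simple-tensor criterion, since both that criterion and the definition of ``entangled'' are phrased for unit vectors.
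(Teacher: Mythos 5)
Your proof is correct and follows essentially the same route as the paper: the forward direction amounts to showing the state factors as a simple tensor (you invoke the simple-tensor criterion stated and proved in the sidenote just before the proposition, while the paper instead computes the entries $(\rho_X)_{xx'}=\sqrt{\pi_X(x)\pi_X(x')}$ directly and recognizes a rank-one projection — the same content in different packaging), and the converse is the same Born-rule computation, with your explicit derivation of $\rho_X=|a\>\<a|$ and $\rho_Y=|b\>\<b|$ making precise a step the paper leaves implicit.
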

\begin{proof}
If a joint distribution satisfies $\pi(x,y)=\pi_X(x)\pi_Y(y)$, then the matrix representation for the reduced density $\rho_X=\tr_Y\rho_\pi$ is given by
\[\rho_X=\sum_y\sqrt{\pi(x,y)\pi(x',y)}=\sqrt{\pi_X(x)\pi_X(x')}\sum_y\pi_Y(y)=\sqrt{\pi_X(x)\pi_X(x')}\]
which shows that $\rho_X=\tr_Y\rho_\pi$ is orthogonal projection onto the vector $\sum_x\sqrt{\pi_X(x)}|x\>$. The reduced density $\rho_Y$ is obtained as a similar rank $1$ projection, and so $\rho_\pi$ not entangled. Conversely, consider the density $\rho=|ab\>\<ab|$ for unit vectors $|a\>\in\mathbb{C}^X$ and $|b\>\in\mathbb{C}^Y$. The Born distribution $\pi_\rho\colon X\times Y\to\mathbb{R}$ induced by $\rho$ is defined on any pair $(x,y)\in X\times Y$ by
\begin{align*}
\pi_\rho(x,y)&=\<xy|ab\>\<ab|xy\>\\[5pt]
&=\<x|a\>\<y|b\>\<a|x\>\<b|y\>\\[5pt]
&=|\<a|x\>|^2|\<b|y\>|^2\\[5pt]
&=\pi_{\rho_X}(x)\pi_{\rho_Y}(y).
\end{align*}
\end{proof}

\newthought{Either way} we think about it, let's be careful not to gloss over a couple of points. First, \textit{entanglement is only defined for density operators on a tensor product}. So if someone were to ask, ``Is my state $\rho$ on the inner product space $H$ entangled?'' the best response would be a question, ``Has a tensor decomposition been chosen for $H?$'' The point is that a state is said to be entangled \textit{with respect to} a chosen decomposition $H\cong V\otimes W.$ Second, \textit{we've only defined entanglement entropy for pure states}. If the rank of a density operator is greater than $1$, then its reduced densities may not have the same spectrum or the same rank, in which case the notion of entanglement entropy is not well-defined. 

For the work to come it will helpful to keep in mind this connection between \textit{entanglement} and the \textit{rank of reduced densities.} In fact, this will help to understand certain aspects of the application in Chapter \ref{ch:application}. Since the mathematics is fresh in our mind, let's take a quick behind-the-scenes look at the role that rank will play in the pages to come.

\newthought{Behind the Scenes.} Suppose $|\psi\>\in V\otimes W$ is a unit vector whose matrix has singular value decomposition  $VDU^\dagger$. In some cases (for instance, when searching for an optimal approximation to $|\psi\>$) it is advantageous to drop---that is, set equal to zero---those singular values that are smaller than some desired threshold. This truncation means we drop the corresponding eigenvectors of the reduced densities of $|\psi\>\<\psi|$, which therefore lowers the rank of the operators and hence decreases the entanglement associated to  $|\psi\>$. In tensor network diagram notation, this idea has the following illustration.
\begin{center}
\begin{tikzpicture}
\node at (0,0) {\includegraphics[scale=0.1]{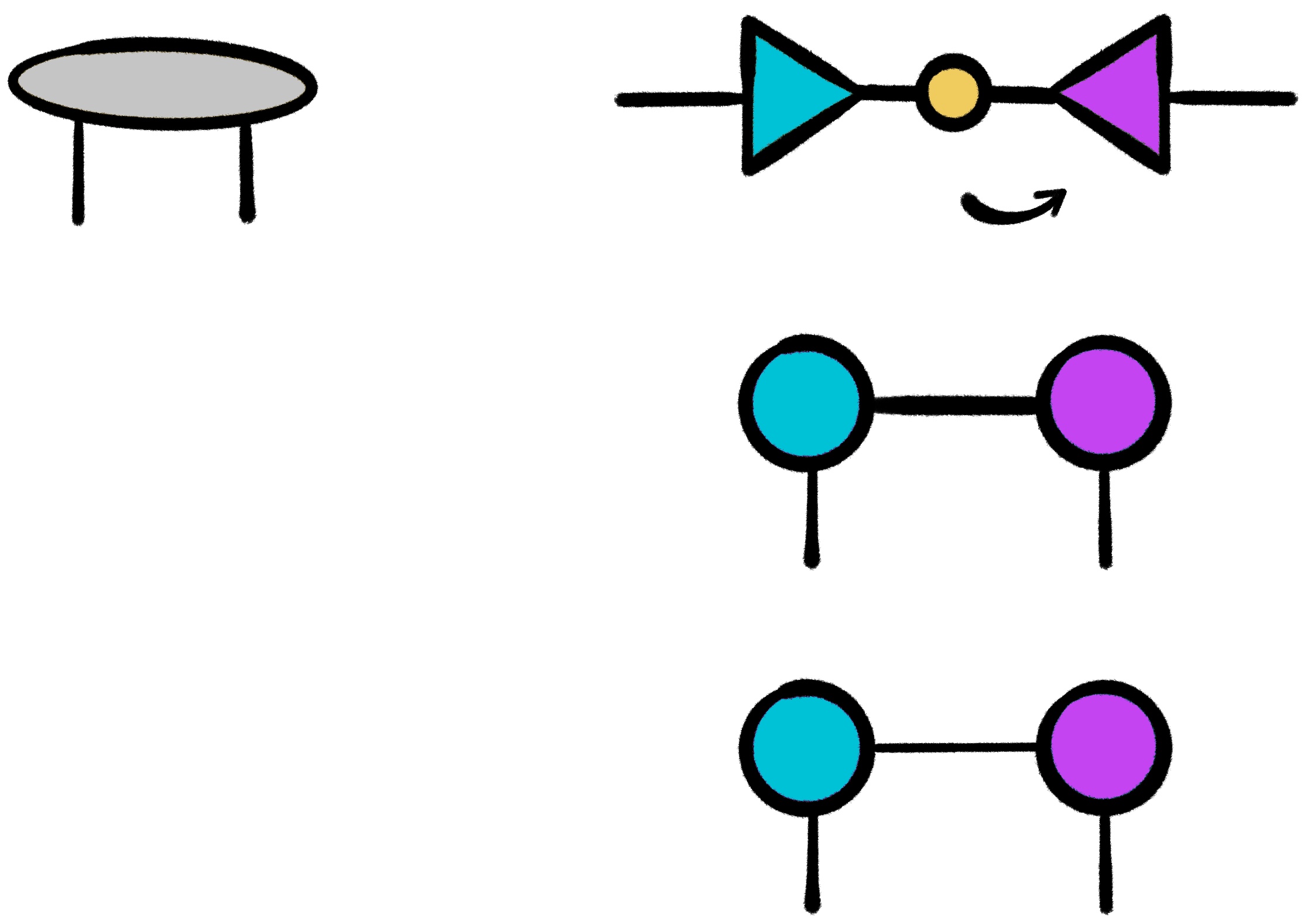}};
\node at (-4.6,2) {$|\psi\>=$};
\node at (1.8,2.8) {absorb};
\node at (1.8,.7) {\textbf{thick}};
\node at (1.8,-1.4) {thin};
\node at (-1,2) {$\overset{\text{SVD}}{\rightsquigarrow}$};
\node at (-1,.2) {$\rightsquigarrow$};
\node at (-1,-1.8) {$\overset{\text{truncate}}{\approx}$};
\end{tikzpicture}
\end{center}
In the second line, the matrix $D$ has been absorbed into $U^\dagger$ simply to keep the diagram clean. We've also drawn circles in lieu of triangles to suggest a connection with \textit{matrix product states.} (More on that below.) So the blue tensor is $V$ and the purple tensor is $DU^\dagger$. In the third line, we imagine dropping the smallest singular values. By doing so, the range of $U^\dagger$ becomes a smaller-dimensional space, which is represented by a thinner edge. We won't use a thick-edge/thin-edge convention going forward, but we introduce it here to share intuition. Think of this \textit{truncated SVD} as a technique for decreasing the flow of information. Intuitively, a thick edge (a large-dimensional space) allows more information to pass between the subsystems than does a thin edge (a small-dimensional space). Of course, \textit{no} information can pass between them if there's \textit{no} edge; that is, if $|\psi\>$ factors as a product (a ``simple tensor'') of two vectors. This brings to mind the descriptive quote at the beginning of this chapter, taken from the illuminating quantum max-flow/min-cut result of \cite{Cui2015}.
\begin{quote}
Networks transport classical things like power, water, oil, and cars. Tensor networks transport linear algebraic things like rank and entanglement and should be thought of as the quantum analogy.
\end{quote}
So we think of information about the two subsystems as being transported across the vector space represented by the edge joining the two nodes. How \textit{much} information can flow is determined by the dimension of that space, which is the rank of the operators.
\begin{center}
\begin{tikzpicture}
\node at (0,0) {\includegraphics[scale=0.1]{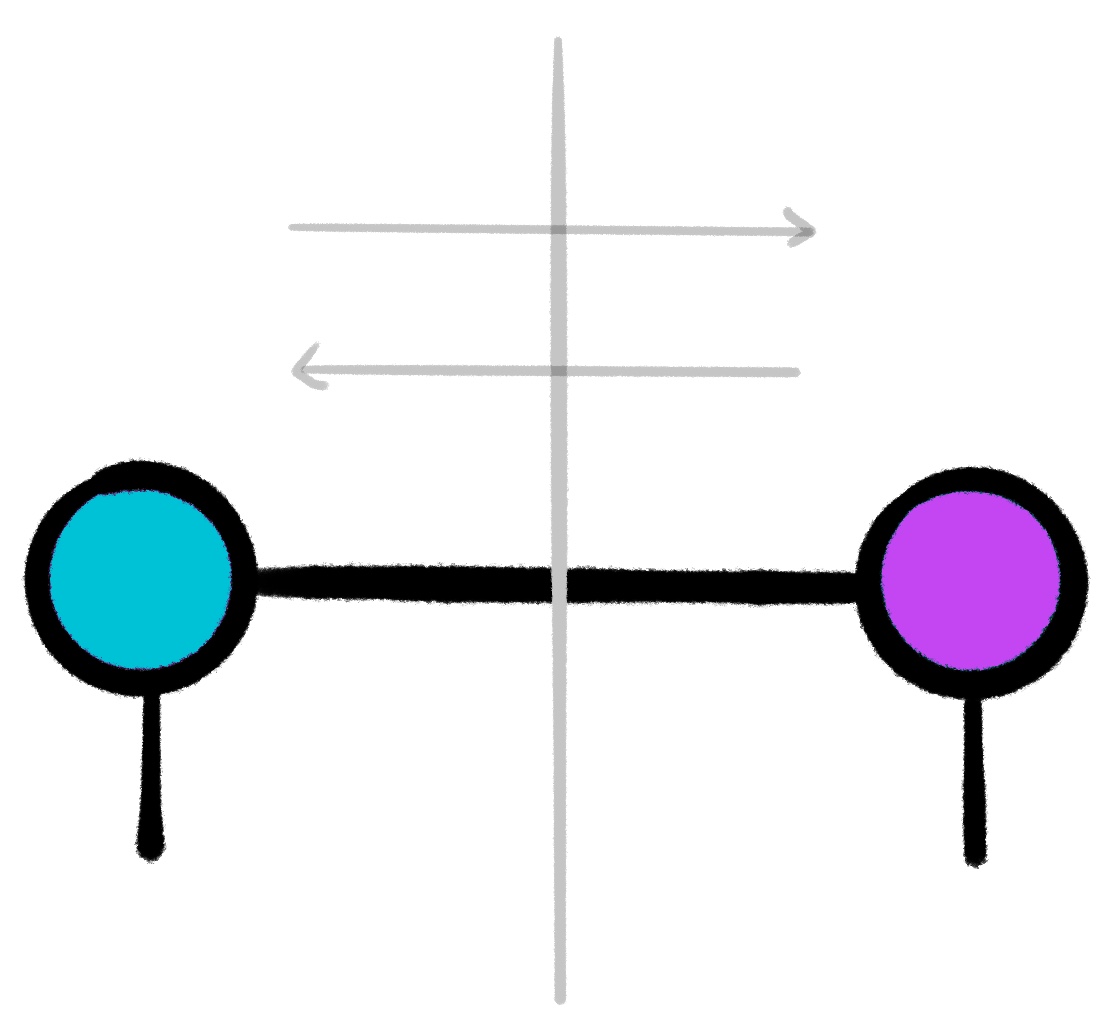}};
\node at (0,.75) {entanglement};
\node at (-1.6,-1.5) {left subsystem};
\node at (1.6,-1.5) {right subsystem};
\end{tikzpicture}
\end{center}
Think of the paradigm described here as a mini-version of how a matrix product state---introduced in Section \ref{ssec:TNs}---is formed. In general, one may start with a vector in a tensor product of \textit{many} vector spaces, $|\psi\>\in V^{\otimes N}.$ Any such vector corresponds to a linear map $V^{\otimes k}\to V^{\otimes {N-k}}$ for each $k<N$. A truncated SVD of that map gives us access to the most dominant eigenvectors of reduced density operators. One can imagine iteratively harnessing those eigenvectors for each $k$ into a tensor network with $N$ nodes. The SVD illustrated above is a special case of this idea.
\begin{center}
\begin{tikzpicture}
\node at (0,0) {\includegraphics[scale=0.1]{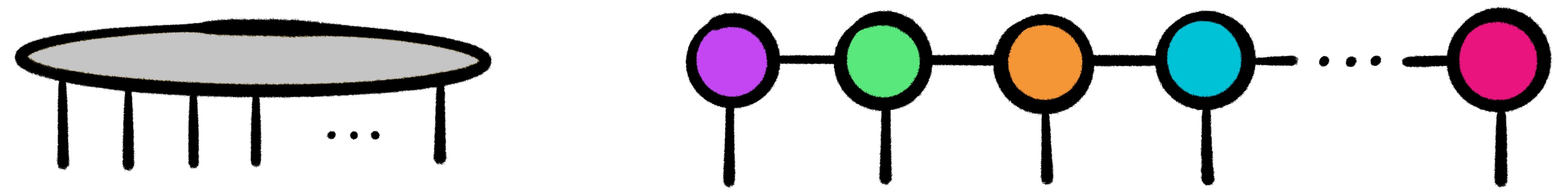}};
\node at (-1.3,0) {$\approx$};
\end{tikzpicture}
\end{center}
In fact, \textit{any} $|\psi\>$ can be factored \textit{exactly} by an MPS, just as any matrix can be factored \textit{exactly} by an SVD. (This is more than an analogy. It's a corollary.) But not all $|\psi\>$ can be approximated \textit{well} after dropping the lowest singular values at each step. That is, not all states $|\psi\>$ can be modeled well by a \textit{low rank, low entanglement} tensor. The application we describe in Chapter \ref{ch:application} applies to those $|\psi\>$ which do satisfy this hypothesis. But we'll cross that bridge when we get to it. Here's the takeaway for this brief discussion.

\newpage
\begin{takeaway}
A density operator $\rho$ on an inner product space $H$ is called
	\begin{itemize}
		\item a \emph{pure state} if the rank of $\rho$ is 1. If $H$ further decomposes as $H~\cong~V~\otimes~W$ then $\rho$ is said to be
			\begin{enumerate}
			\item \emph{entangled} if the rank of its reduced densities is greater than 1
			\item \emph{not entangled} if the rank of its reduced densities is equal to 1,
			\end{enumerate}
		\item a \emph{mixed state} if the rank of $\rho$ is greater than 1.
	\end{itemize}
\end{takeaway}

\newthought{In the next} chapter, we use the preliminaries accumulated in this chapter to revisit the idea that density operators on a tensor product are the quantum version of joint probability distributions, that their reduced density operators are the quantum version of marginal distributions, and that the partial trace is the quantum version of marginalizing. This analogy crystallizes when the density operator is orthogonal projection onto a vector $|\psi\>$ that arises from a classical probability distribution as in Equation (\ref{eq:psi}). We describe the details next.

\clearpage

%chapter 3
\chapter{A Passage from Classical to Quantum Probability}\label{ch:probability}
\epigraph{There does however seem to be some undefinable sense that a certain piece of mathematics is ``on to something,'' that it is a piece of a larger puzzle waiting to be explored further.}{Terence Tao \cite{tao2007good} }

\newthought{In this chapter,} we carefully outline the main procedure of this thesis by repackaging the theory in Chapter \ref{ch:preliminaries}. The starting ingredient is any classical joint probability distribution $\pi$, which defines a particular unit vector $|\psi\>$. The main advantage of $|\psi\>$ is that the spectral information of the reduced density operators of the rank $1$ density $|\psi\>\<\psi|$ organizes and encodes useful statistical information from the original distribution $\pi$. Think of the technique outlined here as a general recipe that can be applied to any finite set---so in particular, any \textit{dataset}. (We will revisit this applied perspective in Chapter \ref{ch:application}.) The procedure can be systematically described in a series of six steps.

\begin{enumerate}
	\item Start with any probability distribution $\pi$ on $X\times Y$.
	\item Represent the elements of $X$ and $Y$ as orthonormal basis vectors.
	\item Form the unit vector $|\psi\>=\sum_{x,y}\sqrt{\pi(x,y)}|x\>\otimes|y\>$ in $\mathbb{C}^X\otimes\mathbb{C}^Y$.
	\item Form the density operator $\rho=|\psi\>\<\psi|$.
	\item Compute the reduced density operators $\rho_X$ and $\rho_Y$ from $\rho.$
	\item Decode their spectral information.
\end{enumerate}
In Section \ref{sec:procedure} we will expand on each step in Section \ref{sec:procedure} and will illustrate with a simple example afterwards. The example will bring to light an especially simple scenario: when $\pi$ is an \textit{empirical} probability distribution, then the reduced densities of $|\psi\>\<\psi|$ can be computed from simple combinatorics of a bipartite graph associated to $\pi$. This will be explained in Section \ref{sec:graphs}. In Section \ref{sec:evectsFCA} we will revist the connection with formal concept analysis that was introduced early on in Section \ref{sec:ch1_FCA}. This chapter will close in Section \ref{sec:entailment} with some preliminary remarks on how this framework paves the way for modeling entailment and concept hierarchy in language using density operators.

\section{Reduced Densities from Classical Probability Distributions}\label{sec:procedure}
In this section, we will put the theory explaining the main procedure upfront and will illustrate with Example \ref{ex:main} afterwards. The example will rederive the elementary results in Section \ref{ssec:memory} using the language of quantum probabillity, thus explaining mathematics presented there. Here is the procedure for modeling any probability distribution as a pure quantum state and for decoding the information therein.

\bigskip
\noindent{\Large \itshape 1. Start with a joint probability distribution.}
\bigskip

\noindent Let $X$ and $Y$ be any finite sets and let $\pi\colon X\times Y\to\mathbb{R}$ be any probability distribution. Here we strongly emphasize the word \textit{any}. Although our language  is borrowed from quantum mechanics, it is not necessary to assume that $\pi$ has any quantum-like interpretation or properties.

\bigskip
\noindent{\Large \itshape 2. Represent the elements as orthonormal vectors.}
\bigskip

\noindent Choose an ordering the sets $X$ and $Y$, say $X=\{x_1,\ldots,x_n\}$ and $Y=\{y_1,\ldots,y_m\}$, and represent each element $x_i\in X$ as the $i$th standard basis vector
	\marginnote{The passage $x_i\mapsto|x_i\>$ might be called a ``one-hot encoding.''}
$|x_i\>\in\mathbb{C}^X\cong \mathbb{C}^n$, and identify each element $y_\alpha\in Y$ with the $\alpha$th standard basis vector $|y_\alpha\>\in\mathbb{C}^Y\cong\mathbb{C}^m$. Also consider the tensor product $\mathbb{C}^{X\times Y}\cong \mathbb{C}^X\otimes\mathbb{C}^Y$
	\marginnote{The isomorphism $\mathbb{C}^{X\times Y}\cong \mathbb{C}^X\otimes\mathbb{C}^Y$ is given by $|(x,y)\>\longleftrightarrow |x\>\otimes|y\>$.}
which is the $nm$-dimensional vector space whose bases are the tensor products $|x_i\>\otimes|y_\alpha\>$.

\medskip

\bigskip
\begin{fullwidth}
\noindent{\Large \itshape 3. Form the unit vector $|\psi\>$ defined in Equation (\ref{eq:psi}).}
\end{fullwidth}
\bigskip

\noindent To define the vector $|\psi\>$ in Equation (\ref{eq:psi}) we began with a set $S$. Here $S=X\times Y$ and so $|\psi\>\in\mathbb{C}^X\otimes\mathbb{C}^Y$ is the sum of all elements in $X\times Y$ weighted by the square roots of their probabilities.
\begin{equation}\label{eq:psi2}
|\psi\>=\sum_{i,\alpha}\sqrt{\pi(x_i,y_\alpha)}|x_i\>\otimes|y_\alpha\>
\end{equation}
As a tensor diagram, $|\psi\>$ is a node with two parallel edges.
\begin{center}
\begin{tikzpicture}
\node at (0,0) {\includegraphics[scale=0.12]{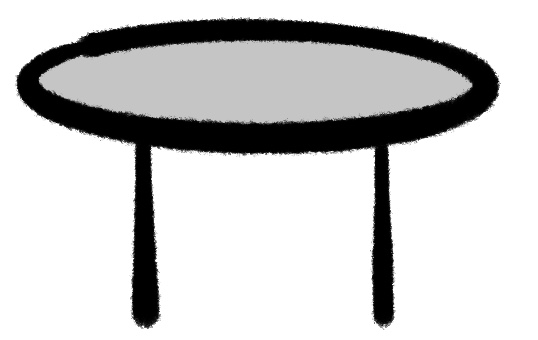}};
\node at (-1.5,0) {$|\psi\>=$};
\end{tikzpicture}
\end{center}

\bigskip
\begin{fullwidth}
\noindent{\Large \itshape 4. Form the orthogonal projection operator $|\psi\>\<\psi|$.}
\end{fullwidth}
\bigskip

\noindent Now we form the rank $1$ density operator associated to $|\psi\>$ as in Equation (\ref{eq:rho}). Computing $\rho=|\psi\>\<\psi|$ explicitly, one has
\begin{align*}
\rho &= |\psi\>\<\psi| \\
&= \left(\sum_{i,\alpha} \sqrt{\pi(x_i,y_\alpha)}|x_i\>\otimes|y_\beta\> \right)\left(\sum_{j,\beta} \sqrt{\pi(x_j,y_\beta)}\<x_j|\otimes\<y_\beta| \right)\\[10pt]
&=\sum_{\substack{ i,\alpha \\ j,\beta}} \sqrt{\pi(x_i,y_\alpha)\pi(x_j,y_\beta)}|x_i\>\<x_j|\otimes|y_\alpha\>\<y_\beta|. 
\end{align*}
Although the density $\rho$ carries no more information than $|\psi\>$ does, working with a linear operator allows us to call on the partial trace. This is the next step.
\begin{center}
\begin{tikzpicture}
\node at (0,0) {\includegraphics[scale=0.12]{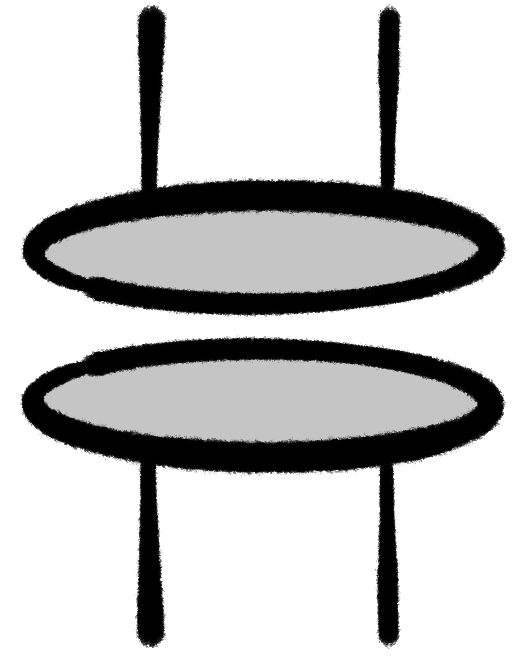}};
\node at (-1.5,0) {$\rho=$};
\end{tikzpicture}
\end{center}

\bigskip
\begin{fullwidth}
\noindent{\Large \itshape 5. Compute the reduced density operators associated to $|\psi\>\<\psi|.$}
\end{fullwidth}
\bigskip

\noindent Let's use the notation $\tr_Y:=\tr_{\mathbb{C}^Y}$ and $\tr_X:=\tr_{\mathbb{C}^X}$ and consider the reduced density operators associated to $\rho=|\psi\>\<\psi|$,
\begin{center}
\begin{tikzpicture}
\node at (0,0) {\includegraphics[scale=0.12]{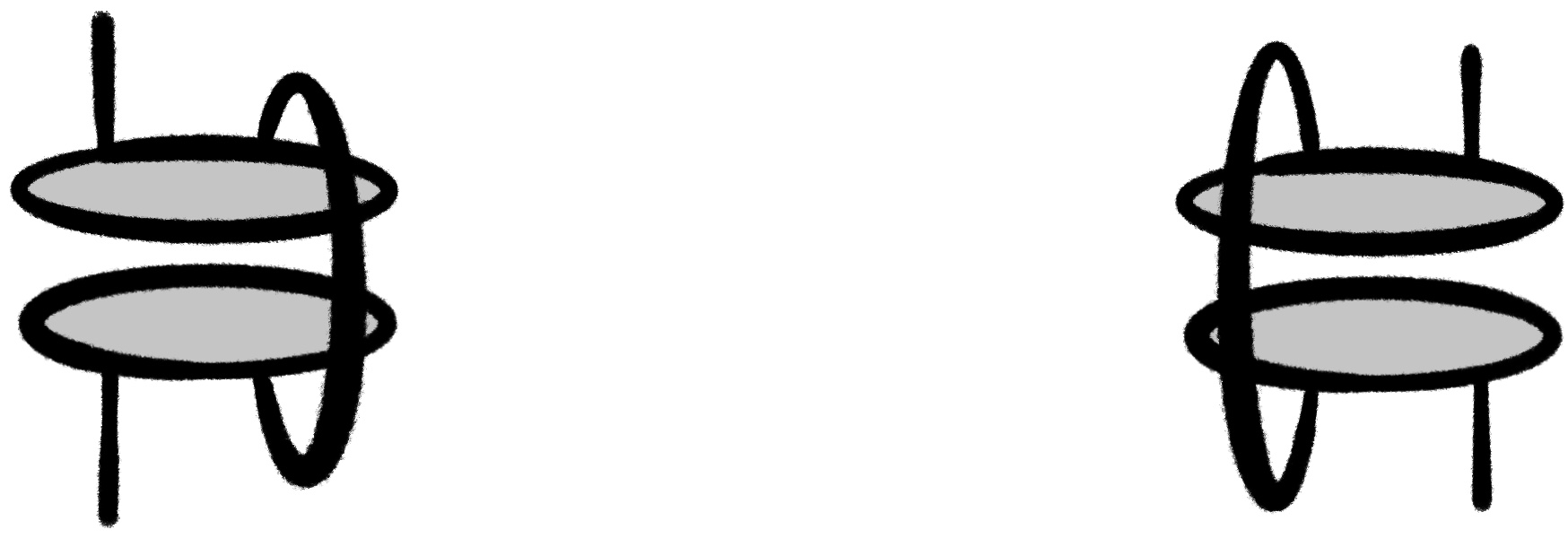}};
\node at (-5,0) {$\rho_X:=\tr_Y\rho=$};
\node at (.5,0) {$\rho_Y:=\tr_X\rho=$};
\end{tikzpicture}
\end{center}
We have are two ways to compute the reduced densities explicitly.\sidenote{In fact, we have at least \textit{four}. In addition to the two described here, there is also the operator sum decomposition in Proposition \ref{prop:povm}, as well as the combinatorial description given in Section \ref{sec:graphs} below.} The first is to recall from Corollary \ref{cor:red_densities_as_MM} that there exists a matrix $M$ so that
\[
\rho_X=M^\dagger M 
\qquad 
\rho_Y=MM^\dagger.
\]
The matrix $M$ is precisely the vector $|\psi\>\in \mathbb{C}^X\otimes \mathbb{C}^Y$ viewed as a linear map $\mathbb{C}^X\to\mathbb{C}^Y$. Explicitly, the entries of $M$ are the square roots of the probabilities,
\[
M_{\alpha i}:=\sqrt{\pi(x_i,y_\alpha)}.
\]
The second option is to apply the partial trace directly. Carefully understanding the latter is particularly helpful for understanding the theory, so let's proceed by this route. Applying the partial trace to $\rho$, the reduced density operator $\rho_X\colon \mathbb{C}^X\to\mathbb{C}^X$ has the following expression.
	\marginnote[2cm]{
	\[\rho_X = 
	\begin{bmatrix}
	\pi_X(x_1) & &  & * \\
	& \pi_X(x_2) & & \\
	& & \ddots &\\
	* & & & \pi_X(x_n)
	\end{bmatrix}
	\]
	}
\begin{align*}
\rho_X&=\tr_Y|\psi\>\<\psi| \\
&= \sum_{\substack{ i,\alpha \\ j,\beta }} \sqrt{\pi(x_i,y_\alpha)\pi(x_j,y_\beta)}\tr_Y(|x_i\>\<x_j|\otimes|y_\alpha\>\<y_\beta|) \\
&= \sum_{\substack{ i,\alpha \\ j,\beta }} \sqrt{\pi(x_i,y_\alpha)\pi(x_j,y_\beta)}|x_i\>\<x_j|\cdot  \tr|y_\alpha\>\<y_\beta| \\
&=\sum_{\substack{ i,j\\ \alpha }} \sqrt{\pi(x_i,y_\alpha)\pi(x_j,y_\alpha)}|x_i\>\<x_j|
\end{align*}
The last line follows since $\tr|y_\alpha\>\<y_\beta|=\<y_\alpha|y_\beta\>$ which is $1$ if $\alpha=\beta$ and is zero otherwise. The $ij$th entry of the operator $\rho_X$ is therefore
\begin{equation}\label{eq:rhox_offdiag}
(\rho_X)_{ij}=\sum_{\alpha}\sqrt{\pi(x_i,y_\alpha)\pi(x_j,y_\alpha)}.
\end{equation}
Notice that the $i$th diagonal entry is  $(\rho_X)_{ii}=\sum_\alpha\pi(x_i,y_\alpha)=\pi_X(x_i)$, and so $\rho_X$ recovers the classical marginal probability distribution $\pi_X\colon X\to\mathbb{R}$ along its diagonal. The off-diagonals of $\rho_X$ are also interesting. To explain, let's first adopt the following language.
\begin{definition} Given a pair $(x,y)\in X\times Y,$ call the element $x$ the \emph{prefix} of the pair and call the element $y$ the \emph{suffix} of the pair.
\end{definition}
\noindent Using this terminology, we are thinking of $\rho_X$ as an operator on the prefix subsystem. Its $ij$th off-diagonal entry then has a simple iterpretation---\textit{it measures the extent to which prefixes $x_i$ and $x_j$ share common suffixes in $Y$.} Indeed, the $ij$th off-diagonal of $\rho_X$ is zero precisely when either $\pi(x_i,y)=0$ or $\pi(x_j,y)=0$ for all $y$; that is, when $x_i$ and $x_j$ share no common suffixes in $Y$. On the other hand, the $ij$th entry receives a positive contribution for each suffix $y$ for which the probabilities of being paired with $x_i$ and $x_j$ are nonzero. In much the same way, the reduced density operator on the suffix subsystem $\rho_Y\colon\mathbb{C}^Y\to\mathbb{C}^Y$ is given by,
\begin{equation}
\rho_Y=\sum_{\substack{ \alpha,\beta \\ i }} \sqrt{\pi(x_i,y_\alpha)\pi(x_i,y_\beta)}|y_\alpha\>\<y_\beta|.
\end{equation}
The $\alpha\beta$th entry of the operator $\rho_Y$ is therefore
	\marginnote{
	\[\rho_Y = 
	\begin{bmatrix}
	\pi_Y(y_1) & &  & * \\
	& \pi_Y(y_2) & & \\
	& & \ddots &\\
	* & & & \pi_Y(y_m)
	\end{bmatrix}
	\]}
\begin{equation}\label{eq:rhoy_offdiag}
(\rho_Y)_{\alpha\beta}=\sum_{i}\sqrt{\pi(x_i,y_\alpha)\pi(x_i,y_\beta)}.
\end{equation}
In particular, the $\alpha$th diagonal entry is  $(\rho_Y)_{\alpha\alpha}=\sum_i\pi(x_i,y_\alpha)=\pi_Y(y_\alpha)$, and so we recover the classical marginal probability distribution $\pi_Y\colon Y\to\mathbb{R}$ along the diagonal of $\rho_Y$. Moreover, the $\alpha\beta$th off-diagonal entry measures the extent to which $y_\alpha$ and $y_\beta$ share common prefixes in $X$. 

\newthought{So the reduced} densities $\rho_X$ and $\rho_Y$ recover classical marginal probabilities along their diagonal. For this reason, we think of the reduced densities of $|\psi\>\<\psi|$ as the \textit{quantum version of marginal probability distributions}, and we think of the partial trace as the \textit{quantum version of marginalizing.} In general, the reduced densities will also have nonzero off-diagonal entries. These off diagonals encode information about subsystem interactions. From Equation \ref{eq:rhox_offdiag}, for instance, it is evident that the $ij$th off-diagonal of $\rho_X$ ``knows'' something about the complementary subsystem $Y$. This fact clearly distinguishes $\rho_X$ and $\rho_Y$ from their classical counterparts $\pi_X$ and $\pi_Y$, where all knowledge of the complementary subsystem is completely \textit{lost} when marginalizing. On the other hand, it is preserved when ``marginalizing'' with the partial trace. The preserved information is manifest in the nonzero off-diagonal entries of the reduced densities, the existence of which guarantees that $\rho_X$ and $\rho_Y$ will have interesting eigenvectors and eigenvalues. This spectral information therefore harnesses the subsystem interactions in a clean and organized fashion. Understanding this is the next and final step. But first, let us emphasize that the mathematics just described depends on our choice of $\rho=|\psi\>\<\psi|$ as orthogonal projection. If instead the original density was chosen to be the diagonal operator $\rho_{\text{diag}}=\sum_{i,\alpha}\pi(x_i,y_\alpha)|x_iy_\alpha\>\<x_iy_\alpha|$, then its reduced densities would \textit{themselves} be diagonal operators, with the marginal distributions along their diagonals and zeros elsewhere. The absence of off-diagonal entries would give us no information about subsystem interactions, and the eigenvectors of the reduced densities would simply recover the basis vectors $|x_i\>$ and $|y_\alpha\>$. This is the reason we prefer to model the joint distribution $\pi$ as a pure state $|\psi\>\<\psi|$ rather than a diagonal operator. Now, let's close with the last step.

\bigskip
\begin{fullwidth}
\noindent{\Large \itshape 6. Decode the spectral information.}
\end{fullwidth}
\bigskip

\noindent The reduced density operators $\rho_X$ and $\rho_Y$ of $\rho=|\psi\>\<\psi|$ recover the classical marginal distributions along their diagonals, and they contain extra information on their off-diagonals. We've interpreted this extra information as a measure of subsystem interactions, but there is a more principled way to both \textit{understand} this information and to \textit{harness it}. Indeed, the off diagonal entries directly contribute to the eigenvalues and eigenvectors of the reduced densities. To understand this, recall from Proposition \ref{prop:evects} that the reduced densities share the same eigenvalues $\lambda_i$ and there is a bijection between their sets of eigenvectors $\{|e_i\>\}\leftrightarrow \{|f_i\>\}$. So we write their spectral decompositions as
\[
\rho_X = \sum_{i=1}^m\lambda_i|e_i\>\<e_i| \qquad
\rho_Y = \sum_{i=1}^m\lambda_i|f_i\>\<f_i|.
\]
This leads to the following paradigm.
\begin{description}
		\marginnote[2cm]{Now is an opportune time to think back to the discussion on entanglement in Section \ref{ssec:entanglement}. As noted there, the notion of entanglement is defined by the (common) rank of the reduced densities $\rho_X$ and $\rho_Y$. Now it's hopefully clear why rank and entanglement go hand-in-hand. Knowing something about the prefix subsystem $X$---namely, that its state is $|e_i\>$---gives immediate knowledge about the suffix subsystem $Y$---namely, that its state is $|f_i\>$---and vice versa.}

	\item With probability $\lambda_i$ the prefix subsystem will be in the state defined by the $i$th eigenvector $|e_i\>$ of $\rho_X$, and the suffix subsystem will be in the state defined by the $i$th eigenvector $|f_i\>$ of $\rho_Y$. Here's what this means.

	\item Since the eigenvector $|e_i\>=\sum_{x\in X}e_i(x)|x\>$ may be written as a linear combination of the basis vectors $|x\>\in \mathbb{C}^X$, we obtain a probability distribution on the set of prefixes $X$. When the prefix subsystem is in the state $|e_i\>$, the probability of a prefix $x\in X$ is $\<x|e_i\>|^2=|e_i(x)|^2$.

	\item Likewise, the eigenvector $|f_i\>=\sum_{y\in Y}f_i(y)|y\>$ may be written as a linear combination of the basis vectors $|y\>\in \mathbb{C}^Y$. This defines a probability distribution on the set of suffixes $Y$. So when the suffix subsystem is in the state $|f_i\>$, the probability of a suffix $y\in Y$ is $|\<y|f_i\>|^2=|f_i(y)|^2$.
\end{description} 

\noindent This shows how to make sense of the individual eigenvectors and eigenvalues. But we may also wish to place this information in a broader context. Mathematically speaking, what really \textit{is} the information being encoded in the eigenvectors and eigenvalues of these reduced densities? The answer is simple. \textit{It must be akin to conditional probability.} The reason for this conclusion is found in a remark given towards the end of Section \ref{sec:red_density}. Recall from Figure \ref{fig:reconstruct} of that section that the original state $|\psi\>=\sum_{x,y}\sqrt{\pi(x,y)}|x\>\otimes|y\>$ may be fully reconstructed from the eigenvectors and eigenvalues of the reduced densities of $\rho=|\psi\>\<\psi|$.
	\marginnote[-3cm]{
	\includegraphics[width = \linewidth]{figures/ch2/reconstruct}
	}
Quite crucially, an analogous reconstruction \textit{does not exist} in classical probability theory. Consider the familiar equation relating joint probability with marginals and conditionals.
\begin{equation}
\pi(x,y)=\pi(x|y)\pi_Y(y)
\end{equation}
It is not possible to reconstruct the joint distribution given the marginal distributions $\pi_X$ and $\pi_Y$ alone. Conditional probability is needed as well. But this is not the case in the quantum version, where the state $\rho=|\psi\>\<\psi|$---and hence $\pi$ itself!---\textit{can} be fully reconstructed from the spectral decompositions of the reduced densities $\rho_X$ and $\rho_Y$. Therefore since $\rho_X$ and $\rho_Y$ contain the classical marginals along their diagonals \textit{as well as} extra information in their off diagonals, the extra information must account for conditional probability, which is the information lost in the classical process of marginalizing.
	\marginnote[-2cm]{The equation that is analogous to $\pi(x,y)=\pi(x|y)\pi_Y(y)$ is essentially the singular value decomposition
	\[M=VDU^\dagger,\] where $M$ is the matrix associated to $|\psi\>\in\mathbb{C}^X\otimes\mathbb{C}^Y$. In especially simple cases, the eigenvectors of $\rho_X$ and $\rho_Y$ (equivalently, the columns of $U$ and $V$) will define conditional probabilities on the nose. For instance, suppose $M\colon \mathbb{C}^X\to\mathbb{C}^Y$ has exactly one nonzero probability in each column, then look at its singular value decomposition.}
Let's illustrate the ideas so far with an extended example.

\subsection{The Motivating Example, Revisited}\label{ssec:revisit}
We'll now walk through each of the six steps of the main procedure, this time illustrating each step with an elementary example.
\begin{example}\label{ex:main}
We will revisit the opening example in Section \ref{sec:example}. There we began with a small toy corpus of text consisting of the following three phrases.
\begin{quote}
orange fruit \qquad green fruit \qquad purple vegetable
\end{quote}
Let's identify these phrases with a subset $T$ \marginnote{\smallcaps{Behind the Scenes.} Thinking ahead to Chapter \ref{ch:application}, the letter $T$ stands for ``training set.''} of the Cartesian product of the ordered sets  $X=\{\text{orange, green, purple}\}$ and $Y=\{\text{fruit, vegetable}\}$. The phrase $xy$ corresponds to the pair $(x,y)\in X\times Y$. Now we're in position to take the first step.

\bigskip
\begin{fullwidth}
\noindent{\Large \itshape 1. Start with a joint probability distribution.}
\end{fullwidth}
\bigskip

\noindent Consider the joint probability distribution $\pi\colon X\times Y\to\mathbb{R}$ defined by
	\marginnote{In Section \ref{sec:example} the joint distribution $\pi$ was depicted as a $2\times 3$ table.
	\[
	\begin{array}{r|ccc}
		& \text{orange} & \text{green} & \text{purple}\\[5pt]
	\hline\\[-8pt]
	\text{fruit} & \frac{1}{3} & \frac{1}{3} &0\\[5pt]
	\text{vegetable} & 0 & 0 & \frac{1}{3} \\
	\end{array}
	\]
	}
\begin{align*}
\pi(\text{orange, fruit}) &= 1/3\\
\pi(\text{green, fruit}) &= 1/3\\
\pi(\text{purple, vegetable}) &= 1/3
\end{align*}
with $\pi(x,y) = 0$ for all other pairs $(x,y)\in X\times Y$. As an aside, notice this probability distribution is an empirical one. The probability of a pair $(x,y)$ is simply the number of times it appears in the set $T$, divided by the size of $T$. Let's move on to the next step.

\newpage
\begin{fullwidth}
\noindent{\Large \itshape 2. Represent the elements as orthonormal vectors.}
\end{fullwidth}
\bigskip
The elements of $X=\{\text{orange, green, purple}\}$ and $Y=\{\text{fruit, vegetable}\}$ correspond to the following standard basis vectors in $\mathbb{C}^X\cong\mathbb{C}^3$ and in $\mathbb{C}^Y\cong\mathbb{C}^2$.
	\[
	\begin{array}{rrr}
	|\text{orange}\>= \begin{bmatrix}1\\0\\0 \end{bmatrix} &
	|\text{green}\> = \begin{bmatrix}0\\1\\0 \end{bmatrix} &
	|\text{purple}\>= \begin{bmatrix}0\\0\\1 \end{bmatrix}  \\[30pt]
	& |\text{fruit}\> = \begin{bmatrix}1\\0 \end{bmatrix} &
	|\text{vegetable}\> = \begin{bmatrix}0\\1 \end{bmatrix} 
	\end{array}
	\]
Moreover, for each $x\in X$ and $y\in Y$ the tensor product $|x\>\otimes|y\>$ is computed as an outer product. This gives a basis for the space $\mathbb{C}^X\otimes\mathbb{C}^Y$. For example, $|\text{orange}\>\otimes|\text{fruit}\>$ is given by the product of their vector representations.
\[
	|\text{orange}\>\otimes|\text{fruit}\>=
	\begin{bmatrix}1\\0\\0
	\end{bmatrix}
	\begin{bmatrix}1 & 0
	\end{bmatrix}=
	\begin{bmatrix} 1 &0\\ 0&0\\0&0
	\end{bmatrix}
\]
After stacking the two columns, this $3\times 2$ matrix reshapes as the column vector $\begin{bmatrix}1 &0&0&0&0&0 \end{bmatrix}^\top.$	Representations of the vectors $|\text{green}\>\otimes|\text{fruit}\>$ and $|\text{purple}\>\otimes|\text{vegetable}\>$ are obtained similarly, and so the three phrases in the subset $T\subseteq X\times Y$ correspond to the following three vectors in $\mathbb{C}^X\otimes\mathbb{C}^Y$.
\begin{align*}
|\text{orange}\>\otimes|\text{fruit}\> 
&= \begin{bmatrix}1 &0&0&0&0&0 \end{bmatrix}^\top\\
|\text{green}\>\otimes|\text{fruit}\>
&= \begin{bmatrix}0&1&0&0&0&0 \end{bmatrix}^\top \\
|\text{purple}\>\otimes|\text{vegetable}\>
&= \begin{bmatrix}0&0&0&0&0&1 \end{bmatrix}^\top
\end{align*}

\bigskip
\begin{fullwidth}
\noindent{\Large \itshape 3. Form the unit vector defined in Equation (\ref{eq:psi}).}
\end{fullwidth}
\bigskip
The unit vector defined by the joint probability distribution $\pi$ is a weighted sum of the three phrases in $T$.
\[
|\psi\>=\tfrac{1}{\sqrt 3}(|\text{orange}\>\otimes|\text{fruit}\>+ |\text{green}\>\otimes|\text{fruit}\> + |\text{purple}\>\otimes|\text{vegetable}\>)
\]
By identifying each term with its vector representation, we may represent $|\psi\>$ with the $6\times 1$ column vector,
\[
|\psi\>= \begin{bmatrix}\frac{1}{\sqrt 3} &\frac{1}{\sqrt 3} & 0 & 0 & 0 & \frac{1}{\sqrt 3} \end{bmatrix}^\top.
\]

\bigskip
\begin{fullwidth}
\noindent{\Large \itshape 4. Form the orthogonal projection operator $|\psi\>\<\psi|.$}
\end{fullwidth}
\bigskip
The explicit expression for orthogonal projection onto $|\psi\>\<\psi|$ consists of nine terms.
\begin{fullwidth}
\[
|\psi\>\<\psi|=\frac{1}{3}(|\text{orange}\>\<\text{orange}|\otimes |\text{fruit}\>\<\text{fruit}| + |\text{orange}\>\<\text{green}|\otimes |\text{fruit}\>\<\text{vegetable}| + \cdots)
\]
\end{fullwidth}
\smallskip
As written, it looks rather messy. It will be easier to read if we organize the terms into an array. Let's also adopt the following abbreviations.
\[
\begin{array}{lllll}
{\color{YellowOrange}o} =\text{orange} & {\color{Green}g}=\text{green} & {\color{Purple}p}=\text{purple} & F =\text{fruit} & V=\text{vegetable}
\end{array}
\]
Then the pure state $|\psi\>\<\psi|$ is as follows.
\begin{fullwidth}
\begin{equation}\label{eq:psipsi}
	|\psi\>\<\psi| = 
	\frac{1}{3}
	\left(
	\begin{array}{ccccc}
	|\o\>\<\o|\otimes |F\>\<F| &+& |\o\>\<\g|\otimes |F\>\<F| &+& |\o\>\<\p|\otimes |F\>\<V| \\[10pt]
	|\g\>\<\o|\otimes |F\>\<F| &+& |\g\>\<\g|\otimes |F\>\<F| &+& |\g\>\<\p|\otimes |F\>\<V| \\[10pt]
	|\p\>\<\o|\otimes |V\>\<F| &+& |\p\>\<\g|\otimes |V\>\<F| &+& |\p\>\<\p|\otimes |V\>\<V| \\
	\end{array}
	\right)
\end{equation}
\end{fullwidth}
As an array, $|\psi\>\<\psi|$ is computed as an outer product that results in a $6\times 6$ matrix.
	\[
	|\psi\>\<\psi| = 
	\begin{bmatrix}
	\frac{1}{3} & \frac{1}{3} & 0 & 0 & 0 & \frac{1}{3}\\[2pt]
	\frac{1}{3} & \frac{1}{3} & 0 & 0 & 0 & \frac{1}{3}\\[2pt]
	0&0&0&0&0&0\\[2pt]
	0&0&0&0&0&0\\[2pt]
	0&0&0&0&0&0\\[2pt]
	\frac{1}{3} & \frac{1}{3} & 0 & 0 & 0 & \frac{1}{3}
	\end{bmatrix}
	\]

\bigskip
\begin{fullwidth}
\noindent{\Large \itshape 5. Compute the reduced density operators associated to $|\psi\>\<\psi|$.}
\end{fullwidth}
\bigskip
Let's now compute the density operators $\rho_X$ and $\rho_Y$ by applying the partial trace to the projection operator in Equation (\ref{eq:psipsi}) above.
\[
\rho_X=\tr_Y|\psi\>\<\psi|\qquad
\rho_Y=\tr_X|\psi\>\<\psi|
\]
The idea is to compute $\tr_Y(|x_i\>\<x_j|\otimes|y_\alpha\>\<y_\beta|)=|x_i\>\<x_j|\<y_\alpha|y_\beta\>$ for each of the nine terms and add the result. Notice that only terms with $y_\alpha=y_\beta$ lend a contribution since $\<y_\alpha|y_\beta\>=0$ whenever $\alpha\neq \beta$. So the partial trace collects only those elements $x_i$ and $x_j$ that have common suffixes.\footnote{\smallcaps{Behind the Scenes.} This is the reason we choose to represent elements $x\in X$ and $y\in Y$ as orthonormal vectors, as opposed to some other representation. Otherwise this particular interpretation of the partial trace would not hold.} For example, among the three phrases
\begin{quote}
orange fruit \qquad green fruit \qquad purple vegetable
\end{quote}
the prefixes \textit{orange} and \textit{green} have exactly one shared suffix, namely \textit{fruit}, and so the partial trace returns $\tr_Y (|\o\>\<\g|\otimes |F\>\<F|)= |\o\>\<\g|\<F|F\>$ which is $|\o\>\<\g|$ since $\<F|F\>=1$. On the other hand, the words \textit{orange} and \textit{purple} have different suffixes, and so the partial trace returns $\tr_Y(|\o\>\<\p|\otimes |F\>\<V|) = |\o\>\<\p| \<F|V\>$ which is $0$ since $\<F|V\>=0$. Proceeding this way for each term comprising $|\psi\>\<\psi|$, the reduced density $\rho_X$ is as follows.
\begin{align*}
	\rho_X=\tr_Y|\psi\>\<\psi| &= 
	\frac{1}{3}
	\left(
	\begin{array}{ccccc}
	|\o\>\<\o|\otimes \<F|F\> &+& |\o\>\<\g|\otimes \<F|F\> &+& |\o\>\<\p|\otimes \cancel{\<F|V\>} \\[10pt]
	|\g\>\<\o|\otimes \<F|F\> &+& |\g\>\<\g|\otimes \<F|F\> &+& |\g\>\<\p|\otimes \cancel{\<F|V\>} \\[10pt]
	|\p\>\<\o|\otimes \cancel{\<V|F\>} &+& |\p\>\<\g|\otimes \cancel{\<V|F\>} &+& |\p\>\<\p|\otimes \<V|V\> \\
	\end{array}
	\right)\\[20pt]
	&=
	\frac{1}{3}(|\o\>\<\o|+ |\o\>\<\g| + |\g\>\<\o| + |\g\>\<\g| + |\p\>\<\p|)
\end{align*}
Each of the five operators in the previous line is computed as an outer product of standard basis vectors, so they have simple matrix representations,
\begin{equation*}
\begin{split}
	|\o\>\<\o| &= \begin{bsmallmatrix}1 & 0 &0\\0&0&0\\0&0&0\end{bsmallmatrix}\\[10pt]
	|\g\>\<\g| &= \begin{bsmallmatrix}0 & 0 &0\\0&1&0\\0&0&0\end{bsmallmatrix}
\end{split}\qquad\qquad
\begin{split}
	|\p\>\<\p| &= \begin{bsmallmatrix}0 & 0 &0\\0&0&0\\0&0&1\end{bsmallmatrix}\\[10pt]
	|\o\>\<\g| &= \begin{bsmallmatrix}0 & 1 &0\\0&0&0\\0&0&0\end{bsmallmatrix}
\end{split}
\end{equation*}
and so on. Therefore $\rho_X$ has the following matrix representation.
\begin{equation}\label{ex:rhox_example}
	\rho_X =
	\kbordermatrix{
    & \o & \g & \p\\
    \o & 1 & 1 & 0  \\[10pt]
    \g & 1 & 1 & 0  \\[10pt]
    \p & 0 & 0 & 1 \\
  	}\frac{1}{3}
\end{equation}
Notice the factor of $\frac{1}{3}$ on the right, which ensures that $\rho_X$ has unit trace. What's more, the diagonal of $\rho_X$ is the classical marginal probability distribution $\pi_X=\left(\frac{1}{3},\frac{1}{3},\frac{1}{3}\right)$ on $X$. Momentarily ignoring the normalization factor, we verify that the entries of $\rho_X$ have a combinatorial interpretation.

\begin{description}
	\item \textbf{Off diagonals count shared suffixes}. The $ij$th off-diagonal entry of $\rho_X$ counts the number of times the prefixes $x_i$ and $x_j$ share a common suffix in $Y$. For instance, \textit{orange} and \textit{green} have exactly one shared suffix---\textit{fruit}---and the corresponding off-diagonal entries are $1$. On the other hand, \textit{purple} shares no common suffixes with either \textit{orange} or \textit{green}, and the corresponding off-diagonal entries are zero. 

	\item \textbf{Diagonals count occurrences.} The $i$th diagonal entry counts the number of times $x_i$ appears in the set $T$. \textit{Orange, green,} and \textit{purple} each appear once, and so the diagonals are each $1$.
	\end{description}

\noindent Let's now compute the reduced density operator $\rho_Y=\tr_X|\psi\>\<\psi|\colon \mathbb{C}^Y\to\mathbb{C}^Y$, which can be understood similarly. When tracing out $\mathbb{C}^X$ from the nine terms of $|\psi\>\<\psi|$, only terms with $x_i=x_j$ will lend a contribution since $\tr_X(|x_i\>\<x_j|\otimes|y_\alpha\>\<y_\beta|)=\<x_i|x_j\>|y_\alpha\>\<y_\beta|$ and $\<x_i|x_j\>=0$ whenever $i\neq j$. Therefore the reduced density $\rho_Y$ is as follows.
\smallskip
\begin{align*}
	\rho_Y=\tr_X|\psi\>\<\psi| &= 
	\frac{1}{3}
	\left(
	\begin{array}{ccccc}
	\<\o|\o\>\otimes |F\>\<F| &+& \cancel{\<\o|\g\>}\otimes |F\>\<F| &+& \cancel{\<\o|\p\>}\otimes |F\>\<V| \\[10pt]
	\cancel{\<\g|\o\>}\otimes |F\>\<F| &+& \<\g|\g\>\otimes |F\>\<F| &+& \cancel{\<\g|\p\>}\otimes |F\>\<V| \\[10pt]
	\cancel{\<\p|\o\>}\otimes |V\>\<F| &+& \cancel{\<\p|\g\>}\otimes |V\>\<F| &+& \<\p|\p\>\otimes |V\>\<V| \\
	\end{array}
	\right)\\[20pt]
	&=
	\frac{1}{3}\left(|F\>\<F|+ |F\>\<F| + |V\>\<V|\right)
\end{align*}
Each operator in the previous line is computed as an outer product of standard basis vectors, so they have simple matrix representations.
\[
	|F\>\<F| = \begin{bmatrix}1 & 0 \\0&0\end{bmatrix}
	\qquad
	|V\>\<V| = \begin{bmatrix}0 & 0 \\0&1\end{bmatrix}
\]
Therefore $\rho_Y$ has the following matrix representation.
\begin{equation}\label{ex:rhoy_example}
	\rho_Y =
	\kbordermatrix{
	 & F & V\\
		F & 2 & 0  \\[10pt]
	 	V & 0 & 1  \\
  	}\frac{1}{3}
\end{equation}
Notice the factor of $\frac{1}{3}$ on the right, which ensures that $\rho_Y$ has unit trace. Moreover the diagonal of $\rho_Y$ is the classical marginal probability distribution $\pi_Y=\left(\frac{2}{3},\frac{1}{3}\right)$. Ignoring the normalization factor for the moment, we verify that the entries of $\rho_Y$ have a combinatorial interpretation.
	\begin{description}
	\item \textbf{Off diagonals count shared prefixes.} The $\alpha\beta$th off-diagonal entry counts the number of times the suffixes $y_\alpha$ and $y_\beta$ share a common prefix. In this example, \textit{fruit} and \textit{vegetable} are never described by the same color, and so both off diagonals are zero.

	\item \textbf{Diagonals count occurrences.} The $\alpha$th diagonal entry counts the number of times $y_\alpha$ appears in the corpus. \textit{Fruit} appears twice, and the corresponding entry is $2$. \textit{Vegetable} appears once, and the corresponding entry is $1$. 
	\end{description}
	\marginnote[-6cm]{As an aside, by Corollary \ref{cor:red_densities_as_MM}, the reduced densities may also be computed as a product of a matrix with its transpose. The vector $|\psi\>\in\mathbb{C}^X\otimes\mathbb{C}^Y$ computed in Step 3 corresponds to the following matrix, which represents a linear map $\mathbb{C}^X\to\mathbb{C}^Y$.
		\[
		M = 
		\begin{bmatrix}
		\frac{1}{\sqrt 3} & \frac{1}{\sqrt 3} & 0\\
		0 & 0 & \frac{1}{\sqrt 3}
		\end{bmatrix}
		\leftrightsquigarrow
		|\psi\>
		\]
	The product of $M$ with its adjoint gives us the reduced density operators of $\rho$.
		\begin{equation*}
			M^\dagger M = 
			\begin{bmatrix}
			\frac{1}{3} & \frac{1}{3} & 0\\[5pt]
			\frac{1}{3} & \frac{1}{3} & 0\\[5pt]
			0 & 0 & \frac{1}{3}\\[5pt]
			\end{bmatrix}
			=\rho_X
		\end{equation*}
		\begin{equation*}
			MM^\dagger=
			\begin{bmatrix}
			\frac{2}{3} & 0\\[5pt]
			0 & \frac{1}{3}
			\end{bmatrix} = \rho_Y
		\end{equation*}
	Either way, we've now recovered the matrices presented early on in the motivating example of Section \ref{ssec:memory}.}
So the reduced densities $\rho_X$ and $\rho_Y$ recover classical marginal probability along their diagonals, but they also carry additional information about interactions between the prefix and suffix subsystems. This information is harnessed neatly in their eigenvectors and eigenvalues. Decoding this information is the final step.

\bigskip
\begin{fullwidth}
\noindent{\Large \itshape 6. Decode the spectral information.}
\end{fullwidth}
\bigskip
Lastly, we decode the spectral information of the two reduced densities. Here are their eigenvalues and eigenvectors.
\begin{center}
   \begin{tabular}{@{}r|c|c@{}}
      %\toprule
      & \textbf{eigenvectors of $\rho_X$} & \textbf{eigenvectors of $\rho_Y$} \\
      \midrule
      $\lambda_1=\frac{2}{3}$ & $|e_1\>=\begin{bmatrix} \frac{1}{\sqrt{2}} \\[10pt] \frac{1}{\sqrt{2}}\\[10pt] 0  \end{bmatrix}\begin{matrix} \o \\[10pt] \g \\[10pt] \p  \end{matrix}$ & $|f_1\>=\begin{bmatrix} 1\\0\end{bmatrix} \begin{matrix} F\\V\end{matrix}$
      \\[30pt]
      \midrule
      $\lambda_2=\frac{1}{3}$ & $|e_2\>=\begin{bmatrix} 0 \\[10pt] 0 \\[10pt] 1  \end{bmatrix}\begin{matrix} \o \\[10pt] \g \\[10pt] \p  \end{matrix}$ & $|f_2\>=\begin{bmatrix} 0\\1\end{bmatrix}\begin{matrix} F\\V\end{matrix}$\\
     %\bottomrule
\end{tabular}
\end{center}
This has the following interpretation.
	\marginnote[-6cm]{
	\begin{center}
	\begin{tikzpicture}[y=.5cm,x=1.5cm]
	\node at (0,0) {\includegraphics[scale=0.15]{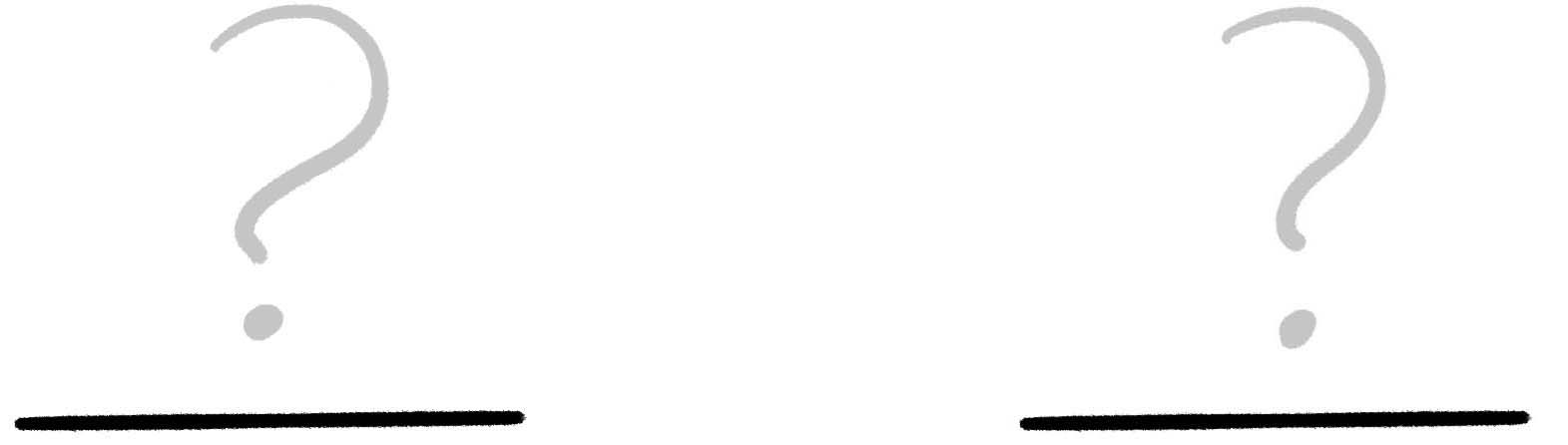}};
	\node at (-1,-2) {\textcolor{YellowOrange}{orange}};
	\node at (-1,-3) {\textcolor{Green}{green}};
	\node at (-1,-4) {\textcolor{Purple}{purple}};

	\node at (1,-2) {fruit};
	\node at (1,-3) {vegetable};
	\end{tikzpicture}
	\end{center}
	}
	\marginnote{
	\begin{center}
	\begin{tabular}{@{}c|c@{}}
	\textbf{66\%} & \textbf{33\%}\\
	\toprule
	\begin{tikzcd}
	\text{fruit} \arrow[d, "50\%"', bend right, shift right=3] \arrow[dd, "50\%", bend left=49, shift left=2] \\
	\text{\color{YellowOrange}orange}\\
	\text{\color{Green}green}                 
	\end{tikzcd}
	&
	\begin{tikzcd}
	\text{vegetable} \ar[dd,"100\%"]\\ \\
	\text{\color{Purple}purple}
	\end{tikzcd}
	\end{tabular}
	\end{center}
	}
\begin{itemize}
	\item With probability $\lambda_1=\frac{2}{3}$, the state of prefix subsystem is given by $|e_1\>$, and the state of the suffix subsystem is given by $|f_1\>$. The absolute squares of the entries of $|e_1\>$ define a probability distribution on $X.$ For instance, the probability of the prefixes \textit{orange} and \textit{green} are both $|\<e_1|\text{orange}\>|^2=|\<e_1|\text{green}\>|^2=\frac{1}{2}$, which are conditional probabilities given that the suffix of a phrase is \textit{fruit}.  Likewise, the absolute squares of the entries of $|f_1\>$ define a probability distribution on $Y$. For instance, the probability of the suffix \textit{fruit} is $|\<f_1|\text{fruit}\>|^2=1$, which is a conditional probability given that the prefix of a phrase is either \textit{orange} or \textit{green}.
		\marginnote{Compare this to Example \ref{ex:interpretation}, which was essentially this example without context.}
	\item With probability $\lambda_2=\frac{1}{3}$, the state of the prefix subsystem given by $|e_2\>$, and the state of the suffix subsystem is given by $|f_2\>$. The entries of $|e_2\>$ define a probability distribution on $X.$ The probability of the prefix \textit{purple} is $|\<e_2|\text{purple}\>|^2=1$, which is the probability of the prefix of a phrase being purple given that the suffix is \textit{vegetable}.  Likewise, the entries of $|f_2\>$ define a probability distribution on $Y$. The probability of the suffix \textit{vegetable} is $|\<f_2|\text{vegetable}\>|^2=1$, which is a conditional probability given that the prefix of a phrase is \textit{purple}.
  \end{itemize}
\end{example}

\newthought{This example concludes} a description of the main passage from classical to quantum probability $\pi\mapsto|\psi\>\<\psi|$. The punchline is that the eigenvalues and eigenvectors of the reduced densities of $|\psi\>\<\psi|$ recover classical marginal probabilities and also harness extra information that is akin to conditional probability. For the application to come, it will be good to know about an especially simple case that makes the theory even more interpretable. If the initial joint probability distribution $\pi\colon X\times Y\to\mathbb{R}$ is an empirical one, then it may be represented as a bipartite graph, and the entries of the reduced densities may be read off from the combinatorics of the graph. We will rely on this combinatorial understanding in the application of Chapter \ref{ch:application}, so let's collect the ideas here.

\section{Reduced Densities from Empirical Distributions}\label{sec:graphs}
We now take a quick look at the main procedure of Section \ref{sec:procedure} in the special case when the joint probability distribution is an empirical one. As before, let $X$ and $Y$ be finite sets and now choose a subset $T\subseteq X\times Y$. This defines an empirical distribution $\hat\pi\colon X\times Y\to\mathbb{R}$ by
\[
\hat\pi(x,y)= 
\begin{cases}
\frac{1}{|T|} & \text{if $(x,y)\in T$}\\[5pt]
0 &\text{otherwise}.
\end{cases}
\]
This probability distribution may be illustrated as a weighted bipartite graph. The sets $X$ and $Y$ provide the two sets of vertices, and an edge connecting $x\in X$ and $y\in Y$ is labeled by the probability $\hat\pi(x,y).$  To keep the graph clean, let's omit the edge between $x$ and $y$ whenever $\hat\pi(x,y)=0$ and let's always leave off the weights, knowing that the probability  associated to any edge is simply $1$ divided by the total number of edges in the graph.
\begin{center}
\includegraphics[scale=0.1]{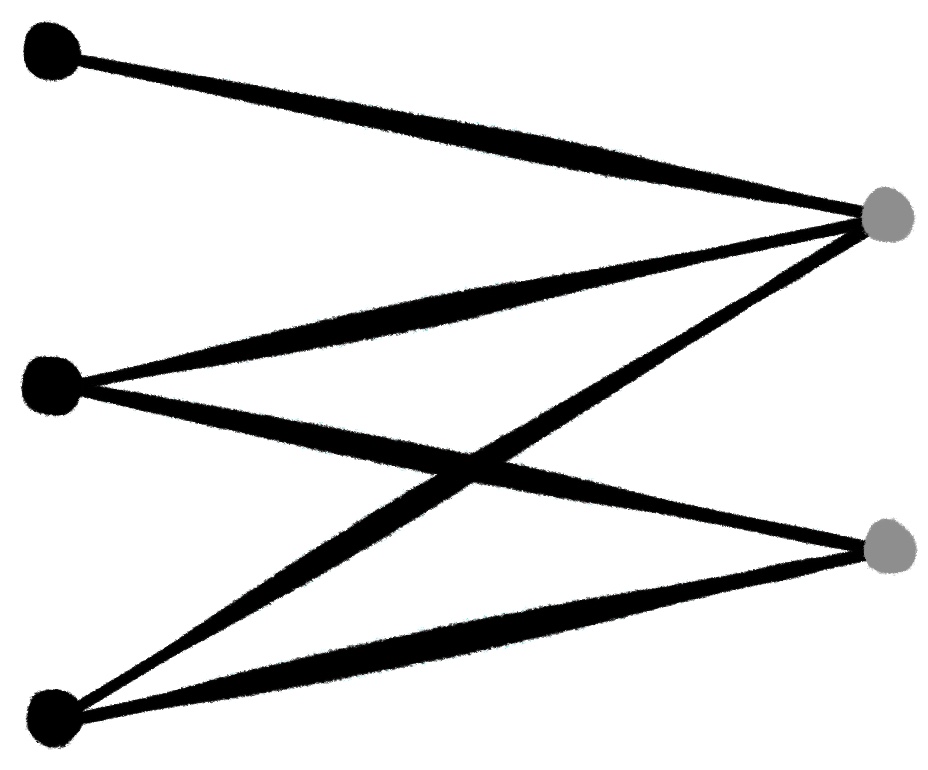}
\end{center}
The distribution $\hat\pi$ then defines the unit vector as in Equation (\ref{eq:psi2}).
\begin{equation}\label{eq:psi_emp}
|\psi\> = \frac{1}{\sqrt {|T|}}\sum_{(x_i,y_\alpha)\in T}|x_i\>\otimes|y_\alpha\>
\end{equation}
In this case, the entries of the reduced density matrices associated to $|\psi\>\<\psi|$ have especially clean, combinatorial interpretations. En route to understanding this, define the following nonnegative integers.
\begin{align*}
d(x_i,x_j)&:= |\{y\in Y\mid (x_i,y)\in T \text{ and } (x_j,y)\in T\}|\\[5pt]
d(y_\alpha,y_\beta)&:= |\{x\in X\mid (x,y_\alpha)\in T \text{ and } (x,y_\beta)\in T\}|
\end{align*}
In words, $d(x_i,x_j)$ is the number of suffixes that $x_i$ and $x_j$ have in common in $T$. In terms of the graph associated to $\hat\pi,$ it is the number of paths of length two between $x_i$ and $x_j$. If $i=j$, then $d(x_i,x_i)$ is simply the degree of vertex $x_i$.  Likewise, $d(y_\alpha,y_\beta)$ is the number prefixes that $y_\alpha$ and $y_\beta$ have in common in $T$. It is also the number of paths of length two between $y_\alpha$ and $y_\beta$ in the graph associated to $\hat\pi$. If $\alpha=\beta$ then $d(y_\alpha,y_\alpha)$ is the degree of vertex $y_\alpha$.  In this simple setup, the entries of the reduced density operators as defined in Equations (\ref{eq:rhox_offdiag}) and (\ref{eq:rhoy_offdiag}) specialize to
\begin{align}
(\rho_X)_{ij} &=\frac{d(x_i,x_j)}{|T|}\\[10pt]
(\rho_Y)_{\alpha\beta} &=\frac{d(y_\alpha,y_\beta)}{|T|}.
\end{align}

\noindent As a result, we have the following general combinatorial interpretation of the reduced densities, as seen in Example \ref{ex:main}.
	\begin{description}
	\item \textbf{Off diagonals count shared prefixes.} Up to the normalizing factor $1/|T|$, the $ij$th off-diagonal entry of $\rho_X$ counts the number of times $x_i$ and $x_j$ share a common suffix in $T$. The $\alpha\beta$th off diagonal entry of $\rho_Y$ counts the number of times $y_\alpha$ and $y_\beta$ share a common prefix in $T$. 

	\item \textbf{Diagonals count occurrences.} Up to the normalizing factor $1/|T|$, the $i$th diagonal of $\rho_X$ counts the number of times $x_i$ occurs as a prefix in $T$. The $\alpha$th diagonal entry of $\rho_Y$ counts the number of times $y_\alpha$ appears as a suffix in $T$.
	\end{description}

\noindent The upshot is that the entries of the reduced densities obtained from an empirical distribution may be read off rather quickly---just look at the graph. We will use this tactic when peering into the application of Chapter \ref{ch:application}, happily following Gromov's advice, ``If you don't understand, count!'' \cite{gromov2019}.

\begin{example}\label{ex:graph} Suppose $X$ is a set with three elements and $Y$ is a set with two elements and consider the following graph with 5 edges.
\begin{center}
\begin{tikzpicture}[y=1.2cm]
\node at (0,0) {\includegraphics[scale=0.1]{figures/ch2/graph3}};
\node at (-2,1) {$x_1$};
\node at (-2,0) {$x_2$};
\node at (-2,-1) {$x_3$};

\node at (2,.5) {$y_1$};
\node at (2,-.5) {$y_2$};
\end{tikzpicture}
\end{center}
Let $T\subseteq X\times Y$ be the set consisting of all pairs $(x,y)$ for which there is an edge joining $x$ and $y$. Then the reduced densities of the state $|\psi\>=\displaystyle \frac{1}{\sqrt 5}\sum_{(x_i,y_\alpha)\in T}|x_i\>\otimes|y_\alpha\>$ associated to this graph are
\begin{equation*}
	\rho_X =
	\kbordermatrix{
    & x_1 & x_2 & x_3\\
    x_1 & 1 & 1 & 1  \\[10pt]
    x_2 & 1 & 2 & 2  \\[10pt]
    x_3 & 1 & 2 & 2 \\
  	}\frac{1}{5}
  	\qquad\qquad
  	\rho_Y =
	\kbordermatrix{
	 & y_1 & y_2\\
		y_1 & 3 & 2  \\[10pt]
	 	y_2 & 2 & 2  \\
  	}\frac{1}{5}
\end{equation*}
To see this, take a look at $\rho_Y$, for the moment. Ignoring the factor of $\frac{1}{5}$, the diagonal entries $3$ and $2$ are the degrees of the vertices in $Y$.
	\begin{center}
	\begin{tikzpicture}[y=1.2cm]
	\node at (0,0) {\includegraphics[scale=0.1]{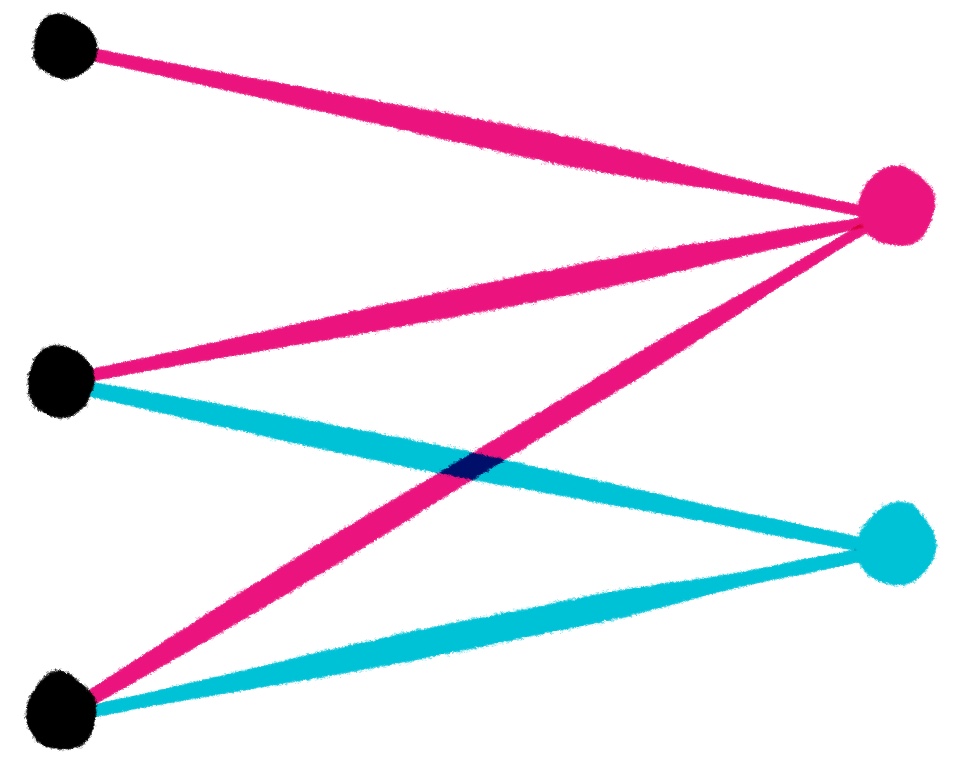}};
	\node at (2.5,.5) {degree $3$};
	\node at (2.5,-.5) {degree $2$};

	\node at (5,0) {$\begin{bmatrix}{\color{RubineRed}3}&2\\2 &{\color{cyan}2}\end{bmatrix}$};
	\end{tikzpicture}
	\end{center}
The off-diagonal entry of $\rho_Y$ is $2$, which is the number of paths of length two that connect the elements of $Y.$ 
	\begin{center}
	\begin{tikzpicture}[y=1.2cm]
	\node at (0,0) {\includegraphics[scale=0.1]{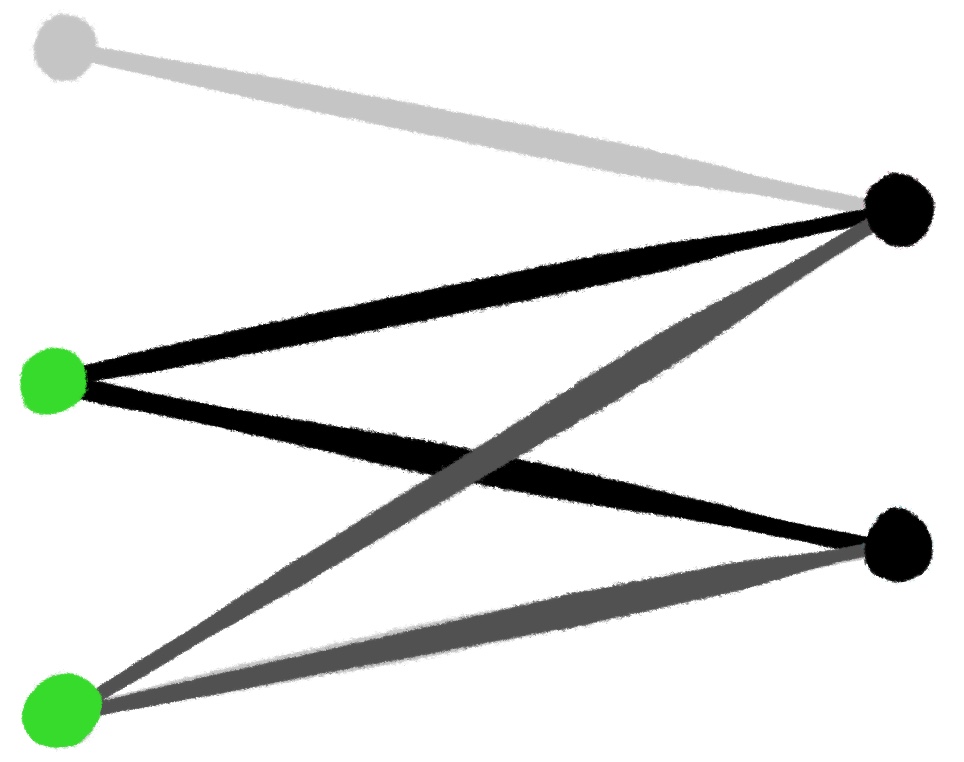}};
	\node at (-2.5,0) {degree $2$};
	\node at (-2.5,-1) {degree $2$};

	\node at (3.5,0) {$\begin{bmatrix}3&{\color{LimeGreen}2}\\{\color{LimeGreen}2} &2\end{bmatrix}$};
	\end{tikzpicture}
	\end{center}
The entries for $\rho_X$ are understood just as easily. Ignoring the factor of $\frac{1}{5}$, the diagonal entries $1,2,2$ are the degrees of the vertices in $X$.
	\begin{center}
	\begin{tikzpicture}[y=1.2cm]
	\node at (0,0) {\includegraphics[scale=0.1]{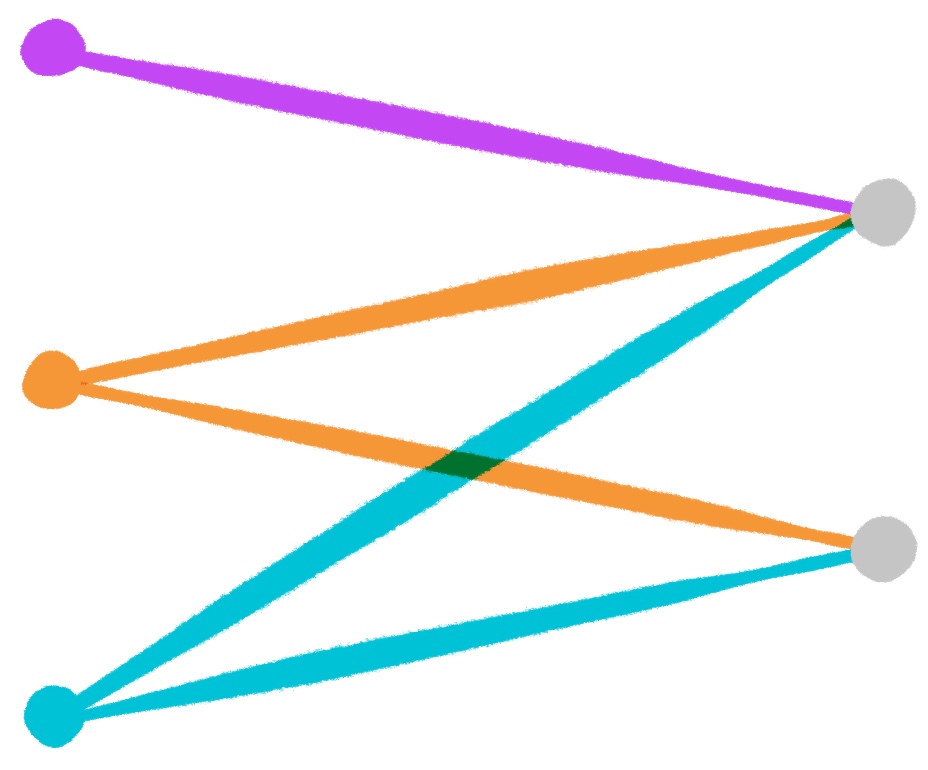}};
	\node at (-2.5,1) {degree $1$};
	\node at (-2.5,0) {degree $2$};
	\node at (-2.5,-1) {degree $2$};

	\node at (4,0) {$\begin{bmatrix}{\color{Purple}1}&1&1\\1&{\color{YellowOrange}2}&2\\1&2&{\color{cyan}2}\end{bmatrix}$};
	\end{tikzpicture}
	\end{center}
The off diagonals of $\rho_X$ are $1$ and $1$ and $2$. The first off diagonal counts the single path of length two that joins $x_1$ and $x_2$.
	\begin{center}
	\begin{tikzpicture}[y=1.2cm]
	\node at (0,0) {\includegraphics[scale=0.1]{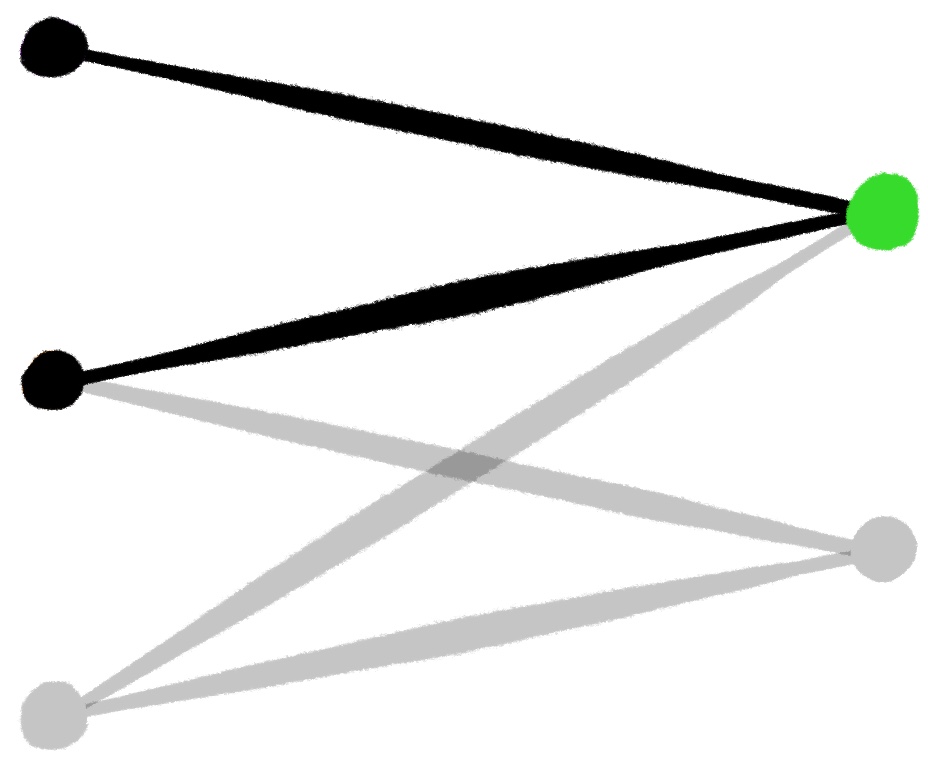}};
	\node at (2.5,.5) {degree $2$};

	\node at (5,0) {$\begin{bmatrix}1&{\color{LimeGreen}1}&1\\{\color{LimeGreen}1}&2&2\\1&2&2\end{bmatrix}$};
	\end{tikzpicture}
	\end{center}
The third off diagonal counts the two paths of length two that join $x_2$ and $x_3$.
	\begin{center}
	\begin{tikzpicture}[y=1.2cm]
	\node at (0,0) {\includegraphics[scale=0.1]{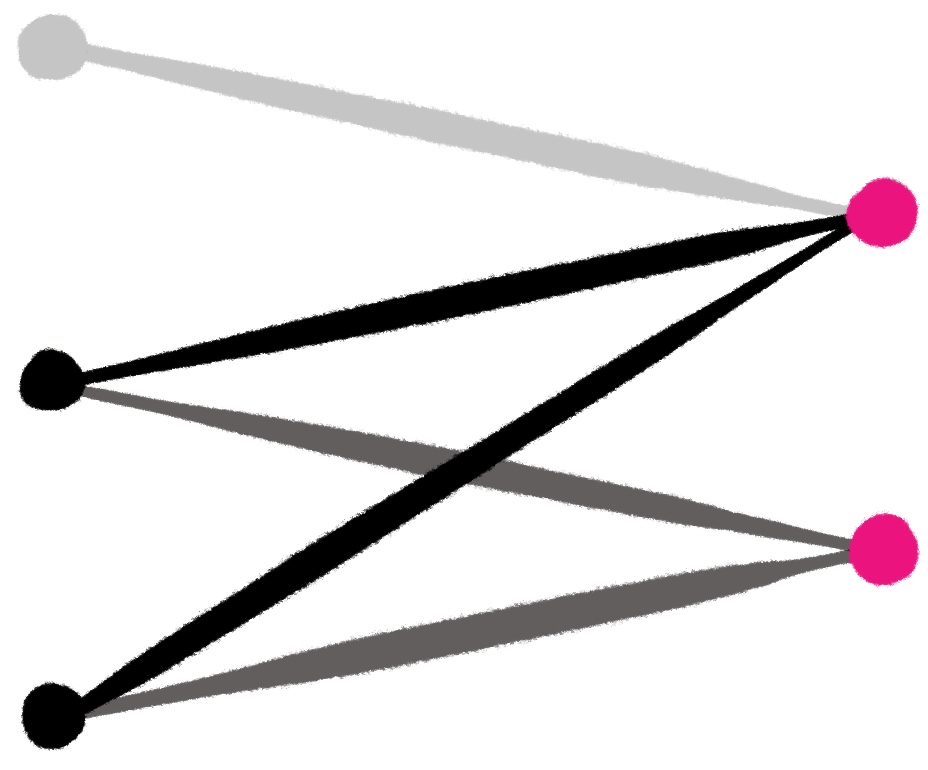}};
	\node at (2.5,.5) {degree $2$};
	\node at (2.5,-.5) {degree $2$};

	\node at (5,0) {$\begin{bmatrix}1&1&1\\1&2&{\color{RubineRed}2}\\1&{\color{RubineRed}2}&2\end{bmatrix}$};
	\end{tikzpicture}
	\end{center}
The second off diagonal counts the single path of length two that joins $x_1$ and $x_3$. You can imagine the picture. 
\end{example}
\noindent With these combinatorial interpretations in mind, notice that Example \ref{ex:main} can be greatly simplified.

\begin{example} The joint distribution on the toy corpus introduced in Example \ref{ex:main} is an empirical one. With $X=\{\text{orange, green, purple}\}$ and $Y=\{\text{fruit, vegetable}\}$ the example corpus corresponds to the following subset $T\subseteq X\times Y$.
\[T=\{(\text{orange, fruit}), (\text{green, fruit}), (\text{purple, vegetable})\}\]
As a graph, the empirical distribution $\hat\pi$ consists of three edges that are each weighted by $\frac{1}{3}$. 
\begin{center}
\begin{tikzpicture}[y=1.2cm, x=1.2cm]
\node at (0,0) {\includegraphics[scale=0.1]{figures/ch2/graph2}};
\node at (-2,1) {orange};
\node at (-2,0) {green};
\node at (-2,-1) {purple};

\node at (2,.5) {fruit};
\node at (2,-.5) {vegetable};
\end{tikzpicture}
\end{center}
Now recall the reduced density operators $\rho_X=\tr_Y|\psi\>\<\psi|$ and $\rho_Y=\tr_X|\psi\>\<\psi|$ derived in Example \ref{ex:main}.
\begin{equation*}
	\rho_X =
	\kbordermatrix{
    & \o & \g & \p\\
    \o & 1 & 1 & 0  \\[10pt]
    \g & 1 & 1 & 0  \\[10pt]
    \p & 0 & 0 & 1 \\
  	}\frac{1}{3}
  	\qquad\qquad
  	\rho_Y =
	\kbordermatrix{
	 & F & V\\
		F & 2 & 0  \\[10pt]
	 	V & 0 & 1  \\
  	}\frac{1}{3}
\end{equation*}
Observe that the diagonals of $\rho_X$ are the degrees of the three vertices $x\in X$, and the off-diagonals count paths of length two. For instance, the \textit{orange-green} and \textit{green-orange} off-diagonal entries are both $1$, as there is a single path of length two joining those vertices. On the other hand, there is \textit{no} path of length two joining \textit{purple} to any other vertex in $X$. This accounts for all entries in $\rho_X$ that are zero. The diagonals of $\rho_Y$ are likewise the degrees of the vertices $y\in Y$. Both off-diagonals are zero since \textit{fruit} and \textit{vegetable} have no prefixes in common. Equivalently, there is not path of length two between \textit{fruit} and \textit{vegetable}.
\end{example}

\noindent By now, we've covered a lot of information. The following takeaway summarizes the discussion.

\begin{takeaway}\label{takeaway:red_den} Given finite sets $X$ and $Y$, any joint probability distribution $\pi\colon X\times Y\to\mathbb{R}$ defines a unit vector in $\mathbb{C}^X\otimes\mathbb{C}^Y$ whose coefficients are the square roots of the probabilities
\[
|\psi\>= \sum_{(x,y)\in X\times Y}\sqrt{\pi(x,y)}|x\>\otimes |y\>.\]
Orthogonal projection onto this unit vector defines a density operator $\rho=|\psi\>\<\psi|$ whose reduced densities $\rho_X$ and $\rho_Y$ have the following properties.
	\begin{itemize}
	\item The classical marginal probability distribution $\pi_X\colon X\to\mathbb{R}$ is contained along the diagonal of $\rho_X$.
	\[(\rho_X)_{ii}=\sum_{\alpha}\pi(x_i,y_\alpha)=\pi_X(x_i)\]

	\item The classical marginal probability distribution $\pi_Y\colon Y\to\mathbb{R}$ is contained along the diagonal of $\rho_Y$.
	\[(\rho_Y)_{\alpha\alpha}=\sum_{i}\pi(x_i,y_\alpha)=\pi_Y(y_\alpha)\]

	\item The reduced densities  $\rho_X$ and $\rho_Y$ generally have nonzero off-diagonal entries that encode extra information about subsystem interactions. 
	\begin{align*}
	(\rho_X)_{ij}&=\sum_{\alpha}\sqrt{\pi(x_i,y_\alpha)\pi(x_j,y_\alpha)}
	\\[5pt]
	(\rho_Y)_{\alpha\beta}&=\sum_{i}\sqrt{\pi(x_i,y_\alpha)\pi(x_i,y_\beta)}
	\end{align*}
	This information contributes to the eigenvalues and eigenvectors of $\rho_X$ and $\rho_Y$, and it is akin to conditional probability.

	\item When $\pi$ is an empirical probability distribution, the matrix entries of $\rho_X$ and $\rho_Y$ have simple, combinatorial interpretations that can be read off from a bipartite graph associated to $\pi$. In particular, the off-diagonal entries count the number of shared suffixes and prefixes.
	\end{itemize}
\end{takeaway}

\newthought{At this point}, the theory branches off into a few directions.
\begin{description}
	\item \textbf{An application.} Now that we understand the information stored in reduced densities, we may return to the idea of reconstructing a pure quantum state from its reduced densities. In practice, how does one use information about smaller subsystems and their interactions to recover knowledge about a larger whole? As we'll see in Chapter \ref{ch:application}, the answer is exploited in a well-known procedure in physics---\textit{the density matrix renormalization group} (DMRG) procedure \cite{schollwock,white1992}. As it turns out, this ability to reconstruct a quantum state is immensely useful for tackling a familiar puzzle in an applied setting: Given some data generated by an unknown probability distribution, what is a good algorithm for building a \textit{model} of that distribution with the goal of generating new data from it? In Chapter \ref{ch:application}, we will share one such algorithm inspired by DMRG. It successively builds one of these so-called \textit{generative models} entirely from eigenvectors of reduced density operators. 

	\item \textbf{A link with formal concepts.} The graph-theoretic approach for understanding reduced densities described in Section \ref{sec:graphs} is directly connected with the graph-theoretical presentation of formal concepts introduced in Chapter \ref{ch:introduction}. Indeed, the observant reader will notice that the eigenvectors of the reduced densities that we computed in Example \ref{ex:main} actually \textit{recovered} the formal concepts in Example \ref{ex:FCA}! In fact, more is true---Example \ref{ex:main} recovered the classical formal concepts and further \textit{enriched} them with probabilities. We will spell this out more carefully below in Section \ref{sec:evectsFCA}.

	\item \textbf{A framework for entailment.} 	So far, we've only discussed probability distributions on a Cartesian product of two sets $X$ and $Y$. The results of this section still hold if $X$ and $Y$ each decompose as products of smaller sets. In other words, the theory holds for larger systems where the initial joint probability distribution is over a set of sequences of any fixed length. As we'll show in Section \ref{sec:entailment}, every sequence (and any subsequence) can be represented by a density operator on a multi-partite system obtained from the state $|\psi\>$ of Equation (\ref{eq:psi2}). The set of density operators on any given space form a partially-ordered set, and so there is a natural way to compare sequences based on the statistics encoded by the densities. When those sequences consist of \textit{words} in a natural language, this paves the way for modeling entailment and hierarchy.
\end{description}

Each of these points is worth expanding upon a little more. Below in Section \ref{sec:evectsFCA} we'll say more about the intriguing connection between eigenvectors and formal concepts. In Section \ref{sec:entailment} we will expand on the theory behind modeling entailment. In Chapter \ref{ch:application}, we will outline an algorithm that produces a generative model using the mathematics of this chapter. All three endeavors may be thought of as ``corollaries'' to the main ideas presented in Sections \ref{sec:procedure}--\ref{sec:graphs}.

\section{Eigenvectors versus Formal Concepts}\label{sec:evectsFCA}
Turning back a few pages to the formal concept construction in Example \ref{ex:FCA} is now encouraged. There we started with the same finite sets that we had in Example \ref{ex:main}, namely $X=\{\text{orange, green, purple}\}$ and $Y=\{\text{fruit, vegetable}\}$. The remaining ingredients in each example are completely analogous to each other. In Example \ref{ex:FCA} we started with a function $R\colon X\times Y\to \{0,1\}$, indicated by this table:
\begin{table}[h!]
	\centering
	\begin{tabular}{r|ccc}
		& \text{orange} & \text{green} & \text{purple}\\[5pt]
	\hline\\[-8pt]
	\text{fruit} & $1$ & $1$ & $0$\\[5pt]
	\text{vegetable} & $0$ & $0$& $1$
	\end{tabular}
\end{table}

\noindent In Example \ref{ex:main} we started with a probability distribution $\pi\colon X\times Y\to\mathbb{R}$, indicated by this table:
\begin{table}[h!]
	\centering
	\begin{tabular}{r|ccc}
		& \text{orange} & \text{green} & \text{purple}\\[5pt]
	\hline\\[-8pt]
	\text{fruit} & $\frac{1}{3}$ & $\frac{1}{3}$ & $0$\\[5pt]
	\text{vegetable} & $0$ & $0$& $\frac{1}{3}$
	\end{tabular}
\end{table}

\noindent Both $R$ and $\pi$ were then visualized as the same bipartite graph.
\begin{center}
\begin{tikzpicture}[y=1.2cm, x=1.2cm]
\node at (0,0) {\includegraphics[scale=0.1]{figures/ch2/graph2}};
\node at (-2,1) {orange};
\node at (-2,0) {green};
\node at (-2,-1) {purple};

\node at (2,.5) {fruit};
\node at (2,-.5) {vegetable};
\end{tikzpicture}
\end{center}
Using this graph in Example \ref{ex:FCA}, we first obtained a pair maps between power sets $\adj{f}{2^X}{2^Y}{g}$, and the fixed points of $fg$ and $gf$ gave two formal concepts.
	\marginnote[-.7cm]{\begin{center}
	\begin{tikzpicture}
	\node at (3,0) {\includegraphics[scale=0.13]{figures/ch1/concept2}};
	\node at (2,1.8) {$f$};
	\node at (4,1.8) {$g$};
	\node at (3,1.1) {fruit};

	\node at (1.2,-1.6) {$f$};
	\node at (4.8,-1.6) {$g$};
	\node at (3,-2) {vegetable};
	\end{tikzpicture}
	\end{center}}
\begin{align*}
	&(\{\text{orange, green}\},\{\text{fruit}\})\\[10pt]
	&(\{\text{purple}\},\{\text{vegetable}\})
\end{align*}
Using the same graph in Example \ref{ex:main}, we also obtained a pair of maps between vector spaces $\adj{M}{\mathbb{C}^X}{\mathbb{C}^Y}{M^\dagger}$, and the fixed points of $MM^\dagger$ and $M^\dagger M$ gave two pairs of eigenvectors, $\{|e_1\>,|f_1\>\}$ and $\{|e_2\>,|f_2\>\}$, where each pair is further ``tagged'' with an eigenvalue.
\[
\begin{array}{lll}
\{|e_1\>,|f_1\>\}= \left\{\tfrac{1}{\sqrt 2}\left(|\text{orange}\> + |\text{green}\>\right)\;,\; |\text{fruit}\>\right\} && \lambda_1 = \frac{2}{3} \\[10pt]
\{|e_2\>,|f_2\>\}= \left\{|\text{purple}\>\;,\; |\text{vegetable}\>\right\} && \lambda_2 = \frac{1}{3}
\end{array}
\]
These sets of eigenvectors are clearly a statistically-enriched version of the formal concepts. But the analogy goes deeper. Recall from Equation (\ref{eq:ab}) that $f$ and $g$ are defined as lifts of the following functions $a\colon X\to 2^Y$ and $b\colon Y\to 2^X$.
\begin{equation*}
	\begin{tikzcd}
	X\ar[r,"a"] & 2^Y \\[-20pt]
	x \ar[r,mapsto] & a(x):=\{y\in Y\mid R(x,y)=1\}\\[10pt] 
	Y\ar[r,"b"] & 2^X \\[-20pt]
	y \ar[r,mapsto] & b(y):=\{x\in X\mid R(x,y)=1\}
	\end{tikzcd}
\end{equation*}
The subset
	\marginnote[-3cm]{
	\[
	\begin{tikzcd}[ampersand replacement = \&]
	2^X \arrow[r, "f", dashed]   \& 2^Y \\
	X \arrow[u] \arrow[ru, "a"'] \&    
	\end{tikzcd}
	\]
	\bigskip
	\[
	\begin{tikzcd}[ampersand replacement = \&]
	2^Y \arrow[r, "g", dashed]   \& 2^X \\
	Y \arrow[u] \arrow[ru, "b"'] \&    
	\end{tikzcd}
	\]
	}
$a(x)$ can be identified with the $x$th column of the table (or $\{0,1\}$-valued matrix) associated to $R$, and the subset $b(y)$ can be identified with the $y$th row. Analogously, there are functions $\alpha\colon X\to\mathbb{C}^Y$ and $\beta\colon Y\to\mathbb{C}^X$ that lift to give the linear maps $M$ and $M^\dagger$. Recall from Proposition \ref{prop:povm} that for each $i=1,\ldots,|X|$ there is a map $A_i\in \hom(\mathbb{C}^X\otimes\mathbb{C}^Y,\mathbb{C}^Y)$ defined by $A_i:=\<x_i|\otimes \id_{\mathbb{C}^Y}$. So when $|\psi\>=\sum_{x,y}\sqrt{\pi(x,y)}|x\>\otimes|y\>$ is the state associated to $\pi$, ``evaluation at $|\psi\>$'' defines a function
\begin{equation*}
	\begin{tikzcd}
	X\ar[r,"\alpha"] & \mathbb{C}^Y \\[-20pt]
	x_i \ar[r,mapsto] & A_i|\psi\>
	\end{tikzcd}
\end{equation*}
Note that $A_i|\psi\>\in\mathbb{C}^Y$ is the $i$th column of $M$.
	\marginnote{
	\[
	\begin{tikzcd}[ampersand replacement = \&]
	\mathbb{C}^X \arrow[r, "M", dashed]   \& \mathbb{C}^Y \\
	X \arrow[u] \arrow[ru, "\alpha"'] \&    
	\end{tikzcd}
	\]
	\bigskip
	\[
	\begin{tikzcd}[ampersand replacement = \&]
	\mathbb{C}^Y \arrow[r, "M^\dagger", dashed]   \& \mathbb{C}^X \\
	Y \arrow[u] \arrow[ru, "\beta"'] \&    
	\end{tikzcd}
	\]
	}
By the universal property of vector spaces, $\alpha$ \textit{lifts} to a linear map $\mathbb{C}^X\to\mathbb{C}^Y$, which is precisely $M$.  In the same way, for each $\alpha=1,\ldots,|Y|$ there is a map $B_\alpha\in\hom(\mathbb{C}^X\otimes\mathbb{C}^Y,\mathbb{C}^X$) so that evaluation at $|\psi\>$ defines a function
\begin{equation*}
	\begin{tikzcd}
	Y\ar[r,"\beta"] & \mathbb{C}^X \\[-20pt]
	y_\alpha \ar[r,mapsto] & B_\alpha|\psi\>
	\end{tikzcd}
\end{equation*}
where $B_\alpha|\psi\>\in\mathbb{C}^X$ is the $\alpha$th row of $M$. Up to complex conjugation, $\beta$ lifts to the linear adjoint $M^\dagger\colon \mathbb{C}^Y\to\mathbb{C}^X$.% I have to say ``up to complex conjugation'' because B_\alpha|\psi\> is the \alpha'th row of $M$, which is DIFFERENT than the *complex conjugate* of the the \alpha'th row of $M$. The latter would truly lift to the adjoint M^\dagger. As it is, $\beta$ only lifts to M-transpose.

\newthought{So the parallels} between formal concepts and eigenvectors of reduced densities are striking. Even so, we may \textbf{not} conclude that they always coincide, or that the linear algebra always recovers the set theory. In fact, \textit{this is only true when the relation $R$ satisfies the property that its bipartite graph is a disjoint union of complete bipartite subgraphs,} or ``clusters.'' Then one expects that the eigenvectors of the reduced densities derived from the uniform probability distribution defined by the graph \textit{will} coincide with classical formal concepts, and each formal concept or pair of eigenvectors corresponds to a cluster. (The verification is a straightforward, though perhaps tedious, exercise in arithmetic.) So it's very easy to find an example where formal concepts and eigenvectors don't coincide. Consider the following graph, for instance, which contains four edges instead of three.
\begin{center}
\begin{tikzpicture}[y=1.2cm, x=1.2cm]
\node at (0,0) {\includegraphics[scale=0.1]{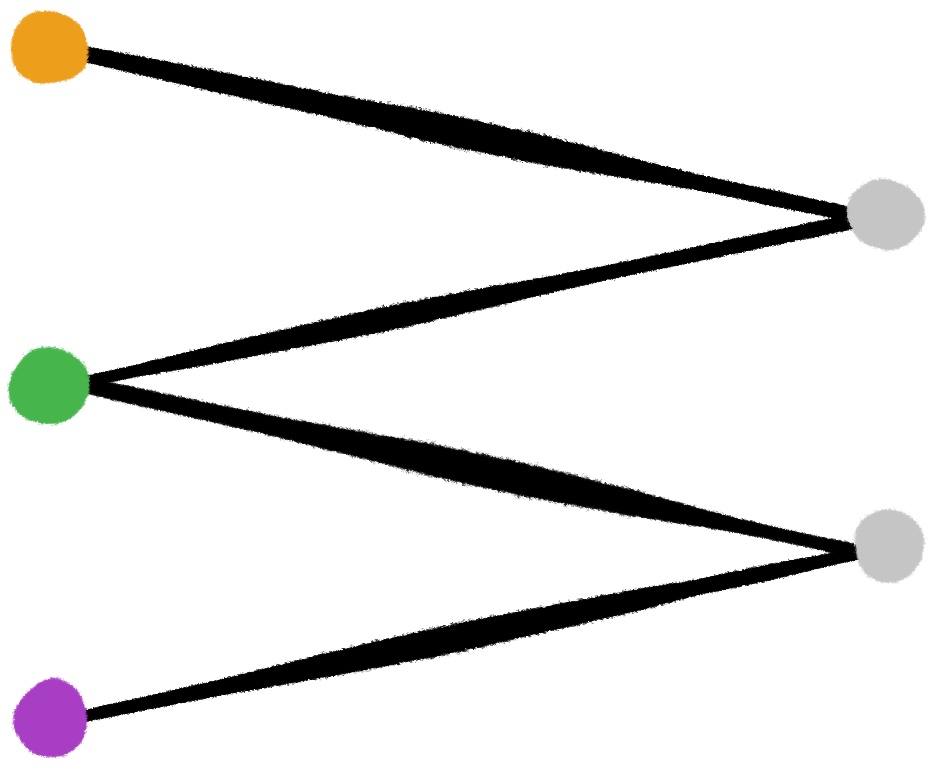}};
\node at (-2,1) {orange};
\node at (-2,0) {green};
\node at (-2,-1) {purple};

\node at (2,.5) {fruit};
\node at (2,-.5) {vegetable};
\end{tikzpicture}
\end{center}
It defines the following uniform probability distribution on $X\times Y$.
\begin{center}
	\begin{tabular}{r|ccc}
		& \text{orange} & \text{green} & \text{purple}\\[5pt]
	\hline\\[-8pt]
	\text{fruit} & $\frac{1}{4}$ & $\frac{1}{4}$ & $0$\\[5pt]
	\text{vegetable} & $0$ & $\frac{1}{4}$ & $\frac{1}{4}$
	\end{tabular}
\end{center}
By replacing each ``$\frac{1}{4}$'' with a ``$1$,'' this table also defines a relation $R\subset X\times Y$, which---as can be quickly verified---has the following three formal concepts.
\begin{align*}
	&(\{\text{orange, green}\},\{\text{fruit}\})\\[10pt]
	&(\{\text{green, purple}\},\{\text{vegetable}\})\\[10pt]
	&(\{\text{green}\},\{\text{fruit, vegetable}\})
\end{align*}
On the other hand, there are only \textit{two} pairs of eigenvectors from $MM^\dagger$ and $M^\dagger M$, where $M=\frac{1}{\sqrt 4}\begin{bsmallmatrix}1&1&0\\0&1&1\end{bsmallmatrix}$. Here they are, together with their associated eigenvalues.
\[
\begin{array}{lll}
\left\{\tfrac{1}{\sqrt 6}\left(|\text{orange}\> + |\text{green}\> + |\text{purple}\>\right)\;,\; \tfrac{1}{\sqrt 2}(|\text{fruit}\> + |\text{vegetable}\>)\right\} && \lambda_1 = \frac{3}{4} \\[10pt]
\left\{\tfrac{1}{\sqrt 2}(|\text{orange}\> - |\text{purple}\>)\;,\; \tfrac{1}{\sqrt 2}(|\text{fruit}\> - |\text{vegetable}\>)\right\} && \lambda_2 = \frac{1}{4}
\end{array}
\]
So in this simple example, the two notions---eigenvectors and formal concepts---diverge. (And notice the negatives!) 

\newthought{Let's close this section} with two remarks. First, although eigenvectors and formal concepts don't always coincide, we've taken time to discuss the connection between them because we think it is interesting, if not tantalizing. Second, there \textit{is} a sense in which the two notions are  clearly related.  But to see this we must lift our feet off the ground and view the situation from the perspective of category theory. Chapter \ref{ch:categorytheory} is fully devoted to this endeavor. In the mean time, let's now turn our attention to another ``corollary'' of the theory of Section \ref{sec:graphs}, namely modeling entailment with density operators.

\section{Modeling Entailment with Densities}\label{sec:entailment}
In some settings, the spectral information of reduced densities of pure states may serve as a proxy for ``meaning.'' Let's think in the context of language, for instance. To understand something about the meaning of the word \textit{orange}, it's helpful to know how often the word occurs in the language as well as the \textit{context} in which it appears. From a mathematical perspective, this idea is not unfamiliar. The Yoneda lemma in category theory informally says that a mathematical object is completely determined by the network of relationships it shares with other objects of its kind.\marginnote{The Yoneda lemma brings to mind the oft-cited quote by linguist John Firth: ``You shall know a word by the company it keeps.'' \cite{firth57synopsis} I like to think of this as the Yoneda Lemma for Language.} In the same way, think of the meaning of a word as being captured by the network of all expressions that the word fits into \textit{along with} with the statistics of those occurences.
\begin{center}
\includegraphics[width=\textwidth]{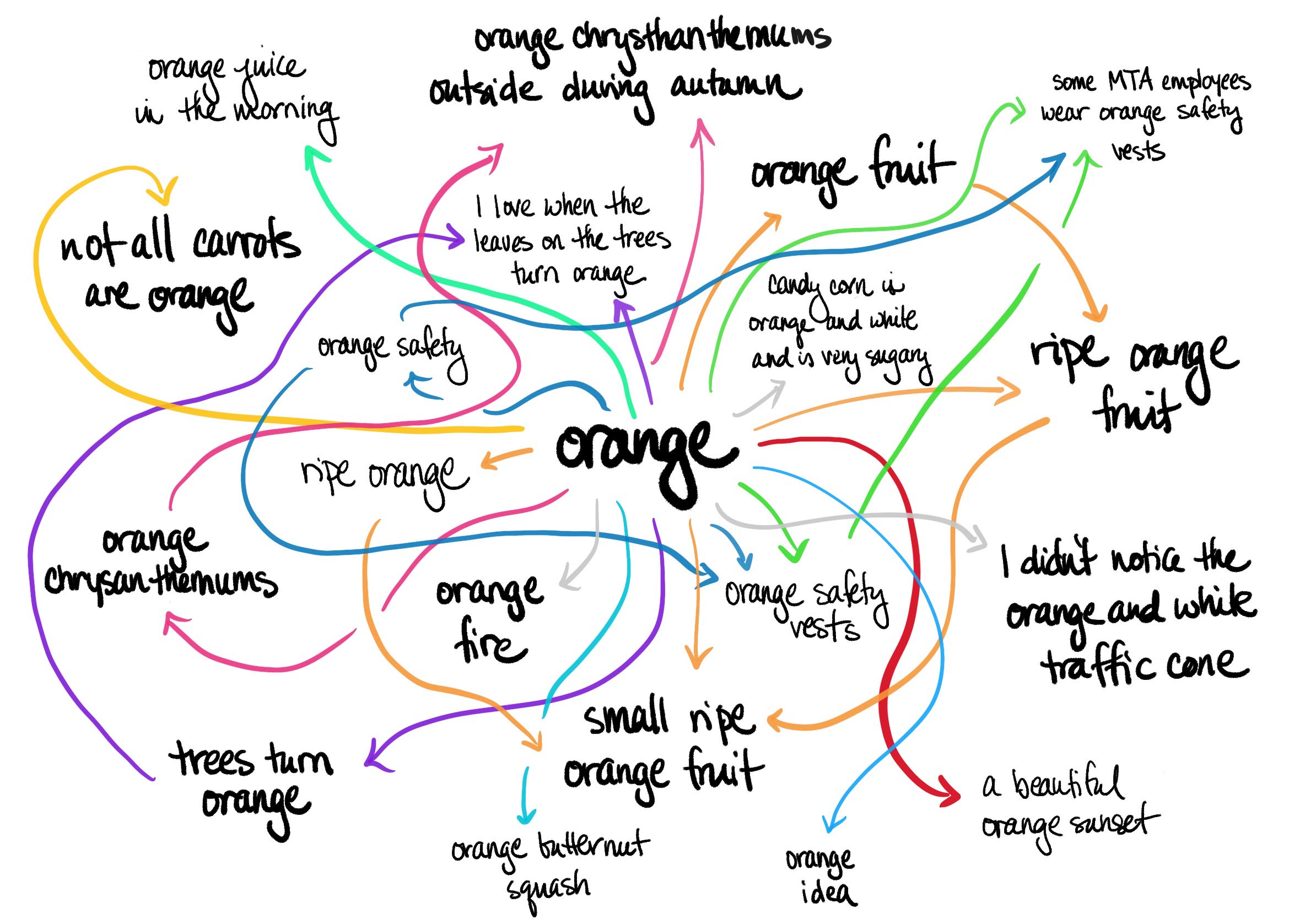}
\end{center}
As it turns out, density operators provide a natural framework for capturing this very idea. The goal of this section is to show that a word such as \textit{orange} naturally unravels as a weighted sum of all expressions containing that word, where the weights are conditional probabilities. More precisely, we'll see that any word or expression can be assigned a density, and the density for \textit{orange} decomposes as a weighted sum of densities, one for each expression containing \textit{orange}.  One might think of this as a \textit{Yoneda-enriched spectral decomposition}, which will give a clear candidate for modeling entailment and concept hierarchy using a simple poset structure on densities. 
\begin{center}
\includegraphics[width=\textwidth]{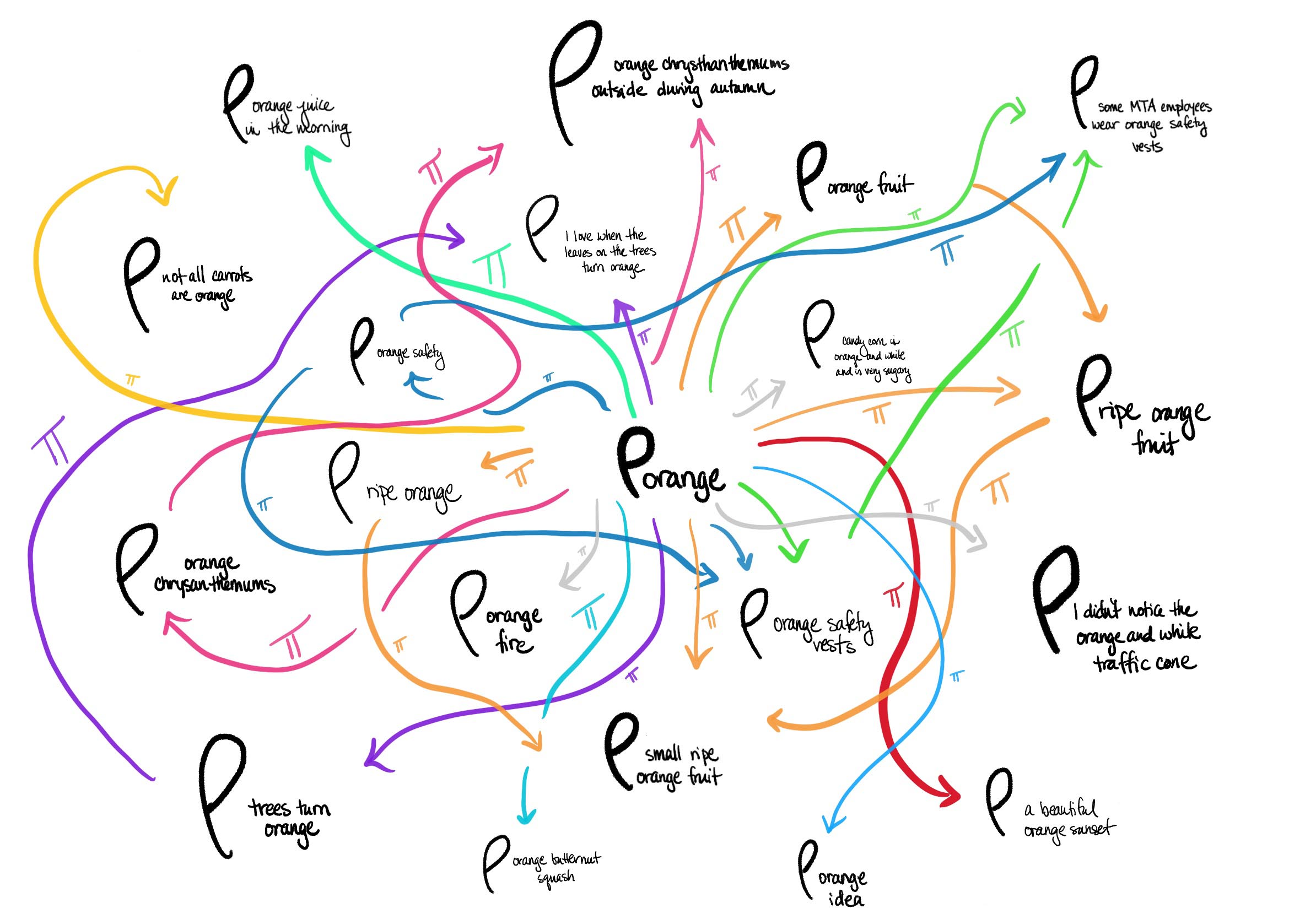}
\end{center}

The ideas in this section are the results of joint work with Yiannis Vlassopoulos, in which we further investigate entailment and concept hierarchy in language. The simple example in Section \ref{ssec:entailmentexample} is meant to spark the reader's interest for the extended theory appearing in a forthcoming paper \cite{taiyiannis}. Notably, Section \ref{ssec:entailmentexample}'s passage from words to densities, such as $\text{orange} \mapsto \rho_{\text{orange}}$, will be derived differently in \cite{taiyiannis}, and the latter greatly expands upon the general theory of Section \ref{ssec:generalentailment}. We hope the simple ideas here will whet your appetite for much more to come. Further connections between language modeling and tensor networks may be found in \cite{pestun2017,pestun2017b}. Others have also explored a density matrix model for entailment using the Lowener order. The authors in \cite{bankova2016} present one such framework based on the category theoretical approach of \cite{clark2010}. We may view the work below as an instantiation of the theory presented there. In particular, conditional probabilities present themselves as a natural candidate for entailment strength, and we are using statistics as a proxy for grammar.

\newthought{With this introduction} in mind, let us now describe how these ideas connect with the mathematics described in Section \ref{sec:procedure}. As noted then, the starting ingredient is always a probability distribution on a finite set $\pi\colon S\to\mathbb{R}$. When $S$ decomposes as an $N$-fold Cartesian product of a finite set $A$, say $S=A\times A \cdots\times A$, we may think of elements $s\in S$ as sequences of symbols from the ``alphabet'' $A$. The elements of $A$ may be thought of as individual characters or words or phrases in a language so that concatenations of them are longer expressions; $S$ contains all those of length $N$. The distribution $\pi$ then captures something of the statistics in the language built from $A$, and  the vector $|\psi\>=\sum_s\sqrt{\pi(s)}|s\>$ is the state of the language. As we will show below, knowledge of this state then gives an easy way to compare words and expressions from $A$. For example, the mathematics will give us a natural way to compare something that is ``orange'' with something that is ``small, ripe'' and ``orange,'' and to see that the latter is a specific case of the former. The key tool for making these kinds of comparisons is a passage from \textit{words} to \textit{reduced density operators}. 

Concretely, we aim to show that expressions from the alphabet $A$ can be represented by reduced density operators obtained from the orthogonal projection operator $|\psi\>\<\psi|$. These densities can then be compared to one another using a simple partial order. Given positive semidefinite operators $\sigma$ and $\tau$ on the same space, one has $\sigma\geq \tau$ if and only if $\sigma-\tau$ is positive semidefinite. This is called the \emph{Loewner order} on positive semidefinite operators, and it defines a partial order on the set of densities operating on a fixed space.

Here's the outline for this section. In Section \ref{ssec:entailmentexample}, we'll show that words and phrases, say  \textit{orange} and \textit{small ripe orange}, have density matrix representations, $\rho_{\text{orange}}$ and $\rho_{\text{small ripe orange}}$. Then we'll use Proposition \ref{prop:povm} to show that the density for a given word, say $\rho_{\text{orange}}$, decomposes as a weighted sum of densities corresponding to all expressions containing that word. Finally we'll show that this decomposition naturally gives a way to model entailment between various expressions using the Loewner order. In particular, we'll see that conditional probabilities from $\pi$ provide a natural measure of entailment strength:
\begin{equation}\label{eq:entailment}
\rho_{\text{orange}} \geq \pi(\text{small ripe orange $\mid$ orange}) \rho_{\text{small ripe orange}}
\end{equation}
Because the ideas here are relatively new, we will put an elementary example up front and derive the above inequality. Afterwards we'll describe the general theory in Section \ref{ssec:generalentailment}.

\subsection{An Extended Example}\label{ssec:entailmentexample}
In previous examples, we have worked with a toy corpus of text containing three phrases of length two: \textit{orange fruit, green fruit, purple vegetable}. In this example we will start with a different set of phrases that are a little longer. Suppose we have the following toy corpus of text containing five phrases of length four.
\begin{center}
\begin{tabular}{ll}
small ripe orange fruit & large rotten green vegetable\\
large ripe orange vegetable & small ripe orange vegetable\\
small rotten orange fruit &
\end{tabular}
\end{center}
Let's now follow the procedure outlined in Section \ref{sec:procedure} to obtain a probability distribution that can be modeled by a quantum state $|\psi\>$. As a first step, consider the following ordered sets.
\begin{equation*}
\begin{split}
A&:=\{\text{small, large}\}\\
B&:=\{\text{ripe, rotten}\}
\end{split}\qquad\qquad
\begin{split}
C&:=\{\text{orange, green}\}\\
Y&:=\{\text{fruit, vegetable}\}
\end{split}
\end{equation*}
Let $X=A\times B\times C$ so that elements in $X$ are \emph{prefixes} and elements in $Y$ are \emph{suffixes.} Identify each of the five phrases $xy$ in the corpus with its corresponding sequence $(x,y)$ and let $T\subseteq X\times Y$ be the set consisting of these pairs. This defines a probability distribution $\hat\pi\colon X\times Y\to\mathbb{R}$ by
\[
\hat\pi(x,y)= 
\begin{cases}
\frac{1}{5} &\text{if $(x,y)\in T$}\\
0 &\text{if $(x,y)\not\in T$.}
\end{cases}
\]
Now define $|\psi\>$ to be the unit vector obtained as the sum of all sequences in $T$ weighted by the square roots of their probabilities. Observe that $|\psi\>$ lies in the tensor product $\mathbb{C}^A\otimes\mathbb{C}^B\otimes\mathbb{C}^C\otimes\mathbb{C}^Y$ where each factor is isomorphic to $\mathbb{C}^2$.
\begin{align*}
|\psi\> &= \sum_{x,y}\sqrt{\hat\pi(x,y)}|x\>\otimes|y\>\\
&=\sum_{a,b,c,y}\sqrt{\hat\pi(a,b,c,y)}|a\>\otimes|b\>\otimes|c\>\otimes|y\>\\
&=\frac{1}{\sqrt 5}\left(
	\begin{array}{llllllll}
	|\text{small}\> & \otimes & |\text{ripe}\> & \otimes & |\text{orange}\> & \otimes &|\text{fruit}\> & +\\
	|\text{small}\> & \otimes & |\text{rotten}\> & \otimes & |\text{orange}\> & \otimes &|\text{fruit}\> & +\\
	|\text{large}\> & \otimes & |\text{ripe}\> & \otimes & |\text{orange}\> & \otimes &|\text{vegetable}\> & +\\
	|\text{large}\> & \otimes & |\text{rotten}\> & \otimes & |\text{green}\> & \otimes &|\text{vegetable}\> & +\\
	|\text{small}\> & \otimes & |\text{ripe}\> & \otimes & |\text{orange}\> & \otimes &|\text{vegetable}\> &
	\end{array}
\right)
\end{align*}
As a tensor diagram, $|\psi\>$ is a node with four parallel edges, the first three representing $\mathbb{C}^X$ and the fourth representing $\mathbb{C}^Y$.
\begin{center}
\begin{tikzpicture}
	\node at (-2,0) {$|\psi\> = $};
	\node at (0,0) {\includegraphics[scale=0.12]{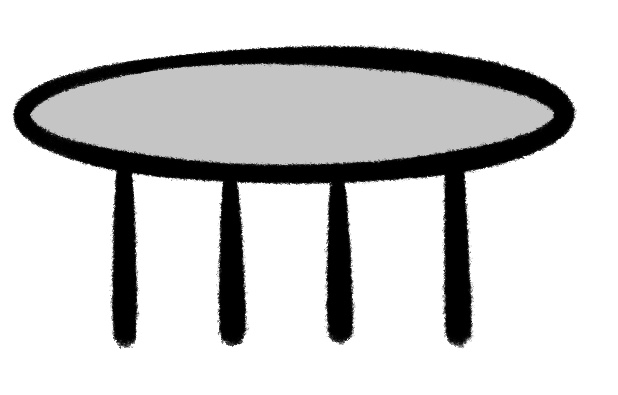}};
\end{tikzpicture}
\end{center}
With $|\psi\>$ in hand, our next goal is to compute the reduced density operator on the suffix subsystem $\rho_Y=\tr_X|\psi\>\<\psi|$. There are a few ways to do this, but let us refer to Proposition \ref{prop:povm}, which states that $\rho_Y$ decomposes as a sum of projection operators,
\begin{equation}\label{eq:rhoy_povm}
\rho_Y=\sum_{x\in X} A_x|\psi\>\<\psi|A_x^\dagger
\end{equation}
where each $A_x\colon\mathbb{C}^X\otimes\mathbb{C}^Y\to\mathbb{C}^Y$ is defined by $A_x:=\<x|\otimes\id_{\mathbb{C}^Y}$. That is, for any prefix-suffix pair $(x',y)\in X\times Y$  we have that $A_x(|x'\>\otimes|y\>)=\<x|x'\>|y\>$ which is equal to $|y\>$ if $x'=x$ and is equal to $0$ otherwise. So each term in Equation (\ref{eq:rhoy_povm}) involves a vector of the form
	\marginnote[-3cm]{The linear map $M$ associated to $|\psi\>$ has the following diagram,
	\begin{center}
	\begin{tikzpicture}
	\node at (-1.3,0) {$M=$};
	\node at (.3,0) {\includegraphics[scale=0.1]{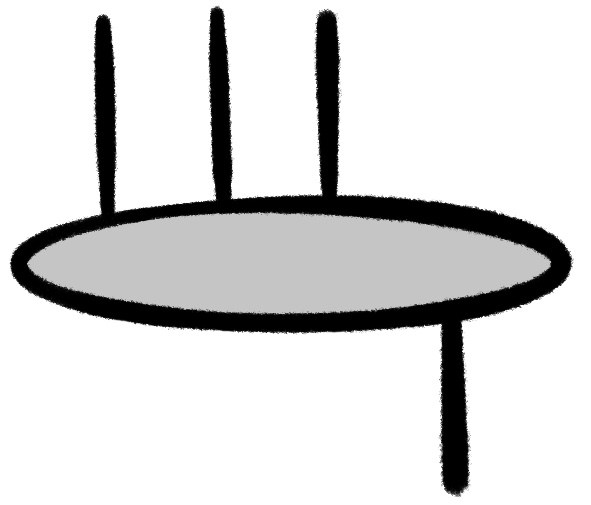}};
	\end{tikzpicture}
	\end{center}
	and the equation $\rho_Y=\tr_X|\psi\>\<\psi|=MM^\dagger$ has the following picture:
	\begin{center}
	\begin{tikzpicture}
		\node at (0,0) {\includegraphics[scale=0.1]{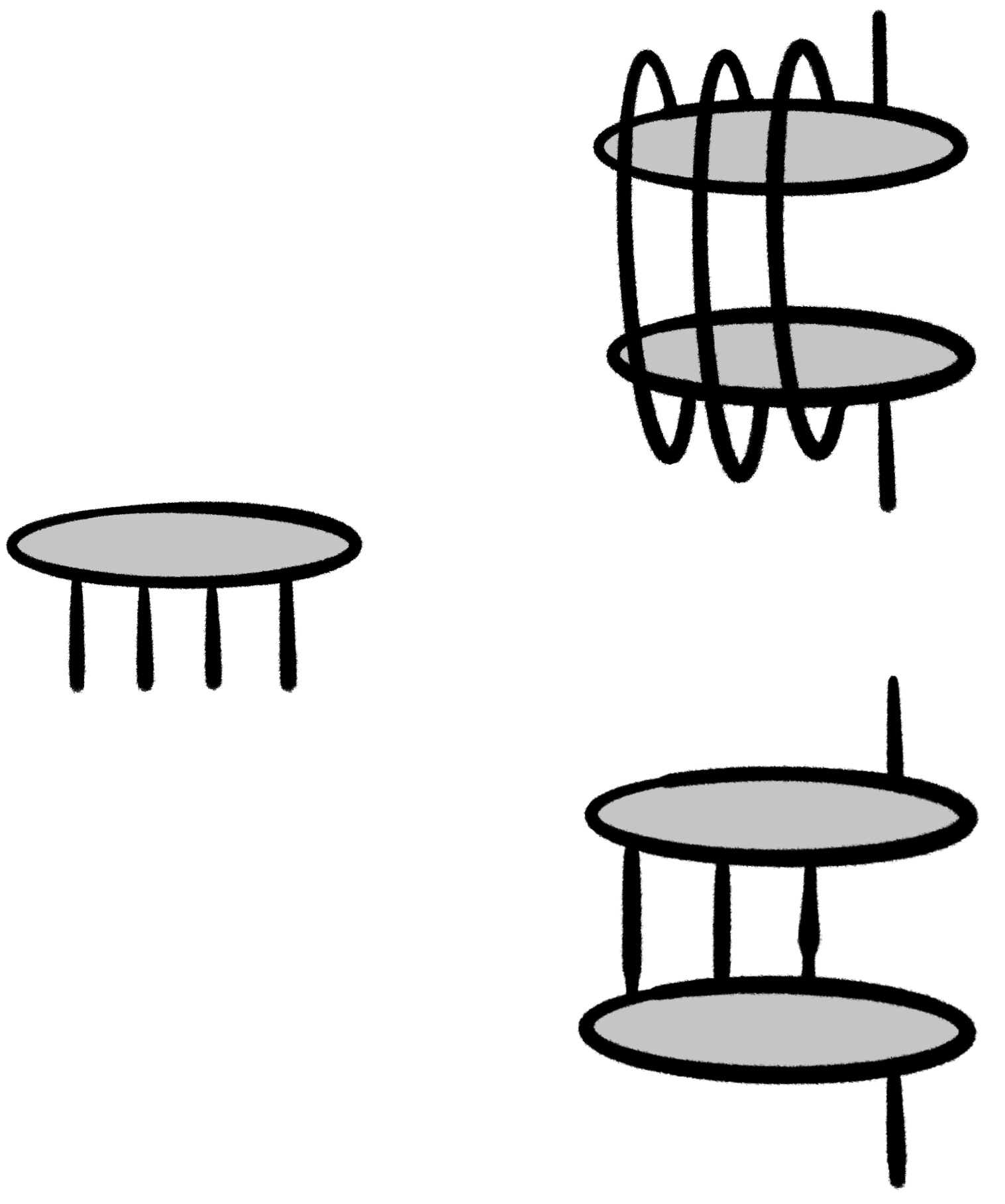}};
		\draw[thick, ->,>=stealth] (-1,1)--(0,2);
		\draw[thick, ->,>=stealth] (-1,-1)--(0,-2);
		\node at (-1.4,1.7) {$\tr_X|\psi\>\<\psi|$};
		\node at (-1.2,-1.5) {$MM^\dagger$};
	\end{tikzpicture}
	\end{center}
	The sum decomposition in Equation (\ref{eq:rhoy_povm}) gives yet \textit{another} expression for $\rho_Y$. It doesn't have a standard tensor diagram representation, but we will illustrate it as below.
	\begin{center}
	\begin{tikzpicture}
	\node at (0,0) {\includegraphics[width=\linewidth]{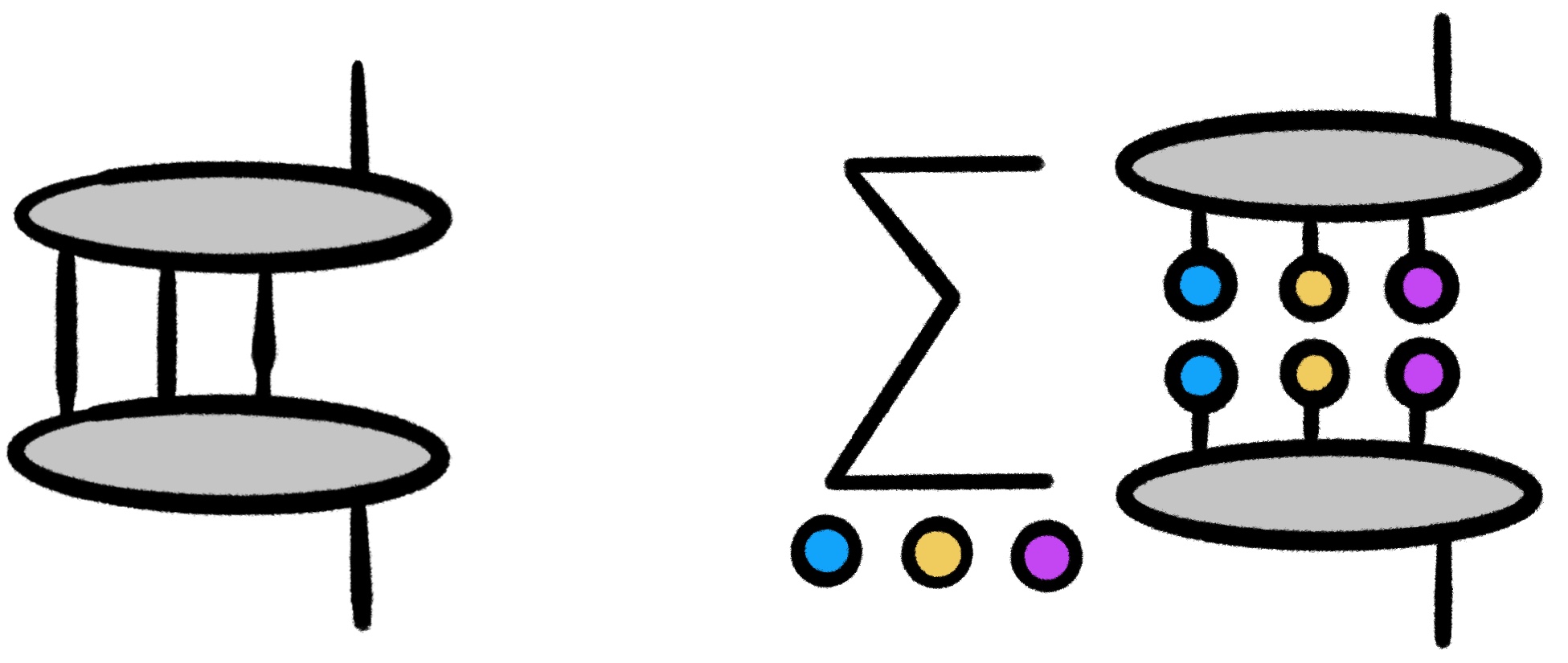}};
	\node at (-.5,0) {=};
	\end{tikzpicture}
	\end{center}
	In symbols,
	\[\rho_Y=\sum_{({\color{RoyalBlue}a},{\color{Goldenrod}b},{\color{Mulberry}c})\in A\times B\times C}M|{\color{RoyalBlue}a}{\color{Goldenrod}b}{\color{Mulberry}c}\>\<{\color{RoyalBlue}a}{\color{Goldenrod}b}{\color{Mulberry}c}|M^\dagger\]
	where we are using the fact that $A_i|\psi\>=M|x_i\>$ for each $i$, as remarked in the main text.
	}
\begin{align}\label{eq:Bi}
A_x|\psi\>&=\sum_y\sqrt{\hat\pi(x,y)}|y\>\\
&=\sqrt{\hat\pi(x,\text{fruit})}|\text{fruit}\> + \sqrt{\hat\pi(x,\text{vegetable})}|\text{vegetable}\>.\nonumber
\end{align}
Notice that the vector $A_x|\psi\>$ is the $x$th column of the matrix representation of the linear map $M\colon \mathbb{C}^X\to\mathbb{C}^Y$ associated to $|\psi\>\in \mathbb{C}^X\otimes\mathbb{C}^Y$, whose $yx$th entry is the probability $M_{yx}=\sqrt{\pi(x,y)}$. Equation (\ref{eq:rhoy_povm}) says that $\rho_Y$ decomposes as the \textit{sum of projections onto each of these columns.} This is a point worth emphasizing. Every density operator has a spectral decomposition and therefore decomposes as a sum of projection operators onto the spaces spanned by its eigenvectors. But Equation (\ref{eq:rhoy_povm}) is giving us \textit{a completely different} decomposition into projections, namely one onto each column of $M$. To get a better feel for this, let's write down the matrix $M$ explicitly. To do so we first need to choose an ordering on the set $X=\{x_1,x_2,\ldots,x_8\}$. Let's choose ordering below. 
\begin{equation*}
\begin{split}
\mathbf{x_1}&=(\textbf{small},\textbf{ripe},\textbf{{\color{YellowOrange}orange}})\\
\mathbf{x_2}&=(\textbf{large},\textbf{ripe},\textbf{{\color{YellowOrange}orange}})\\
\mathbf{x_3}&=(\textbf{small},\textbf{rotten},\textbf{{\color{YellowOrange}orange}})\\
x_4&=(\text{large},\text{rotten},\text{{\color{YellowOrange}orange}})\\
\end{split}
\qquad\qquad
\begin{split}
x_5&=(\text{small},\text{ripe},\text{{\color{Green}green}})\\
x_6&=(\text{large},\text{ripe},\text{{\color{Green}green}})\\
x_7&=(\text{small},\text{rotten},\text{{\color{Green}green}})\\
\mathbf{x_8}&=(\textbf{large},\textbf{rotten},\textbf{{\color{Green}green}})\\
\end{split}
\end{equation*}
Of these eight sequences, recall that only \textit{four} of them appear as prefixes in $T$, namely $x_1,x_2,x_3$ and $x_8$, indicated in boldface above. Notably, $x_1$ appears in $T$ twice. With this ordering, the matrix $M$ associated to $|\psi\>$ has the probability $\sqrt{\hat\pi(x_i,y_\alpha)}$ as the $\alpha i$th entry.
\[
M=\frac{1}{\sqrt 5}
\begin{bmatrix}
1 & 0 & 1 & 0 & 0&0&0&0\\
1&1&0&0&0&0&0&1
\end{bmatrix}
\]
Each of the eight columns correspond to one of the prefixes $x_i\in X$, and each of two rows correspond to one of the suffixes $y_\alpha\in Y$. There is a $1/\sqrt{5}$ in the $\alpha i$th entry if and only if $(x_i,y_\alpha)\in T$. Here's a decorated version of the matrix,
\[
M = \kbordermatrix{
    & x_1 & x_2 & x_3 & x_4 & x_5 & x_6 & x_7 & x_8 \\
    y_1 & 1&0&1&0&0&0&0&0  \\[10pt]
    y_2 & 1 & 1 & 0 &0&0&0&0&1
  	}\frac{1}{\sqrt 5}
  	\]
or more elaborately, 
\begin{center}
\begin{tikzpicture}
	\node at (0,0) {$M=\kbordermatrix{
		    & \text{small} & \text{ large}  & \text{ small} & \text{large} & \text{ small} & \text{large}  & \text{small} & \text{ large} \\
		    \text{fruit} & {\color{YellowOrange}1} & {\color{YellowOrange}0} & {\color{YellowOrange}1} & {\color{YellowOrange}0} &{\color{Green}0}&{\color{Green}0}&{\color{Green}0}&{\color{Green}0}  \\[10pt]
		    \text{vegetable} & {\color{YellowOrange}1} & {\color{YellowOrange}1} & {\color{YellowOrange}0} & {\color{YellowOrange}0} &{\color{Green}0}&{\color{Green}0}&{\color{Green}0}&{\color{Green}1} 
		  	}\frac{1}{\sqrt 5 }$};
	\draw (.9,-1.2) -- (.9,1.2); %mid vert
	\draw (-1.1,-1) -- (-1.1,1); %left vert
	\draw (2.9,-1) -- (2.9,1); %right vert
	\draw (-3,-.25) -- (4.8,-.25); %horizontal
	\node at (-1.1,-1.5) {orange};
	\node at (2.9,-1.5) {green};
	\node at (-2.2,1) {ripe};
	\node at (-.1,1) {rotten};
	\node at (2,1) {ripe};
	\node at (4,1) {rotten};
\end{tikzpicture}
\end{center}
Now we may quickly verify the claim that $A_i|\psi\>$ is the $i$th column of $M$. Recall that the natural map $Y\to \mathbb{C}^Y$ represents suffixes as standard basis vectors, $|\text{fruit}\>=\begin{bsmallmatrix}1\\0\end{bsmallmatrix}$ and $|\text{vegetable}\>=\begin{bsmallmatrix}0\\1\end{bsmallmatrix}$. Then Equation (\ref{eq:Bi}) implies that
\begin{align*}
A_1|\psi\>&=\tfrac{1}{\sqrt 5}(|\text{fruit}\>+|\text{vegetable}\>)=\frac{1}{\sqrt 5}\begin{bmatrix}1\\1\end{bmatrix}\\[10pt]
A_2|\psi\>&=\tfrac{1}{\sqrt 5}|\text{vegetable}\> = \frac{1}{\sqrt 5}\begin{bmatrix}0\\1\end{bmatrix}\\[10pt]
A_3|\psi\>&=\tfrac{1}{\sqrt 5}|\text{fruit}\>=\frac{1}{\sqrt 5}\begin{bmatrix}1\\0\end{bmatrix}\\[10pt]
A_8|\psi\>&=\tfrac{1}{\sqrt 5}|\text{vegetable}\>=\frac{1}{\sqrt 5}\begin{bmatrix}0\\1\end{bmatrix}
\end{align*}
and $A_i|\psi\>=0$ for any prefix $x_i$ that is not in $T$. So indeed $A_i|\psi\>$ is the $i$th column of $M$. Equivalently, we have that $A_i|\psi\>=M|x_i\>.$ So thinking of $M$ as a linear map between the prefix and suffix subsystems, $M$ maps every prefix in $T$ onto its suffix. For example, $M|x_2\>=M|\text{large ripe orange}\>=\tfrac{1}{\sqrt 5}|\text{vegetable}\>.$ Also observe that two prefixes have the same image under $M$ whenever they share the same suffix in $T$. For instance, the vectors $|x_2\>$ and $|x_8\>$ both map to $\tfrac{1}{\sqrt 5}|\text{vegetable}\>$ since the prefixes \textit{large ripe orange} and \textit{large rotten green} both have \textit{vegetable} as a suffix. The upshot is that Equation (\ref{eq:rhoy_povm}) gives a decomposition of $\rho_Y$ as a sum of projections onto these vectors. Explicitly,
\begin{align*}
\rho_Y &= 
A_1|\psi\>\<\psi|A_1^\dagger + 
A_2|\psi\>\<\psi|A_2^\dagger + 
A_3|\psi\>\<\psi|A_3^\dagger + 
A_8|\psi\>\<\psi|A_8^\dagger\\[10pt]
&=\frac{1}{5}\left(
\begin{bmatrix}1\\1\end{bmatrix}\begin{bmatrix}1 & 1\end{bmatrix} +
\begin{bmatrix}0\\1\end{bmatrix}\begin{bmatrix}0 & 1\end{bmatrix} +
\begin{bmatrix}1\\0\end{bmatrix}\begin{bmatrix}1 & 0\end{bmatrix} +
\begin{bmatrix}0\\1\end{bmatrix}\begin{bmatrix}0 & 1\end{bmatrix} \right) \\[20pt]
&=\frac{1}{5}\left(
\begin{bmatrix}1&1\\1&1\end{bmatrix} + 
\begin{bmatrix}0&0\\0&1\end{bmatrix} + 
\begin{bmatrix}1&0\\0&0\end{bmatrix} +
\begin{bmatrix}0&0\\0&1\end{bmatrix} \right)\\[20pt]
&=\frac{1}{5}
\begin{bmatrix} 
2 & 1 \\ 1 & 3
\end{bmatrix}.
\end{align*}
	\marginnote[-6cm]{\noindent 
	$x_1=$ (small, ripe, orange)\\\noindent
	$x_2=$ (large, ripe, orange)\\\noindent
	$x_3=$ (small, rotten, orange)\\\noindent
	$x_8=$ (large, rotten, green)
	}
\noindent Notice the diagonals of $\rho_Y$ count the number of occurrences of \textit{fruit} and \textit{vegetable} in the corpus, while the off-diagonal entry counts the number of shared prefixes. This is consistent with the combinatorial claims of Section \ref{sec:graphs}. In any case, the path for defining a density matrix representation for sequences is now clear. For each prefix $x_i\in T$ define the \emph{unnormalized reduced density operator} associated to $x_i$ to be
\[\hat\rho_{x_i}=A_i|\psi\>\<\psi|A_i^\dagger\]
	\marginnote{If $x_i=(a,b,c)$ so that $|abc\>=
	\begin{tikzpicture}[y=.3cm,x=.4cm,baseline=1pt]
	\draw[thick] (0,0) -- (0,-1);
	\draw[thick] (1,0) -- (1,-1);
	\draw[thick] (2,0) -- (2,-1);

	\node[tensor,minimum size=1mm, fill={rgb, 255:red,18;green,165;blue,252}] at (0,0) {}; 
	\node[tensor,minimum size=1mm,fill={rgb, 255:red,241;green,204;blue,95}] at (1,0) {}; 
	\node[tensor,minimum size=1mm,fill={rgb, 255:red,195;green,71;blue,243}] at (2,0) {}; 
	\end{tikzpicture}$
	then $\hat\rho_{x_i}$ is projection onto $A_i|\psi\>=M|abc\>$ which has the following picture.
	\begin{center}
	\begin{tikzpicture}
	\node at (0,0) {\includegraphics[scale=0.1]{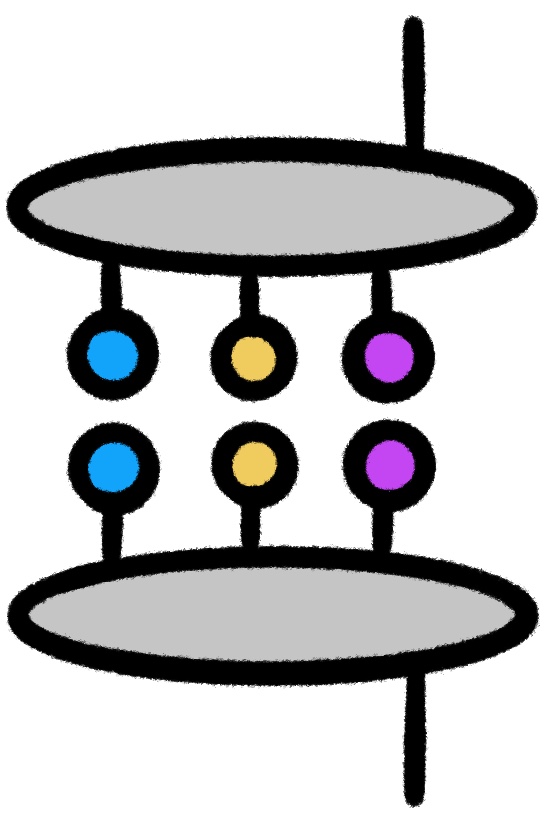}};
	\node at (-1.5,0) {$\rho_x=$};
	\end{tikzpicture}
	\end{center}
	}
As a projection operator, $\hat\rho_{x_i}$ is positive semidefinite and Hermitian, though it does not have unit trace. Should we want a properly normalized density operator, simply divide $\hat\rho_{x_i}$ by its trace, which is the marginal probability of $x_i$:
\[\rho_{x_i}:=\frac{1}{\hat\pi_X(x_i)}\hat\rho_{x_i}\]
To see this latter claim, recall that $A_i|\psi\>=\sum_y\sqrt{\hat\pi(x_i,y)}|y\>$ and so $A_i|\psi\>\<\psi|A_i^\dagger=\sum_{y,y'}\sqrt{\hat\pi(x_i,y)\hat\pi(x_i,y')}|y\>\<y'|$ which has trace equal to $\sum_y\hat\pi(x_i,y)=\hat\pi_X(x_i)$, where $\hat\pi_X\colon X\to \mathbb{R}$ is the marginal distribution obtained from $\hat\pi\colon X\times Y\to\mathbb{R}$. In the above example we obtain the following properly normalized reduced densities associated to each prefix,
	\marginnote[1cm]{
	\begin{align*}
	\hat\pi_X(x_1)&=2/5\\
	\hat\pi_X(x_2)&=1/5\\
	\hat\pi_X(x_3)&=1/5\\
	\hat\pi_X(x_8)&=1/5
	\end{align*}
	}
\begin{equation*}
\begin{split}
\rho_{x_1}&=\rho_{\text{small ripe orange}} = \frac{1}{2}\begin{bmatrix}1&1\\1&1\end{bmatrix}\\[10pt]
\rho_{x_2}&=\rho_{\text{large ripe orange}} = \begin{bmatrix}0&0\\0&1\end{bmatrix}
\end{split}
\qquad\qquad
\begin{split}
\rho_{x_3}&=\rho_{\text{small rotten orange}} =\begin{bmatrix}1&0\\0&0\end{bmatrix}\\[10pt]
\rho_{x_8}&=\rho_{\text{large rotten green}} = \begin{bmatrix}0&0\\0&1\end{bmatrix}
\end{split}
\end{equation*}
while the \textit{unnormalized} densities $\hat\rho_{x_i}$ are the four matrices whose sum is $\rho_Y$, as shown above. To summarize, every prefix $x_i$ in $T$ corresponds to a (normalized or unnormalized) density operator identified with one of the projections that sum to $\rho_Y$. 

Notice, however, that some sequences $x_i$ contain the same words. For instance, $x_1,x_2,$ and $x_3$ are all phrases that contain the word \textit{orange}. Intuitively, the sum $\hat\rho_{x_1}+\hat\rho_{x_2}+\hat\rho_{x_3}$ of all unnormalized densities containing \textit{orange}  should be defined as the unnormalized density $\hat\rho_{\text{orange}}$ associated to \textit{orange} itself, and similarly for $\hat\rho_{\text{green}}.$
\begin{align*}
\rho_Y = \overbrace{{\color{YellowOrange}A_1|\psi\>\<\psi|A_1^\dagger} + {\color{YellowOrange}A_2|\psi\>\<\psi|A_2^\dagger} + {\color{YellowOrange}A_3|\psi\>\<\psi|A_3^\dagger}}^{\hat\rho_{\text{orange}}} 
\quad +\quad 
\overbrace{{\color{Green}A_8|\psi\>\<\psi|A_8^\dagger}}^{\hat\rho_{\text{green}}}
\end{align*}
As tensor diagrams, this equation is
	\marginnote[-.5cm]{You might wonder why the tensor network diagram for $\hat\rho_{\text{green}}$ isn't the following:
	\begin{center}
	\includegraphics[scale=0.08]{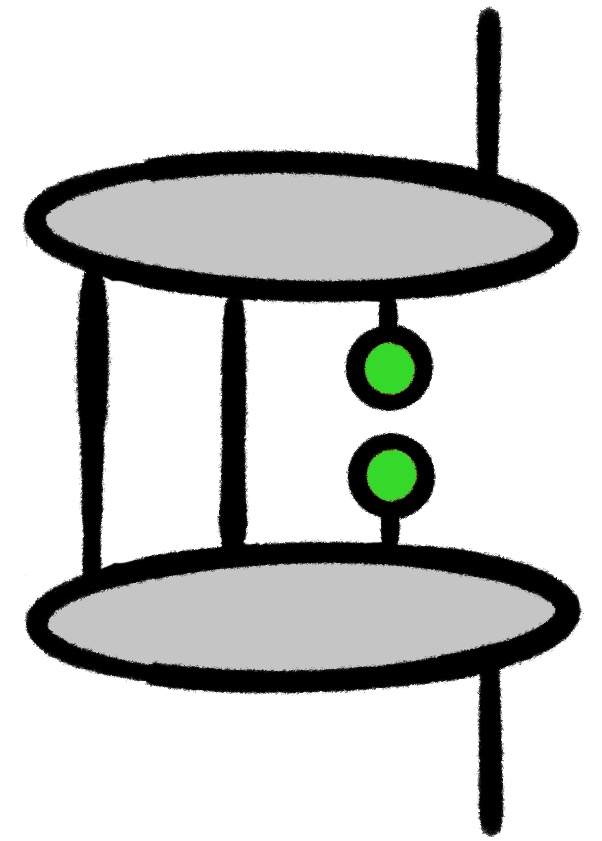}
	\end{center}
	In fact, it \textit{is}. The two edges connecting the gray nodes imply a sum over prefixes $x_i$ that contain the word ``green.'' But there is only \textit{one} such prefix, namely $x_8$, and so $\hat\rho_{\text{green}}$ is comprised of the single term $A_8|\psi\>\<\psi|A_8^\dagger$, which is projection onto the vector $A_8|\psi\>$. That projection operator is the diagram illustrated in the main text.
	}
\begin{center}
\begin{tikzpicture}
\node at (0,0) {\includegraphics[scale=0.1]{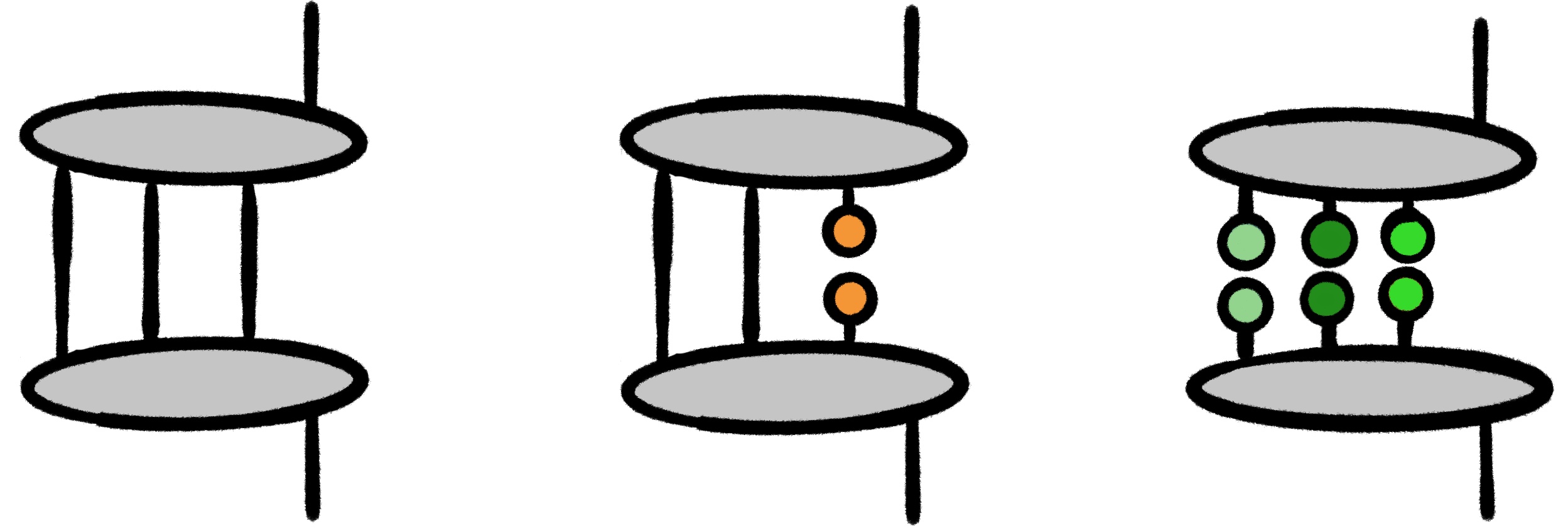}};
\node at (-2,0) {=};
\node at (1.8,0) {+};
\end{tikzpicture}
\end{center}
where
$|\text{orange}\>=
\begin{tikzpicture}[baseline=-5pt] 
	\draw[thick] (0,0)--(0,-.35) {}; 
	\node[tensor,minimum size=1mm,fill={rgb, 255:red,244;green,149;blue,54}] at (0,0) {};
\end{tikzpicture}$
and
$|\text{green}\>=
\begin{tikzpicture}[baseline=-5pt] \draw[thick] (0,0)--(0,-.35) {};
	\node[tensor,minimum size=1mm,fill={rgb, 255:red,55;green,219;blue,43}] at (0,0) {};
\end{tikzpicture}$.
Now for $\hat\rho_{\text{orange}}$ to have unit trace, we must divide by the marginal probability $\hat\pi_C(\text{orange})=4/5$. So let's make this a definition.\sidenote[][.5cm]{Here I am expanding the Cartesian product $X=A\times B\times C$ to recall that our original joint distribution is on the four-fold Cartesian product $\hat\pi\colon A\times B\times C\times Y\to\mathbb{C}$, which defines a marginal distribution $\pi_C\colon C\to\mathbb{R}$. A quick check shows that the trace of $\hat\rho_{\text{orange}}:=\hat\rho_{x_1}+\hat\rho_{x_2}+\hat\rho_{x_3}$ is computed as $\sum_{a,b,y}\hat\pi(a,b,\text{orange},y)$, which is the marginal probability $\pi_C(\text{orange}).$}
\begin{align*}
\rho_{\text{orange}}&:=\frac{1}{\hat\pi_C(\text{orange})}\hat\rho_{\text{orange}}\\[10pt]
&=\frac{1}{\hat\pi_C(\text{orange})}(\hat\rho_{x_1}+\hat\rho_{x_2}+\hat\rho_{x_3})\\[10pt]
&=\frac{1}{4}\begin{bmatrix}
2 & 1\\ 1 & 2
\end{bmatrix}
\end{align*}
In fact, this suggests a nicer way to write the result:
\begin{align*}
\rho_{\text{orange}}&=\frac{1}{\pi_C(\text{orange})}\sum_{i=1,2,3}\hat\rho_{x_i}\\[10pt]
&=\sum_{i=1,2,3}\frac{\pi_X(x_i)}{\pi_C(\text{orange})}\;\frac{\hat\rho_{x_i}}{\pi_X(x_i)}\\[10pt]
&= \sum_{i=1,2,3}\pi(x_i\mid\text{orange})\rho_{x_i}.
\end{align*}
Consequently, for any $i\in\{1,2,3\}$ we have the inequality
\[\rho_\text{orange}\geq \pi(x_i|\text{orange})\rho_{x_i}\]
since their difference is a sum of positive semidefinite operators weighted by nonnegative numbers. Setting $i=1$ we have now recovered Inequality (\ref{eq:entailment}), as was the goal:
\[\rho_\text{orange}\geq \pi(\text{small ripe orange}\mid\text{orange})\rho_{\text{small ripe orange}}.\]
In a similar vein, we can easily sum over those tuples $x_i=(a,b,c)\in A\times B\times C$ with $b=$\textit{ripe} as well as $c=$\textit{orange} to obtain the unnormalized density associated to the expression \textit{ripe orange},
\[
\hat\rho_{\text{ripe orange}}:= A_1|\psi\>\<\psi|A_1^\dagger + A_2|\psi\>\<\psi|A_2^\dagger = \frac{1}{5}\begin{bmatrix}1&1\\1&2\end{bmatrix}.
\] 
The difference $\hat\rho_{\text{orange}}-\hat\rho_{\text{ripe orange}}=\tfrac{1}{5}\begin{bsmallmatrix}0&0\\0&1\end{bsmallmatrix}\geq 0$ is a positive operator and therefore we have $\hat\rho_{\text{orange}}\geq \hat\rho_{\text{ripe orange}}\geq \hat\rho_{\text{small ripe orange}},$ which pairs well with the intuition that \textit{ripe orange} is a more specific instance of something that is \textit{orange} but is more general than something that is \textit{small ripe orange}. 
\begin{center}
\begin{tikzpicture}
\node at (0,0) {\includegraphics[scale=0.1]{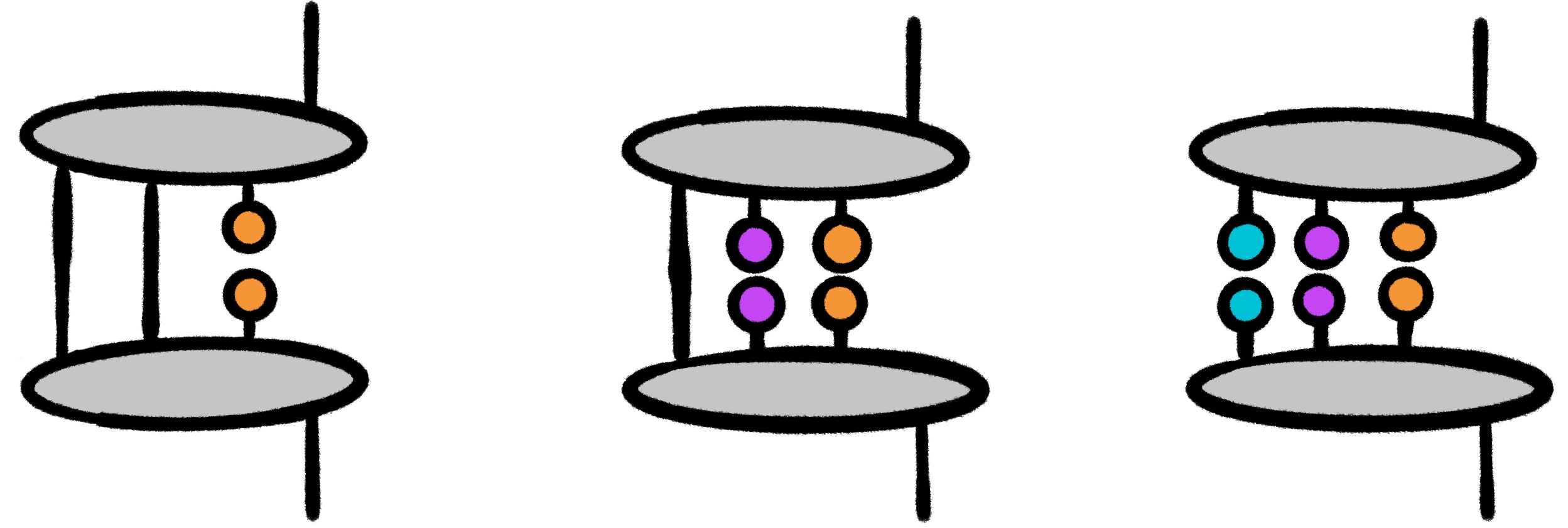}};
\node at (-1.8,0) {$\geq$};
\node at (1.7,0) {$\geq$};
\node at (-3.3,-2) {$\hat\rho_{\text{orange}}$};
\node at (0,-2) {$\hat\rho_{\text{ripe orange}}$};
\node at (3.2,-2) {$\hat\rho_{\text{small ripe orange}}$};
\end{tikzpicture}
\end{center}

\newthought{To summarize}, the reduced density associated to the word \textit{orange} decomposes as a weighted sum of densities, one associated to each expression that contains the word orange. Crucially, the weights are naturally found to be conditional probabilities and  may be thought of measuring the strength of entailment.

\subsection{The General Idea}\label{ssec:generalentailment}
%\marginnote{\tai{Probably don't need ``empirical'' in this section, or the previous. Remove all hats?}}
This extended example illustrates the general idea, which we summarize briefly here. For any finite set $A$, consider a subset of prefix-suffix pairs $T\subseteq X\times Y$ with $X=A^{N-1}=A\times \cdots\times A$ and $Y=A$. This defines an empirical probability distribution $\hat\pi\colon X\times Y\to\mathbb{R}$ and therefore a state $|\psi\>=\frac{1}{\sqrt{|T|}}\sum_{(x,y)\in T}|x\>\otimes|y\>\in \mathbb{C}^X\otimes\mathbb{C}^Y$. Choose an ordering on the set of prefixes, say $X=\{x_1,\ldots,x_n\}$. Then each prefix $x_i$ corresponds to an \textit{unnormalized reduced density operator}
\[\hat\rho_{x_i}:=M|x_i\>\<x_i|M^\dagger\]
where $M\colon \mathbb{C}^X\to\mathbb{C}^Y$ is the linear map associated to $|\psi\>$. (Equivalently, $\hat\rho_{x_i}:=A_i|\psi\>\<\psi|A_i^\dagger$ where $A_i$ is as in Proposition \ref{prop:povm}.) To obtain the properly normalized reduced density associated to $x_i$, simply divide by its marginal probability,
\[\rho_{x_i}:=\frac{1}{\hat\pi_X(x_i)}\hat\rho_{x_i}.\]
By Proposition \ref{prop:povm} the reduced density operator $\rho_Y$ on the suffix subsystem decomposes as a sum of these unnormalized projections,
\[\rho_Y=\sum_i\hat\rho_{x_i}.\]
By summing over a \textit{subset} of the indices $i$, we can obtain a reduced density associated to various subsequences in $X$. For instance, let $X(a)$ denote the set of all sequences $x_i$ that contain $a\in A$ as the \textit{last} entry. Then we define the \textit{unnormalized reduced density operator associated to $a$} to be
\[\hat\rho_a=\sum_{x\in X(a)}\hat\rho_x\]
This is not a density as its trace is the marginal probability $\hat\pi(a)<1$, so we normalize by dividing by the marginal probability of $a$ and obtain the \textit{reduced density operator associated to $a$},
	\marginnote{To emphasize a point mentioned earlier, any density such as $\rho_a$ can be written as a weighted sum of projections onto its eigenvectors, where the weights are eigenvalues. The decomposition for $\rho_a$ shown here is similar yet different. It is still a weighted sum of projections, but now the projections are onto vectors associated with each expression containing the word ``$a$,'' and the weights are conditional probabilities. For this reason, we might think of $\rho_a=\sum_{x\in X(a)}\pi(x|a)\rho_x$ as a probabilistically enriched spectral decomposition of $\rho_a$.}
\begin{align*}
\rho_a&:=\frac{1}{\hat\pi(a)}\hat\rho_a\\
&=\sum_{x\in X(a)}\frac{1}{\hat\pi(a)} \hat\rho_x\\
&=\sum_{x\in X(a)}\frac{\hat\pi(x)}{\hat\pi(a)}\frac{1}{\hat\pi(x)}\hat\rho_x\\
&=\sum_{x\in X(a)}\pi(x\mid a)\rho_x.
\end{align*}
As the last line shows, $\rho_a\geq \pi(x|a)\rho_x$ for any prefix $x$. We may think of this as witnessing the fact that $x$ is an expression containing $a$, thereby being a \textit{refinement} of $a$. There's nothing special about $a$ being the \textit{last} term in $x=(a_1,\ldots,a)$, nor is there anything special about $a$ being a singleton. Using the ideas here, one could associate a density $\rho_s$ to \textit{any} subsequence $s$ of any length in $X$, and one can obtain a similar ``unraveling'' of $\rho_s$ into a weighted sum of all expressions that contain $s$. Some care must be taken in \textit{where} $s$ occurs, but we won't provide the details here.
\begin{center}
\includegraphics[scale=0.17]{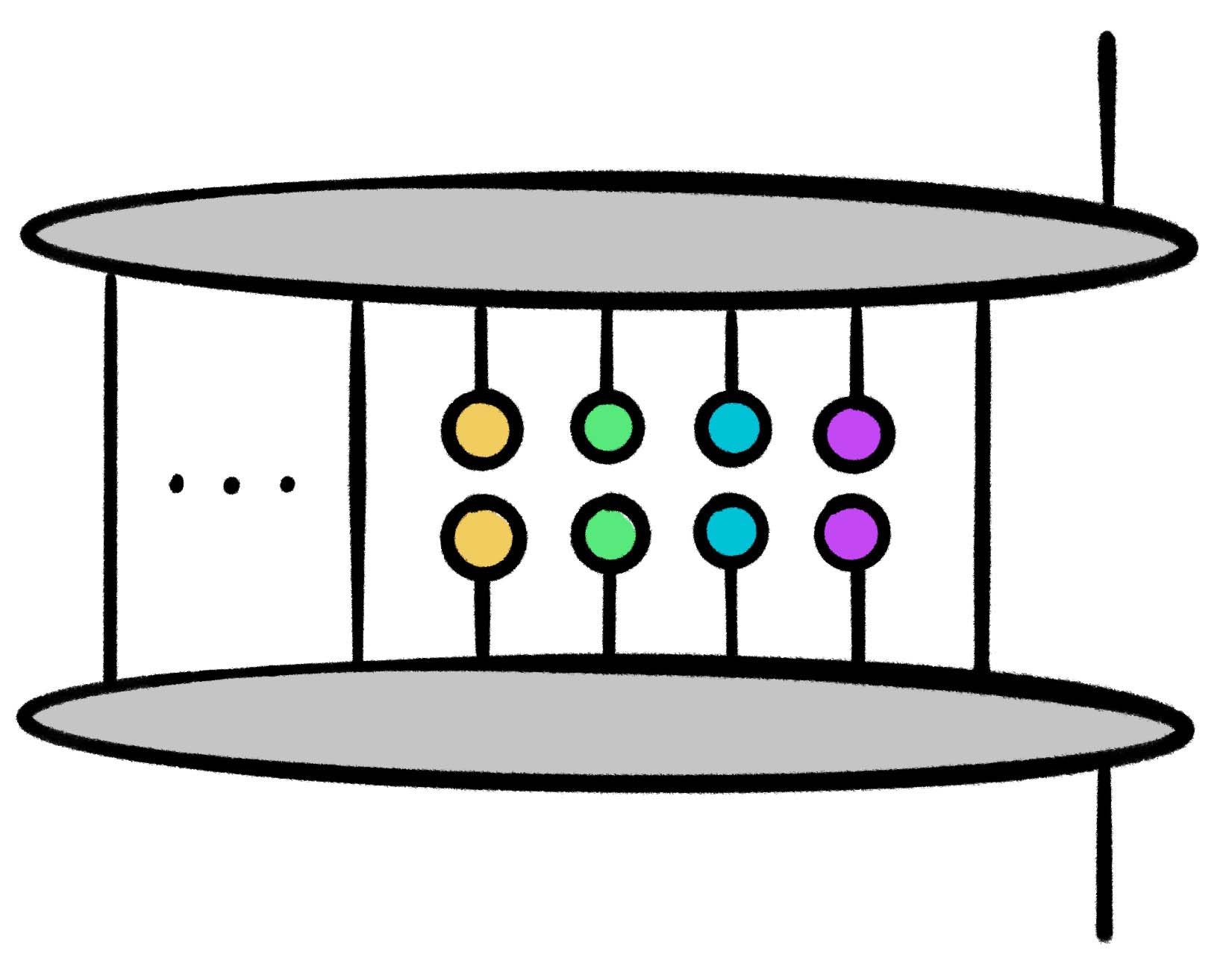}
\end{center}

\clearpage

%chapter 4
\chapter{An Application and an Experiment}\label{ch:application}

\epigraph{Language is a one-dimensional painting.}{\textit{Yiannis Vlassopoulos \cite{vlassopoulos2019}}}

\newthought{This chapter highlights} an application of the mathematics in Chapters \ref{ch:preliminaries} and \ref{ch:probability}. As we've seen, any probability distribution on a finite set can be represented as a rank $1$ density operator---a pure quantum state. We've described this idea in detail, and so we now anticipate the question, ``Why bother?'' That ism why go through the trouble of passing from basic probability to (possibly ultra large-dimensional) vector spaces, densities, and reduced densities? The answer is hidden in plain sight in Figure \ref{fig:reconstruct}.\marginnote{\includegraphics[width=\linewidth]{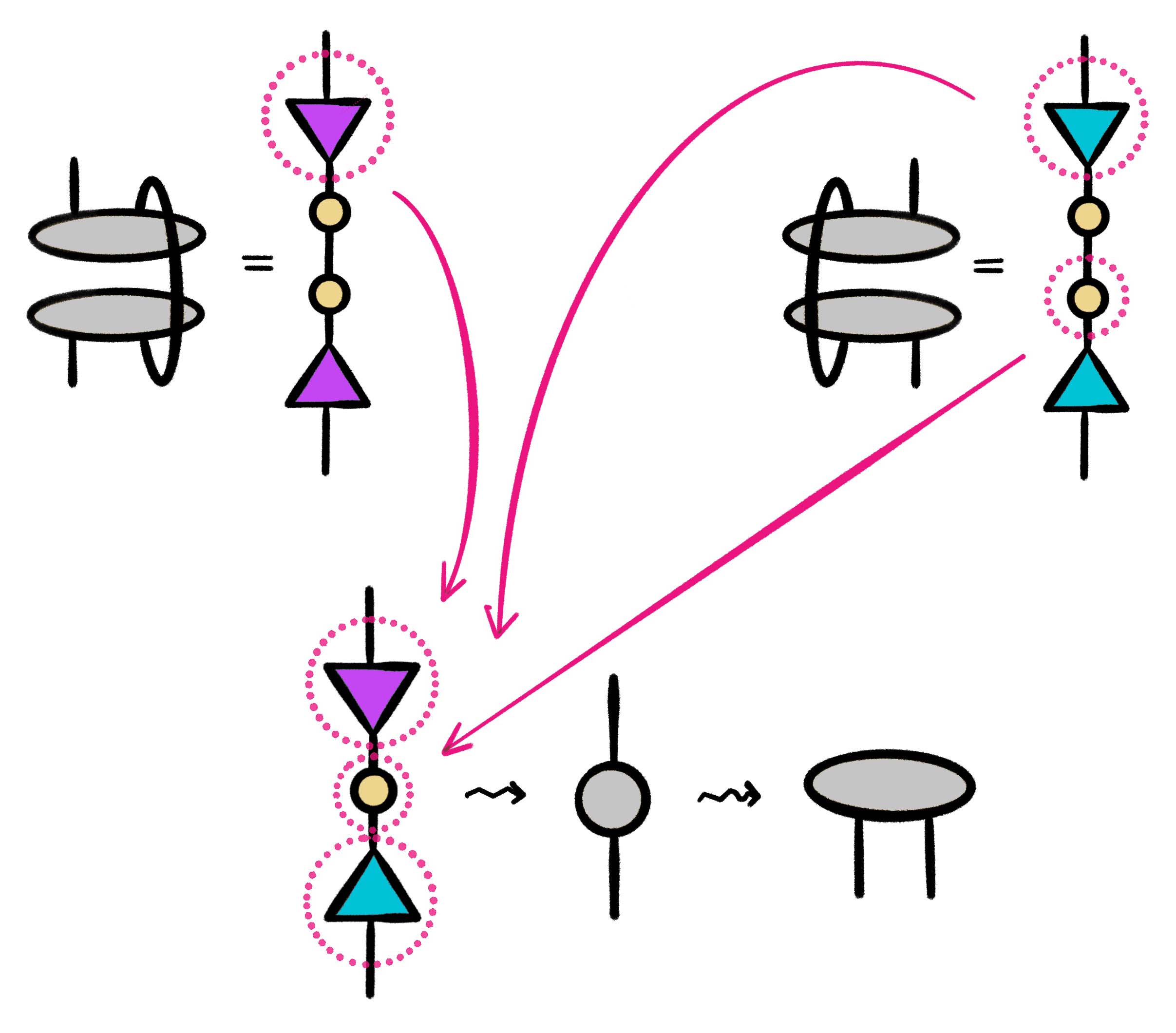}} That is, understanding how to piece together the eigenvectors of reduced densities to reconstruct a quantum state suggests a \textit{new} algorithm for reconstructing a classical joint probability distribution---namely by putting together smaller pieces of it in a highly principled way that knows something about how those pieces interact.

We share such a reconstruction algorithm below and illustrate it by performing an experiment on a well-known dataset. But first, to help put this application in a larger context, let's consider the following question. Suppose we have a set of data drawn from an unknown probability distribution $\pi$. How might we use these samples to define a \textit{new} probability distribution that estimates $\pi$, with the goal of generating new data from it? Here are a couple of examples to illustrate.
\begin{description}
	\item \textbf{Bitstrings}. Suppose our data points are bitstrings---sequences of $0$s and $1$s---all of length 16, for example. There are $2^{16}=65,536$ such strings, but suppose we only have a small fraction of them that were drawn from some distribution $\pi$. We may or may not know what $\pi$ is, but either way we'd like to use the statistics of the samples we \textit{do} have in order to estimate $\pi$. With the estimation in hand, we can then generate new bitstrings by sampling from the estimated distribution.

	\item \textbf{Natural language.} Suppose our data points are meaningful sentences; that is, sequences of words from the English alphabet, say. There are many of possibilities, of course, but suppose we only have a small fraction of samples. There is some probability distribution on natural language---the probability of the phrase ``orange fruit'' is higher than the probability of ``orange idea''---though we don't know it, exactly. But we might like to use the data we \textit{do} have to estimate the probability distribution on language in order to generate new, plausible text. 
\end{description}
In both of these examples, the data are sequential, meaning they are  sequences of symbols from some alphabet. The second example is a case of statistical language modeling. Both examples are instances of \textit{generative modeling}---the task of modeling probabilities with the goal of generating data. This falls under the wider genre of \textit{unsupervised machine learning.}

\newthought{In this chapter} we will illustrate how the extra information stored in the eigenvectors of reduced densities of a pure entangled state form the building blocks of a generative model. Here's the big-picture idea. The passage from classical to quantum probability outlined in Chapter \ref{ch:probability} gives a straightforward recipe for modeling any finite dataset $T$ as a quantum state $|\psi\>$, as long as the elements of $T$ are sequences; that is, as long as $T$ is a subset of a Cartesian product of finite sets.  Further supposing the elements $s\in T$ are samples generated from a (possibly unknown) probability distribution $\pi$, the dataset $T$ defines an empirical probability distribution $\hat\pi$. Think of $\hat\pi$ as a rough approximation to the \textit{true} distribution $\pi$.  As we know from Chapter \ref{ch:probability}, the empirical distribution $\hat\pi$ can be modeled by a unit vector $|\psi\>$ so that the probabilities are the coefficients, $\hat\pi(s)=|\<s|\psi\>|^2$. The machine learning problem is to find an algorithm that takes in $|\psi\>$ and outputs a \textit{new} unit vector, call it $|\psi_{\text{MPS}}\>$, so that the resulting probabilities $|\<s|\psi_{\text{MPS}}\>|^2$ are close to the true probabilities $\pi(s),$ for which the $\hat\pi(s)$ were merely an approximation. This new vector $|\psi_{\text{MPS}}\>$ is the desired generative model.
	\marginnote[-6cm]{
	\begin{center}
	\begin{tikzpicture}
	\node at (0,0) {\includegraphics[width=\linewidth]{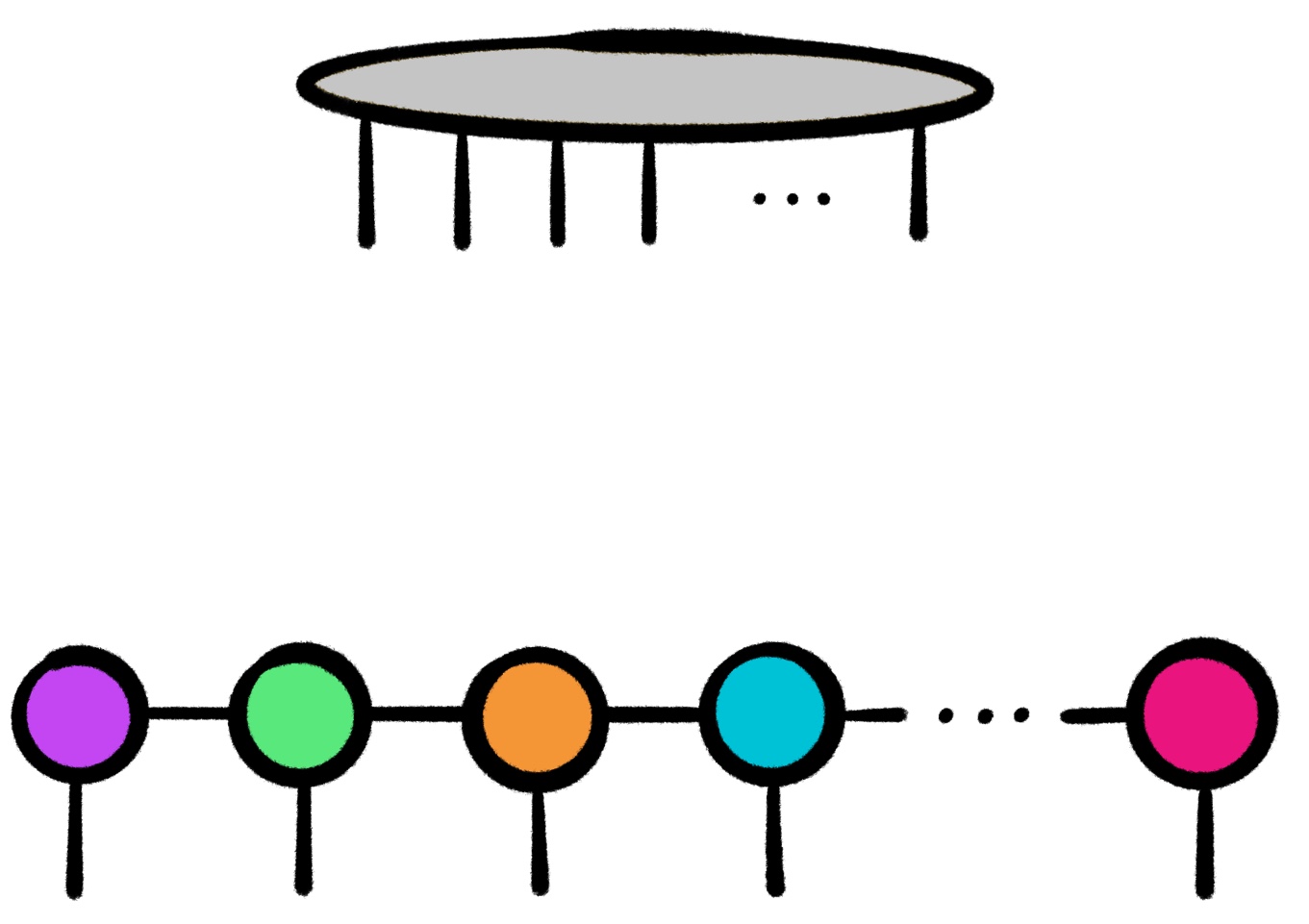}};
	\draw[line join=round,
		decorate, decoration={
		zigzag,
		segment length=10,
		amplitude=2,post=lineto,
		post length=2pt
		}, ->, >=stealth]
		(0,.5) -- (0,-.5); 
	\node at (0,2) {start with $|\psi\>$};
	\node at (0,-2) {finish with $|\psi_{\text{MPS}}\>$};
	\end{tikzpicture}
	\end{center}
	}

This is how the mathematics of Chapter \ref{ch:probability} arises in an applied setting. In this chapter, we'll give the bird's-eye view of the ideas. The first thing to know is that \textit{there exists} an algorithm that successfully accomplishes the task just described. It is motivated by and built from the mathematics of Chapters \ref{ch:preliminaries}--\ref{ch:probability}. More formally, it is inspired by the \textit{density matrix renormalization group} procedure---a procedure used in physics to construct the ground state of a Hamiltonian operator \cite{schollwock,white1992}. Here, we use it to reconstruct a joint probability distribution modeled as a quantum state. The algorithm we share is deterministic and requires only simple tools from linear algebra. As a result, it's possible to predict just how \textit{well} the algorithm will work given only the number of samples in the training set $T$. What's more, the vector $|\psi_{\text{MPS}}\>$ turns out to be an example of a tensor network known as an MPS (see Section \ref{ssec:TNs}).
		\marginnote[-3cm]{Although we are using the tools and language of quantum physics, it is \textit{not} necessary that the set or probability distribution being modeled have any interpretation of being quantum. But see Sidenote 2 in Section \ref{ssec:bird} for a remark about \textit{which} probability distributions the algorithm is intended.}
We wish to keep the focus on the theory-side-of-things, so rather than giving a detailed description of the algorithm, we will only share the main highlights. We'll then close with a few words about an experiment that tests it. As we'll soon see, the procedure to build $|\psi_{\text{MPS}}\>$ is like a machine that reaps what we've sown in $|\psi\>$: it sweeps across the training samples, builds reduced density operators from them, and glues their eigenvectors together to form a sequence of tensors that become $|\psi_{\text{MPS}}\>$. Readers are referred to our joint work with Stoudenmire and Terilla in \cite{bradley2019modeling} for any details omitted here.

%I gratefully attribute the algorithm to Stoudenmire and Terilla and refer readers to our joint work in \cite{bradley2019modeling} for any details omitted here.

\section{Building a Generative Model}\label{sec:experiment}
To build the model $|\psi_{\text{MPS}}\>$ we first need a training set $T$ and an initial probability distribution $\hat\pi$. We choose $T$ to be a set of even-parity bitstrings of length $N>2$, and will let $\hat\pi$ be the probability distribution uniformly concentrated on that set. Let's describe this more carefully.

\subsection{Initial Inputs}
		\marginnote{\smallcaps{Behind the Scenes.} Think of the set $A$ as an alphabet for some language, and think of $A^N$ as the set of all possible ``phrases'' (of length $N$) in that language. The word \textit{all} is important. Not every phrase will be a valid expression in the language. Indeed, not every concatenation of words in English produces a meaningful sentence. Here we are defining \textit{meaningfulness} to be based on probability. A phrase is meaningless if, essentially, the probability of it appearing in the language is zero, and it's meaningful otherwise. In this bitstring example, we are therefore declaring even-parity bitstrings to be the only meaningful expressions in this simple langauge of $0$s and $1$s.}
Begin with the finite set $A=\{0,1\}$ and consider the $N$-fold Cartesian product $A^N=A\times \cdots\times A$, which coincides with the set of all bitstrings of length $N$. Later in an experiment, we will ultimately take $N=16$, but let's be flexible with this number. In a few examples we'll suppose $N=5$ for simplicity. Now observe that bitstrings of length $N$ always come in two ``flavors'' based on their parity---even or odd. A bitstring $s\in A^N$ is called \emph{even} if it has an even number of $1$s, and it is called \emph{odd} if it has an odd number of $1$s. Let $E_N:=\{s\in A^N\mid s\text{ is even}\}$ be the set of even bitstrings; let $O_N:=\{s\in A^N\mid s\text{ is odd}\}$ be the set of odd bitstrings. So $00110\in E_5$ while $00111\in O_5$. Observe that $A^N=E_N\cup O_N$ and that $|E_N|=|O_N|=2^{N-1}$. Now suppose $T\subseteq E_N$ is a subset of $N_T$ even-parity bitstrings, for some number $N_T\leq 2^{N-1}$.  This defines an empirical probability distribution $\hat\pi\colon A^N\to\mathbb{R}$ by
\[
\hat\pi(s) =
\begin{cases}
\frac{1}{N_T} &\text{if $s\in T$},\\
0 &\text{otherwise}.
\end{cases}
\]
Think of $\hat\pi$ as a rough approximation of the ``true'' probability distribution $\pi\colon A^N\to \mathbb{R}$ uniformly concentrated on the set of all even bitstrings, where $\pi(s)=1/2^{N-1}$ if $s$ is even and $\pi(s)=0$ otherwise. In a real-world application, we would likely not know or have access to the target distribution $\pi$ (suppose the elements of $A$ were English words scraped from the Internet, for instance), and so the goal is to find a model for it. Of course we \textit{do} know $\pi$ exactly in this toy example, but let us suppose that we \textit{don't} for the sake of illustrating the idea.

Then the first step in finding a model for $\pi$ is the passage from sets to vector spaces outlined in Chapter \ref{ch:probability}. In particular, we start by representing the empirical distribution $\hat\pi$ as a pure quantum state. To do this, begin by letting $V=\mathbb{C}^A\cong\mathbb{C}^2$ be the two-dimensional vector space with basis vectors $|0\>=\begin{bsmallmatrix}1\\0\end{bsmallmatrix}$ and $|1\>=\begin{bsmallmatrix}0\\1\end{bsmallmatrix}$ so that every sequence $s\in T$ corresponds to a vector in $|s\>\in V^{\otimes N}\cong \mathbb{C}^2\otimes\cdots\otimes\mathbb{C}^2$. For example the sequence $00101\in E_5$ maps to the basis vector $|0\>\otimes|0\>\otimes|1\>\otimes|0\>\otimes|1\>\in V^{\otimes 5}$. The probability distribution $\hat\pi$ therefore defines a unit vector in $V^{\otimes N}$ as the sum of all samples in $T$ weighted by the square roots of their probabilities,
\begin{equation}\label{eq:psi_exp}
|\psi\>=\frac{1}{\sqrt N_T}\sum_{s\in T}|s\>.
\end{equation}
The tensor network diagram for $|\psi\>$ is the following.
\begin{center}
\begin{tikzpicture}
\node at (0,0) {\includegraphics[scale=0.1]{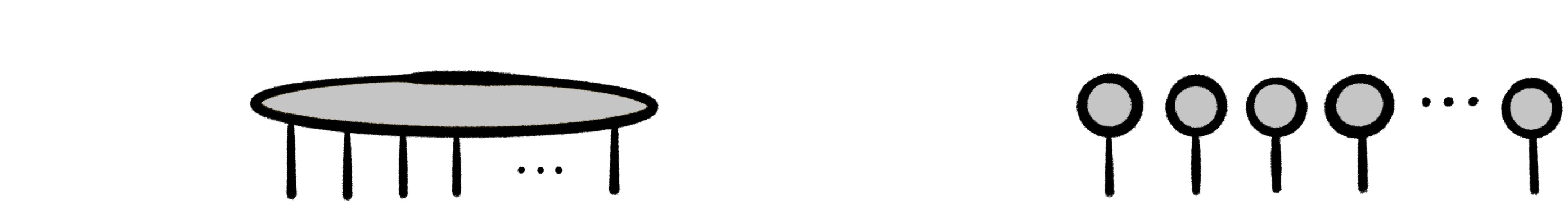}};
\node at (-4,0) {$|\psi\>\;=$};
\node at (-.1,0) {$=$};
\node at (1,0) {$ {\mathlarger{\frac{1}{\sqrt N_T}\sum }}$};
\end{tikzpicture}
\end{center}

\noindent This vector serves as input for the algorithm that outputs a new vector $|\psi_{\text{MPS}}\>\in V^{\otimes N}$. As explained in the introduction, $|\psi_{\text{MPS}}\>$ will have the property that its coefficients define a probability distribution on $A^N$ which is closer to the distribution for which $\hat\pi$ was merely an approximation; that is, $|\psi_{\text{MPS}}\>$ will satisfy the property that $|\<s|\psi_{MPS}\>|^2\approx \pi(s)=1/2^{N-1}$ for all even bitstrings $s$. We also mentioned that $|\psi_{\text{MPS}}\>$ will turn out to be an MPS, which is a particular tensor network factorization of $|\psi\>$. Think of this as the higher-dimensional analogue to approximating a matrix by a truncated SVD. We won't need the details, but interested readers may see the references in Section \ref{ssec:TNs} for more on MPS. In any case, the procedure to build $|\psi_{\text{MPS}}\>$ consists of $N$ steps where the task in step $k$ is to construct the $k$th tensor of $|\psi_{\text{MPS}}\>$. Conveniently, we already wrote the instruction manual in Chapter \ref{ch:probability}. In step $k$ we will define a reduced density operator whose eigenvectors assemble into the $k$th tensor of $|\psi_{\text{MPS}}\>$. Even better, the procedure is an inductive one. Steps $3,\ldots,N-1$ are all nearly identical to step $2$. 
\begin{center}
\begin{figure}[h!]
\begin{tikzpicture}
\node at (0,0) {\includegraphics[scale=0.1]{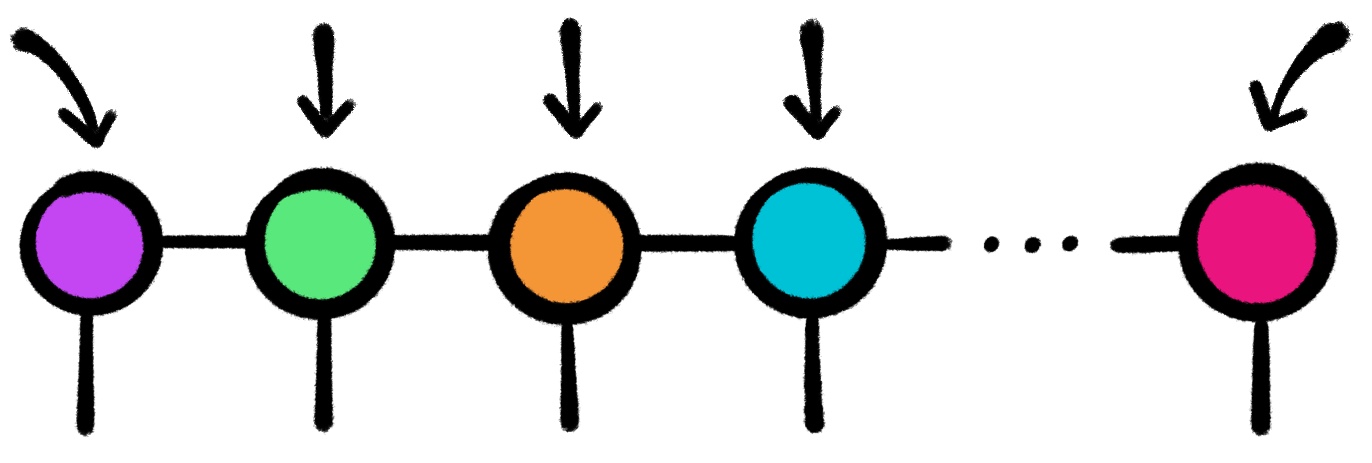}};
\node at(-2.5,1) {\footnotesize step $1$};
\node at(-1.4,1) {\footnotesize step $2$};
\node at(-.3,1) {\footnotesize step $3$};
\node at(.7,1) {\footnotesize step $4$};
\node at(2.8,1) {\footnotesize step $N$};
\node at (-3.2,0) {$|\psi_{\text{MPS}}\>=$};
\end{tikzpicture}
\caption{Building $|\psi_{\text{MPS}}\>$ by constructing each tensor individually}
\label{fig:mps}
\end{figure}
\end{center}

\noindent To streamline the discussion, we will only explain steps $1$ and $2$, being sure to share examples and intuition along the way. This abridged approach will equip interested readers to read through the original work in \cite{bradley2019modeling}, which contains all of the details that we omit. So let's now proceed to describe the construction of the first two tensors of $|\psi_{\text{MPS}}\>$ in Figure \ref{fig:mps}.

\subsection{A Bird's-Eye View}\label{ssec:bird}
The first step is mostly a formality. Define the first tensor comprising $|\psi_{\text{MPS}}\>$ to be the identity operator $\id_V$ on $V=\mathbb{C}^2$. As an array, it is simply the $2\times 2$ identity matrix, \begin{tikzpicture}[baseline=-.2pt]  \draw[thick] (-.5,0)--(.5,0) {}; \node[tensor,fill={rgb, 255:red,195;green,70;blue,242}] at (0,0) {};\end{tikzpicture} $=\begin{bsmallmatrix}1&0\\0&1\end{bsmallmatrix}$. The second step is much more interesting.

\bigskip
\noindent{\Large \itshape Step 2: Apply the Recipe in Chapter \ref{ch:probability}}.
\bigskip

\noindent  Let's now build the second tensor of $|\psi_{\text{MPS}}\>$ in Figure \ref{fig:mps}. Doing so requires a few steps. First, we start by ``modifying'' $|\psi\>$ by composing the first edge with the identity operator. Call this vector $|\psi_2\>$. 
\begin{center}
\begin{tikzpicture}
\node at (0,0) {\includegraphics[scale=.15]{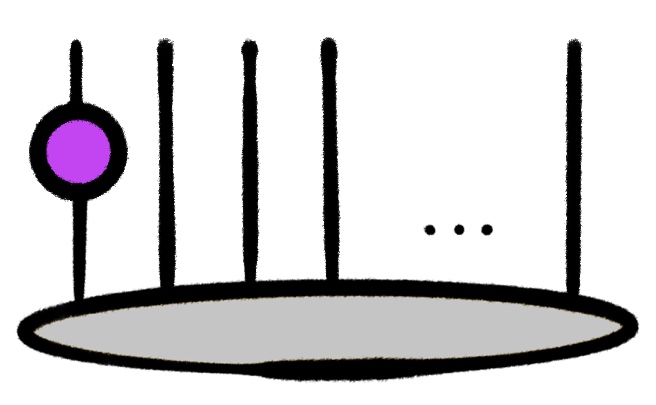}};
\node at (-2.5,-.5) {$|\psi_2\>=$};
\end{tikzpicture}
\end{center}
Here's how to understand $|\psi_2\>$ more carefully. Start with $|\psi\>\in V^{\otimes N}$ and reshape it into a linear map $M\colon V\to V^{\otimes N-1}$. Then precompose this map with the identity operator $M\id_V\colon V\to  V^{\otimes N-1}$ and reshape it back into a vector in $V^{\otimes N}.$ This is the vector we're calling $|\psi_2\>$. See Figure \ref{fig:step1}.
	\begin{center}
	\begin{figure}[h!]
	\begin{tikzpicture}
	\node at (0,0) {\includegraphics[scale=0.1]{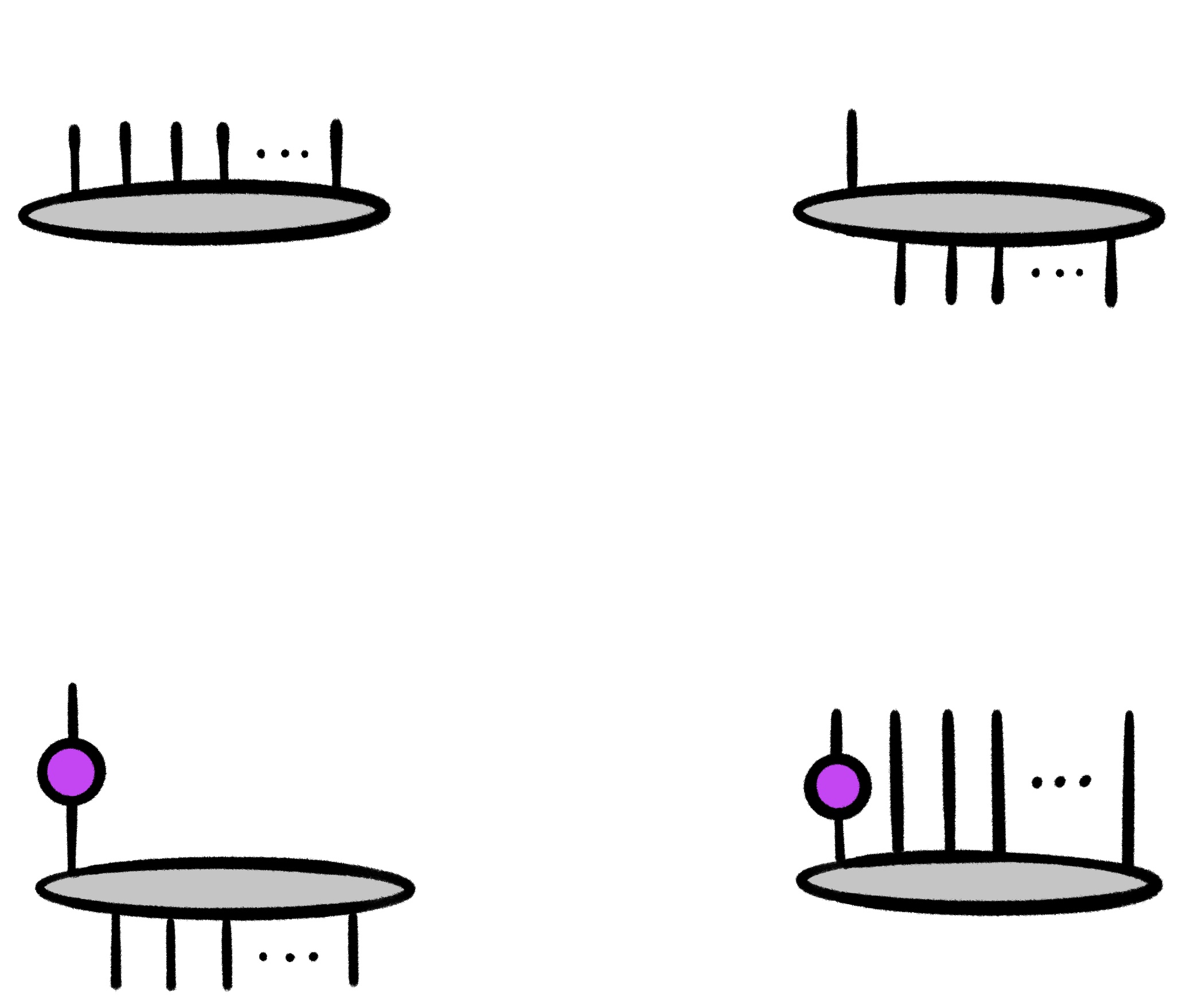}};
	\node at (-4.5,2) {$|\psi\>=$};
	\node at (4.5,2) {$=M$};
	\node at (-4.5,-2) {$M\id_V=$};
	\node at (4.5,-2) {$=|\psi_2\>$};
	\draw[thick, line join=round,
			decorate, decoration={
			zigzag,
			segment length=10,
			amplitude=2,post=lineto,
			post length=2pt
			}, ->, >=stealth]
			(-.8,2) -- (.8,2); 
	\draw[thick, line join=round,
			decorate, decoration={
			zigzag,
			segment length=10,
			amplitude=2,post=lineto,
			post length=2pt
			}, ->, >=stealth]
			(-.8,-2.2) -- (.8,-2.2); 
	\draw[thick, line join=round,
			decorate, decoration={
			zigzag,
			segment length=10,
			amplitude=2,post=lineto,
			post length=2pt
			}, ->, >=stealth]
			(1,.5) -- (-1,-1); 
	\end{tikzpicture}
	\caption{Obtaining the vector $|\psi_2\>$ from $|\psi\>$}
	\label{fig:step1}
	\end{figure}
	\end{center}
Of course $|\psi_2\>=|\psi\>$ since composition with the identity changes nothing, but we have belabored this point because another version of it will appear soon. In any case, we are now in familiar territory. Consider the orthogonal projection operator $|\psi_2\>\<\psi_2|\colon V^{\otimes N}\to V^{\otimes N}$ and apply the partial trace to the last $N-2$ factors.\sidenote[][-8cm]{\smallcaps{Behind the Scenes.} The reason we choose to trace out the last $N-2$ factors, as opposed to some other number, is due to an important practical point: We're dealing with high-dimensional linear algebra! The state $|\psi\>$ lives in a space that grows exponentially in $N$ and orthogonal projection onto it is an operator on this space. Fully realizing it and trying to find a direct approximation won't be feasible. This motivates the use of \textit{reduced} densities. Indeed, it's more manageable to work with operators on smaller subsystems, which can then give us knowledge of the whole. To do this, we are exploiting the fact that our data are \textit{sequences}, which naturally gives us a way to subdivide the system into smaller ones. }
%Indeed, every bitstring $s$ of length $N$ can be ``sliced'' to obtain a prefix-suffix pair $s=(x,y)$, where $x$ is a bitstring of length $k$ and $y$ is a bitstring of length $N-k$. The set of all prefixes forms a \textit{prefix} subsystem $A=V^{\otimes k}$, and the set of all suffixes forms a \textit{suffix} subsystem $B=V^{\otimes N-k}$. This provides a tensor decomposition $V^{\otimes N}=A\otimes B$, and the algorithm described here is concerned with reduced densities on the prefix system $\rho_A.$ To ensure that all is manageable, we`ll want to keep the dimension of the image of $\rho_A$ small. This means we won't use $\rho_A$ \textit{exactly}. Instead we'll use a low-rank approximation of it. See Sidenote 2 for more on this.} 
This gives a reduced density operator $\rho_2:=\tr_{V^{\otimes N-2}}|\psi_2\>\<\psi_2|\colon V\otimes V\to V\otimes V$.
\begin{center}
\begin{tikzpicture}
\node at (0,0) {\includegraphics[scale=.15]{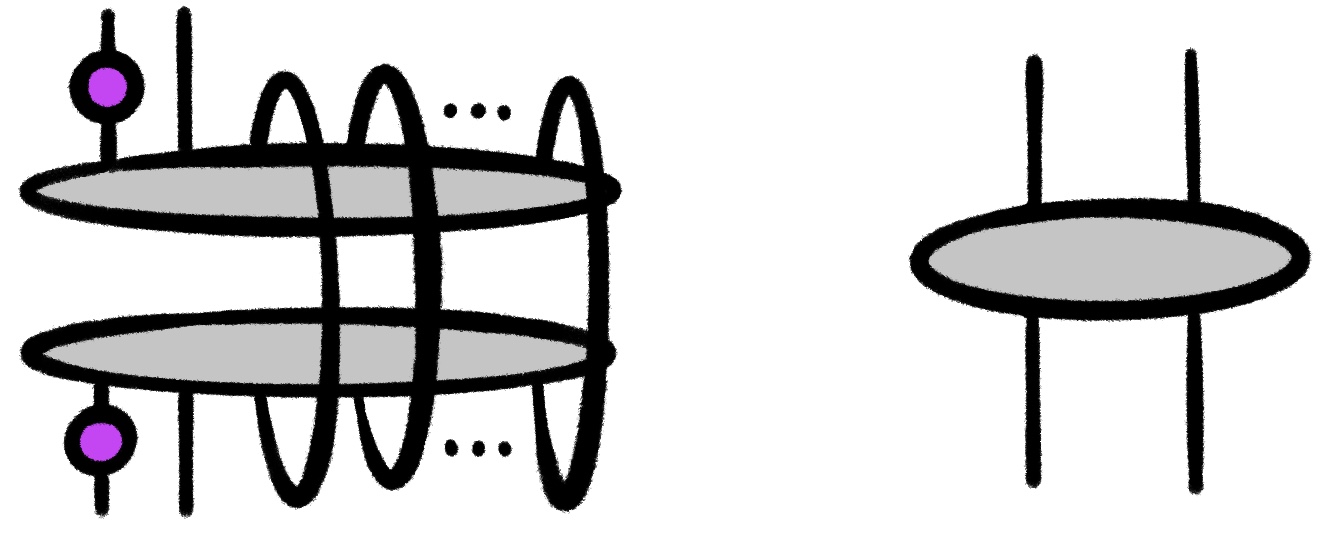}};
\node at (-4,0) {$\rho_2\;=$};
\node at (.6,0) {$=$};
\end{tikzpicture}
\end{center}
Since $\rho_2$ is a density, it has a spectral decomposition $\rho_2=UDU^\dagger$. The matrix $D$ is $4\times 4$ diagonal containing the eigenvalues of $\rho_2$, and $U$ is a $4\times 4$ unitary operator with the four eigenvectors of $\rho_2$ as its columns. In general $\rho_2$ will be full rank, but for reasons explained in Sidenotes 1 and 2 let us \textit{drop} the eigenvectors associated to the two smallest eigenvalues.\sidenote[][-8cm]{Resuming the conversation in Sidenote 1, we are forcing our reduced density to operate on a subsystem that is two-dimensional. (Example \ref{ex:perfection} will shed light on why the number \textit{two} was chosen.) To borrow language from Section \ref{ssec:entanglement}, the algorithm assumes $|\psi\>$ has low entanglement properties. To borrow language from machine learning, this is our \textit{model hypothesis.}

The reason we keep the rank \textit{low} is due to a hypothesis that is more philosophical than mathematical. It is the idea that data systems that arise in nature are not random but rather have underlying structure. For instance, natural language is not comprised of random sequences of characters---words have meaning! So the algorithm being described gives a way to summarize information in a structured dataset. In particular, the eigenvalues of $\rho_2$ stratify the importance of information in the subsystem on which it operators. By choosing to keep only the largest eigenvalues, we have control over how fine-tuned our summaries will be. For instance, dropping the smaller eigenvalues is akin to getting rid of sampling errors and thereby extracting the inherent meaning and structure beneath. As we'll see in Section \ref{sec:theexperiment}, this allows one to learn interesting probability distributions extremely well from a remarkably small amount of samples.}
% The algorithm described here builds a model for probability distributions $\pi$ whose corresponding quantum state $|\psi\>\<\psi|$ has \textit{low rank} reduce densities $\rho_A$ and $\rho_B$. To borrow language from Section \ref{ssec:entanglement}, the algorithm assumes $|\psi\>$ has low entanglement properties. To borrow language from machine learning, this is our \textit{model hypothesis.}
Use the same letter $U$ to denote the resulting  $4\times 2$ matrix. If $W$ is the span of these top two eigenvectors, then we have a linear isometry $U\colon W\to V\otimes V$ whose adjoint is $U^\dagger \colon V\otimes V\to W$. Note that the $2\times 2$ diagonal containing the two largest eigenvalues of $\rho_2$---let's also call it $D$---is an isomorphism of $W$.
\begin{center}
\begin{tikzpicture}
\node at (0,0) {\includegraphics[scale=0.15]{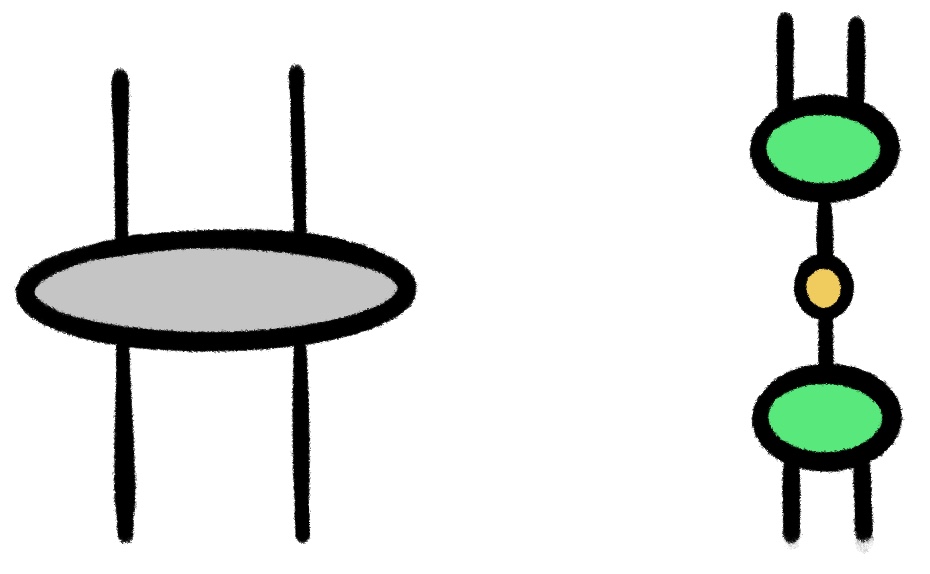}};
\node at (-2.8,0) {$\rho_2=$};
\node at (.5,0) {=};
\node at (3,1) {$U^\dagger$};
\node at (3,0) {$D$};
\node at (3,-1) {$U$};
\end{tikzpicture}
\end{center}
Now here's the point of step $2$: we take $U$ to be the \textit{second} tensor comprising $|\psi_{\text{MPS}}\>$.
	\marginnote{\normalsize\[
	\begin{tikzcd}[ampersand replacement = \&]
	V\otimes V \arrow[d, "U^\dagger"'] \arrow[r, "\rho_2"] \& V\otimes V              \\
	W \arrow[r, "D"]                                       \& W \arrow[u, "U"', hook]
	\end{tikzcd}\]}
\begin{center}
\begin{tikzpicture}
\node at (0,0) {\includegraphics[scale=0.1]{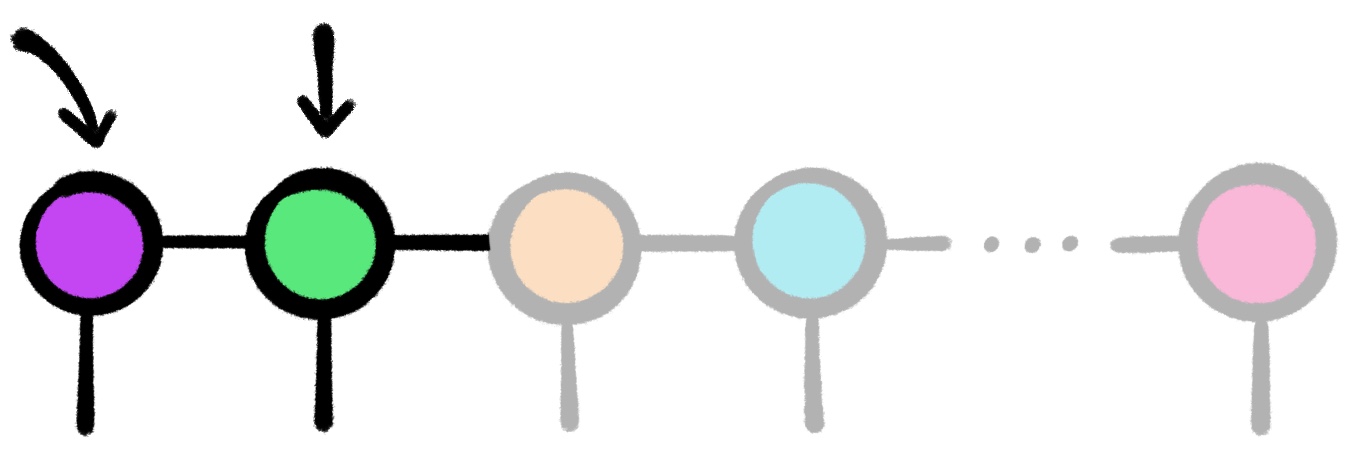}};
\node at(-2.5,1) {$\id_V$};
\node at(-1.3,1) {$U$};
\end{tikzpicture}
\end{center}
To understand what this means, think back to the initial vector $|\psi_2\>\in V^{\otimes N}$. We may define a \textit{new} vector $|\psi_3\>$ by composing $U$ with the first two edges of $|\psi_2\>$. 
\begin{center}
\begin{tikzpicture}
\node at (0,0) {\includegraphics[scale=0.15]{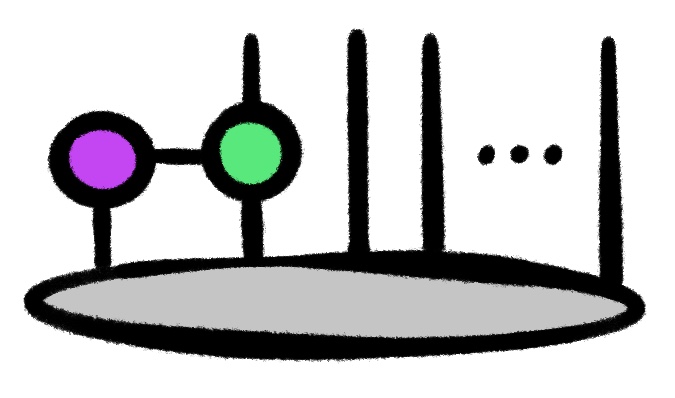}};
\node at (-2.5,-.5) {$|\psi_3\>=$};
\end{tikzpicture}
\end{center}
Said more carefully, start with $|\psi_2\>$ and view it as a linear mapping $M_2\colon V^{\otimes 2}\to V^{\otimes N-2}$, then precompose this mapping with $U$, then reshape the resulting composition $M_2U\colon W\to V^{\otimes N-2}$ into the vector $|\psi_3\>\in W\otimes V^{\otimes N-2}$. See Figure \ref{fig:step2}.
\begin{center}
\begin{figure}[h!]
\begin{tikzpicture}
\node at (0,0) {\includegraphics[scale=0.1]{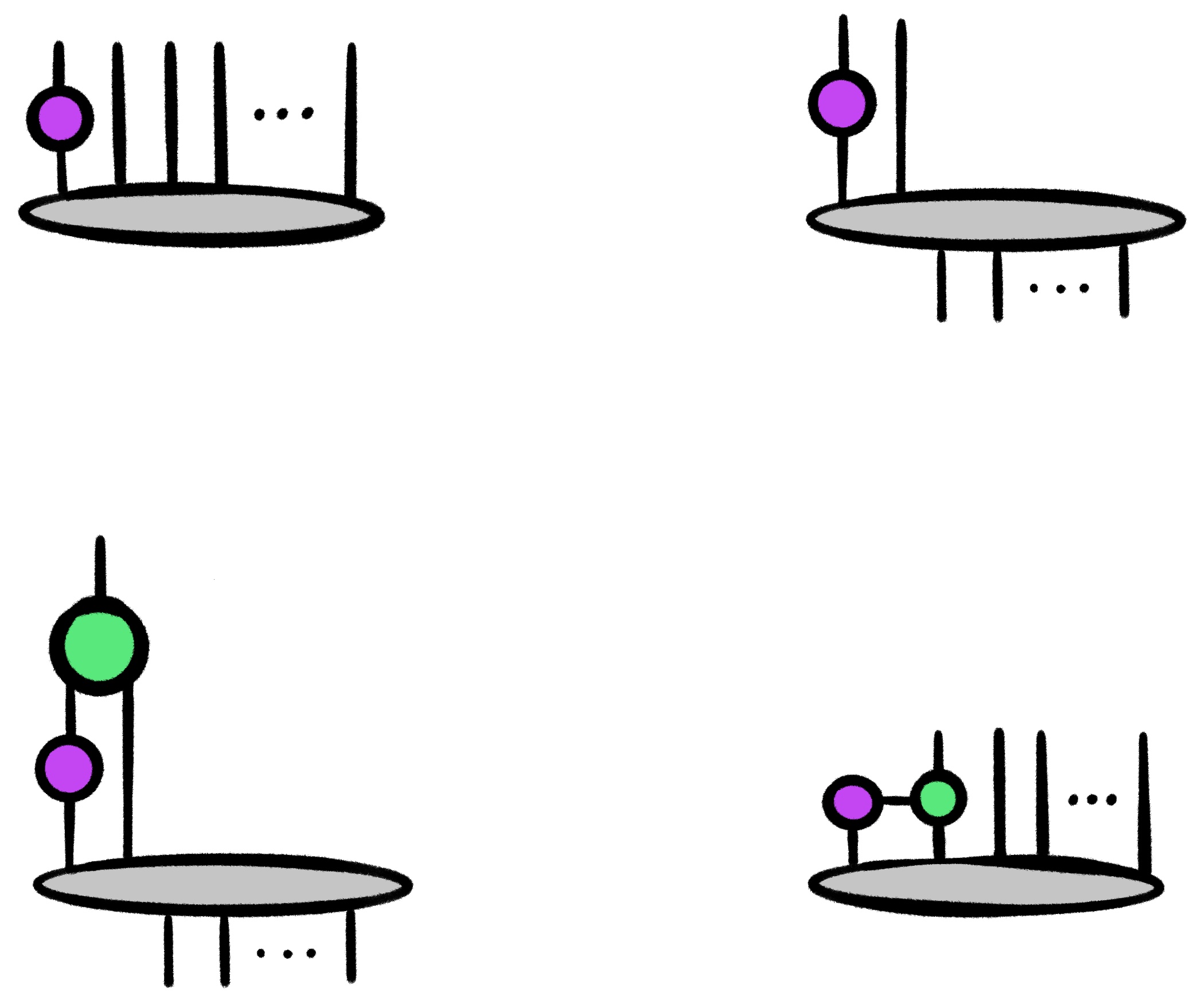}};
\node at (-4.5,2) {$|\psi_2\>=$};
\node at (4.5,2) {$=M_2$};
\node at (-4.5,-2) {$M_2U=$};
\node at (4.5,-2) {$=|\psi_3\>$};
\draw[thick, line join=round,
		decorate, decoration={
		zigzag,
		segment length=10,
		amplitude=2,post=lineto,
		post length=2pt
		}, ->, >=stealth]
		(-.8,2) -- (.8,2); 
\draw[thick, line join=round,
		decorate, decoration={
		zigzag,
		segment length=10,
		amplitude=2,post=lineto,
		post length=2pt
		}, ->, >=stealth]
		(-.8,-2.2) -- (.8,-2.2); 
\draw[thick, line join=round,
		decorate, decoration={
		zigzag,
		segment length=10,
		amplitude=2,post=lineto,
		post length=2pt
		}, ->, >=stealth]
		(1,.5) -- (-1,-1); 
\end{tikzpicture}
\caption{Obtaining the vector $|\psi_3\>$ from $|\psi_2\>$}
\label{fig:step2}
\end{figure}
\end{center}
Now you can see the MPS starting to form within the diagram associated to $|\psi_3\>$. This concludes step $2$ of the algorithm.

\newthought{Perhaps now you} have an idea of what step $3$ will be. The similarities between Figure \ref{fig:step1} and Figure \ref{fig:step2} suggest the pattern. Let's quickly make the connection. Step $3$ starts with the vector $|\psi_3\>$. The first two ``sites'' of $|\psi_3\>\<\psi_3|$ remain untouched while the partial trace is applied to the remaining sites. This gives a reduced density operator $\rho_3$, whose spectral decomposition is $\rho_3=U_3D_3U_3^\dagger$
\begin{center}
\begin{tikzpicture}
\node at (0,0) {\includegraphics[scale=.15]{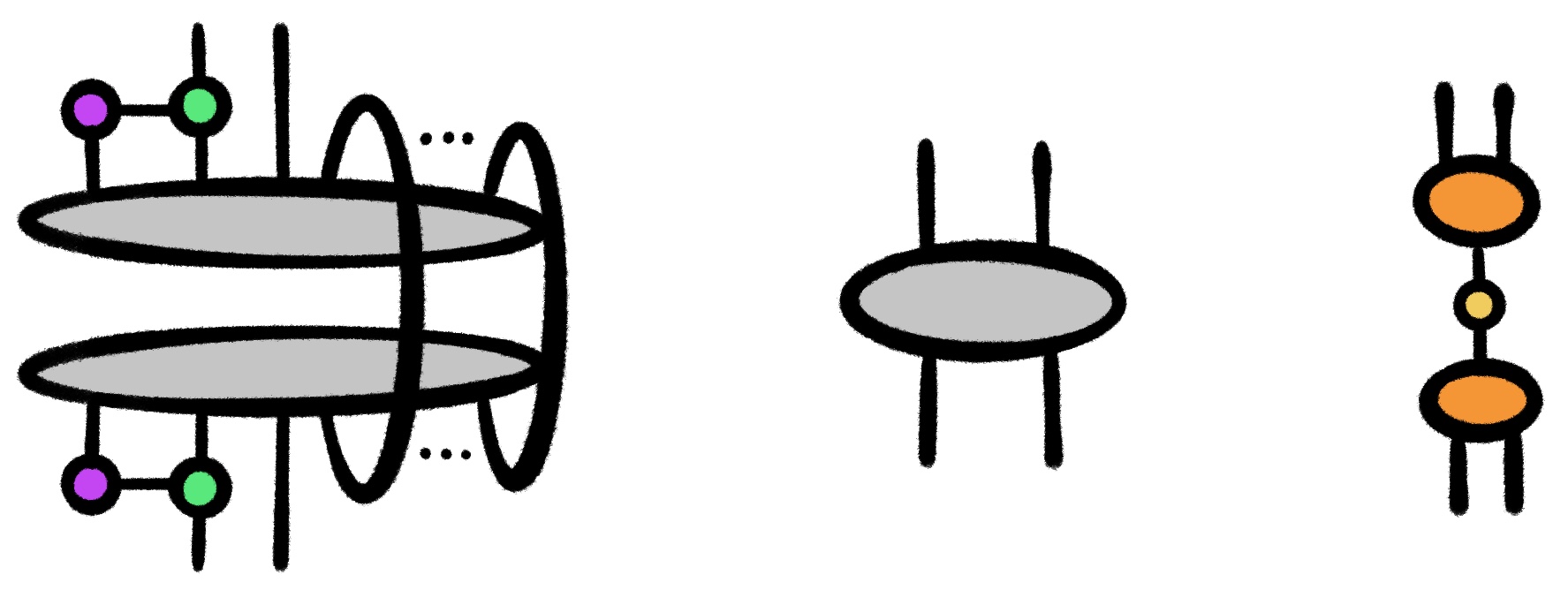}};
\node at (-5,0) {$\rho_3\;=$};
\node at (-.4,0) {$=$};
\node at (3,0) {$=$};
\end{tikzpicture}
\end{center}
We take $U_3$ to be the $4\times 2$ array whose two columns are the two eigenvectors of $\rho_3$ associated to its two largest eigenvalues. This forms the third tensor of $|\psi_{\text{MPS}}\>$. It also tells us how to define the vector $|\psi_4\>$ that initiates step $4$, which leads to $U_4$, and so on. See Figure \ref{fig:step3}.
\begin{center}
\begin{figure}[h!]
\begin{tikzpicture}
\node at (0,0) {\includegraphics[scale=0.1]{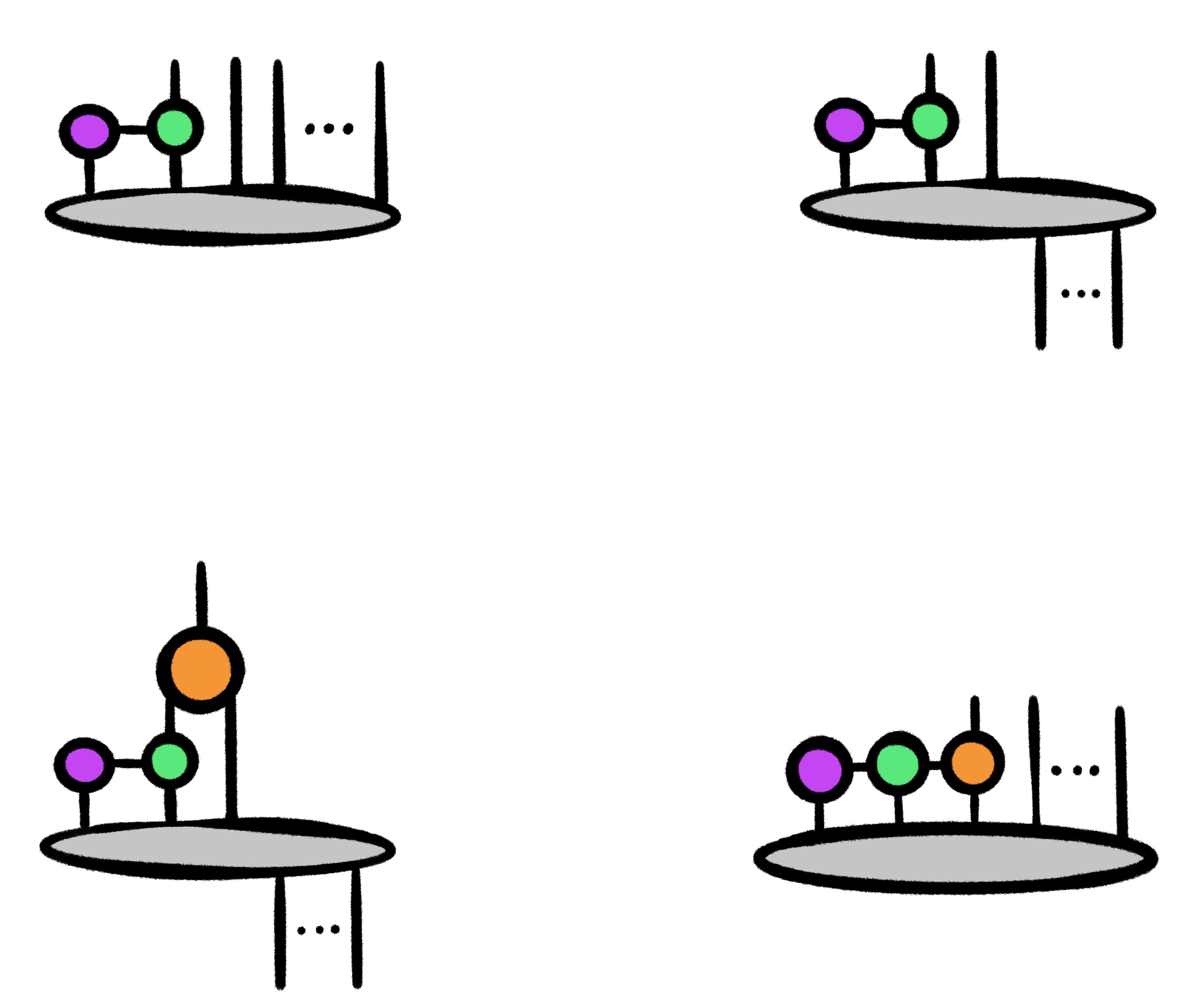}};
\node at (-4.5,2) {$|\psi_3\>=$};
\node at (4.5,2) {$=M_3$};
\node at (-4.5,-2) {$M_3U_3=$};
\node at (4.5,-2) {$=|\psi_4\>$};
\draw[thick, line join=round,
		decorate, decoration={
		zigzag,
		segment length=10,
		amplitude=2,post=lineto,
		post length=2pt
		}, ->, >=stealth]
		(-.8,2) -- (.8,2); 
\draw[thick, line join=round,
		decorate, decoration={
		zigzag,
		segment length=10,
		amplitude=2,post=lineto,
		post length=2pt
		}, ->, >=stealth]
		(-.8,-2.2) -- (.8,-2.2); 
\draw[thick, line join=round,
		decorate, decoration={
		zigzag,
		segment length=10,
		amplitude=2,post=lineto,
		post length=2pt
		}, ->, >=stealth]
		(1,.5) -- (-1,-1); 
\end{tikzpicture}
\caption{Obtaining the vector $|\psi_4\>$ from $|\psi_3\>$}
\label{fig:step3}
\end{figure}
\end{center}
Continuing the pattern at each step, the upshot is that the arrays of eigenvectors $U=U_2,U_3,\ldots, U_N$ assemble into the desired vector $|\psi_{\text{MPS}}\>$.
\begin{center}
\begin{tikzpicture}
\node at (0,0) {\includegraphics[scale=0.1]{figures/ch4/mpsch4}};
\node at(-2.5,1) {$\id_V$};
\node at(-1.3,1) {$U_2$};
\node at(-.4,1) {$U_3$};
\node at(.5,1) {$U_4$};
\node at(2.5,1) {$U_N$};
\end{tikzpicture}
\end{center}
The claim is that $|\psi_{\text{MPS}}\>$ has the property that the distribution induced by the Born rule is close to the true uniform distribution on even-parity bitstrings, $|\<s|\psi{MPS}\>|^2\approx1/2^{N-1}$, for which the empirical probabilities $\hat\pi(s)=1/N_T$ were merely an approximation. Let's now discuss the intuition for \textit{why} this is so, and then we'll report experimental results that illustrate the claim.

\section{Intuition: Why Does This Work?}

Intuition is gained through an example. Let's take a closer inspection at the eigenvectors of the reduced density operator $\rho_2$ obtained in step 2 by working with a simple example.

\begin{example}\label{ex:perfection}
Fix $N=5$ and suppose the training set contains the \textit{entire} set of even-parity bitstrings, $T=E_N$. The goal for this example is to compute the eigenvectors of $\rho_2$ explicitly. So first we'll need to find the matrix representation of $\rho_2$. To do so, recall that the procedure described in Section \ref{ssec:bird} distinguished between the first two bits of a sample, and the remaining $N-2$ bits. In light of this, let's think of every even-parity bitstring $s\in T$ as a  prefix-suffix pair $s=(x,y)$, where $x\in A^2$ is a bitstring of length $2$ and $y\in A^{3}$ is a bitstring of length $3$.  Since $s$ must have even parity, $(x,y)\in T$ if and only if $x$ and $y$ both have the same parity. There is a nice way to represent $T$ visually---it corresponds to a bipartite graph. The sets of prefixes and suffixes define the two sets of vertices, and there is an edge joining $x$ and $y$ if and only if their concatenation $xy$ is even. For example, the training set $T=E_5$ is illustrated by the following graph.
	\begin{center}
	\begin{tikzpicture}
	\node at (0,0) {\includegraphics[scale=0.15]{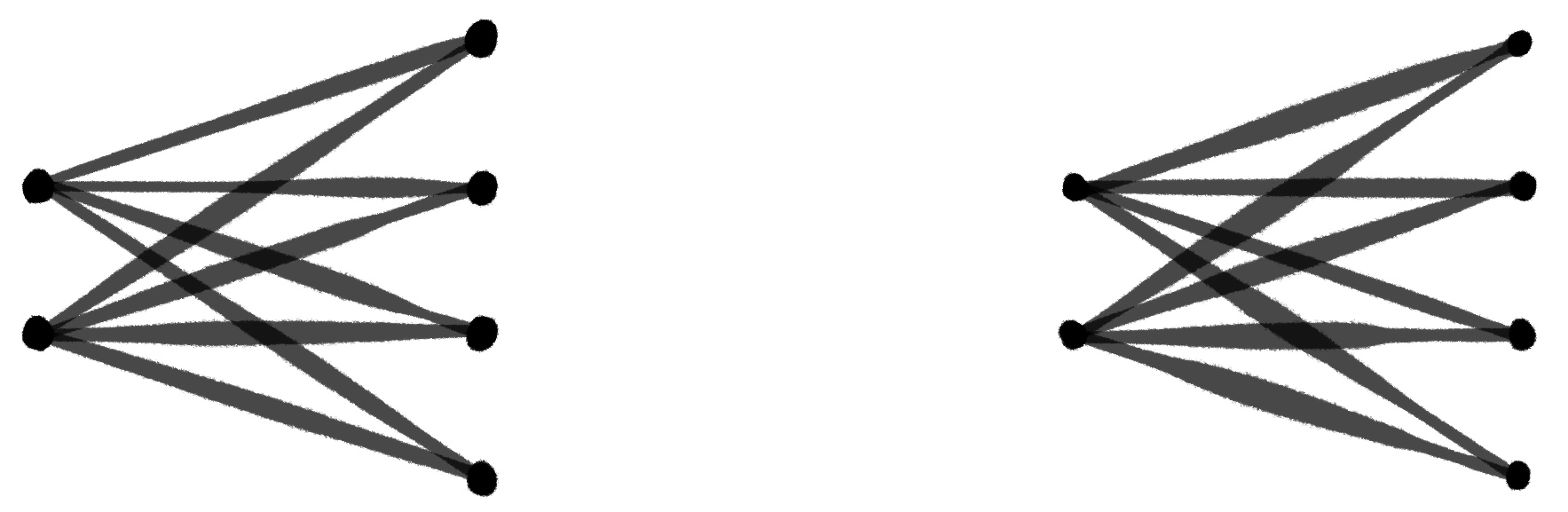}};

	\node at (-1.2,1.3) {$000$};
	\node at (-1.2,.4) {$110$};
	\node at (-1.2,-.5) {$011$};
	\node at (-1.2,-1.3) {$101$};

	\node at (-5,.4) {$00$};
	\node at (-5,-.5) {$11$};

	\node at (1.2,.4) {$01$};
	\node at (1.2,-.5) {$10$};

	\node at (5,1.3) {$100$};
	\node at (5,.4) {$010$};
	\node at (5,-.5) {$001$};
	\node at (5,-1.3) {$111$};
	\end{tikzpicture}
	\end{center}
The graph has two disconnected components because there is no sample $(x,y)$ in $T$ where $x$ and $y$ have different parity. For example, there is no edge joining $00$ and $111$. So the left component of the graph contains all \textit{even-even} prefix-suffix pairs, while the right component contains all \textit{odd-odd} prefix-suffix pairs. Now recall from Section \ref{sec:graphs} that the entries of reduced density operators defined by such a graph may be read off directly \textit{from} the graph. To be clear, $\rho_2$ is an operator on the prefix subsystem $V\otimes V$, a four-dimensional space whose bases are the four bitstrings of length two, $|00\>,|11\>,|01\>$ and $|01\>$. As a $4\times 4$ matrix, the rows and columns of $\rho_2$ correspond to these four prefixes. Choosing the ordering $A^2=\{00,11,01,10\}$ we may write $\rho_2$ as
\begin{equation*}
\rho_2\quad=\quad
\kbordermatrix{
	& {\color{gray}00} & {\color{gray}11} & {\color{gray}01} & {\color{gray}10}\\
	{\color{gray}00} & \text{deg $00$} & s_e & 0 & 0  \\[5pt]
	{\color{gray}11} & s_e & \text{deg $11$} & 0 & 0  \\[5pt]
	{\color{gray}01} & 0 & 0 & \text{deg $01$} & s_o  \\[5pt]
	{\color{gray}10} & 0 & 0 & s_o & \text{deg $10$}
}\frac{1}{2^4}
\quad=\quad
\frac{1}{2^4}
\begin{bmatrix}
4 & 4&0&0\\
4&4&0&0\\
0&0&4&4\\
0&0&4&4
\end{bmatrix}
\end{equation*}
The diagonal entries are the degrees of the four prefixes. The off diagonal $s_e$ is the number of suffixes that the even prefixes $00$ and $11$ have in common. It is also the number of paths of length two that join $00$ and $11$, and is equivalently the number of suffixes with degree 2 in the graph's left component.
	\marginnote[-1cm]{\smallcaps{Behind the Scenes.} By choosing the ordering $A^2=\{00,11,01,10\}$ I am associating the $i$th element to the $i$th standard basis vector in $\mathbb{C}^2\otimes\mathbb{C}^2\cong\mathbb{C}^4$. So $|00\>:=\begin{bmatrix}1&0&0&0\end{bmatrix}^\top$ and $|11\>:=\begin{bmatrix}0&1&0&0\end{bmatrix}^\top$ and so on. This does not match with the convention that $|ab\>=|a\>\otimes|b\>$ since, for instance, $|11\>$ would be $|1\>\otimes|1\>=\begin{bsmallmatrix}0\\1\end{bsmallmatrix}\otimes \begin{bsmallmatrix}0\\1\end{bsmallmatrix}$ which gives $\begin{bmatrix}0&0&0&1\end{bmatrix}^\top$. We've temporarily suspended this convention and are sticking with the choice above for the purpose of sharing intuition. With this choice, the two blocks in $\rho_2$ correspond to the two components of the graph, which is nice!}
The off-diagonal $s_o$ is likewise the number of suffixes that the odd prefixes $01$ and $10$ have in common, which is the number of paths of length two that join $01$ and $10$, which also is the number of degree two suffixes in the graph's right component. The matrix is normalized by the total number of edges in the graph, $2^4$, which guarantees that $\rho_2$ has trace $1.$ So the nonzero entries of $\rho_2$ are all equal to $4/2^4$. It is therefore a rank $2$ operator, and each nonzero $2\times 2$ block contributes one eigenvalue and one eigenvector. Both eigenvalues are $\frac{1}{2}$. The eigenvector from the first block is the normalized sum of even-parity bitstrings of length $2$, let's call it $|E_2\>$.
\[
|E_2\>=\frac{|00\>+|11\>}{\sqrt 2}
=\begin{bmatrix}\frac{1}{\sqrt 2}&\frac{1}{\sqrt 2}&0&0\end{bmatrix}^\top
\]
 The second eigenvector is the normalized sum of odd-parity bitstrings of length $2$, let's call it $|O_2\>$.
 \[
 |O_2\>=\frac{|01\>+|01\>}{\sqrt 2}
 = \begin{bmatrix}0&0&\frac{1}{\sqrt 2}&\frac{1}{\sqrt 2}\end{bmatrix}^\top
 \]
These two eigenvectors capture the two ``concepts'' that exist in our simple bitstring language---evenness and oddness! Moreover, they are precisely the  two\sidenote{We refer to ``the two'' columns since $D$ is really a $4\times 4$ diagonal, but the third and fourth diagonal entries are both $0$, and so the third and fourth columns of $U$ are ``zeroed out.''} columns in the matrix $U$ obtained in the spectral decomposition of $\rho_2=UDU^\dagger$.
\[
\rho_2=
\begin{bmatrix}
\frac{1}{\sqrt 2} & 0 \\[10pt]
\frac{1}{\sqrt 2} & 0 \\[10pt]
0 & \frac{1}{\sqrt 2} \\[10pt]
0 & \frac{1}{\sqrt 2} 
\end{bmatrix}
\begin{bmatrix}
\frac{1}{2} & 0 \\
0&\frac{1}{2}
\end{bmatrix}
\begin{bmatrix}
\frac{1}{\sqrt 2} & \frac{1}{\sqrt 2} & 0 & 0\\[5pt]
0 & 0 & \frac{1}{\sqrt 2} & \frac{1}{\sqrt 2}
\end{bmatrix}
\]
To gain further intuition, let's turn our attention to the adjoint of $U$. Let $W$ denote the two-dimensional space spanned by $|E_2\>$ and $|O_2\>$ and observe that the linear mapping $U^\dagger \colon V\otimes V\to W$ acts as a ``summarizer.'' It maps each prefix onto its parity.
\begin{equation*}
\begin{split}
U^\dagger|00\>=\tfrac{1}{\sqrt 2} |0\> \quad \text{(even)}\\[5pt]
U^\dagger|11\>=\tfrac{1}{\sqrt 2} |0\>\quad \text{(even)}
\end{split}
\qquad
\begin{split}
U^\dagger|01\>=\tfrac{1}{\sqrt 2} |1\>\quad \text{(odd)}\\[5pt]
U^\dagger|10\>=\tfrac{1}{\sqrt 2} |1\>\quad \text{(odd)}
\end{split}
\end{equation*}
The essential observation is two-fold:
\begin{description}
	\item First, $|00\>$ and $|11\>$ have the same image under $U^\dagger$. \textit{This captures the idea that $00$ and $11$ always share the same suffixes} in the dataset $T$, namely those with even parity.

	\item Likewise, $|01\>$ and $|10\>$ also have the same image under $U^\dagger$. \textit{This captures the fact that $01$ and $10$ always share the same suffixes} in $T$, namely those with odd parity.
\end{description}
The point is that $U^\dagger$ summarizes the prefixes while knowing information about which suffixes they can be paired with. This feature is precisely that which was emphasized in Chapter \ref{ch:probability}: \textit{Eigenvectors of reduced densities of pure entangled states capture information about complementary subsystems}.\sidenote{There's a second half to this statement, namely that \textit{the information stored in the eigenvectors can be pieced together to reconstruct the full state.} We will circle back to this point at the end of the chapter.} It's precisely this information that's lost when marginalizing to subsystems in the classical way---recall Takeaway \ref{takeaway:red_den} and the motivating example in Section \ref{ssec:memory}. 

Let us take this a step further by making the connection with tensor network diagrams. Note that the $2\times 4$ matrix representing $U ^\dagger\colon V\otimes V\to W$ can be reshaped into a $2\times 2 \times 2$ tensor of order three, which is essentially $\tfrac{1}{\sqrt 2}\begin{bsmallmatrix}1&1\\0&0\end{bsmallmatrix}$ stacked next to $\tfrac{1}{\sqrt 2}\begin{bsmallmatrix}0&0\\1&1\end{bsmallmatrix}$. Schematically, we have this picture.
\begin{center}
\includegraphics[scale=0.18]{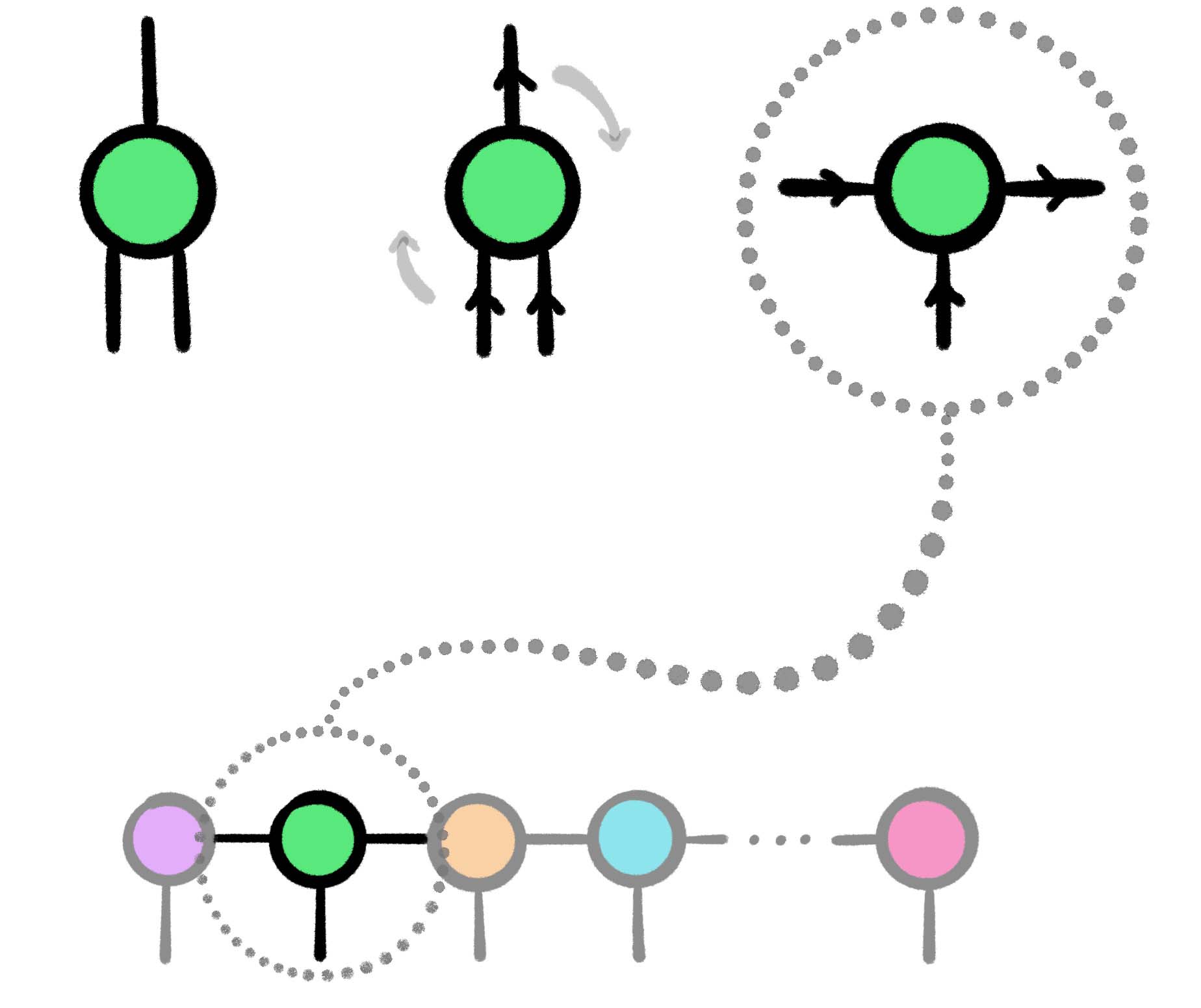}
\end{center}
In this way, we can envision the green tensor as a machine. It receives a pair of incoming bits and outputs the parity of their concatenation.\sidenote[][-1cm]{
	For example $|01\>$ maps to $|1\>$ since $01$ has odd parity and the output $1$ is odd!
	\begin{center}
	\begin{tikzpicture}
	\node at (0,0) {\includegraphics[scale=0.1]{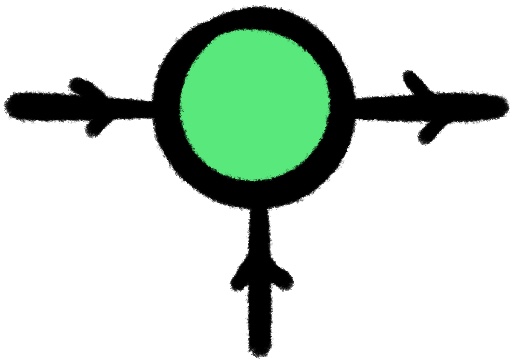}};
	\node at (-1.5,.2) {$|0\>$};
	\node at (0,-1) {$|1\>$};
	\node at (1.5,.2) {$|1\>$};
	\end{tikzpicture}
	\end{center}
	} In fact, the remaining tensors comprising $|\psi_{\text{MPS}}\>$ all behave in the same way.
\end{example}

\newthought{This example illustrates} the general idea. A typical dataset $T$ will not include all of the bitstrings in $E_N$, and so $\rho_2$ will likely be full rank. Fortunately, we can still understand its matrix entries explicitly in terms of a graph, and we still collect its top two eigenvectors $|E_2'\>$ and $|O_2'\>$ associated to its two largest eigenvalues.
	\marginnote[-1cm]{As a simple example with $N=5$, the graph below is associated to a seven-sample set.
	\begin{center}
	\begin{tikzpicture}[y=.5cm,x=.5cm]
	\node at (-3,0) {\includegraphics[scale=0.08]{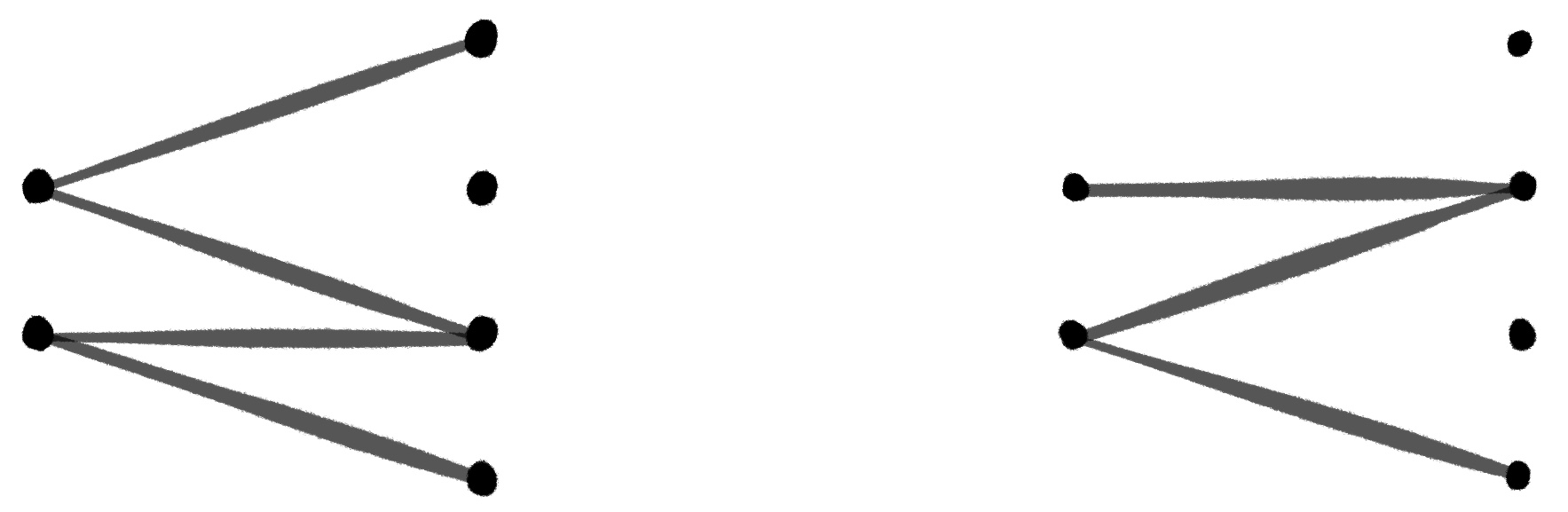}};

	\node at (-4,1.3) {$000$};
	\node at (-4,.4) {$110$};
	\node at (-4,-.5) {$011$};
	\node at (-4,-1.3) {$101$};

	\node at (-8.5,.4) {$00$};
	\node at (-8.5,-.5) {$11$};

	\node at (-1.8,.4) {$01$};
	\node at (-1.8,-.5) {$10$};

	\node at (2.5,1.3) {$000$};
	\node at (2.5,.4) {$110$};
	\node at (2.5,-.5) {$011$};
	\node at (2.5,-1.3) {$101$};
	\end{tikzpicture}
	\end{center}
	When $|\psi\>\in V^{\otimes 5}$ is the weighted sum of these seven samples, the resulting reduced density on the prefix subsystem is
	\[
	\rho_2=\tr_{V^{\otimes 3}}|\psi\>\<\psi|=\frac{1}{7}
	\begin{bmatrix}
	2 & 1 & 0 & 0\\
	1 & 2 & 0 & 0\\
	0 & 0 & 1 & 1\\
	0 & 0 & 1 & 2
	\end{bmatrix}
	\]
	}
Concretely, we may write the reduced density $\rho_2$ as
\[
\rho_2=\frac{1}{N_T}
\begin{bmatrix}
d_1 & s_e & 0 & 0\\
s_3 & d_2 & 0 & 0\\
0 & 0 & d_3 & s_o\\
0 & 0 & s_o & d_4
\end{bmatrix}
\]
The number of training samples $N_T=|T|$ is the total number of edges in the graph, and the matrix entries are nonnegative integers that have combinatorial interpretations. The diagonal entries count degrees of prefixes, and the off-diagonals count the number of paths of length two in each component of the graph. 
	\begin{center}
	\begin{tabular}{ll|ll}
		$d_1$ is the degree of $00$ &&& $s_e$ is the number of suffixes that\\
		$d_2$ is the degree of $11$ &&& \qquad  $00$ and $11$ have in common \\
		$d_3$ is the degree of $01$ &&& $s_o$ is the number of suffixes that \\
		$d_4$ is the degree of $10$ &&& \qquad  $01$ and $10$ have in common
	\end{tabular}
	\end{center}
The two eigenvectors $|E_2'\>$ and $|O_2'\>$ associated to the two largest eigenvalues of $\rho_2$ therefore have explicit expressions in terms of the graph's information. Concretely, we may write $|E_2'\>=\begin{bmatrix}\cos\theta & \sin\theta &0&0\end{bmatrix}^\top$ and $|O_2'\>=\begin{bmatrix}0&0&\cos\phi&\sin\phi\end{bmatrix}^\top$ where $\theta$ and $\phi$ are angles defined explicitly in terms of the combinatorics of the graph.\sidenote[][-4cm]{Knowing $\theta$ and $\phi$ exactly is not necessary for this discussion, but here they are: \begin{align*}
\theta&=\arctan\left(\frac{2s_e}{\sqrt{G_e^2+4s_e^2} + G_e}\right)\\
\phi&=\arctan\left( \frac{2s_o}{\sqrt{G_o^2+4s_o^2} + G_o}\right)
\end{align*}
where $G_e=d_1-d_2$ and $G_o=d_3-d_4$ are the gaps between the diagonals of each block of $\rho_2$. See \cite[Section 5]{bradley2019modeling} for more.} 
In the truncated spectral decomposition $\rho_2=UDU^\dagger$, the $2\times 4$  matrix $U^\dagger$ has $|E_2'\>$ and $|O_2'\>$ as its two rows. It therefore acts as an \textit{imperfect} summarizer. The error---or deviation from perfection---is measured by the values $\theta$ and $\phi$. For instance, if $\theta=\phi=\pi/4$, then we recover the ideal eigenvectors $|E_2'\>=|E_2\>$ and $|O_2'\>=|O_2\>$, and in that case $U^\dagger$ is a perfect summarizer as in Example \ref{ex:perfection}. 

\newthought{As it turns out}, \textit{every} tensor comprising $|\psi_{\text{MPS}}\>$ (with the minor exception of the first and last\sidenote{The
	first tensor is the identity map, as we previously described. To understand the last tensor, recall the procedure described in Section \ref{ssec:bird}, and observe that the final vector $|\psi_N\>$ can be reshaped into a map between a $2$-dimensional space (the span of the top two eigenvectors of $\rho_{N-1}$) and another $2$-dimensional space, $V\cong\mathbb{C}^A$. We take this mapping to be the final MPS tensor.
	\begin{center}
	\begin{tikzpicture}
	\node at (0,0) {\includegraphics[scale=0.147]{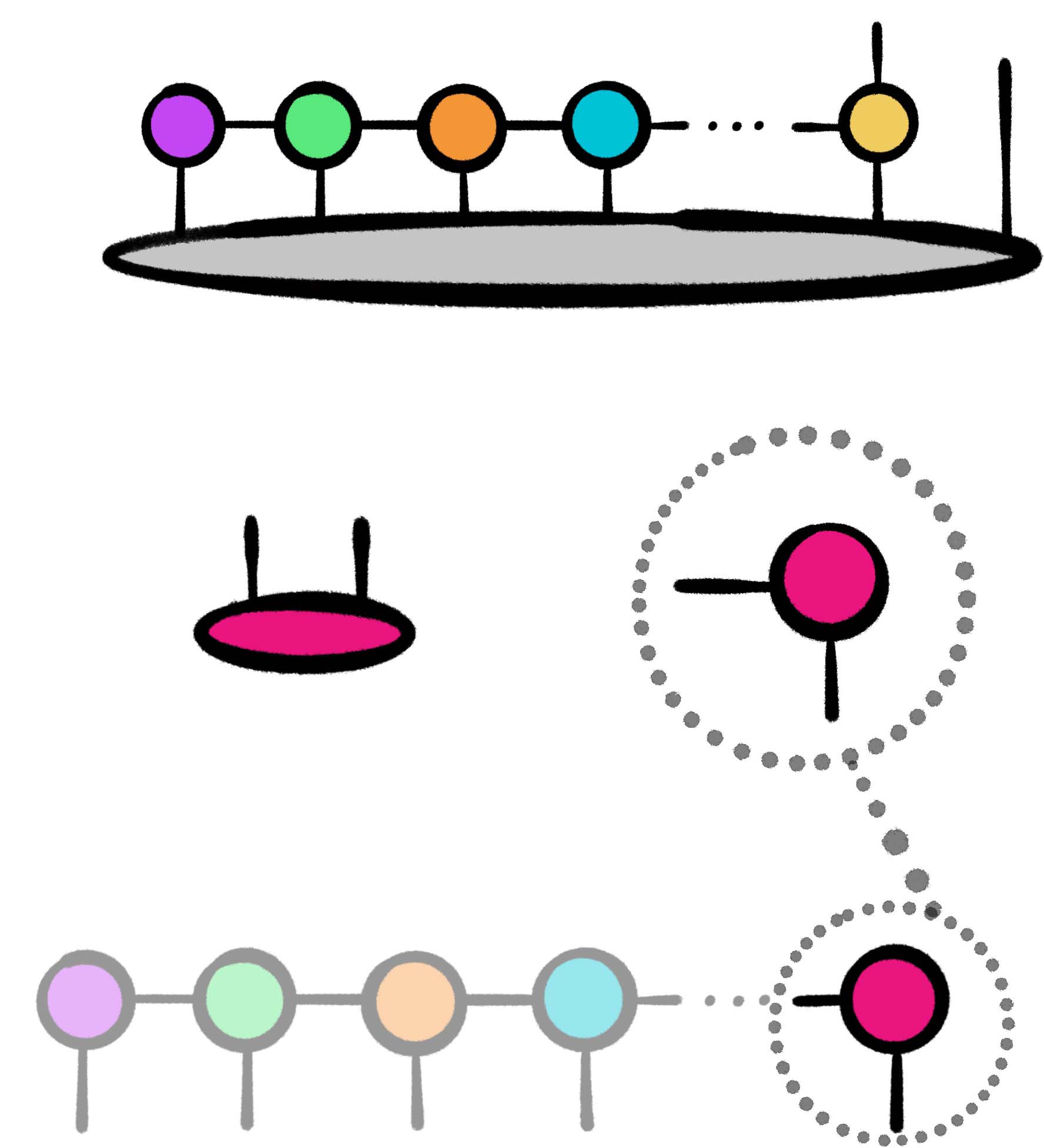}};
	\node at (-2.6,1.5) {$|\psi_N\>=$};
	\node at (-2.2,-.25) {=};
	\node at (0,-.25) {$\rightsquigarrow$};
	\end{tikzpicture}
	\end{center}})
behaves like the summarizer described here. That is, at step $k$ we collect a tensor $U_k$ built from eigenvectors $|E_k'\>$ and $|O_k'\>$ whose entries are determined by angles $\theta_k$ and $\phi_k$ that are read off explicitly from the combinatorics of a bipartite graph associated to the training set $T$. With this, we have a theoretical understanding of $|\psi_{\text{MPS}}\>$ both at a low level---the tensors can be understood entry-wise---and at a high level---the tensors are eigenvectors of reduced densities, which harness prefix-suffix interactions. 

As a closing remark, know that Example \ref{ex:perfection} showcased the \textit{idealized} scenario where each tensor is a perfect summarizer. When this happens (that is, when every $U_k^\dagger$ maps a small bitstring onto its parity) one finds that $|\psi_{\text{MPS}}\>$ is \textit{equal} to the normalized sum of all even bitstrings of length $N$,
\[
|E_N\>=\frac{1}{\sqrt{2^{N-1}}}\sum_{s \text{ even}}|s\>.
\]
This is the this best-case scenario, for the probability of any even bitstring $s$ is given by $|\<s|\psi_{\text{MPS}}\>|^2=|\<s|E_N\>|^2=1/2^{N-1}$, which is the target probability $\pi(s)$ identified at the opening of this section. Situations that are less than ideal will not have equality $|\psi_{\text{MPS}}\>\neq |E_N\>$, but the inner product $\<E_N|\psi_{\text{MPS}}\>$ (or some expression involving it) serves a measure of how close the model $|\psi_{\text{MPS}}\>$ is to the goal $|E_N\>$. As a teaser, here's a summary of an experiment that illustrates this.

\section{The Experiment}\label{sec:theexperiment}
As a concrete experiment, we worked with bitstrings of length $N=16$. We built a theoretical vector $|\psi_{\text{MPS}}\>$ as described in Section \ref{ssec:bird} and compared it to an \textit{experimental} version built with the algorithm using the ITensor library \cite{ITensor}; the code is available on Github.\sidenote{\url{https://github.com/emstoudenmire/parity}} For a fixed fraction $0<f\leq 0.2$, the experimental model was built by running the algorithm on ten different datasets $T$, each containing $N_T=f2^{N-1}$ bitstrings of length $16$. We then compared the resulting vector $|\psi_{\text{MPS}}\>$ with the target vector $|E_N\>$ by computing the Bhattacharyya distance\sidenote{The Bhattacharyya distance is a measure between two probability distributions, computed as the negative log of their ``inner product.'' That is, suppose $p,q\colon S\to \mathbb{R}$ are probability distributions on a finite set $S$. The \emph{Bhattacharyya distance}\index{Bhattacharyya distance} between them is $d_B(p,q):=-\ln\left(\sum_s\sqrt{p(s)q(s)}\right)$. When $p(s)=|\<s|\psi_{\text{MPS}}\>|^2$ and $q(s)=|\<s|E_N\>|^2$ then $d_B(p,q)=-\ln\<\psi_{\text{MPS}}|E_N\>.$} between them. The $x$-axis in Figure \ref{fig:experiment} is the range of fractions $0<f\leq 0.2$, and the Bhattacharyya distances are plotted vertically. Figure \ref{fig:experiment2} gives a zoomed-in look for $.15\leq f\leq 0.2$. In both figures, each black dot represents one of these values, and the pink curve is their experimental average. The green curve is our theoretical prediction using the mathematics described above.\sidenote[][1cm]{Obtaining the green curve is not an elementary calculation, and the story behind it is quite interesting. It involves understanding the entries of a number of reduced densities, $\rho_3,\rho_4,\ldots,\rho_{N}$, similar to the one shown in Example \ref{ex:perfection}. As it turns out, this task is rather complicated, so instead of calculating the entries \textit{exactly}, we instead find estimates for their expected values. The angles $\theta_k$ and $\phi_k$ that were mentioned in the main text play a role in finding these values. These expected entries then contribute to the expected eigenvectors of the operators, and these eigenvectors are used to find the green curve.
}
\begin{center}
\begin{figure}[h!]
\includegraphics[width=\linewidth]{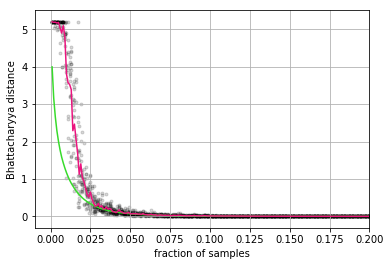}
\caption{The experimental average (pink) and theoretical prediction (green).}
\label{fig:experiment}
\end{figure}
\end{center}
\begin{center}
\begin{figure}[h!]
\includegraphics[width=\linewidth]{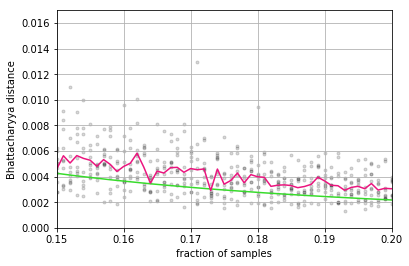}
\caption{Taking a closer look for $.15\leq f \leq 0.2$.}
\label{fig:experiment2}
\end{figure}
\end{center}
Notice that experiment (pink) and theory (green) go hand-in-hand, as the two curves grow closer together at the end. Hence we can predict how good the model $|\psi_{\text{MPS}}\>$ will be just given the fraction $f$ of samples used to train it. The plots tend \textit{downward} as one would expect: the more samples used, the better the model will be. Remarkably, the model does quite well training with just 2.5\%! In other words, with just a small fraction of training samples, the model state $|\psi_{\text{MPS}}\>$ is quite close to the true state $|E_N\>$. 

\newthought{This experiment is thus} an instantiation of the theory illustrated in Figure \ref{fig:reconstruct} and summarized in Takeaway \ref{takeaway:densities} of Chapter \ref{ch:preliminaries}. Succinctly put, the eigenvectors of reduced densities harness interactions between small subsystems (here, prefix-suffix subsystems), and this information can be pieced together to reconstruct the state (or a close approximation of it) of the full system. Here's the takeaway.

\begin{takeaway}
Reduced densities of pure entangled states harness information about interactions between subsystems. When those subsystems are \textit{sequences} (that is, a one-dimensional string of ``particles''), knowledge about the state of a subsystem and its complement can be pieced together to reconstruct the state of the whole.
\end{takeaway}

\clearpage

%chapter 5
\chapter{Fixed Points, Categorically}\label{ch:categorytheory}

\epigraph{When you see the same beautiful idea pop up everywhere, you begin to think that it is pointing to some deeper truth you haven't yet grasped.}{\textit{Francis Su \cite{su2020}}}

\newthought{Chapter \ref{ch:introduction} motivated}  the investigation of eigenvectors of reduced densities by drawing an analogy with formal concepts---an analogy that was revisited in Section \ref{sec:evectsFCA}. This chapter aims to put that analogy on firmer categorical ground. To do so, we will introduce a \textit{third} construction---a categorical one---that is parallel to both formal concepts and eigenvectors. The category theory will specialize to give formal concepts in a certain case. It will not specialize to the linear algebra, but there is a clear, common categorical thread connecting them, which will be discussed at the end of the chapter.
  \marginnote{This chapter assumes familiarity with the basics of category theory (categories, functors, natural transformations, (co)limits, and adjunctions), though I'll occasionally provide extra commentary for interested readers. For thorough introductions, see \cite{riehl2017category, leinster2014basic, awodey2010category,mac2013categories,spivak2014category, fong2019invitation}. For a lighter treatment of the basics of category theory, see the articles on Math3ma, under the ``Category Theory'' tag \cite{math3ma}. }
 %\tai{``What if we replace probabilities with \textit{possibilities?}''}

Here is an overview of the ideas to come. We will consider three familiar mathematical objects: a vector space with basis set $X$, the power set of a set $X$, and the category of presheaves on a small category $\C$.
\[\mathbb{C}^X \qquad 2^X \qquad \mathsf{Set}^{\C^{\op}}\]
These three objects are related in that all are ``free'' in a sense that can be made precise. Moreover, all three fit into a construction that is strikingly similar to that featured in Chapter \ref{ch:introduction}. To recall, any matrix $M$ corresponds to a function $M\colon X\times Y\to \mathbb{C}$ where $X$ and $Y$ are finite sets. The matrix represents a linear map  $M\colon \mathbb{C}^X\to \mathbb{C}^Y$, which has an adjoint going in the other direction $M^\dagger \colon \mathbb{C}^Y\to\mathbb{C}^X$. The eigenvectors of $M^\dagger M$ and $MM^\dagger$ capture interesting  information about the values $M(i,j)$, especially when they determine a probability distribution on $X\times Y.$ Interestingly, a nearly identical story holds when the complex numbers $\mathbb{C}$ are exchanged for the poset $2$ of \textit{truth values}  (which we define later) or for the category $\mathsf{Set}$. In this chapter we will carefully unwind this claim. 
  \marginnote{Recall that an \emph{adjunction} consists of a pair of functors $\adj{F}{\C}{\D}{G}$ such that there is a natural isomorphism of hom sets \[\C(Fc,d)\cong \D(c,Gd)\] for all objects $c$ in $\C$ and $d$ in $\D$. Equivalently, an adjunction consists of functors $F$ and $G$ together with natural transformations $\eta\colon \id_\C\Rightarrow GF$, called the \emph{unit}, and $\epsilon\colon FG\Rightarrow \id_\D$, called the \emph{counit}, such that the following diagrams commute
  \[
  \begin{tikzcd}[ampersand replacement =\&]
  F \arrow[r, "F\circ\eta", Rightarrow] \arrow[rd, "\id_F"', Rightarrow] \& FGF \arrow[d, "\epsilon \circ F", Rightarrow] \&  \& G \arrow[r, "\eta\circ G", Rightarrow] \arrow[rd, "\id_G"', Rightarrow] \& GFG \arrow[d, "G\circ \epsilon", Rightarrow] \\  \& F                               \&  \&  \& G           
  \end{tikzcd}
  \]
  Here $F\circ \eta$ denotes the natural transformation whose components are of the form $F(\eta_c)\colon F(c)\to FGF(c),$ for each object $c$ in $\C$, and similarly for the others.}

We start in Section \ref{sec:5_LA} by recalling the universal property of a free vector space on a set and will use it to recast the elementary linear algebra of Section \ref{sec:ch1_linalg} in a categorical light. Section \ref{sec:5_CT} discusses another free object with a very similar universal property: the free (co)completion of a category. Unwinding the theory, one finds a number of strong analogies with linear algebra and a clear dictionary between the two worlds. In particular, we will see that categorical version of a matrix $M\colon X\times Y\to\mathbb{C}$ is a functor $R\colon \C^{\op}\times\D\to\mathsf{Set}$; the categorical version of adjoint linear maps $\mathbb{C}^X\rightleftarrows \mathbb{C}^Y$ is an adjunction between (co)presheaf categories $\mathsf{Set}^{\C^{\op}}\rightleftarrows (\mathsf{Set}^\D)^{\op}$; and the categorical version of eigenvectors is a generalized \textit{Isbell completion}. Section \ref{sec:5_FCA} specializes this to the setting of \textit{enriched category theory} by replacing $\mathsf{Set}$ with the poset (category) of truth values $2$. A functor is then a relation $R\colon X\times Y\to 2$; the free (co)completion of a set, viewed as a discrete category, is the free join/meet semilattice on that set; the adjunction between (co)presheaf categories becomes an adjunction of join/meet-preserving maps $2^{X^{\op}}\rightleftarrows (2^Y)^{\op}$; and the generalized Isbell completion gives formal concepts.
\begin{fullwidth}
\begin{center}
\begin{table}[h!]
\begin{tabular}{@{}c|c|c@{}}
\toprule
\textbf{Section \ref{sec:5_LA}} & \textbf{Section \ref{sec:5_CT}} & \textbf{Section \ref{sec:5_FCA}} \\
\midrule
\textsc{linear algebra} & \textsc{category theory} & \textsc{enriched category theory} \\
eigenvectors of reduced densities & generalized Isbell completion & formal concepts\\
\bottomrule
\end{tabular}
\end{table}
\end{center}
\end{fullwidth}

Sections \ref{sec:5_LA}, \ref{sec:5_CT}, and \ref{sec:5_FCA} are meant to feel repetitious. All three are essentially the same story told in different contexts, the motif being that \textit{fixed points of the composition of a morphism with its ``adjoint'' are interesting.} Section \ref{sec:5behind} closes with a remark on the category theory that summarizes this. Here's a quick note on notation. As seen already, sans serif is used for categories $\C$. Also $\C(c,c')$ is used to denote the hom set of morphisms from an object $c$ to an object $c'$ in $\C$. Given categories $\C$ and $\D$, a pair of adjoint functors will be denoted by $\adj{F}{\C}{\D}{G}$ or by $F\dashv G.$

\section{Revisiting Eigenvectors}\label{sec:5_LA}
The motivating construction was already presented in Section \ref{sec:ch1_linalg}, but think of that as the appetizer. In this section, we revisit the elementary linear algebra, this time being mindful of the category theory behind the scenes. \marginnote{I gratefully acknowledge Simon Willerton for his clear exposition on this topic. Indeed, this section---and much of this chapter---was inspired by the excellent writings in \cite{willerton2013} as well as the companion articles \cite{willerton2013b,willerton2014, willerton2015legendrefenchel}.} To start, let $\mathsf{Vect}$ be the category of complex vector spaces and linear transformations, and let $\mathsf{Set}$ be the category of sets and functions. There is a free-forgetful adjunction
\[\adj{U}{\mathsf{Vect}}{\mathsf{Set}}{F}.\]
The functor $U\colon \mathsf{Vect}\to\mathsf{Set}$ associates to any vector space $V$ its underlying set $UV$. The functor $F\colon \mathsf{Set}\to\mathsf{Vect}$ associates to a set $X$ the free vector space $FX$ on $X$. It is ``free'' in that the unit natural transformation $\eta\colon \id_{\mathsf{Set}}\Longrightarrow UF$ of this adjunction fits into the following familiar universal property.

\begin{UP_vect} The \textit{free vector space on a set $X$} is a vector space $FX$ with the property that for any vector space $W$ and for any function $f\colon X\to UW$ there is a unique linear map $\hat f\colon FX\to W$ so that the following diagram commutes
\[
  \begin{tikzcd}
  UFX \arrow[r, "U\hat f", dashed]                  & UW \\
  X \arrow[ru, "f"'] \arrow[u, "\eta_X"] &   
  \end{tikzcd}
\]
\end{UP_vect}
\noindent Concretely, $FX$ is the vector space whose basis is $X$. A vector in $FX$ is then a complex-valued function $v$ on $X$ with the property that $v(x)\neq 0$ for finitely many $x\in X$. The formal sum
\begin{equation}\label{eq:LA_yoneda}
|v\>:=\sum_x v(x)x
\end{equation}
is then guaranteed to be finite and may be thought of as the vector $|v\>$. If $X$ is finite, then vectors in $FX$ may be identified with \textit{all} functions from $X$. In either case, the function $\eta_X$ associates to an element $x\in X$ the independent basis vector $|x\rangle$, and the space $FX$ is sometimes denoted $\mathbb{C}^X$. On any vector $|v\>=\sum_x v(x) |x\>$ in $FX$ the unique lift of any function $f\colon X\to UW$ is defined by
\begin{equation}\label{eq:LA_lift}
\hat f|v\>=\sum_xv(x)f(x).
\end{equation}
Oftentimes, the $U$s are omitted from the diagram and one sees the following.
\[
  \begin{tikzcd}
  \mathbb{C}^X \arrow[r, "\hat f", dashed]                  & W \\
  X \arrow[ru, "f"'] \arrow[u, "\eta_X"] &   
  \end{tikzcd}
\]
But this is not the best practice. The above diagram is not in the correct---or in any!---category. For example, the domain of $f\colon X\to W$ is a set while its codomain is a vector space. The diagrams in Takeaway  \ref{takeaway:LA} also suffered from this ambiguity. Going forward, let's avoid this imprecision. Instead we'll always have in mind the adjunction $\adj{U}{\mathsf{Vect}}{\mathsf{Set}}{F}$ and insert a ``$U$'' where appropriate. Let's also avoid writing ``$FX=\mathbb{C}^X$.'' As noted earlier, vectors in $FX$ may be identified with finitely supported functions $X\to \mathbb{C}$, which explains the convenient exponential notation $\mathbb{C}^X$. But we must be careful. The phrase ``a function $X\to \mathbb{C}$'' has the same problem mentioned above.  On one hand, $X$ is a set; on the other hand, $\mathbb{C}$ is a vector space. An arrow $X\to \mathbb{C}$ is not a morphism in any category, and the exponential $\mathbb{C}^X$ of a vector space by a set is not an object in any category. Rather, here's the better observation: the \textit{underlying set} of the free vector space on $X$ coincides with functions $X\to U\mathbb{C}$. This is summarized in the following takeaway.
\begin{takeaway}\label{takeaway:5LA} Given the free-forgetful adjunction $\adj{U}{\mathsf{Vect}}{\mathsf{Set}}{F},$ the free vector space $FX$ on a set $X$ has the property that
  \begin{equation}\label{eq:EforLA}
  UFX=U\mathbb{C}^X,
  \end{equation}
where $U\mathbb{C}^X$ denotes the set of ``finite'' functions $X\to U\mathbb{C}$, which is all function if $X$ is finite or finitely-supported functions otherwise. In short, the adjunction $F\dashv U$ has the property that there is a special object in $\mathsf{Vect}$, namely $\mathbb{C}$, such that \textit{all} free vector spaces are, in some sense, built up from it.
\end{takeaway}

Now we are close to rediscovering Takeaway \ref{takeaway:LA} in Chapter \ref{ch:introduction}. Suppose $X$ and $Y$ are finite sets and consider any function $M\colon X\times Y\to U\mathbb{C}$. The category of sets is Cartesian closed, and so we have a product-hom adjunction and hence the following bijections.
  \marginnote[2cm]{A category $\C$ is called \emph{Cartesian closed} if for every pair of objects $x$ and $z$ there is an object $z^x$ in $\C$ so that for all objects $y$ there is a natural bijection \[\C(x\times y,z)\cong \C(y,z^x)\] The object $z^x$ is called the \emph{internal hom} and should be thought of as a $\C$'s worth of morphisms from $x$ to $z$.}
\[
  \begin{tikzcd}
  {\mathsf{Set}(X,U\mathbb{C}^Y)} & \cong & {\mathsf{Set}(X\times Y,U\mathbb{C})}                 & \cong & {\mathsf{Set}(Y,U\mathbb{C}^X)} \\[-20pt]
  \alpha             &       & M \arrow[rr, maps to] \arrow[ll, maps to] &       & \beta            
  \end{tikzcd}
\]
Under this adjunction, the function $\alpha\colon X\to U\mathbb{C}^Y$ is defined on each $x\in X$ by $\alpha x(y)= M(x,y)$. Likewise define the function $\beta\colon Y\to U\mathbb{C}^X$ on each $y\in Y$ by $\beta y(x)=\overline{M(x,y)}$. By the universal property for vector spaces, $\alpha$ lifts uniquely to a linear map $M\colon FX\to FY$, and $\beta$ lifts uniquely to a linear map $M^\dagger \colon FY\to FX$ so that the following diagrams commute.
\begin{equation}\label{diagrams:correctlifts_LA}
  \begin{tikzcd}
  U\mathbb{C}^X \arrow[r, "UM", dashed]         & U\mathbb{C}^Y && U\mathbb{C}^Y \arrow[r, "UM^\dagger ", dashed]         & U\mathbb{C}^X \\
  X \arrow[ru, "\alpha"'] \arrow[u, hook] &     && Y \arrow[u, hook] \arrow[ru, "\beta"'] &    
  \end{tikzcd}
\end{equation}
What's more, $M$ and $M^\dagger$ are linear adjoints so that
\begin{equation}\label{eq5:adjoint}
\langle Mv,w\rangle = \langle v,M^\dagger w\rangle \qquad \text{for all $v\in FX$ and $w\in FY$.}
\end{equation}
If we drop the ``$U$s'' from the diagrams, then we recover the contents of Takeaway \ref{takeaway:LA}. When the function $M\colon X\times Y\to U\mathbb{C}$ further satisfies the condition that $\sum_{x,y}|M(x,y)|^2=1$, then it represents a pure quantum state in $FX\otimes FY$, and we've shown in detail in Chapter \ref{ch:probability} that the eigenvectors of the reduced density operators $MM^\dagger $ and $M^\dagger M$ harness valuable information from that state.  

These ideas, though elementary, set the stage for the remainder of this chapter. Here are the key points to have in mind. We started with the free-forgetful adjunction $\adj{U}{\mathsf{Vect}}{\mathsf{Set}}{F}$ and observed that $\mathsf{Vect}$ contains a special object, namely $\mathbb{C}$, with the property that the free space $FX$ on a set $X$ always ``looks like'' functions from $X$ to $U\mathbb{C}$. As a result, every  function $X\times Y\to U\mathbb{C}$ gives rise to a pair of adjoint maps $\adj{M}{FX}{FY}{M^\dagger}$, whose composites $M^\dagger M$ and $MM^\dagger$ have interesting fixed points.\sidenote{That is, they have interesting \textit{eigenvectors}. So we may restate our claim is, ``The one-dimensional invariant subspaces of $M^\dagger M$ and $MM^\dagger$ are interesting.''} This very same template will reappear in Section \ref{sec:5_CT} and again in Section \ref{sec:5_FCA}. More precisely, the idea that ``fixed points of a map with its adjoint are interesting'' is strikingly similar to a known scenario in category theory---there is nearly a word-for-word analogy between the two.  Even better, the categorical version recovers the formal concept construction in Section \ref{sec:ch1_FCA} as a special case. Let's discuss this category theory now.

\section{Free (Co)Completions}\label{sec:5_CT}
%Briefly introduce limits/colimits first, a la Math3ma abridged?

To construct the free vector space on a set $X$, we added all formal sums of elements in $X$. Category theory generalizes sums through \emph{colimits}, and it generalizes multiplication through \emph{limits}. We refer to any basic text for the formal definitions \cite{riehl2017category}. But here are three \textit{informal} things to know.
  \begin{enumerate}
    \item \textbf{Intuition.} Colimits and limits subsume many constructions in mathematics. Unions of sets and joins in a poset are two examples of colimits. Cartesian products of sets and meets in a poset are two examples of limits. Informally, colimits have a ``glue-y'' feel to them, whereas limits have an ``equation-y'' feel to them.

    \item \textbf{Vocabulary.} One always takes the (co)limit \textit{of} something. That ``something'' is a \emph{diagram} in a category $\C$, which is synonymous with a functor whose codomain is $\C$. The diagram is called \emph{small} if the domain of the functor is a small category. (See the next bullet.) A category is called \emph{cocomplete} if it contains  colimits of all small diagrams; it is called \emph{complete} if it contains limits of all small diagrams.

    \item \textbf{More Vocabulary.} A category is called \emph{small} if the morphisms of the category form a set. It is called \emph{locally small} if the morphisms between any two objects form a set. As an example, the category with two objects and a single nonidentity arrow $\bullet\to\bullet$ is small and hence locally small, whereas $\mathsf{Set}$ is locally small but not small.
  \end{enumerate} 

\newthought{As in the linear} algebra of Section \ref{sec:5_LA}, the starting ingredient for this section is an adjunction. Let $\mathsf{CAT}$ denote the category whose objects are locally small categories and whose morphisms are functors. Let $\mathsf{CocompCAT}$ denote the category of locally small cocomplete categories and cocontinuous functors between them.\sidenote{A functor is said to be \emph{cocontinuous} if it takes colimits to colimits.} Both $\mathsf{CAT}$ and $\mathsf{CocompCAT}$ are actually 2-categories; that is, there are morphisms (natural transformations) between the morphisms (functors). There is a free-forgetful ``adjunction'' between these 2-categories.
\[\adj{U}{\mathsf{CocompCAT}}{\mathsf{CAT}}{F}.\]
The functor $U\colon \mathsf{CocompCAT}\to \mathsf{CAT}$ associates to any locally small cocomplete category $\D$ its underlying category $U\D$. The functor $F\colon \mathsf{CAT}\to\mathsf{CocompCAT}$ associates to a locally small category $\C$ the \emph{free cocomplete category} $F\C$ on $\C$. The setup here isn't quite an adjunction in the usual sense, as 2-categorical considerations prevent it from being so. Instead, it is called a \emph{biadjunction}, which is the appropriate notion of adjunctions in this higher-categorical setting. But here's the main idea for us: just as every set gives rise to a free vector space, so every category $\C$ gives rise to a free cocomplete category $F\C$ \cite{Day_2007,kelly200,nlab:free_cocompletion}. Crucially, $F\C$ is free in that unit $\eta\colon \id_{\mathsf{CAT}}\Longrightarrow UF$ of the biadjunction has the following universal property.

\begin{UP_cocomplete}The \textit{free cocompletion of a category $\C$} is a cocomplete category $F\C$ with the property that for any other cocomplete category $\D$ and for any functor $f\colon \C\to U\D$, there is a unique cocontinuous functor $\hat f\colon F\C \to \mathsf{D}$ so that the following diagram commutes up to natural isomorphism.
\[
  \begin{tikzcd}
  UF\C \arrow[r, "U\hat f", dashed] & U\mathsf{D} \\
  \C \arrow[ru, "f"'] \arrow[u,"\eta_\C"]                 &   
  \end{tikzcd}
\]
\end{UP_cocomplete}
Let's unwind this statement. First, the category $F\C$ has a familiar construction. If $\C$ is small, then its free cocompletion is the category of all contravariant functors from $\C$ to $\mathsf{Set}$, also called \emph{presheaves} \cite{Day_2007}. It is typically written as\sidenote{Every category $\C$ has an \emph{opposite category} $\C^{\op}$ whose objects are the same as the objects in $\C$. A morphism $c\to c'$ in $\C^{\op}$ is defined to be a morphism $c'\to c$ in $\C$. As a result, for any category $\D$, a contravariant functor $\C\to \D$ is a functor $\C^{\op}\to \D$.}
\[F\C=\mathsf{Set}^{\C^{\op}}.\]
If the category $\C$ is \textit{not} small, then the construction of $F\C$ needs a minor adjustment, which will be explained shortly. But first, know that the unit $\eta_\C\colon \C\to UF\C$ of the biadjunction is the \emph{Yoneda embedding} $c\mapsto \C(-,c)$. Oftentimes the ``$U$''s are omitted from the diagram and one typically sees
\[
  \begin{tikzcd}
  \mathsf{Set}^{\C^{\op}}\arrow[r, "\hat f", dashed] & \mathsf{D} \\
  \C \arrow[ru, "f"'] \arrow[u,"\eta_\C"]                 &   
  \end{tikzcd}
\]
But let's be careful to insert a ``$U$'' where appropriate. In particular---and for reasons explained below---it's better not to write ``$F\C=\mathsf{Set}^{\C^{\op}}$,'' even though the notation may be familiar. We will suggest a better observation in Takeaway \ref{takeaway:5CT} later on. 

Now, how are we to understand the free cocompletion $F\C$ when $\C$ is a locally small (possibly large) category? Size issues prevent the presheaf category from being its free cocompletion, so a minor adjustment must be made. To motivate the idea, it will help to think back to the familiar linear algebra of Section \ref{sec:5_LA}. Recall that the free vector space on a set $X$ does not coincide with \textit{all} functions $X\to U\mathbb{C}$ but rather only those functions with finite support. In a completely analogous way, if $\C$ is not small, then its free cocompletion does not coincide with \textit{all} functors $\mathsf{C}^{\op}\to\mathsf{Set}$ but rather only those functors that are ``finite,'' or more precisely, \emph{small}. (Intuitively, a presheaf is said to be small if it can be written as a small colimit of representable functors.) For a precise definition and a thorough treatment of these ideas see \cite{Day_2007}.

Continuing to unwind the universal property, how should we understand the unique cocontinuous extension $\hat f$ along the Yoneda embedding? To aid intuition, we draw another comparison with linear algebra. Recall from the universal property for vector spaces that any function on a set $X$ lifts to a linear map on the free vector space on $X$ that agrees with the function on basis elements in $X$. Simply put, the lift is a \textit{linear extension} of the function. In a completely analogous way, the universal property above says that any functor $f\colon \C\to U\D$ has a ``cocontinuous extension'' $\hat f\colon F\C\to \D$ that agrees with $f$ on \emph{representable functors}, which are very much like basis vectors of the presheaf category $\mathsf{Set}^{\C^{\op}}$.\marginnote{``The representables are the prime numbers of presheaves.'' \cite{leinster2014basic}} More formally, \textit{any presheaf $V\colon \C^{\op}\to \mathsf{Set}$ is a colimit of representable functors}, which are functors of the form $\mathsf{C}(-,c)\colon \C^{\op}\to\mathsf{Set}$ for some object $c$ in $\C$ \cite[Theorem 6.5.8]{riehl2017category}, \cite[Theorem 6.2.17]{leinster2014basic}. There are a few ways to express this statement notationally. Here's one way that uses notation that we have not defined, but which is very suggestive \cite[Proposition 2.2.1]{loregian2015coend}.\sidenote{See also \cite[Proposition 2.2]{nlab:co-yoneda_lemma} and \cite[Section 2]{nlab:free_cocompletion}.} For any presheaf $V$, one has
\begin{equation}\label{eq:CT_yoneda}
V\cong \int^{c\in\C} \C(-,c)\cdot Vc,
\end{equation}
and the lift $\hat f$ in the universal property is then defined by
\begin{equation}\label{eq:CT_lift}
\hat{f}V:= \int^{c\in\C} fc \cdot Vc.
\end{equation}
The integral $\int^{c\in \C}$ is a type of a colimit called a \textit{coend}. The dot $\cdot$ is called a \textit{copower}, and the expression $\C(-,c)\cdot Xc$ denotes the $Xc$-indexed coproduct of the hom-functor $\C(-,c)$ with itself, and similarly for $fc\cdot Vc$. The particulars are not needed here, but we mention this simply to point out the similarities between Equation (\ref{eq:CT_yoneda}) and Equation (\ref{eq:LA_yoneda}),
\[
\begin{array}{c@{\hspace{3ex}}|@{\hspace{3ex}}c}
|v\>=\displaystyle\sum_{x\in X} v(x)|x\>
&
V\cong \displaystyle\int^{c\in\C} \C(-,c)\cdot Vc
\end{array} 
\]
and between Equation (\ref{eq:LA_lift}) and Equation (\ref{eq:CT_lift}).
\[
\begin{array}{c@{\hspace{3ex}}|@{\hspace{3ex}}c}
\hat f|v\>=\displaystyle\sum_{x\in X}v(x)f(x)
&
\hat{f}V= \displaystyle\int^{c\in\C} fc \cdot Vc
\end{array} 
\]
The analogy is clear: presheaves are \textit{like} vectors, representable functors are \textit{like} basis vectors, colimits are \textit{like} sums, cocontinuous functors are \textit{like} linear maps, and so on.\sidenote[][-3cm]{Here's yet another analogy. If $M$ is an $n\times m$ matrix and $N$ is an $m\times p$ matrix, then $MN$ is an $n\times p$ matrix whose $ij$th entry is obtained by ``tracing out'' a common index $\sum_k M_{ik}N_{kj}$. The categorical version of a matrix $X\times Y\to U\mathbb{C}$ is a \textit{profunctor}, which is the name given to a functor of the form $\C^{\op}\times D\to \mathsf{Set}$.  The composition of two profuctors $F\colon \C^{\op}\times \D\to\mathsf{Set}$ and $G\colon \D^{\op}\times \E\to\mathsf{Set}$ is given by a coend, which is a colimit over the common index $FG=\displaystyle \int^{d\in \D}F(-,d)\times G(d,-).$ More generally, one could imagine composing profunctors several variables, which brings to mind tensor contraction. I first learned of this analogy from \cite{Kissinger2011}.}

\newthought{In summary}, the free cocompletion $F\C$ of a category $\C$ is very much like the free vector space $FX$ on a set $X$. As suggested above, it's better not to use the notation ``$F\C=\mathsf{Set}^{\C^{\op}}.''$ The category $\mathsf{Set}$ has many nice properties, one of them being that it is cocomplete. Since an arbitrary category $\C$ may \textit{not} be cocomplete, an arrow $\C^{\op}\to \mathsf{Set}$ is not well-typed. Instead, it's better to say that a presheaf is a functor from $\C^{\op}$ to the \textit{underlying category} of the cocomplete category $\mathsf{Set}$, and that the \textit{underlying category} of the free cocompletion $F\C$ coincides with these presheaves. This is summarized in the takeaway below. Compare it with the linear algebra in Takeaway \ref{takeaway:5LA}.

\begin{takeaway}\label{takeaway:5CT}
Given the free-forgetful biadjunction $\adj{U}{\mathsf{CocompCAT}}{\mathsf{CAT}}{F}$, the free cocompletion $F\C$ of a category $\C$ has the property that
\begin{equation}
UF\C=U\mathsf{Set}^{\C^{\op}},
\end{equation}
where $U\mathsf{Set}^{\C^{\op}}$ denotes the category of ``small'' presheaves $\C^{\op}\to U\mathsf{Set}$, namely all presheaves if $\C$ is small or small presheaves otherwise. In short, the adjunction $F\dashv U$ has the property that there is a special object in $\mathsf{CocompCAT}$, namely $\mathsf{Set}$, such that all free cocomplete categories are, in some sense, built up from it.
\end{takeaway}

As expected, there is a dual discussion obtained by ``reversing all the arrows.'' It's worth mentioning briefly. Let $\mathsf{CompCAT}$ denote the 2-category of locally small complete categories, continuous functors,\sidenote{A functor is said to be \emph{continuous} if it takes limits to limits.} and natural transformations between them. There is a free-forgetful biadjunction 
\[\adj{U}{\mathsf{CompCAT}}{\mathsf{CAT}}{\bar F}.\]
The functor $U\colon \mathsf{CompCAT}\to \mathsf{CAT}$ associates to any complete category $\C$ its underlying category $U\C$. The functor $\bar F\colon \mathsf{CAT}\to\mathsf{CompCAT}$ associates to a category $\C$ the \emph{free complete category} $\bar F\C$ on $\C$. It is free in that the unit $\eta\colon \id_{\mathsf{CAT}}\Longrightarrow U\bar F$ of this biadjunction satisfies the following universal property.

\begin{UP_complete}The \textit{free completion of a category $\C$} is a complete category $\bar F\C$ with the property that for any complete category $\D$ and for any functor $g\colon \C\to U\D$, there is a unique continuous functor $\hat g\colon \bar F\to \D$ so that the following diagram commutes up to natural isomorphism.
\[
  \begin{tikzcd}
  U \bar F\C \arrow[r, "U\hat g", dashed] & U\D \\
  \C \arrow[ru, "g"'] \arrow[u,"\eta_{\C}"]                 &   
  \end{tikzcd}
\]
\end{UP_complete}
\noindent The free completion $\bar F\C$ is constructed as\sidenote{One may find difficulty in locating this explicit statement in the literature since it is a formal consequence of the free cocompletion construction. One place to start is the nLab's article on free completions \cite{nlab:free_completion}.} the opposite category of \textit{co}presheaves on $\C$, typically denoted $(\mathsf{Set}^{\C})^{\op}$. So $\bar F \C$ is the category whose objects are (small) functors $f\colon \C\to U\mathsf{Set}$, and where a morphism between copresheaves $f\to f'$ is a natural transformation going in the other direction $f'\Longrightarrow f.$  In particular, when $\C$ is a small category $\bar F\C$ coincides with \textit{all} copresheaves $\C\to U\mathsf{Set}$. In either case, the free completion of $\C$ has the property that 
\[ U\bar F \C = (U\mathsf{Set}^{\C})^{\op},\]
where the right-hand side is the category of small (and thus all when $\C$ is small) functors $\C\to U\mathsf{Set}$. The functor $\eta_{\C}\colon \C\to U\bar F\C$ in the universal property is the \textit{dual} Yoneda embedding that sends an object $c\in \C$ to its representable functor $\C(c,-)$, and the unique continuous extension $\hat g$ is defined dually to the unique cocontinuous extension in Equation (\ref{eq:CT_lift}).

%Use the fact that any copresheaf $W$ can be written as a limit of representable functors $W\cong\int_{c\in C} \C(c,-)^{Wc}$, where the notation $\C(c,-)^{Wc}$ is called a \textit{power} and denotes the $Wc$-indexed product of copies of $\C(c,-)$  \tai{(Fosco has them flipped; also weird that he takes copower as product! Maybe omit details? or put in margin since NO ONE ever says this.)}. Perhaps not surprisingly, then, the unique lift $\hat g$ of any functor $g\colon \C\to U\D$ is given by \[\hat gW=\int_{c\in \C} {gc}^{Wc}.\]

\newthought{There is quite} a bit of heavy machinery here, but we won't need to work with it explicitly. We mentioned it simply to call attention to the fact that \textit{there exists} well-developed theory that connects with familiar ideas in linear algebra and---as we'll see in the next section---formal concept analysis. En route to making this connection clearer, let us close with one final remark. Suppose $\C$ is a small category. We have the following two observations.
\begin{description}
\item \textit{The free cocomplete category $F\C$ is not only cocomplete, it is also complete.}

\item \textit{The free complete category $\bar F \C$ is not only complete, it is also cocomplete.}
\end{description}
The reason behind both statements is simple. The category of (contravariant) functors $\C^\to U\mathsf{Set}$ inherits (co)completion from $\mathsf{Set}$, which is both complete and cocomplete. The (co)limit of any diagram of functors $\C^{\op}\to\mathsf{Set}$ is obtained by evaluating them pointwise and then computing the (co)limit in $\mathsf{Set}$. This is analogous to how the set of functions from a set $X\to\mathbb{C}$ inherits a vector space structure from $\mathbb{C}$ by pointwise evaluation. These simple observations lead to our final analogy with linear algebra.\sidenote{The following discussion should be compared with the analogous discussion in Section \ref{sec:5_LA}.}

Let $\D$ be a small category. Any functor $M\colon C^{\op}\times \D\to U\mathsf{Set}$
gives rise to functors $A$ and $B$ on each factor, 
\[
  \begin{tikzcd}
  {\mathsf{Cat}(\C^{\op},U\mathsf{Set}^\D)} & & {\mathsf{Cat}(\C^{\op}\times \D,U\mathsf{Set})}\arrow[ll]\arrow[rr]                 & & {\mathsf{Cat}(\D,U\mathsf{Set}^{\C^{\op}})} \\[-20pt]
  A             &       & M \arrow[rr, maps to] \arrow[ll, maps to] &       & B            
  \end{tikzcd}
\]
Notice that functors $\C^{\op}\to U\mathsf{Set}^\D$ on the left-hand side are in bijection with functors $\C\to (U\mathsf{Set}^\D)^{\op}$. Under this correspondence, the functor $A\colon \C\to (U\mathsf{Set}^\D)^{\op}$ obtained from $M$ is defined on an object $c$ in $\C$ by $Ac(d)=M(c,d)$ for each object $d$ in $\D$. Likewise the functor $B\colon \D\to U\mathsf{Set}^{\C^{\op}}$ is defined on each $d$ by $Bd(c)=M(c,d)$. By the universal property for free cocompletions, $A$ lifts uniquely to a cocontinuous functor $M^*\colon F\C\to \bar F\D$, and by the universal property for free completions, $B$ lifts uniquely to a continuous functor $M_*\colon \bar F\D\to F\C$ so that the following diagrams commute,
\begin{equation}\label{diagrams:correctlifts_CT}
  \begin{tikzcd}
  U\mathsf{Set}^{\C^{\op}} \arrow[r, "UM^*", dashed]         & (U\mathsf{Set}^\D)^{\op} && (U\mathsf{Set}^\D)^{\op} \arrow[r, "UM_*", dashed]         & U\mathsf{Set}^{\C^{\op}} \\
  \C \arrow[ru, "A"'] \arrow[u, hook] &     && \D \arrow[u, hook] \arrow[ru, "B"'] &    
  \end{tikzcd}
\end{equation}
See the excellent exposition in \cite{willerton2013} for explicit expressions for $M^*$ and $M_*$ in the enriched setting, and compare this with the analogous discussion in Section \ref{sec:5_LA} leading up to the two diagrams in (\ref{diagrams:correctlifts_LA}). The climax here is that the functors $M^*$ and $M_*$ are adjoints. That is, there is a natural isomorphism $(\mathsf{Set}^\D)^{\op}(M^*V,W)\cong \mathsf{Set}^{\C^{\op}}(V,M_*W)$ for all presheaves $V$ and copresheaves $W$. Here we have lifted our self-imposed ban on writing $F\C=\mathsf{Set}^{\C^{\op}}$ to call attention to the ``op'' on the left-hand side of the isomorphism. A morphism $M^*V\to W$ in $(\mathsf{Set}^\D)^{\op}$ is a morphism $W\to M^*V$ in $\mathsf{Set}^\D$, and so we may rewrite the correspondence as
\[\mathsf{Set}^\D(W,M^*V)\cong \mathsf{Set}^{\C^{\op}}(V,M_*W).\]
The functors $M_*$ and $M^*$ are thus said to be \textbf{mutually right adjoints} that come with ``unit'' natural transformations  $\eta\colon \id \Longrightarrow M_*M^*$ and $\epsilon \colon \id \Longrightarrow M_*M^*$ \cite[Definition 4.3.1]{riehl2017category}, and their invariant subcategories, or ``fixed points,'' now arise as the topic of interest.\marginnote{Let's say an object $a$ in a category $\C$ is \textit{invariant} under a functor $F\colon\C\to\D$ if there is an isomorphism $a\overset{\cong}{\longrightarrow} Fa.$} Let $\text{Fix}(M_*M^*)$ be the category whose objects are objects $c$ in $\C$ such that the unit $\eta_c\colon c \overset{\cong}{\longrightarrow} M_*M^*c$ is an isomorphism. Likewise, let $\text{Fix}(M^*M_*)$ denote the category hose objects are objects $d$ in $\D$ so that $\epsilon_d\colon d \overset{\cong}{\longrightarrow} M^*M_*d$ is an isomorphism. There is an equivalence of categories\sidenote{This is a standard exercise. Every adjunction restricts to an equivalence of categories in this way.}
\[
\text{Fix}(M_*M^*)\cong
\text{Fix}(M^*M_*)
\] 
and their objects are called the \textbf{nucleus} of $M$. When $\D=\C$ and $M\colon \C^{\op}\times \C\to\mathsf{Set}$ is the hom-functor $\C(-,-)$, the adjunction $M^*\dashv M_*$ is called the \textbf{Isbell adjunction} or Isbell conjugation, and its nucleus is called the \emph{Isbell completion} of $\C$. If $\mathsf{Set}$ is further replaced by the category $2$ of truth values (defined in the next section) and the category $\C$ is a poset $P$, then this recovers the \emph{Dedekind MacNeille completion of $P$}. If $P=\mathbb{Q}$ with the usual ordering, then the Isbell completion gives the completion of $\mathbb{Q}$ by Dedekind cuts. If instead of replacing $\mathsf{Set}$ by $2,$ we were to replace it with $[0,\infty]$ with an appropriate categorical structure, this recovers the \emph{tight span} of a metric space. For more on these constructions see \cite{willerton2015legendrefenchel,pavlovic2012quantitative,isbell1966structure,Lawvere86takingcategories,Elliott2017OnTF,willerton2013tight}. By now we've covered a lot of information in a small amount of space. Here's the takeaway.

%and the category $\text{Fix}(M_*M^*)\cong \text{Fix}(M^*M_*)$ is called the \textbf{Isbell evelope} or \textbf{Isbell completion} of $\C$ \cite{Lawvere86takingcategories}. 

\begin{takeaway}\label{takeaway:5CTa}
Given small categories $\C$ and $\D$, any functor $M\colon \C^{\op}\times \D\to U\mathsf{Set}$ induces two functors $A\colon \C\to U\mathsf{Set}^{\C^{\op}}$ and $B\colon \D\to (U\mathsf{Set}^\D)^{\op}$ that lift to a cocontinuous functor $M^*$ and a continuous functor $M_*$ so that the following diagrams commute.
\begin{equation}\label{diagrams:presheaf_lifts}
  \begin{tikzcd}
  U\mathsf{Set}^{\C^{\op}} \arrow[r, "M^*", dashed]         & (U\mathsf{Set}^{\D})^{\op} && (U\mathsf{Set}^{\D})^{\op} \arrow[r, "M_*", dashed]         & U\mathsf{Set}^{\C^{\op}} \\
  \C \arrow[ru, "A"'] \arrow[u, hook] &     && \D \arrow[u, hook] \arrow[ru, "B"'] &    
  \end{tikzcd}
\end{equation}
Moreover $M^*$ and $M_*$ satisfy
\[\mathsf{Set}^\D(W,M^*V)\cong \mathsf{Set}^{\C^{\op}}(V,M_*W)\]
for all presheaves $V$ in $\mathsf{Set}^{\C^{\op}}$ and copresheaves $W$ in $\mathsf{Set}^{D}$, and the invariant subcategories of the compositions $M_*M^*$ and $M^*M_*$ are the nucleus of $M,$ and there is an equivalence between them.
\end{takeaway}
Take note of the unmistakable similarity with the linear algebra in Takeaway \ref{takeaway:LA}, shown in the margin for convenience.
  \marginnote[-3.5cm]{\textbf{Takeaway 2. }{\itshape  Given finite sets $X$ and $Y$, any function $M\colon X\times Y\to \mathbb{C}$ induces two functions $\alpha\colon X\to \mathbb{C}^Y$ and $\beta\colon Y\to\mathbb{C}^X$ that lift to linear maps $M$ and $M^\dagger$ so that the following diagrams commute.
  \[
    \begin{tikzcd}[ampersand replacement =\&]
    \mathbb{C}^X \arrow[r, "M", dashed]         \& \mathbb{C}^Y \&\& \mathbb{C}^Y \arrow[r, "M^\dagger", dashed]         \& \mathbb{C}^X \\
    X \arrow[ru, "\alpha"'] \arrow[u, hook] \&     \&\& Y \arrow[u, hook] \arrow[ru, "\beta"'] \&    
    \end{tikzcd}
  \]
  Moreover, $M$ and $M^\dagger$ satisfy 
  \[
  \langle Mv,w\rangle = \langle v,M^\dagger w\rangle
  \]
  for all $v\in\mathbb{C}^X$ and $w\in\mathbb{C}^Y$, and the one-dimensional invariant subspaces of the compositions $M^\dagger M$ and $MM^\dagger$ are their eigenvectors. Moreover, there is a one-to-one correspondence between them.}
  }
In one case, a function $M\colon X\times Y\to U\mathbb{C}$ defines a pair of linear adjoints $\adj{M}{\mathbb{C}^X}{\mathbb{C}^Y}{M^\dagger}$, and the eigenvectors of $M^\dagger M$ and $MM^\dagger$ we've discussed at length in Chapters \ref{ch:probability} and \ref{ch:application}. In another case, a functor $M\colon \C^{\op}\times \D\to U\mathsf{Set}$ defines a pair of categorical adjoints $\adj{M^*}{\mathsf{Set}^{\C^{\op}}}{(\mathsf{Set}^\D)^{\op}}{M_*}$, and the one-dimensional invariant subspaces of $M_*M^*$ and $M^*M_*$ are a known construction. Naturally, one is left wondering whether the dictionary between the category theory and linear algebra is more than a coincidence. Is it? Towards an answer, here's something to know. The category theory in this section may be promoted to an \textit{enriched} version. In other words, (co)presheaves and (co)completions have an analogous version when $\mathsf{Set}$ is replaced by any other symmetric monoidal closed category $\mathcal{V}$ that is both complete and cocomplete. This is a part of \emph{enriched category theory}, which stems from the idea that a set of morphisms in a category may actually be another object in that category.\sidenote{For example, the \textit{set} of linear operators is itself a \textit{vector space}, the \textit{set} of continuous functions is itself a \textit{topological space}, and so on.}

In view of this, it might seem desirable to ask that $\mathbb{C}$ be an enriching category, so that the linear algebra in Section \ref{sec:5_LA} is subsumed by the category theory in this section. But despite the many parallels, it is not known---to my knowledge---whether $\mathbb{C}$ may be viewed as a symmetric monoidal category which is also closed, complete, and cocomplete, so that one recovers the basic constructions in linear algebra as a special case. But perhaps one shouldn't expect this anyway. Colimits, for instance, are ``a remarkable enhancement of the concept of addition'' \cite{baez_2020}, and freely adding linear algebraic \textit{structure} is different than freely adding limits and colimits, which are closer to \textit{property-like structures} \cite[Section 4]{nlab:completion}. But all is not lost. Although linear algebra does not appear to be a special case of enriched category theory, it turns out that the formal concept construction discussed in Sections \ref{sec:ch1_FCA} and \ref{sec:evectsFCA} \textit{is}. This is explained in the next section.

Let us give a quick preview of the ideas. If $\mathsf{Set}$ is replaced by the poset (category) of \textit{truth values}, then free (co)completions reduce to familiar constructions in order theory. Briefly, any set $X$ can be viewed as a discrete category---the objects are the elements in $X$ and there are no nonidentity morphisms. The free cocompletion of $X$ coincides with the \emph{free join semilattice} on $X$, and the free completion of $X$ coincides with the \emph{free meet semilattice} on $X$. Moreover, there is an adjunction between these two semilattices, and the invariant subsets of the adjoint maps are formal concepts.  Because the enriched category theory reduces constructions that are both elementary and familiar, we will not introduce, use, or assume familiarity with the language of enriched category theory. Instead, we will only give a somewhat abridged, unenriched presentation with the simple goal of sharing the big-picture ideas. Readers are referred to \cite{lawvere1973,willerton2014,willerton2015legendrefenchel,Elliott2017OnTF,kelly1982basic} for more details. So while there is no claim that eigenvectors of reduced density operators are the fixed points of functors between (co)presheaf categories, it \textit{is} true of formal concepts. Understanding this will then shed light on the category theory that \textit{does} unite eigenvectors, generalized Isbell completions, and formal concepts.

\section{Revisiting Formal Concepts}\label{sec:5_FCA}
The present goal is specialize the category theory of Section \ref{sec:5_CT} with the aim of rederiving the definition of a formal concept given in Section \ref{sec:ch1_FCA}. Perhaps not obvious at first, this section is essentially a repeat of Section \ref{sec:5_CT} but with $\mathsf{Set}$ replaced by ``$2$.'' This will be explained soon. For now the starting point is, as before, an adjunction.

Let $\mathsf{Join}$ denote the category of join semilattices and join-preserving functions. A \emph{join semilattice} is a set $P$ with a partial order $\leq$ such that the join (the least upper bound) of any nonempty finite subset of $P$ exists. A function between join semilattices $f\colon P\to Q$ is said to \emph{preserve joins} if $f(p\vee p')=fp\vee fp'$ for all $p,p'\in P.$ Any join-preserving function automatically respects the partial order.\sidenote{Note that $p=p\vee p'$ if and only if $p'\leq p$, so if $f$ is join-preserving and if  $p'\leq p$, then $fp=f(p\vee p')=fp\vee fp'$, which implies $fp'\leq fp.$} This information can also be expressed in the language of category theory. Every poset $P$ is a category. Objects are the elements $p,p'$ of $P$, and a hom set $P(p,p')$ contains exactly one morphism $p\to p'$ if and only if $p\leq p'$. The colimit of any diagram in $P$ is the join of the elements, if it exists. So ``$P$ is a cocomplete category'' is synonymous with ``$P$ is a joint semilattice.'' An order-preserving function between posets is a functor, and a join-preserving function between join semilattices is a cocontinuous functor.\sidenote{There's a dual version to this. The \textit{limit} of any diagram in $P$ is the \textit{meet} (the greatest lower bound) of the elements, if it exists. So $P$ is a \emph{meet semilattice} precisely when it is a complete category. A meet-preserving function between meet semilattices is a continuous functor.} We won't need this perspective now, but being aware of it may be helpful in interpreting some of the results to come. We will, however, need the following fact. There is a free-forgetful adjunction 
\[\adj{U}{\mathsf{Join}}{\mathsf{Set}}{F}.\]
The functor $U\colon \mathsf{Join}\to\mathsf{Set}$ forgets partial orders and hence joins, and it associates to any join semilattice $P$ its underlying set $UP$. The functor $F\colon \mathsf{Set}\to\mathsf{Join}$ associates to a set $X$ the free join semilattice $FX$ on $X$. It is free in that the unit natural transformation $\eta\colon \id_{\mathsf{Set}}\Longrightarrow UF$ of this adjunction satisfies the following universal property.
\begin{UP_join}
The \textit{free join semilattice on a set $X$} is a join semilattice $FX$ with the property that for any join semilattice $P$ and for any function $f\colon X\to UP$ there is a unique join-preserving function $\hat f\colon FX\to P$ so that the following diagram commutes.
\[
  \begin{tikzcd}
  UFX \arrow[r, "U \hat f",dashed]                    & UP \\
  X \arrow[ru, "f"'] \arrow[u, "\eta_X"] &   
  \end{tikzcd}
\]
For any $A\in FX$ the unique lift in Equation (\ref{eq:CT_lift}) specializes in this context to give
\begin{equation}\label{eq:join_lift}
\hat f A:=\bigvee_{x\in A}fx.
\end{equation}
\end{UP_join}
Concretely, $FX$ is the poset consisting of all finite subsets of $X$ ordered by inclusion.\marginnote{As a technical aside, we are including the emptyset in $FX$, and so it may be referred to as the free \textit{bounded} join semilattice on $X$ since $\varnothing \leq A$ for every subset $A$ of $X$.}
%\tai{[\href{https://en.wikipedia.org/wiki/Semilattice\#Free_semilattices}{Wikipedia's} definition says have to take all NONEMPTY finit subsets. I want to include the empty set! E.g. two objects may have no attributes in common. So technically, I think I want $FX$ to be the free \href{https://orion.math.iastate.edu/jdhsmith/class/M567Defn.htm}{``bounded''} join semilattice on $X$, which means the partial order satisfies the additional property that there exists an object $\varnothing \in FX$ such that $\varnothing \leq A$ for all $A$. But I don't want to introduce more finicky language!]} 
The join of finite subsets $A,B\subseteq X$ is their union $A\vee B:=A\cup B.$ The function $\eta_X$ associates to an element $x\in X$ the singleton set $\{x\}$, and verifying that $\hat f$ preserves joins is a simple check. Elements in $FX$ may be identified with finitely-supported functions from $X$ to the two-element set $\{0,1\}$. When $X$ is finite this coincides with \textit{all} functions from $X$. Either way, it may be tempting to write ``$FX=2^X$,'' but let's resist the urge  for the same reason we resist writing $\mathbb{C}^X$ and $\mathsf{Set}^{\C^{\op}}$. That is, just as the complex numbers $\mathbb{C}$ and the category $\mathsf{Set}$ are special objects in $\mathsf{Vect}$ and $\mathsf{CAT}$, respectively, so ``$2$'' is a special object in $\mathsf{Join}$. We take a brief interlude to discuss this.

\subsection{Interlude: $2$, Categorically}\label{digression:2}
Start with the definition
\[2:=\{0\leq 1\}.\]
So $2$ is a poset. In fact, it is a join semilattice. For all elements $a$ and $b$ in $2$, their join $a\vee b$ exists:
\[0\vee 0 = 0 \qquad 0\vee 1 = 1\vee 0 =  1\vee 1 = 1.\]
So $2$ is an object in $\mathsf{Join}$, and its underlying set is $U2=\{0,1\}$. In fact, there's more: $2$ is also a \emph{meet semilattice}. For all elements $a$ and $b$ in $2$, their meet $a\wedge b$ exists:
\[0\wedge 0 = 0\wedge 1 = 1\wedge 0 = 0 \qquad 1\wedge 1 = 1.\]
So $2$ is also an object in $\mathsf{Meet}$, which is the category of meet semilattices and meet-preserving maps between them.\marginnote{That $2$ is both a join and a meet semilattice is completely analogous to the fact that $\mathsf{Set}$ is both cocomplete and complete.} Think of the elements of $2$ as representing \emph{truth values} with $0$ as false and $1$ as true, and think of meets as logical ``and'' and of joins as ``or.'' This pairs well with the intuition that a relation $R\subseteq X\times Y$ defines any function $R\colon X\times Y\to U2$, which provides a ``yes or no'' answer to the question, ``Are $x$ and $y$ related?'' As with any poset, $2$ is also a category. It has two objects with a single nonidentity morphism between them.
\[
  \begin{tikzcd}
  0 \arrow[r] \arrow[loop, distance=2em, in=125, out=55] & 1 \arrow[loop, distance=2em, in=55, out=125]
  \end{tikzcd}
\]
Think of the arrow $0\to 1$ as logical entailment. In this way, $2$ is quite rich in the eyes of category theory: it is symmetric monoidal (the monoidal product is ``and'') and it is closed\sidenote{Informally, a category is \emph{closed} if the morphisms between any two objects form another \textit{object} in that category. That object is called an \textit{internal hom.}} (the internal hom is implication). It is also both complete and cocomplete, for in any poset, the limit of any diagram is---if it exists---the meet of the elements in the diagram, and the colimit of any diagram is---if it exists---the join of the elements. So every meet (join) semilattice is a complete (cocomplete) category. In short, $2$ it is an excellent example of a category that one can ``do enriched category theory in.'' Though interesting, we omit the details. The enriched constructions ultimately recover familiar constructions in order theory, and the latter is sufficient for our goals. But, as a quick aside, here's the main idea connecting the two:
\begin{quote}
A subset $A\subseteq X$ is completely determined by its characteristic function $\bar A\colon X\to U2$ defined by $\bar Ax=1$ if and only if $x\in A$.
\end{quote}
In other words, knowing a subset $A$ is the same thing as knowing the truth value of the question ``Is $x\in A$?'' for every $x\in X$. Enriched category theory capitalizes on this. It's worth digressing a little more to understand just how. There is a precise sense in which every poset $P$ is enriched (a word that we still haven't, and won't, define) over the category $2$. Here's the basic idea. Rather than considering a hom-\textit{set} $P(p,p')$, which contains a single element $p\to p'$ if and only if $p\leq p'$, one instead considers an \textit{object} $2(p,p')$ in $2$, which is thought of as a hom-\textit{truth value}. The truth value answers the question ``Is $p\leq p'$?'' with $1$ when the answer is yes and with $0$ when the answer is no. As an example, if $P$ is the free join semilattice on a set $X$, then for any finite subsets $A,B\subseteq X$, the truth value between them indicates whether or not $B$ contains $A$.
\begin{equation}
  2(A,B)=
  \begin{cases}
  1 &\text{if $A\subseteq B$}\\
  0 &\text{otherwise}.
  \end{cases}
\end{equation}
Since $A$ and $B$ ultimately correspond to functions $\bar A,\bar B\colon X\to U2$, the hom-object $2(A,B)$ arises from pointwise evaluation.
\begin{equation}
  2(\bar A,\bar B)=
  \begin{cases}
  1 &\text{if $\bar Ax \leq \bar Bx$ for all $x\in X$}\\
  0 &\text{otherwise}
  \end{cases}
\end{equation}
So just as the dominance of $B$ in $A\subseteq B$ is indicated by a truth value $2(A,B)$, so the dominance of $Bx$ in $Ax\leq Bx$ is also indicated by a truth value, which we can denote by $[Ax,Bx]$. More precisely, $2$ can be equipped with a pairing $[-,-]\colon 2\times 2\to 2$ called the \emph{internal hom of 2}. Continuing to borrow the language of category theory, functions $\bar A,\bar B\colon X\to U2$ are examples of \emph{enriched presheaves},\footnote{More generally, a presheaf on any poset $P$ is simply an order-reversing function $P^{\op}\to 2$, where $P^{\op}$ is the poset obtained by reversing the order on $P$. A finite set $X$ is an example of a poset with $x\leq x'$ if and only if $x=x'$. In this way, $X^{\op}=X$ and thus a presheaf $X^{\op}\to 2$ is nothing more than a function $X\to U2$.} and the existence of a hom truth value $2(\bar A,\bar B)$ between them is the first step in discovering that $FX$ is a \emph{category enriched over $2$}.

After working through the details, it turns out that \textit{the free join semilattice $FX$ is quite literally the free cocompletion of $X$}, in a precise, enriched sense; that the join in Equation (\ref{eq:join_lift}) corresponds to a presheaf (i.e. a truth-valued function) analogous to Equation (\ref{eq:CT_lift}); and that the inclusion $\eta_X\colon X\to UFX$ sending an element to its singleton set $\{x\}$ is ultimately an enrichment of the Yoneda embedding. For more on enrichment over truth values, see Lawvere's classic \cite{lawvere1973} as well as the informal article \cite{willerton2014} and the PhD thesis \cite[Appendix A]{Elliott2017OnTF}. For more on enriched category theory in general, see \cite{riehl2014,kelly1982basic}. To summarize, the purpose of this digression is simply to call attention to analogous constructions
\[\mathbb{C}^X \qquad \mathsf{Set}^{\C^{\op}} \qquad \V^{\C^{\op}} \qquad 2^X\]
for a finite set $X$ and small category $\C$. Here $\V$ is a sufficiently nice category that can replace $\mathsf{Set}$ in the constructions in Section \ref{sec:5_CT}. Crucially, $2$ is an example of such a $\V$, and this chapter is essentially a repeat of Section \ref{sec:5_CT} in this truth-enriched setting.

\bigskip
\noindent\textbf{End of Interlude \ref{digression:2}.}
\bigskip

\noindent This is a good time to summarize the takeaway of the discussion so far. Compare the following with Takeaways \ref{takeaway:5LA} and \ref{takeaway:5CT}.

\begin{takeaway}\label{takeaway:5FCA} Given the free-forgetful adjunction $\adj{U}{\mathsf{Join}}{\mathsf{Set}}{F}$, the free join semilattice $FX$ on a set $X$ has the property that
  \begin{equation}\label{eq:EforFCA}
  UFX=U2^X,
  \end{equation}
where $U2^X$ denotes the set of finitely-supported functions (and hence all functions if $X$ is finite) $X\to U2$. To emphasize, the adjunction $F\dashv U$ has the property that there is a special object in $\mathsf{Join}$, namely 2, such that all join semilattices are, in some sense, built up from it.
\end{takeaway}
 
As expected, there is a dual discussion obtained by ``reversing all the arrows.'' In Section \ref{sec:5_CT} this led to the free completion of a category. It specializes in this context to the \textit{free meet semilattice of a set}.  Explicitly, there is a free-forgetful adjunction
\[\adj{U}{\mathsf{Meet}}{\mathsf{Set}}{\bar F}\]
where $\mathsf{Meet}$ is the category of meet semilattices and meet-preserving functions. It is free in that the unit $\eta\colon \id_{\mathsf{Set}}\Longrightarrow U\bar F$ of this adjunction satisfies the following universal property.
\begin{UP_meet}
The \textit{free meet semilattice on a set $X$} is a meet semilattice $\bar FX$ such that for any meet semilattice $P$ and for any function $g\colon X\to UP$ there is a unique meet-preserving function $\hat g\colon \bar FX\to P$ so that the following diagram commutes.
\[
  \begin{tikzcd}
  U\bar FX \arrow[r, "U\hat g"]                    & UP \\
  X \arrow[ru, "g"'] \arrow[u, "\eta_X"] &   
  \end{tikzcd}
\]
Explicitly, for any $B\in \bar FX$ the unique lift is given by
\begin{equation}\label{eq:meet_lift}
\hat gB:=\bigwedge_{x\in B}gx.
\end{equation}
\end{UP_meet}
\noindent The free meet semilattice on $X$ is constructed as the \textit{opposite} of its free join semilattice.
\begin{equation}\label{eq:join_op}
\bar F X=(FX)^{\op}
\end{equation}
In other words, $\bar FX$ consists of all 
%\tai{again, should be nonempty, but I don't want it to be.}
finite subsets of $X$ with the superset order, rather than inclusion. The meet of subsets $A,B\subseteq X$ in $\bar F$ is therefore their union $A\wedge B:=A\cup B.$ The elements of $\bar FX$ may be identified with the set $U2^Y$ of all finitely-supported functions $Y\to U2$; that is,
\[U\bar FY=U2^Y= UFY.\] 
The unit of the adjunction $\eta_X\colon X\to U\bar FX$ is again the inclusion of an element $x\in X$ to the singleton set $\{x\}$. A quick calculation verifies that $(FX)^{\op}$ does indeed satisfy the universal property and that the map given in Equation (\ref{eq:meet_lift}) preserves meets,\sidenote{Note that any meet-preserving map $(FY)^{\op}\to P$ is automatically order-\textit{reversing}. This follows from the fact that $A\cup B=B$ if and only if $A\subseteq B$ and that $p\wedge p'=p'$ if and only $p'\leq p$, for all $p,p'\in P$ and $A,B\in FY$.}  meaning that $\hat g(A\cup B)=\hat gA\cap \hat gB$.

\newthought{Let's take inventory} of the results so far. 
\begin{itemize}
  \item Subsets $X\to U2$ of a finite set $X$ are like functors $\C^{\op}\to \mathsf{Set}$ on a small category $\C$. In fact, one finds they are \textit{truth-enriched} functors.
  \item Free join (meet) semilattices are like free cocompletions (completions). In fact, they are \textit{truth-enriched} (co)completions.
  \item The inclusion $x\mapsto \{x\}$ is like the Yoneda embedding $c\mapsto \C(-,c)$. In fact, it arises from the \textit{truth-enriched} Yoneda embedding.
\end{itemize}
In light of this dictionary, the following may come as no surprise. Recall from Section \ref{sec:5_CT} that the free cocompletion of $\C$ was also \textit{complete}, and that its free completion was also \textit{cocomplete}. This led to the nice diagrams in Takeaway \ref{takeaway:5CTa}. The truth-enriched version has the same feature. 
\begin{description}
  \item \textit{The free join semilattice $FX$ on $X$ is also a meet semilattice.}
  \item \textit{The free meet semilattice $(FX)^{\op}$ on $X$ is also a join semilattice.}
\end{description}
The meet of subsets in $FX$ is their intersection, and the join of subsets in $(FX)^{\op}$ is also their intersection. These two facts go hand-in-hand in the following consideration.\marginnote{Compare the discussion here with the analogous discussions in Sections \ref{sec:5_LA} and \ref{sec:5_CT}.} Let $X$ and $Y$ be finite sets and consider any function $R\colon X\times Y\to U2$. The category of sets is Cartesian closed, and so we have a product-hom adjunction and hence the following bijections.
\[
  \begin{tikzcd}
  {\mathsf{Set}(X,U2^Y)} & \cong & {\mathsf{Set}(X\times Y,U2)}                 & \cong & {\mathsf{Set}(Y,U2^X)} \\[-20pt]
  a             &       & R \arrow[rr, maps to] \arrow[ll, maps to] &       & b            
  \end{tikzcd}
\]
Under this adjunction, the function $a\colon X\to U2^Y$ is defined on each $x\in X$ by
\begin{equation}\label{eq:ab2}
ax(y)=
\begin{cases}
1 &\text{if $R(x,y)=1$}\\
0 &\text{if $R(x,y)=0$},
\end{cases}
\end{equation}
and the function $b\colon Y\to U2^X$ is defined similarly.\marginnote{Notice the preimages $ax^{-1}(1)$ and $by^{-1}(1)$ recover the sets $\{y\in Y\mid R(x,y)=1\}$ and $\{x\in X\mid R(x,y)=1\}$ that appear in Equation (\ref{eq:ab}) in Chapter \ref{ch:introduction}.} By the universal properties for free join and meet semilattices, $a$ lifts uniquely to a join-preserving function $f\colon FX\to (FY)^{\op}$, while $b$ lifts uniquely to a meet-preserving function $g\colon (FY)^{\op}\to FX$ so that the following diagrams commute:
\begin{equation}\label{diagrams:FCA}
  \begin{tikzcd}
  U2^X \arrow[r, "Uf", dashed]         & U2^Y && U2^Y \arrow[r, "Ug", dashed]         & U2^X \\
  X \arrow[ru, "a"'] \arrow[u, hook] &     && Y \arrow[u, hook] \arrow[ru, "b"'] &    
  \end{tikzcd}
\end{equation}
Explicitly, for any $A\subseteq X$ and $B\subseteq Y$,
\[
  fA= \bigcap_{x\in A}ax 
  \qquad\text{and}\qquad
  gB=\bigcap_{y\in B} by
\]
which recovers Equations (\ref{eq:f}) and (\ref{eq:g}) in Chapter \ref{ch:introduction}. Compare this with the analogous discussion in Section \ref{sec:5_CT} leading up to the two diagrams in (\ref{diagrams:correctlifts_CT}). As was the case then, the climax is that $f$ and $g$ form a categorical adjunction---order-reversing functions are contravariant functors!---and so there is a bijection $(FY)^{\op}(fA,B)\cong FX(A,gB)$ for all $A\in FX$ and $B\in FY$. This is the same as having a bijection
\[FY(B,fA)\cong FX(A,gB)\]
which is equivalent to the original claim in (\ref{galois}) that
\[B \subseteq fA  \quad \text{if and only if} \quad A\subseteq gB.\]
A \textbf{formal concept} is defined to be a pair of sets $(A,B)$ for which equality holds. As noted in Chapter \ref{ch:introduction}, such pairs coincide with fixed points of the compositions $fg$ and $gf$ and, to borrow the language of Section \ref{sec:5_CT}, is the \textit{nucleus} of $R$.
%\tai{[It seems that $f$ and $g$ shouldn't be composable since they live in different categories: $f\in \mathsf{Join}$ and $g\in \mathsf{Meet}$. And yet there's no reason why we can't compose them. How to reconcile this?]}

\newthought{The past three}
  \marginnote{One notable difference between the linear algerba and the other constructions is that eigenvectors come with \textit{eigenvalues}. This is absent from the categorical version of the story, and so it could be interesting to generalize the Isbell completion yet further to allow for ``eigenvalues of profunctors.'' 
}
sections have featured a single construction that resurfaces in different contexts. We've briefly mentioned the \textit{Isbell completion} of a small category $\C$, which is the nucleus of the hom functor $\C(-,-)\colon \C^{\op}\times \C\to U\mathsf{Set}$. When $\mathsf{Set}$ is replaced by $2$ and when $\C$ is a poset $P$, then the nucleus of $P(-,-)\colon P^{\op}\times P\to U2$ is the \textit{Dedekind MacNeille completion of $P$}. The Isbell completion has a generalization when considering \textit{any} functor $M\colon \C^{\op}\times \D\to U\mathsf{Set}$ for any small category $\D$. When $\mathsf{Set}$ is replaced by $2$ and when $\C=X$ and $\D=Y$ are sets (viewed as discrete posets) then the nucleus of the relation $R:=M\colon X\times Y\to U2$ contains the formal concepts of $R$. By way of analogy, this thesis has replaced $2$ by the complex numbers $\mathbb{C}$ to study the ``nucleus'' of the matrix $M\colon X\times Y\to U\mathbb{C}$, in the special case when the values $|M(x,y)|^2$ define a probability distribution. In the next section, we close with a brief comment on the shared blueprint behind these constructions.
% \begin{fullwidth}
% \begin{center}
% % \begin{tikzpicture}[y=2cm,x=2cm]
% % \node at (0,0) {\includegraphics[width=\linewidth]{figures/ch5/chart}};

% % \node at (-.1,1.2) {\parbox[c]{3cm}{\centering\bfseries generalized\\ Isbell completion}};
% % 	\node at (0,.6) {\parbox[c]{4cm}{\centering nucleus of\\ $M\colon \C^{\op}\times \D\to U\mathsf{Set}$}};

% % \node at (-3.3,1.25) {\textbf{formal concepts}};
% % 	\node at (-3.2,.6) {\parbox[c]{4cm}{\centering nucleus of\\ $R\colon X\times Y\to U2$}};

% % \node at (3.2,1.25) {\textbf{Isbell completion}};
% % 	\node at (3.2,.6) {\parbox[c]{5cm}{\centering nucleus of\\ $\C(-,-)\colon \C^{\op}\times \C\to U\mathsf{Set}$}};

% % \node at (-2.35,-1) {\textbf{this thesis}};
% % 	\node at (-2.35,-1.8) {\parbox[c]{4cm}{\centering ``nucleus'' of\\ $M\colon X\times Y\to U\mathbb{C}$}};

% % \node at (3.1,-1.2) {\parbox[c]{4cm}{\bfseries \centering Dedekind-MacNeille\\ completion}};
% % 	\node at (3.2,-1.8) {\parbox[c]{4cm}{\centering nucleus of\\ $P(-,-)\colon P^{\op}\times P\to U2$}};
% % \end{tikzpicture}
% \end{center}
% \end{fullwidth}

\section{A Special Adjunction}\label{sec:5behind} 
Let's now zoom back a bit and summarize the overarching theme of Sections \ref{sec:5_LA}, \ref{sec:5_CT}, and \ref{sec:5_FCA}. Each begins with an adjunction $\adj{U}{\C}{\D}{F}$ where the category $\C$ contains a special object---call it $E$---with the property that $UFX$ ``looks like'' morphisms $X\to UE$ for every object $X$ in $\D$. Then for any objects $X$ and $Y$ in $\D$, a morphism $X\times Y\to UE$ induces two morphisms $X\to UFX$ and $Y\to UFY$ which have universal lifts $FX\to FY$ and $FY\to FX$ in $\C$, which are adjoints in some sense. In each case, ``fixed points'' of the composition of these adjoint maps are interesting.  More concretely,
%\tai{[Problem: there's only \textit{one} free-forgetful adjunction with vector spaces, but there are \textit{two} with joins/meets. Not sure how to get around this? \textbf{Maybe I could just say: modulo limits vs colimits? Or just outline this as a program for further investigation.}]}
\begin{itemize}
  \item When the adjunction is $\adj{U}{\mathsf{Vect}}{\mathsf{Set}}{F}$, the special object is $\mathbb{C}$. The universal lifts $FX\rightleftarrows FY$ are linear maps, and their one-dimensional invariant subspaces are \emph{eigenvectors}.

  \item When the biadjunction is $\adj{U}{\mathsf{(Co)compCAT}}{\mathsf{CAT}}{F}$, the special object is $\mathsf{Set}$. The universal lifts $F\C\rightleftarrows F\D$ are (co)continuous functors, and their invariant subcategories are \emph{nuclei}.

  \item When the adjunction is $\adj{U}{\mathsf{Join/Meet}}{\mathsf{Set}}{F}$, the special object is $2$. The universal lifts $FX\rightleftarrows (FY)^{\op}$ are join/meet-preserving  functions, and their invariant subsets are \emph{formal concepts}.
\end{itemize}
Modulo colimits versus limits, let us point out an interesting blueprint.

\subsection{A Blueprint}

Let $\D$ be a category with binary products, let $\C$ be any category, let $U\colon \C\to\D$ be a \textbf{faithful functor}
%\marginnote{\tai{define faithful + say forgetful functors are usually faithful!}} 
and consider an adjunction $\adj{U}{\C}{\D}{F}$ with the following property: There exists an object $E$ in $\C$ such that for each object $X$ in $\D$ there exists an object $UE^X$ in $\D$ together with an \textbf{evaluation morphism} \[\eval\colon UE^X\times X\to UE\] that is \textbf{exponential}; that is, for any object $Y\in\D$ and any morphism $M~\colon~Y~\times~X~\to~UE$ there is a unique morphism $m\colon Y\to UE^X$ so that the following diagram commutes
\[
  \begin{tikzcd}
  Y\times X \arrow[r, "M"] \arrow[d, "m\times \id_X"', dashed] & UE \\
  UE^X\times X \arrow[ru, "\eval"']                                     &   
  \end{tikzcd}
\]
In other words, there is a bijection
\begin{equation}\label{eq:exp}
\D(X\times Y,UE)\cong \D(Y,UE^X).
\end{equation}
Moreover assume that
\begin{equation}\label{eq:UE}
UFX\cong UE^X \qquad\text{for every $X\in \D$.}
\end{equation}
To understand this better, let's further suppose that $\D$ has a terminal object $I$.\marginnote{An object $I$ in a category $\D$ is called \emph{terminal} if for every object $d$ in $\D$ there is exactly one morphism $d\to I$.} Then under the bijection in Equation (\ref{eq:exp}),
\[\D(I,UE^X)\cong \D(X\times I,UE)\cong \D(X,UE)\]
Therefore $UE^X$ may be identified with morphisms $X\to UE$ in $\D$. The unit natural transformation $i\colon \id_\D \Longrightarrow UF$ of the adjunction $\adj{U}{\C}{\D}{F}$ satisfies a universal property.

\bigskip
\noindent\textbf{Universal Property of the Unit}. For any object $C$ in $\C$ and any morphism $f\colon X\to UC$ in $\D$, there is a unique morphism $\hat f\colon FX\to C$ in $\C$ so that the diagram commutes.
\[
  \begin{tikzcd}
  UFX \arrow[r, dashed, "U\hat f"]       & UC \\
  X \arrow[ru, "f"'] \arrow[u, "i"] &   
  \end{tikzcd}
\]

In the special case when $C=FY$ for some object $Y$ in $\D$, morphisms $X\to UFY$ in $\D$ correspond to morphisms $FX\to FY$ in $\C$,
\begin{equation}\label{eq:bij1}
\C(FX,FY)\cong \D(X,UE^Y).
\end{equation}
Here's the diagram:
  \marginnote{Compare this diagram with Diagrams \ref{diagrams:correctlifts_LA}, \ref{diagrams:presheaf_lifts}, and \ref{diagrams:FCA} where the special object $E$ is $\mathbb{C}, \mathsf{Set},$ and $2$, respectively.}
\[
  \begin{tikzcd}
  UE^X \arrow[r, dashed, "U\hat f"]       & UE^Y \\
  X \arrow[ru, "f"'] \arrow[u, "i"] &   
  \end{tikzcd}
\]
% \begin{example}
% Here are two examples.
%   \begin{enumerate}[i.]
%   \item (\textbf{Linear Algebra}) In the free-forgetful adjunction $\adj{U}{\mathsf{Vect}}{\mathsf{Set}}{F}$, functions $X\to UFY$ correspond to linear maps $FX\to FY$. In the diagram, one usually leaves off the $U$s and writes $\mathbb{C}^X$ for $FX$.
%   \[
%   \begin{tikzcd}
%   \mathbb{C}^X\arrow[r, dashed, "\tilde f"]       & \mathbb{C}^Y \\
%   X \arrow[ru, "f"'] \arrow[u, "i"] &   
%   \end{tikzcd}
% \]\\

%   \item (\textbf{Posets}) In the free-forgetful adjunction $\adj{U}{\mathsf{Meet}}{\mathsf{FinSet}}{2^{(-)}}$, functions $X\to U2^Y$ correspond to meet semilattice homomorphisms (i.e. meet-preserving, order-preserving functions) $2^X\to 2^Y$. Again, we usually leave off the $U$s in the diagram:
%     \[
%     \begin{tikzcd}
%     2^X\arrow[r, dashed, "\tilde f"]       & 2^Y \\
%     X \arrow[ru, "f"'] \arrow[u, "i"] &   
%     \end{tikzcd}
%   \]
% \end{enumerate}
% \end{example}
% To recap, the adjunction $F\dashv U$ gives the bijection
% \begin{equation}\label{eq:bij1}
% \hom_\C(FX,FY)\cong \hom_\D(X,UE^Y).
% \end{equation}
By exchanging $X$ and $Y$ there is also a bijection
\begin{equation}\label{eq:bij2}
\C(FY,FX)\cong \D(Y,UE^X)
\end{equation}
Now a priori, there's no relationship between $\C(FX,FY)$ and $\C(FY,FX)$.  But in fact they are isomorphic because $UE^X$ is exponential, as in Isomorphism (\ref{eq:exp}). Explicitly,
\begin{align*}
\C(FX,FY) &\overset{\text{by }(\ref{eq:bij1})}{\cong}  \D(X,UE^Y)\\[5pt]
 &\overset{\text{by }(\ref{eq:exp})}{\cong} \D(X\times Y,UE) \\[5pt] 
 &\overset{\text{by }(\ref{eq:exp})}{\cong} \D(Y,UE^X) \\[5pt]
 &\overset{\text{by } (\ref{eq:bij2})}{\cong} \C(FY,FX).
\end{align*}
As a result, morphisms $X\times Y\to UE$ in $\D$ correspond to morphisms $FX\rightleftarrows FY$~in~$\C$, and it is apparent that the ``fixed points'' of the composite morphisms carry interesting information encoded in  $X~\times~Y~\to~UE$.

\clearpage

%-------------------------------BIBLIOGRAPHY---------------------------------- %
%https://tex.stackexchange.com/questions/224871/tufte-book-with-standard-natbib-style-author-year-citations

\backmatter

\chapter*{Conclusion}
\addcontentsline{toc}{chapter}{Conclusion}

\newthought{In this thesis}, we set out to understand mathematical structure that has both algebraic and statistical properties, and basic tools from quantum probability theory have played a key role. In approaching the interface of algebra and statistics, we've seen that the eigenvalues and eigenvectors of reduced densities of a particular pure quantum state harness algebraic and statistical information in a highly principled way. In a further attempt to understand the marriage of these ideas, we think back to the density matrix model for entailment sketched in Section \ref{sec:entailment}. As it turns out, the way in which algebra and statistics meet in that setting is not incompatible with the categorical perspectives in Chapter \ref{ch:categorytheory}. This will be expanded on in the forthcoming paper \cite{taiyiannis}, which is part of an ongoing, promising investigation into this mathematical structure.
\clearpage
\thispagestyle{empty}

\bibliographystyle{alpha} 
\begin{fullwidth}
\addcontentsline{toc}{chapter}{Bibliography}
\bibliography{thesis}
\end{fullwidth}

%\printindex

\end{document}